\documentclass[%
 reprint,
 superscriptaddress,
 amsmath,amssymb,
 aps,
 pra,
 longbibliography
]{revtex4-2}

\usepackage{graphicx}
\usepackage{dcolumn}
\usepackage{bm}

\usepackage[utf8]{inputenc}
\usepackage[english]{babel}
\usepackage[T1]{fontenc}

\usepackage{tikz}
\usepackage{lipsum}

\usepackage{amsmath,amsfonts,amssymb,amsthm,ascmac}
\allowdisplaybreaks 
\usepackage{mathtools}
\usepackage{mathrsfs}
\usepackage{siunitx}
\usepackage{bm}
\usepackage{cancel}

\usepackage{enumerate}
\usepackage{siunitx}
\usepackage{physics}
\usepackage{pxrubrica}
\usepackage[version=3]{mhchem}
\usepackage{array}
\usepackage{float}

\usepackage{booktabs}
\usepackage{longtable} 
\usepackage{multirow}
\usepackage{hhline}
\usepackage{subcaption}
\usepackage{graphicx}
\usepackage{tikz}
\usetikzlibrary{intersections,calc,patterns,through,positioning,arrows,backgrounds,cd}
\usepackage{pgfplots}
\usepgfplotslibrary{patchplots}
\pgfplotsset{compat=1.15}
\usepackage{titlesec}
\usepackage{picture}
\usepackage{fancybox}
\usepackage{boites}
\usepackage{tcolorbox}
\usepackage{dsfont}
\usepackage{quantikz}
\usepackage[normalem]{ulem}
\usepackage{titletoc}
\usepackage{enumitem}

\usepackage{algorithm}
\usepackage{algpseudocode}
\usepackage{listings}
\usepackage{mathtools}
\usepackage{amsmath,amssymb,booktabs}
\usepackage{xcolor}
\usepackage{fancybox}
\usepackage{stmaryrd}
\usepackage{ytableau}
\usepackage{dsfont}
\usepackage{physics}
\usepackage{ragged2e}

\usepackage[colorlinks,linkcolor=blue,citecolor=blue]{hyperref}

\makeatletter
\@ifundefined{longtable*}{}{%
  \expandafter\let\csname longtable*\endcsname\relax
  \expandafter\let\csname endlongtable*\endcsname\relax
}
\makeatother
\usepackage[justification=raggedright, singlelinecheck=true]{caption}

\providecommand{\mac}[1]{\mathcal{#1}}
\providecommand{\mtr}[1]{\mathrm{#1}}

\providecommand{\sgn}{\mathrm{sgn} }

\providecommand{\coloneq}{:=}

\newcolumntype{L}[1]{>{\begin{minipage}[t]{#1}\raggedright\arraybackslash}l<{\end{minipage}}}
\newtheorem{theorem}{Theorem}
\newtheorem{definition}[theorem]{Definition}
\newtheorem{lemma}[theorem]{Lemma}
\newtheorem{corollary}[theorem]{Corollary}

\theoremstyle{remark}

\newcommand{\1}{\mathds{1}}

\newcommand{\Wg}{{\mathrm{Wg}}}

\usepackage{mathtools}
\newcommand{\given}{\nonscript\,\vert\nonscript\,\mathopen{}}

\DeclareRobustCommand{\erase}{\bgroup\markoverwith{\textcolor{red}{\rule[.5ex]{2pt}{0.4pt}}}\ULon}

\makeatletter
\newcommand{\extp}{\@ifnextchar^\@extp{\@extp^{\,}}}
\def\@extp^#1{\mathop{\bigwedge\nolimits^{\!#1}}}
\makeatother

\begin{document}

\title{\mbox{Provably Efficient Learning of Fermionic Correlations under Particle-Number Symmetry}}

\author{Yuki Koizumi}
\email{koizumiyuki903@gmail.com}
 \affiliation{Department of Physics, The University of \mbox{Tokyo, 7-3-1 Hongo, Bunkyo-ku, Tokyo 113-0033, Japan}}

\author{Kaito Wada}
 \affiliation{\mbox{International Center for Elementary Particle Physics, The University of Tokyo, 7-3-1 Hongo, Bunkyo-ku, Tokyo 113-0033, Japan}}
 \affiliation{\mbox{Graduate School of Science and Technology, Keio University, Hiyoshi 3-14-1, Kohoku, Yokohama 223-8522, Japan}}

\author{Toshinori P. Takama}
\affiliation{Center for Gravitational Physics and Quantum Information,
Yukawa Institute for Theoretical Physics, Kyoto University,
Kitashirakawa Oiwakecho, Sakyo-ku, Kyoto 606-8502, Japan}

\author{Nobuyuki Yoshioka}
\email{ny.nobuyoshioka@gmail.com}
\affiliation{\mbox{International Center for Elementary Particle Physics, The University of Tokyo, 7-3-1 Hongo, Bunkyo-ku, Tokyo 113-0033, Japan}}

\begin{abstract}
Predicting local fermionic correlations is a central task in quantum many-body physics, as these correlations encode many physically relevant local observables.
The ubiquitous particle-number symmetry imposes strong structural constraints on quantum states, suggesting that local correlations should be learned with fewer samples than by symmetry-agnostic approaches.
However, it has remained unclear whether 
such a provable advantage exists  
in collective learning of local correlations.
Here, we develop a framework of number-conserving fermionic-shadow tomography based on random orbital rotations. 
We prove that, for every given order $k$, we can simultaneously estimate {\it all} $k$-body fermionic correlations of an $N$-mode $\eta$-particle state 
with a given variance $\varepsilon^2$ using only $\mathcal{O}_k(\eta^k/\varepsilon^2)$ samples, which are independent of the system size $N$.
We further establish a
matching information-theoretic lower bound
\(\Omega_k(\eta^k/\varepsilon^2)\) for any adaptive protocol based on single-copy measurements, showing that the \((\eta^k,\varepsilon)\)-dependence is optimal up to constants
depending only on \(k\).
Furthermore, our numerical calculation shows that the proposal reduces the query count by roughly an order of magnitude compared with state-of-the-art methods for one-body correlation estimation in a system of $N=100$, $\eta=20$ at $\varepsilon=10^{-2}$. This work establishes a 
provably efficient
advantage of particle-number symmetry for fermionic observables estimation.
\end{abstract}

\maketitle

\addtocontents{toc}{\protect\setcounter{tocdepth}{-10}}

\section{Introduction} 
Learning fermionic many-body systems from measurement data is a basic
primitive in quantum many-body physics and quantum simulation.  A central
instance of this task is the estimation of low-order fermionic correlation
functions, corresponding to $k$-body reduced density matrices ($k$-RDMs)~\cite{lowdin1955quantum},  which determines many relevant observables, including Hamiltonians, energy derivatives~\cite{yalouz2022analytical}, and entanglement structure~\cite{cheong2004many, gullans2019entanglement, grover2013entanglement}. Thus, the efficient estimation of fermionic correlations provides a natural benchmark for fermionic learning protocols.

Given many identical copies of an unknown fermionic state, one aims to estimate all \(k\)-body correlation functions. Measurement-scheduling methods and optimized fermion-to-qubit mappings provide important baselines for fermionic learning task~\cite{bonet2020nearly,jiang2020optimal}. 
Classical shadow algorithm tailored to fermionic systems further exploit the algebraic structure of fermionic operators~\cite{Zhao:2020vxp,wan2023matchgate,ogorman2022fermionic, king2025triply}.
From these lines of work, it is known that existing fermionic learning protocols can achieve \(\mathcal O_k( N^k)\) sample complexity for learning all \(k\)-body correlations of an \(N\)-mode system with a fixed error, where \(\mac{O}_k (\cdot)\) suppresses constants depending only on the degree \(k\).
Moreover, Zhao et al.~\cite{Zhao:2020vxp} gave a simple argument regarding its optimality: the $k$-RDM contains $\mac O(N^{2k})$ independent entries, whereas only $\mac O(N^k)$ commuting observables can be accessed within a single measurement setting. This implies an $\mac{O}(N^k)$ requirement on the number of measurement settings. 
Thus, without additional structure, one should not
expect an \(N\)-independent single-copy measurement protocol for the fermionic
learning task.

In this context, it is  tempting to ask whether particle-number symmetry can remove this $N$-dependence. Such a setup is ubiquitously considered in quantum information science including quantum simulation~\cite{yoshioka2025krylov, xu2025neutralatoma, hartnett2026fast} as well as quantum interferometry~\cite{gong2021quantum}, and thus
many practical quantum algorithms are designed to exploit the reduced effective Hilbert space~\cite{babbush2018exponentially, babbush2019quantum, Koizumi2026Faster, Koizumi2026Heiseberg}. 
While this naturally motivates fermionic learning protocols with particle-number symmetry, it is highly nontrivial to 
leverage symmetry into an intrinsic learning advantage. 
Indeed, most existing symmetry-aware protocols do not eliminate the dependence on the number of modes~\cite{Zhao:2020vxp, hearth2024efficient, zhao2024group}.  

A particularly appealing exception is Low's orbital-rotation protocol~\cite{low2022classical}, whose estimator may have variance depending only on the particle number \(\eta\).
However, due to the difficulty of analyzing the randomized measurement
channel, the estimation variance in Ref.~\cite{low2022classical} was evaluated only after averaging over the target observables.
Recent works have made
progress on related symmetry-aware learning problems, including more structured
fermionic observables and fermionic linear-optics or Slater-determinant
learning settings~\cite{christensen2026learning, west2026particle}.  Nevertheless,
these results address restricted learning tasks rather than a general
entrywise guarantee for arbitrary \(k\)-RDM entries of an unknown
fixed-particle-number state.  Thus, it remains open whether particle-number
symmetry yields an \(N\)-independent sample-complexity guarantee for estimating all fermionic correlations simultaneously. 
\\

\begin{figure*}
\hspace{-1cm}
\captionsetup{type=table}
\caption{
\justifying{
Sample-complexity comparison for estimating all \(k\)-RDM elements of an \(N\)-mode  fermionic state to a fixed additive error \(\varepsilon\), comparing existing fermionic learning protocols~\cite{bonet2020nearly, jiang2020optimal,
Zhao:2020vxp, wan2023matchgate,  low2022classical} with our proposal.  The protocols are grouped according to whether they incorporate particle-number symmetry {with $\eta$ particles}.
Here, \(\mac O_k(\cdot)\) and \(\Omega_k(\cdot)\) suppress constants depending only on \(k\). 
}
}
\includegraphics[width=0.85\linewidth]{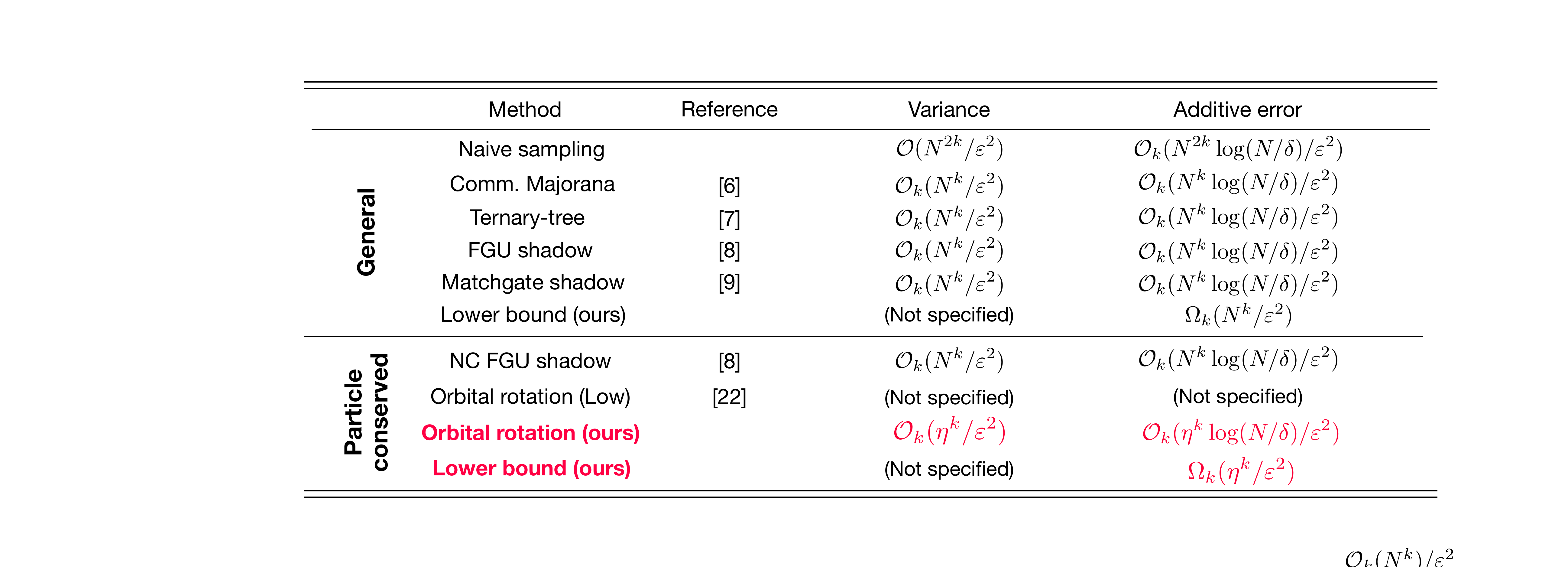}
\label{tab:sample_complexity}
\end{figure*}

An outstanding question {in this context} is as follows.
\begin{quote}
\begin{center}
    {\it Does fermionic number conservation allow a provable advantage in learning tasks?}
\end{center}    
\end{quote} 
Our contributions are summarized three-fold below. 

First, we give a rigorous justification of the inverse measurement channel used in orbital rotation fermionic shadows. We identify a technical gap in the existing tomographic-completeness proof in Ref.~\cite{low2022classical} and resolve it by constructing the necessary operator basis directly from valid number-conserving measurement outcomes.

Second, we establish rigorous performance guarantees for orbital-rotation 
shadows in partial fermionic tomography. 
For every given order $k$, we can simultaneously estimate {\it all} $k$-body fermionic correlations of an $N$-mode $\eta$-particle state 
with a given variance $\varepsilon^2$ using only $\mathcal{O}_k(\eta^k/\varepsilon^2)$ samples, which are independent of the system size $N$.
We further prove matching information-theoretic lower bound
\(\Omega_k(\eta^k/\varepsilon^2)\) for any adaptive protocol based on single-copy measurements, where \(\varepsilon\) denotes the standard additive
error, which implies that the \((\eta^k,\varepsilon)\)-dependence is optimal up to constants depending only on \(k\).
For \(k=1\), we further derive an explicit closed-form variance formula, yielding sharper bounds and suggesting that analogous explicit analyses, including covariance formulae, may be possible for other low-order cases.
Table~\ref{tab:sample_complexity} highlights this particle-number-dependent 
improvement over existing methods.

Third, we support the theoretical guarantees 
and illustrate the practical impact of particle-number symmetry with numerical experiments. Our simulations confirm that the variance of the orbital rotation fermionic shadow estimator follows the analytic prediction and remains controlled by the particle number, rather than by the total number of fermionic modes.  We further evaluate the resource requirements for reconstructing $1$-RDM and find that our protocol achieves the most efficient performance among existing state-of-the-art methods.

\section{Problem Setup}

Let \(\rho\) be an \(N\)-mode \(\eta\)-particle fermionic state, and let $\mathcal{H}_{\eta}$ be the $\eta$-particle subspace.
For an integer
\(1\le k\le \eta\), let
\(\mathcal S_{N,k}\) denote the set of increasing \(k\)-tuples of mode
indices.  For \(\vec p=(p_1<\cdots<p_k)\) and
\(\vec q=(q_1<\cdots<q_k)\) in \(\mathcal S_{N,k}\), the
\((\vec p,\vec q)\)-entry of the \(k\)-RDM ${D}^{(k)}$ is defined by
\begin{align}
    D^{(k)}_{\vec p,\vec q}
    \coloneq
    \operatorname{tr}
    \left[
        \rho\,
        a_{p_1}^\dagger \cdots a_{p_k}^\dagger
        a_{q_k} \cdots a_{q_1}
    \right],
\end{align}
where \(a_i^\dagger\) and \(a_i\) are the fermionic creation and
annihilation operators for mode \(i\).  We assume access to independent
copies of \(\rho\).  Our goal is to estimate all entries
\(D^{(k)}_{\vec p,\vec q}\), with
\(\vec p,\vec q\in\mathcal S_{N,k}\), to additive error \(\varepsilon\in(0,1)\) with high probability $1-\delta \in (0,1)$, using as few copies of \(\rho\) as possible.

\section{Main Results} 

Before diving into our main results, it is useful to describe the orbital-rotation classical-shadow protocol originally introduced in Ref.~\cite{low2022classical}:
\begin{enumerate}[
    label=\textbf{Step~\arabic*:},
    labelsep=0.6em,
    leftmargin=*,
    itemsep=2pt
]
    \item Sample a single-particle unitary $u$ from
    \(\mathrm{Haar}(\mtr{U}(N))\) and apply the induced orbital
    rotation \(U_\eta(u)\) to the $N$-mode \(\eta\)-particle state \(\rho\).
    \item Measure the rotated state in the occupation-number basis and
    obtain an outcome \(\vec z\in\mac{S}_{N,\eta}\).
    \item From the classical measurement data \((u,\vec z)\), construct
    the single-shot classical shadow \(\widehat{\rho}\).
\end{enumerate}
Here, \(U_\eta(u)\) denotes the \(\eta\)-particle orbital rotation induced
by the single-particle unitary \(u\).  Its action on the occupation-number
basis is given by
\begin{align}
    U_\eta(u)\ket{\vec z}
    =
    \sum_{\vec y\in\mac S_{N,\eta}}
    \det(u_{\vec y,\vec z})\ket{\vec y},
\end{align}
where \(u_{\vec y,\vec z}\) is the submatrix of \(u\) with rows indexed by
\(\vec y\) and columns indexed by \(\vec z\).  A generic orbital rotation can
be decomposed into \(\mac{O}(N^2)\) two-mode Givens rotations~\cite{kivlichan2018quantum}.
Efficient classical post-processing to construct $\widehat{\rho}$ or the orbital-rotation $k$-RDM estimator \(\widehat{D}^{(k)}\) in Step~3 is provided in Sec.~\ref{sec:Grassmannian_reformulation} in Supplementary Material.

Our first result is the complete proof for the tomographic completeness for the orbital-rotation shadows.
Note that tomographic completeness technically refers to the algebraic property that guarantees that the measurement channel can be inverted. While this property is invoked in Ref.~\cite{low2022classical}, its proof contains a technical step that requires additional justification.  We provide
a self-contained argument below (see Sec.~\ref{sec:tomographic_completeness} in Supplementary Material for the details).
\begin{theorem}[Tomographic completeness (Informal)]
\label{thm:tomographic_completeness_main}
The orbital-rotation measurement ensemble is tomographically complete.  Namely, for \(1\le \eta\le N\), every operator on
\(\mac H_\eta\) can be expressed as a finite complex linear combination of
projectors expressed by
\begin{align}
    U_\eta(u) \Pi_{\eta } U_\eta(u)^\dagger, \quad 
    u\in\mtr U(N),
\end{align}
where \(\Pi_\eta \coloneq \ketbra{1,\ldots,\eta}\) is the projector onto the occupation-number basis state with the first \(\eta\) modes
occupied.
\end{theorem}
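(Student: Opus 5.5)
The plan is to show that the complex span $\mathcal V \coloneq \operatorname{span}_{\mathbb C}\{U_\eta(u)\Pi_\eta U_\eta(u)^\dagger : u\in\mathrm U(N)\}$ is all of $\operatorname{End}(\mathcal H_\eta)$. Note that $U_\eta(u)\ket{1,\ldots,\eta}$ is the Slater determinant $\ket{\phi_1\wedge\cdots\wedge\phi_\eta}$ built from the first $\eta$ columns $\phi_j=u e_j$. Every orthonormal $\eta$-frame extends to a unitary. Hence $\mathcal V$ is spanned by the rank-one projectors $\ketbra{\Phi}$, where $\Phi=\phi_1\wedge\cdots\wedge\phi_\eta$ ranges over all normalized decomposable vectors of $\extp^\eta\mathbb C^N\cong\mathcal H_\eta$.

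\textbf{Step 1: drop orthonormality.} Since $\mathcal V$ is closed under scalar multiplication, we may use arbitrary linearly independent $\phi_1,\dots,\phi_\eta$. Gram--Schmidt changes $\Phi$ only by a nonzero scalar, so $\ketbra{\Phi}\in\mathcal V$ for every nonzero decomposable $\Phi$. If the $\phi_j$ are dependent, then $\Phi=0$, which is trivially in $\mathcal V$.

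\textbf{Step 2: polarization.} Take $\phi_j=e_{y_j}+t\,e_{z_j}$ with $t\in\mathbb C$, where $\vec y,\vec z$ are chosen suitably. The map $t\mapsto \ketbra{\Phi(t)}$ is a polynomial in $t$ and $\bar t$ whose coefficients are operators. Finitely many choices of $t$ suffice to extract every coefficient, because the monomials $t^a\bar t^b$ are linearly independent functions on $\mathbb C$. More systematically, I will use independent parameters $t_1,\dots,t_\eta$ and set $\phi_j=e_{y_j}+t_j e_{z_j}$. Then
\begin{align}
\Phi=\sum_{S\subseteq[\eta]}\Big(\prod_{j\in S}t_j\Big)\ket{\vec w_S},
\end{align}
where $\vec w_S$ replaces $y_j$ by $z_j$ for $j\in S$, up to a sign. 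The coefficient of $\prod_{j\in S}t_j\prod_{j\in T}\bar t_j$ in $\ketbra{\Phi}$ is $\pm\ketbra{\vec w_S}{\vec w_T}$, and it lies in $\mathcal V$. Given any two basis states $\ket{\vec a},\ket{\vec b}$, I match $\vec a\cap\vec b$ on the indices with $y_j=z_j$. On those indices $t_j$ simply rescales, so I instead set $\phi_j=e_{y_j}$ with no parameter. The remaining indices are paired, with $y_j\in\vec a\setminus\vec b$ and $z_j\in\vec b\setminus\vec a$. Choosing $S=\emptyset$ and $T$ equal to all paired indices then produces $\ketbra{\vec a}{\vec b}$. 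These matrix units span $\operatorname{End}(\mathcal H_\eta)$, which would finish the proof.

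\textbf{Main obstacle.} I must check that the coefficient extraction is legitimate, i.e.\ that $\mathcal V$ (a finite-dimensional subspace, hence closed) contains the coefficients of a real-analytic operator-valued polynomial in $(t,\bar t)$ taking values in $\mathcal V$. This follows from taking finite differences, or equivalently from the invertibility of a Vandermonde-type system in the real and imaginary parts of $t_j$. A second point to verify is that $\vec w_S$ consists of distinct modes, so that no spurious vanishing occurs. This holds by construction, since the $y$'s and $z$'s on the paired indices are disjoint from each other and from the common modes. Presumably this dependence on nondegenerate configurations is the gap in the earlier argument, and here it is handled by building directly from valid Slater determinants.
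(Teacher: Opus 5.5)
Your proposal is correct and follows essentially the paper's proof. The paper likewise reduces to matrix units $\ketbra{\vec p}{\vec q}$ and applies to each non-common pair $(p'_j,q'_j)$ the rotation $\ket{p'_j}\mapsto(\ket{p'_j}+e^{i\phi_j}\ket{q'_j})/\sqrt2$. It then expands the rotated Slater determinant over replacement subsets $I\subseteq[m]$ and isolates the $(I,J)=(\emptyset,[m])$ term. It differs from you only in keeping the rotations unitary, so there is no Gram--Schmidt step, just a final permutation $v_{\vec p}$. It also extracts the coefficient by an explicit discrete Fourier sum over $\phi_j\in\{0,\pi/2,\pi,3\pi/2\}$, rather than by linear independence of the monomials $t^S\bar t^{T}$.

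One correction to your closing remark: the gap the paper identifies in Low's original argument is not a degeneracy issue. The eigenvectors $\bigl(e^{i\phi/2}\ket{\vec p}\pm e^{-i\phi/2}\ket{\vec q}\bigr)/\sqrt2$ used there have Slater rank two for $k\ge 2$, so they cannot be written as $U_k(w)\ket{\vec p}$. Building the span from genuine Slater determinants, as both you and the paper do, is exactly what avoids this.
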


By tomographic completeness, the measurement channel is
invertible, so the inverse channel is well defined. 
This eventually ensures that \(\mathbb E[\widehat\rho]=\rho\), where $\mathbb{E}$ means the average over all possible classical data $(u,\vec{z})$.
Moreover, the same classical data
\((u,\vec z)\) can be used to construct a single-shot estimator of the
\(k\)-RDM, which is also unbiased for any $0\leq k\leq \eta$ and every $\eta$-particle state $\rho$.


Our second main result is an optimal performance guarantee, in terms of the particle number \(\eta\), for estimating \(k\)-RDM entries.  First, we derive an entrywise
variance of the orbital-rotation shadow.  The upper-bound analysis exploits the geometric structure underlying the randomness in the protocol.  Although the protocol is described using a Haar-random unitary
\(u \sim \mtr U(N)\) and an occupation-number outcome \(\vec z\), the resulting
estimator depends on these data only through the rank-\(\eta\) projector
$
    R \coloneq u^\dagger P_{\vec z} u ,
$
where \(P_{\vec z}\) is the one-particle projector onto the occupied modes in
\(\vec z\).  Thus, the relevant randomness is naturally described by the Grassmannian
$
    \mtr U(N)/(\mtr U(\eta)\times \mtr U(N-\eta))
$~\cite{milnor1974characteristic}.
The key step is to express the second moment, and hence the variance, of each
entry of the single-shot estimator \(\widehat D^{(k)}_{\vec p,\vec q}\) as an
expectation of a polynomial of bounded degree in the matrix entries of \(R\): its variance reduces to the evaluation of finitely many polynomial integrals over the Grassmannian, rather than to a direct analysis of the full Haar unitary \(u\).  The
basic objects are integrals of the form
\begin{align}
    \int_{\mtr U(N)/(\mtr U(\eta)\times \mtr U(N-\eta))}
    R_{i_1j_1}\cdots R_{i_tj_t}\,dR ,
    \label{eq:Weingarten_integral}
\end{align}
where \(dR\) denotes the invariant measure on the Grassmannian.
We evaluate
and bound such quantities and leads
to the following asymptotic variance bound (see Theorem~\ref{thm:entrywise-variance} in the Supplementary Material for the proof).

\begin{theorem}[Entrywise variance bound]
\label{thm:entrywise-variance_main}
For any fixed order \(k\), the single-shot estimator for each \(k\)-RDM entry
has variance of order \(\eta^k\), independent of the number of modes $N$.
Namely, for all \(1\le k\le \eta < N\), all \(\eta\)-particle states
\(\rho\), and all \(\vec p,\vec q\in\mac S_{N,k}\),
\begin{align}
    \operatorname{Var}
    \left(
        \widehat D^{(k)}_{\vec p,\vec q}
    \right) := \mathbb{E} \qty[\abs{\widehat D^{(k)}_{\vec p,\vec q} - D^{(k)}_{\vec p,\vec q} }^2]
    \leq
    C_k \eta^k,
\end{align}
where $C_k$ is a constant factor depending only on $k$ and $\mathbb{E}$ is taken over all possible $(u,\vec{z})$. 
\end{theorem}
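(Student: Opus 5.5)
Our plan is to make the inverse channel explicit on the relevant operator subspace and then reduce the second moment to Grassmannian moment integrals as in \eqref{eq:Weingarten_integral}.

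\textbf{Step 1: channel eigenvalues and the estimator.} The measurement channel \(\mathcal M(\rho)=\mathbb E_{u}\sum_{\vec z}\langle \vec z|U_\eta(u)\rho U_\eta(u)^\dagger|\vec z\rangle\, U_\eta(u)^\dagger|\vec z\rangle\langle\vec z|U_\eta(u)\) commutes with all orbital rotations. On \(\mathcal H_\eta=\extp^\eta\mathbb C^N\), the operator space \(\mathrm{End}(\mathcal H_\eta)\) decomposes into \(\mathrm U(N)\)-irreps labelled by \(j=0,\dots,\eta\) (the traceless \(j\)-body parts). By Schur's lemma, \(\mathcal M\) acts on the \(j\)-th block as a scalar \(\lambda_j\), and Theorem~\ref{thm:tomographic_completeness_main} guarantees \(\lambda_j\neq 0\). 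The \(k\)-body operator \(O_{\vec p\vec q}=a^\dagger_{p_1}\cdots a^\dagger_{p_k}a_{q_k}\cdots a_{q_1}\) restricted to \(\mathcal H_\eta\) lies in the blocks \(j\le k\). Hence \(\widehat D^{(k)}_{\vec p\vec q}=\mathrm{tr}[\mathcal M^{-1}(O_{\vec p\vec q})\,U^\dagger|\vec z\rangle\langle\vec z|U]\).

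Since the outcome projector \(U^\dagger|\vec z\rangle\langle\vec z|U\) is the Slater determinant associated with \(R=u^\dagger P_{\vec z}u\), Wick's theorem gives \(\mathrm{tr}[O_{\vec p\vec q}\,\cdot]=\det R_{\vec q,\vec p}\). Projecting onto the \(j\)-blocks expresses the estimator as a fixed linear combination
\begin{align}
\widehat D^{(k)}_{\vec p\vec q}=\sum_{j\le k}\lambda_j^{-1}\sum c_{j,\cdot}\,(\text{minors of }R\text{ of size }\le k\text{ on indices in }\vec p,\vec q),
\end{align}
where the coefficients \(c_{j,\cdot}\) depend only on \(k,\eta,N\). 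I would compute \(\lambda_j\) by evaluating \(\mathcal M\) on a highest-weight vector, or equivalently via the Grassmannian integral \(\mathbb E[\det R_{A,B}\overline{\det R_{A,B}}]\), and expect \(\lambda_j\asymp \binom{\eta}{j}/\binom{N}{j}\) up to \(k\)-dependent constants.

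\textbf{Step 2: second moment.} The variance is bounded by \(\mathbb E|\widehat D|^2\), where the expectation is over \((u,\vec z)\) with Born weights. The Born weight equals \(\mathrm{tr}[\rho\,\Pi_R]\), so
\begin{align}
\mathbb E|\widehat D|^2=\mathrm{tr}\Bigl[\rho\,\binom{N}{\eta}\int |\widehat D(R)|^2\,\Pi_R\, dR\Bigr].
\end{align}
It therefore suffices to bound the operator norm of the \(\mathrm U(N)\)-twirled-type operator \(\binom N\eta\int|\widehat D(R)|^2\Pi_R\,dR\) on \(\mathcal H_\eta\), uniformly over all states \(\rho\). I would bound each matrix element by expanding \(\Pi_R\) through its own minors, which turns every term into a polynomial moment of \(R\) of degree at most \(2k+\eta\). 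The degree-\(\eta\) dependence is awkward, so instead I would use \(\Pi_R\le \1\) only after symmetrizing, or exploit invariance to reduce to the diagonal. The simpler route is to bound \(\max_{\rho}\) by noting that the operator is a polynomial in one-body operators of degree \(\le 2k\), then evaluate its expectation in an arbitrary state using \(\|D^{(m)}\|\le\binom{\eta}{m}\).

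\textbf{Step 3: Weingarten asymptotics.} The moments \(\int R_{i_1j_1}\cdots R_{i_tj_t}\,dR\) with \(t\le 2k\) are computed by writing \(R=u^\dagger P u\) and applying unitary Weingarten calculus, which gives a sum over permutation pairs with factors \(\eta^{\#\text{cycles}}\,\mathrm{Wg}(\sigma\tau^{-1},N)\). The leading behaviour is \((\eta/N)^{\ell}\). Combined with \(\lambda_j^{-2}\sim (N/\eta)^{2j}\), the \(N\)-powers cancel, leaving \(O_k(\eta^k)\) for the surviving contributions.

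\textbf{Main obstacle.} The main obstacle is precisely this cancellation: showing that every \(N\)-growing term cancels uniformly for \(\eta<N\), including when \(\eta\) is comparable to \(N\). I would handle it by organizing contributions by the Young-diagram/block label \(j\) and bounding the Weingarten sums termwise, using \(|\mathrm{Wg}|\le C_t N^{-t-|\sigma|}\) valid for \(N\ge 2t\). The finitely many cases with small \(N\) are then absorbed into \(C_k\).
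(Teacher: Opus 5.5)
You correctly set up the reduction: Schur's lemma for $\mathcal M$, the entry as a fixed combination of minors of $R$ on $\vec p\cup\vec q$, and $\operatorname{Var}\le\sup_\rho\operatorname{tr}(\rho B)$ with $B=\binom{N}{\eta}\int|\widehat D(R)|^2R^{\wedge\eta}\,dR$. The genuine gap is the step that produces $\eta^k$; the mechanism you propose for it cannot deliver that bound. The obstruction is growth in $\eta$, not in $N$.

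First, the block eigenvalues are exactly $\lambda_j=\binom{N+1}{j}^{-1}$, independent of $\eta$, not $\asymp\binom{\eta}{j}/\binom{N}{j}$: on the $j$-th block only the $s=t=j$ term of the moment hierarchy $\mathcal C_{\eta,t}\circ\mathcal M$ survives. With these coefficients, individual contributions to $\mathbb E|\widehat D^{(k)}_{\vec p,\vec p}|^2$ are of order $\eta^{2k}$:
- The diagonal estimator contains the deterministic term $(-1)^k\binom{\eta}{k}$.
- Already for $k=1$, the pieces $(N+1)^2\mathbb E[R_{11}^2]$, $2\eta(N+1)\mathbb E[R_{11}]$ and $\eta^2$ of $\mathbb E[((N+1)R_{11}-\eta)^2]$ are each of order $\eta^2$. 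Their signed sum is $(N+1-\eta)(\eta+2D^{(1)}_{1,1})/(N+2)\le\eta+2$.

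In the Weingarten expansion of $R=u^\dagger Pu$, every term carries $\eta^{\#\mathrm{cycles}}$ and the dominant ones behave like $(\eta/N)^t$. So any termwise bound via $|\mathrm{Wg}|\le C_tN^{-t-|\sigma|}$ is at least of order $\eta^{2k}$. It discards exactly the cancellation you name as the main obstacle. Grouping by block label $j$ does not help once each block is re-expanded into uncentered monomials. For $k=1$, the block-$1$ piece $R_{11}-\eta/N$ squares into three terms of size $\eta^2/N^2$ whose sum is only of size $\eta/N^2$.

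The missing idea is to centre before estimating. The paper does this in three steps:
- It writes $\widehat D^{(k)}(R)=\sum_a\gamma_a\mathcal E_{a,k}(\widetilde R^{\wedge a})$ with $\widetilde R=R-\frac{\eta}{N}\1$.
- Using the generating function $(1+z)^{\eta-N-1}\bigl(1+(1-\frac{\eta}{N})z\bigr)^{N+1-a}$, it shows $|\gamma_a|\le\binom{N+1}{a}\binom{k}{a}^{-1}\frac{k^{(k-a)/2}}{(k-a)!}\eta^{(k-a)/2}$, whereas the raw coefficients are of size $N^t\eta^{k-t}$.
- It proves $\int|\widetilde R_{xy}|^{4a}\,dR\le(4a)!\,\xi^{2a}/N^{4a}$, with $\xi=\min(\eta,N-\eta)$. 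This comes from $R_{xx}\sim\mathrm{Beta}(\eta,N-\eta)$ and $|R_{xy}|^2=X(1-X)Y$ with $Y\sim\mathrm{Beta}(1,N-2)$ independent.

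Each $a$-term is then $O_k(\eta^{k/2})$ in $L^2(\nu_\rho)$. If you prefer Weingarten calculus, the analogue is to expand $\widetilde R=u^\dagger(P-\frac{\eta}{N}\1)u$. Tracelessness kills every term with a fixed point on the internal index and gives $O_t(\xi^{t/2}N^{-t})$ termwise, but you would still need a bound on the recentred coefficients.

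Your Step~2 is also not closed. The bound $\Pi_R\le\1$ costs a factor $\binom{N}{\eta}$, and bounding expectations through $\|D^{(m)}\|\le\binom{\eta}{m}$ reintroduces powers of $\eta$. What works is the local $\mathrm U(1)^{L}\times\mathrm U(N-|L|)$ Schur argument with $L=\vec p\cup\vec q$. It gives $\operatorname{tr}(\rho B)\le\max_T A_T$ with $A_T=\int|g|^2w_T\,dR$. Since $w_T$ can be as large as $\binom{N}{\eta}\binom{N-|L|}{\eta-|T|}^{-1}$ pointwise, you have two options:
- include $w_T$ in the moments, so total degree reaches $4k$ rather than $2k$; or
- as the paper does, apply Cauchy--Schwarz together with the exact two-fold-twirl value $\|w_T\|_{L^2(dR)}\le|L|+1$.
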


From this result, a typical median-of-means of orbital rotation {shadows provides estimators 
to complete the estimation task for all $k$-RDM elements within an additive error $\varepsilon$ with high probability.
A standard analysis clarifies} the sample complexity that scales $C_k\eta^k\log \binom{N}{k}/\varepsilon^2$.

We further complement the above upper bound with an information-theoretic lower bound for the same learning task.  Following the framework of Ref.~\cite{chen2022exponential}, the proof reduces \(k\)-RDM tomography to a two-hypothesis distinguishing problem.  More specifically, we distinguish the
maximally mixed state \(\tau\) on a restricted \(\eta\)-particle subspace from a
family of perturbed states whose \(k\)-RDM entries differ from those of
\(\tau\) by order \(\varepsilon\).
Hence, any protocol that estimates all
\(k\)-RDM entries to element-wise error \(\varepsilon\) can distinguish
these two cases.
For any adaptive single-copy measurement protocol, the probability of successfully distinguishing the two cases is bounded by the total variation distance between the classical measurement-record distributions generated in the two cases~\cite{chen2022exponential}.
We establish that this distance can be bounded using known Hilbert--Schmidt norm estimates for \(k\)-RDMs in the fixed-particle-number sector
\cite{christiansen2024hilbert,visconti2026hilbert}. The above argument yields the following lower-bound theorem (see Theorem~\ref{thm:lower_bound} in Supplementary Material for the details).

\begin{theorem}[Sample-complexity lower bound]
\label{thm:lower_bound_main}
Let \(k,\eta,N\) be integers satisfying \(1\le k\le \eta\) and
\(N\ge 2\eta\). 
Then, 
any quantum
algorithm based on single-copy adaptive measurements of a given unknown $N$-mode $\eta$-particle state \(\rho\) requires at least
\(c_k\eta^k/\varepsilon^2\) copies of \(\rho\) to estimate all elements of the
associated \(k\)-RDM ${D}^{(k)}$
within element-wise additive error \(\varepsilon\in (0,\varepsilon_k)\) with high probability.
Here, $c_k,\varepsilon_k$ are some constants depending only on $k$.
\end{theorem}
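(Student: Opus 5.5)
We reduce estimation to hypothesis testing in the framework of Ref.~\cite{chen2022exponential}. Fix a set of $2\eta$ modes, which is possible since $N\ge 2\eta$, and let $\mathcal H'$ denote the $\eta$-particle subspace of these modes, of dimension $d=\binom{2\eta}{\eta}$. Let $\tau=\1_{\mathcal H'}/d$ be the null hypothesis. Its $k$-RDM entries are explicit; in particular $D^{(k)}_{\vec p,\vec p}(\tau)=\binom{\eta}{k}/\binom{2\eta}{k}\approx 2^{-k}$. The alternative is the family $\rho_{U}=\tau+\varepsilon' \,U X U^\dagger/d$, where $U$ is Haar random on $\mathcal H'$ and $X$ is a fixed traceless Hermitian operator with $\|X\|_\infty\le 1$, for instance diagonal with entries $\pm1$ (traceless since $d$ is even). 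For small enough $\varepsilon'$ this is a valid state.

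\textbf{Step 1 (estimation implies distinguishing).} We must show that with constant probability over $U$, some $k$-RDM entry of $\rho_U$ differs from that of $\tau$ by more than $2\varepsilon$. The deviation is $\Delta_{\vec p\vec q}=(\varepsilon'/d)\operatorname{tr}[UXU^\dagger\, a^\dagger_{\vec p}a_{\vec q}]$. Averaging $\sum_{\vec p,\vec q}|\Delta_{\vec p\vec q}|^2$ over $U$ with second-moment Weingarten formulas gives approximately $(\varepsilon'^2/d^2)\cdot \operatorname{tr}(X^2)/d \cdot \|\Gamma\|_{HS}^2$. Here $\|\Gamma\|_{HS}^2=\sum_{\vec p\vec q}\operatorname{tr}(a^\dagger_{\vec p}a_{\vec q}a^\dagger_{\vec q}a_{\vec p}|_{\mathcal H'})$, with the traceless part removed, which is of order $d\binom{\eta}{k}$ times a constant. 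The number of entries is $\binom{2\eta}{k}^2$, so a typical entry is of order $\varepsilon'/\sqrt{d\,\eta^{k}}$. This is far too small. We therefore do \emph{not} use a global random unitary. Instead we perturb directly along the $k$-body directions: $\rho_{s}=\tau+\varepsilon'\sum_{\vec p,\vec q}s_{\vec p\vec q}\,(a^\dagger_{\vec p}a_{\vec q}+\mathrm{h.c.})_0/(c\,d)$, with random signs $s$ and ``$0$'' denoting the traceless projection. The normalization is fixed so that $\|\rho_s-\tau\|_\infty\le\tau$ entrywise. The Hilbert--Schmidt estimates of \cite{christiansen2024hilbert,visconti2026hilbert} for $k$-body operators in a fixed-particle sector guarantee that the operator norm of such a random sign sum is of order $\eta^{k/2}$ times the entry scale. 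Hence we may take entry deviations of order $\varepsilon'/\eta^{k/2}$, which we set equal to $3\varepsilon$, i.e.\ $\varepsilon'\sim\varepsilon\,\eta^{k/2}$. Any $\varepsilon$-accurate estimator then identifies the sign pattern, and in particular distinguishes $\tau$ from the mixture over $s$.

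\textbf{Step 2 (single-copy adaptive TV bound).} Following \cite{chen2022exponential}, we bound the TV distance between the leaf distributions of the learning tree by the likelihood-ratio method. For each rank-one POVM element $\ket{\psi}\!\bra{\psi}$, the one-step ratio is $1+\varepsilon'\bra\psi Y_s\ket\psi/(c\,\bra\psi\psi\rangle)$, where $Y_s$ is the normalized perturbation. Averaging over signs, the first-order term vanishes and the second moment is controlled by $\mathbb E_s \bra\psi Y_s\ket\psi^2\lesssim \eta^{-k}\cdot\eta^{k}\varepsilon^2\cdot(\text{const})$. After rescaling, this gives a per-copy contribution of order $\varepsilon^2/\eta^{k}$. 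Multiplying over $T$ copies, as in the martingale argument of \cite{chen2022exponential}, shows that the TV distance stays below $1/3$ unless $T\ge c_k\eta^k/\varepsilon^2$.

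The main obstacle is Step 2: bounding $\mathbb E_s\bra\psi Y_s\ket\psi^2$ uniformly over \emph{all} pure states $\psi$ in $\mathcal H'$, not just on average. This reduces to bounding the operator norm of the superoperator $\sum_{\vec p\vec q}O_{\vec p\vec q}\otimes O_{\vec p\vec q}^\dagger$ restricted to the fixed-particle sector. For that we would invoke the sharp $\eta^{k}$-type Hilbert--Schmidt/operator bounds for $k$-RDMs \cite{christiansen2024hilbert,visconti2026hilbert}. The condition $\varepsilon<\varepsilon_k$ keeps $\rho_s$ positive.
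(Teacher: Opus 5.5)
Your skeleton is the paper's: restrict to $2\eta$ modes, take $\tau$ maximally mixed on that $\eta$-particle sector, and bound the leaf total-variation distance as in \cite{chen2022exponential}. But the alternative ensemble you settle on cannot give the bound, and Step~2 does not follow from it.

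Write $\rho_s=\tau+\kappa\,Y_s/d$ with $Y_s=\sum_O s_O\,O$, a random-sign sum over all $\Theta_k(\eta^{2k})$ Hermitian $k$-body directions $O$ built from $a^\dagger_{\vec p}a_{\vec q}$. Each entry with $\vec p\cap\vec q=\emptyset$ then moves by $\Theta_k(\kappa)$, so you need $\kappa\sim\varepsilon$. The per-copy quantity in the likelihood-ratio argument is $\kappa^2\,\mathbb{E}_s\langle\psi|Y_s|\psi\rangle^2=\kappa^2\sum_O\operatorname{tr}[\psi O]^2$, and this is $\Theta_k(\varepsilon^2\eta^k)$ for suitable $\psi$. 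For example, take the Slater determinant with $D^{(1)}=\tfrac12(\1+J)$, where $J$ swaps the two halves of the modes; its disjoint entries alone give $\Theta_k(\eta^k)$, and for $k=1$ one has $\sum_{p\neq q}|D^{(1)}_{p,q}|^2=\eta/2$. So with this ensemble the method only yields $T\gtrsim 1/(\varepsilon^2\eta^k)$. Your ``after rescaling'' passage to a per-copy $\varepsilon^2/\eta^k$ is unjustified and off by a factor $\eta^{2k}$.

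Positivity also breaks the hypothesis $\varepsilon<\varepsilon_k$. It needs $\kappa\|Y_s\|_{\mathrm{op}}\le 1$. The Hilbert--Schmidt estimates of \cite{christiansen2024hilbert,visconti2026hilbert} control $\sum_O\operatorname{tr}[\psi O]^2\lesssim\|D^{(k)}_\psi\|_{\mathrm{HS}}^2$ for each fixed $\psi$. They do not control the operator norm of the sign sum, which is a supremum over states adapted to $s$. For $k=1$, $Y_s$ is the second quantization of a $2\eta\times 2\eta$ random sign matrix. Its norm on the $\eta$-particle sector is a sum of $\eta$ eigenvalues of size $\sqrt{\eta}$, i.e.\ $\Theta(\eta^{3/2})$, which forces $\varepsilon\lesssim\eta^{-3/2}$. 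Even granting your claimed $\|Y_s\|\sim\eta^{k/2}$ together with $\varepsilon'\sim\varepsilon\eta^{k/2}$, positivity would force $\varepsilon\lesssim\eta^{-k/2}$, not a $k$-only constant.

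The missing idea is dilution: mix over \emph{single} directions rather than perturbing all of them at once. The paper takes $\rho_O=\tau+\varepsilon' O/d$ with $O$ uniform in $\Lambda=\pm\{X_{\vec p,\vec q},Y_{\vec p,\vec q}:\vec p\cap\vec q=\emptyset\}$. Each such $O$ has eigenvalues in $\{0,\pm1\}$ and $\operatorname{tr}O^2=2\binom{2\eta-2k}{\eta-k}$. So one entry moves by $\Theta(\varepsilon)$, while positivity only needs $\varepsilon'=\varepsilon\binom{2\eta}{\eta}/\binom{2\eta-2k}{\eta-k}\le\tfrac12$, which holds for $\varepsilon\le 4^{-k}/2$.

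Averaging over $O$ then bounds the per-copy quantity by
$(\varepsilon')^2|\Lambda_+|^{-1}\sum_{O\in\Lambda_+}\operatorname{tr}[\psi O]^2\le 4(\varepsilon')^2C_k\eta^k/|\Lambda_+|$,
with $|\Lambda_+|=\binom{2\eta}{k}\binom{2\eta-k}{k}=\Theta_k(\eta^{2k})$. This gives $T\gtrsim\eta^k/\varepsilon^2$.

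The uniformity over $\psi$ that you flag as the main obstacle is in fact immediate. One has $\sum_O\operatorname{tr}[\psi O]^2\le 4\|D^{(k)}_\psi\|_{\mathrm{HS}}^2$, and the HS estimate holds for every $\eta$-particle state. The essential ingredient is the factor $1/|\Lambda_+|$, which your simultaneous random-sign perturbation throws away.
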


Combining Theorem~\ref{thm:entrywise-variance_main} with
Theorem~\ref{thm:lower_bound_main}, we conclude that the orbital-rotation shadow protocol achieves the optimal \((\eta,\varepsilon)\)-dependence over all protocols that learn entrywise
\(k\)-RDMs to additive error \(\varepsilon\) using arbitrary adaptive single-copy measurements,
up to constants depending only on \(k\).
This theorem also shows that by taking $\eta=N/2$, the lower bound $\Omega_k(N^k/\varepsilon^2)$ holds over all adaptive single-copy protocols for the entrywise $k$-RDM tomography. The same lower bound holds even for symmetry-agnostic protocols.
Hence, from the information-theoretic perspective, the previous general approaches in Table~\ref{tab:sample_complexity} are optimal in $N$ up to a logarithmic factor for a fixed $k$.

Beyond the asymptotic analysis, we also derive closed-form variance formulas in
low-order cases.  The entrywise variance bound above only requires uniform
bounds on the Grassmannian integrals in Eq.~\eqref{eq:Weingarten_integral}.
For explicit low-order formulas, however, we can evaluate the relevant
Weingarten integrals over the Grassmannian~\cite{coulter2025integration} exactly.  In particular, for
\(k=1\), this gives exact expressions for all matrix entries (see Sec.~\ref{sec:1-rdm_variance} in
Supplementary Material for details).

\begin{theorem}[Exact variance for 1-RDM]
\label{thm:exact_variance_1rdm_main}
For \(p,q\in[N]\), the single-shot orbital-rotation estimator satisfies
\begin{align}
\label{eq:exact_variance_1rdm}
\operatorname{Var}\!\left(\widehat D^{(1)}_{p,q}\right)
&=
\frac{(N+1-\delta_{pq})(N-\eta+1)}{N(N+2)}
\Bigl(
\eta+D^{(1)}_{p,p}+D^{(1)}_{q,q}
\Bigr)
\notag\\
&-
(1-\delta_{pq})\frac{N+1}{N}
D^{(2)}_{p q,\,p q}
-
\left|D^{(1)}_{p,q}\right|^2,
\end{align}
where 
$
D^{(2)}_{p q,\,p q}
=
\operatorname{tr}\!\left[
\rho\,
a_q^\dagger a_p^\dagger a_p a_q
\right].
$
\end{theorem}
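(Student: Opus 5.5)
We first make the estimator explicit, then reduce its second moment to Grassmannian integrals of low degree and evaluate them exactly. Since the data enter only through $R=u^\dagger P_{\vec z}u$, the measurement channel $\mathcal M(\rho)=\mathbb E_{u}\sum_{\vec z}\langle \vec z|U_\eta(u)\rho U_\eta(u)^\dagger|\vec z\rangle\, U_\eta(u)^\dagger\ketbra{\vec z}U_\eta(u)$ commutes with all orbital rotations. It therefore acts diagonally on the $\mathrm U(N)$-isotypic components of the operator space on $\mathcal H_\eta$. For the 1-RDM, the relevant components are spanned by $\mathbb 1$ and the traceless one-body operators $a_p^\dagger a_q-\delta_{pq}\eta/N$. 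Thus $\mathcal M^{-1}$, restricted to one-body observables, is fixed by a single eigenvalue $\lambda_1$. We compute $\lambda_1$ from the first and second moments of $R$ on the Grassmannian, using $\mathbb E[R_{ij}]=\frac{\eta}{N}\delta_{ij}$ and the standard degree-two formula. This gives the estimator
\[
\widehat D^{(1)}_{p,q}=\alpha\, R_{qp}+\beta\,\delta_{pq},\qquad \operatorname{tr}[\widehat\rho\, a_p^\dagger a_q]=\widehat D^{(1)}_{p,q},
\]
with explicit $\alpha=\alpha(N,\eta)$ and $\beta=\beta(N,\eta)$. Here we use that on the measured Slater determinant $U_\eta(u)^\dagger\ket{\vec z}$ the 1-RDM is exactly $R^{T}$ (up to an index convention). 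Unbiasedness, guaranteed by Theorem~\ref{thm:tomographic_completeness_main}, then fixes $\alpha$ and $\beta$ as $\alpha=N(N+1)/(N-\eta)$-type constants, which we will pin down. As a consistency check, $\operatorname{tr}$ over $p=q$ must return $\eta$.

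Next, $\operatorname{Var}=\alpha^2\,\mathbb E|R_{qp}|^2+(\text{diagonal shift terms})-|D^{(1)}_{p,q}|^2$, where the expectation is with respect to the data distribution. This distribution is the Haar measure on $u$ reweighted by the Born probability $\operatorname{tr}[\rho\, U^\dagger\ketbra{\vec z}U]$. We write this probability as a polynomial in $R$: it equals $\operatorname{tr}[\rho\,\Pi_R]$, where $\Pi_R$ is the Slater projector onto $\operatorname{ran}R$. The key step is to average $|R_{qp}|^2\,\Pi_R$ over the Grassmannian and recognize the result as an operator on $\mathcal H_\eta$. By $\mathrm U(N)$-covariance it lies in the span of $\mathbb 1$, the one-body operators $a_p^\dagger a_p$ and $a_q^\dagger a_q$, and the two-body operator $a_q^\dagger a_p^\dagger a_p a_q$. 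This is exactly the structure of the claimed formula. To compute its coefficients we use that $\operatorname{tr}[\rho\Pi_R]$ integrated against a polynomial $f(R)$ equals $\dim\mathcal H_\eta$ times the integral of $\langle\Phi_R|\rho|\Phi_R\rangle f(R)$. We then expand $\rho$ in the second-quantized basis, so that only matrix elements $\langle\Phi_R|a^\dagger a^\dagger a a|\Phi_R\rangle$ enter, and these are determinants of $2\times2$ blocks of $R$ by Wick's theorem. This reduces everything to Grassmannian integrals of $R$ of degree at most four, which we evaluate with the Grassmannian Weingarten calculus of \cite{coulter2025integration}.

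The main obstacle is this degree-four Weingarten evaluation and the case distinction $p=q$ versus $p\ne q$. For $p=q$, the quantity $|R_{pp}|^2$ collapses and the two-body term $D^{(2)}_{pq,pq}$ disappears, producing the $\delta_{pq}$ factors. For $p\neq q$, we must carefully track index coincidences among $(p,q)$ in the fourth-moment integrals; the factor $(N+1-\delta_{pq})$ should emerge from this bookkeeping. We would evaluate the general fourth moment $\mathbb E[R_{i_1j_1}R_{i_2j_2}R_{i_3j_3}R_{i_4j_4}]$ once in closed form, specialize it to the needed index patterns, and simplify with $\sum_r D^{(1)}_{r,r}=\eta$. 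Finally, we verify the result against the cases $\eta=1$ and $\rho$ a Slater determinant, where the answer can be computed directly.
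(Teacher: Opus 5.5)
Your plan is correct, and the variance computation is the one the paper uses. The paper also forms $B_{p,q}=\binom{N}{\eta}\int|R_{p,q}|^2R^{\wedge\eta}\,dR$ and reduces it to $\sum_{T\subseteq\{p,q\}}A_T\Pi^{(\eta)}_T$, equivalently a combination of $\1,n_p,n_q,n_pn_q$. It extracts $A_T=\operatorname{tr}[\Pi^{(\eta)}_TB_{p,q}]/\operatorname{tr}[\Pi^{(\eta)}_T]$ using $\operatorname{tr}[n_pn_qR^{\wedge\eta}]=\det R_{\{p,q\},\{p,q\}}$. It then evaluates five moments of degree at most four with the formulas of \cite{coulter2025integration}; for $p=q$ only $\int R_{pp}^2$ and $\int R_{pp}^3$ are needed.

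The genuinely different step is how you obtain the estimator. The paper reads off $\widehat D^{(1)}(R)=(N+1)R-\eta\1$ from its general contraction/extension inversion. You instead use that the traceless one-body operators form the copy of $L_{(1,0,\dots,0,-1)}$, which occurs exactly once in $\operatorname{End}(\mathcal H_\eta)$, so $\mathcal M$ acts on them by a single scalar. For $k=1$ your route is lighter, needing only $\int|R_{12}|^2\,dR$; the paper's inversion is what handles all $k$ at once. Carried out, your route gives $\lambda_1=\binom{N}{\eta}\binom{N-2}{\eta-1}^{-1}\int|R_{12}|^2\,dR=1/(N+1)$. Hence $\alpha=N+1$ and $\beta=-\eta$, not the $N(N+1)/(N-\eta)$ you anticipated.

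Two points need more precision:
\begin{itemize}
\item The four-sector form of $B_{p,q}$ comes from invariance under the subgroup $\mathrm U(1)_p\times\mathrm U(1)_q\times\mathrm U(N-2)$ that fixes $|R_{pq}|^2$, together with Schur's lemma on each irreducible sector $\wedge^{\eta-|T|}\mathbb C^{N-2}$. It does not follow from full $\mathrm U(N)$-covariance, and no expansion of $\rho$ in a second-quantized basis is needed.
\item The factor $N+1-\delta_{pq}$ does not emerge from index coincidences in the off-diagonal integrals. It simply records that the separate diagonal computation gives the prefactor $(N+1-\eta)/(N+2)$ instead of $(N+1)(N+1-\eta)/(N(N+2))$.
\end{itemize}
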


We remark that the derivation of Eq.~\eqref{eq:exact_variance_1rdm} is not
specific to the \(1\)-RDM variance calculation.  For small order \(k\) up to $k =3$
, each matrix element of \(\widehat D^{(k)}\) can be written as a
polynomial of bounded degree in the entries of the rank-\(\eta\) projector
\(R\).  Hence the variances and, more generally, the covariance blocks
\begin{align}
    &\operatorname{Cov}
    \left(
        \widehat D^{(\ell)}_{\vec p,\vec q},
        \widehat D^{(m)}_{\vec r,\vec s}
    \right)
    \coloneq \notag\\
    &\mathbb E
    \left[
        \left(
            \widehat D^{(\ell)}_{\vec p,\vec q}
            -
            D^{(\ell)}_{\vec p,\vec q}
        \right)
        \overline{
        \left(
            \widehat D^{(m)}_{\vec r,\vec s}
            -
            D^{(m)}_{\vec r,\vec s}
        \right)}
    \right],
\end{align}
for fixed small \(\ell\) and \(m\), reduce to finitely many Weingarten
integrals over
$
    \mtr{U}(N)/
    \left(
        \mtr{U}(\eta)\times \mtr{U}(N-\eta)
    \right).
$
Therefore, the same framework that gives uniform bounds for general fixed \(k\) can also produce exact closed-form expressions in low-order cases, once the corresponding Grassmannian integrals are evaluated explicitly. When we collect explicit formulas of all entrywise variances and covariances for \(\ell,m\le 2\), these formulae provide the second-order data needed to analyze general particle-number-preserving quartic
fermionic observables, such as molecular electronic Hamiltonians.

\begin{figure}[]
\begin{center}
\includegraphics[width=0.48\textwidth]{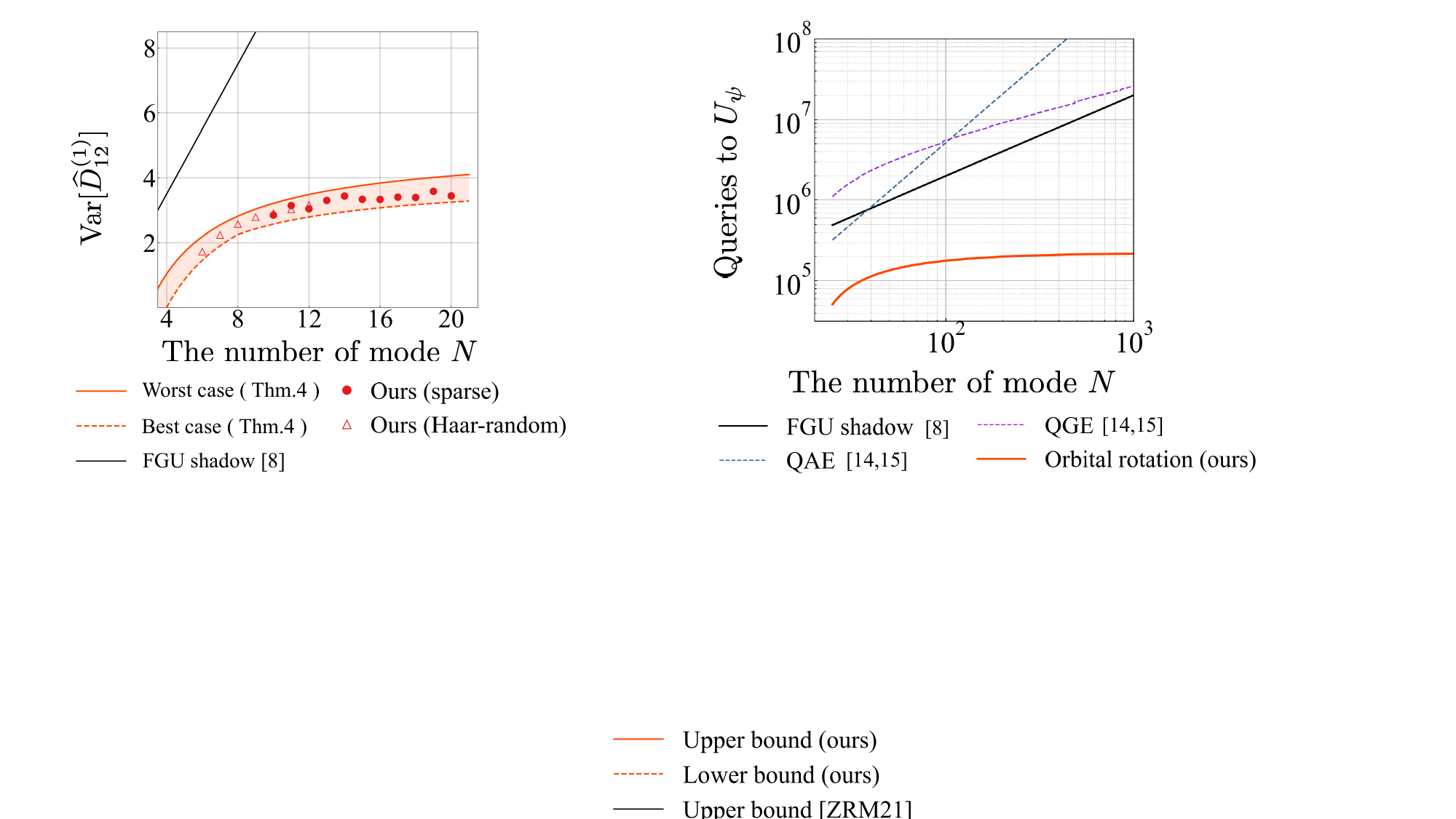}
\end{center}
    \vspace{-18pt}
\caption{
\justifying{
Numerical verification of the single-shot variance \(\operatorname{Var}[\widehat D^{(1)}_{12}]\).
Filled circles denote the empirical variance of $\widehat{D}_{12}^{(1)}$ estimated from $10^5$ trials, where the target state $\rho$ is fixed to a single Haar-random state with $\eta=4$ during the trials.
The triangles denote the same estimated variance from $5\times 10^5$ trials, where the target $\rho$ is fixed to a single sparse-support random vector with support size \(50\) and $\eta=4$.
The orange solid and dashed curves denote the
theoretical worst-case and best-case single-shot variance derived from our exact variance formula, respectively, while the black solid curve denotes the FGU-shadow upper bound of Ref.~\cite{Zhao:2020vxp}.
}
}
\label{fig:emperical_estimate-main}
\end{figure}

\begin{figure}[t]
\begin{center}
\includegraphics[width=0.50\textwidth]{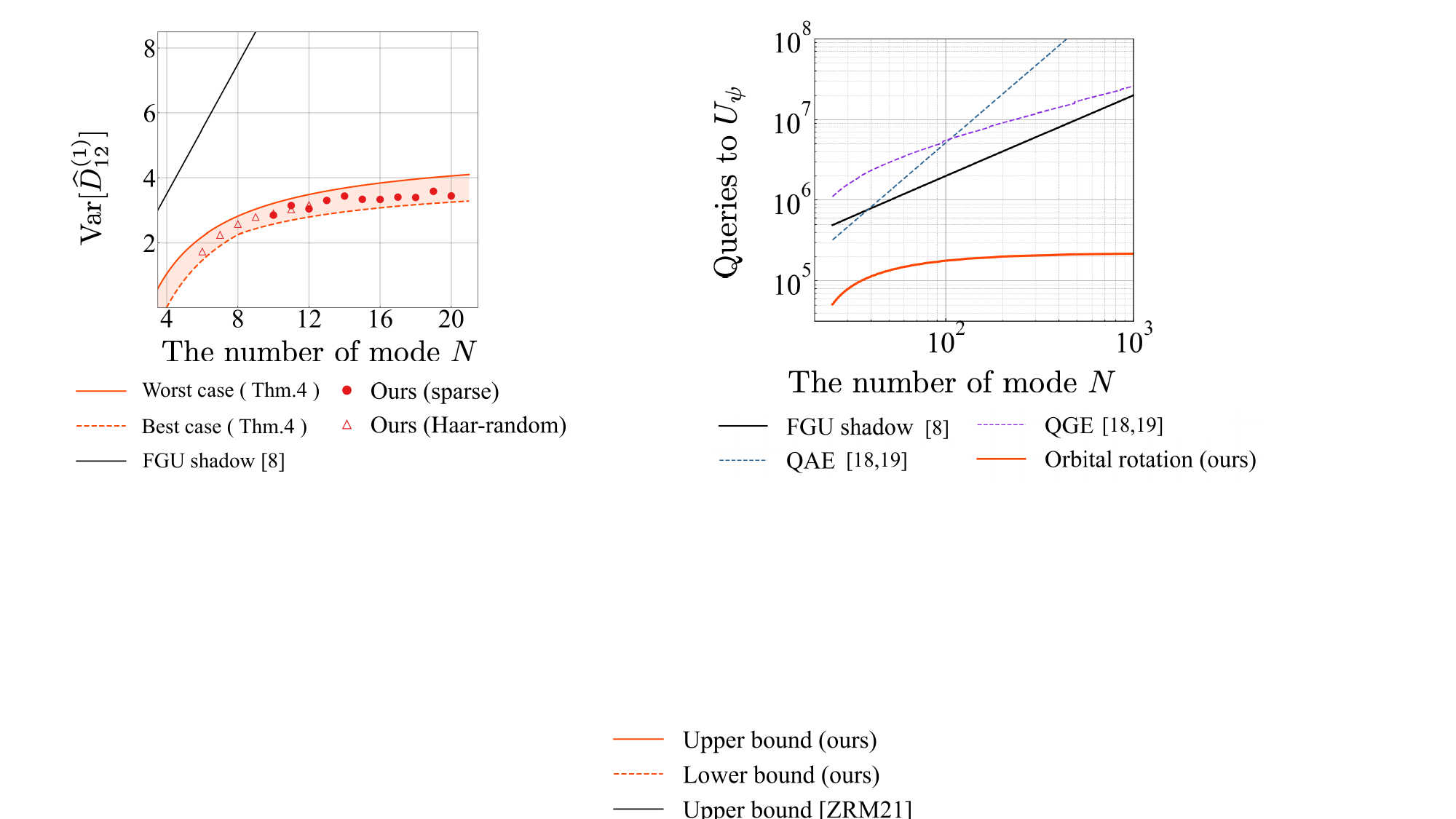}
\end{center}
    \vspace{-18pt}
\caption{
\justifying{
Cost for simultaneously learning all entries of the 1-RDM at fixed particle
number \(\eta=20\) and target accuracy \(\varepsilon=10^{-2}\), in terms of the number of calls to the state preparation unitary $U_{\psi}|0\rangle = |\psi\rangle$ for a target state $|\psi\rangle$.  We compare the
orbital-rotation shadow with the FGU shadow and Heisenberg-limited
estimation strategies like quantum amplitude estimation (QAE) and Quantum gradient estimation (QGE), whose 
cost are evaluated following
Refs.~\cite{Koizumi2026Faster, Koizumi2026Heiseberg}.
}
}
\label{fig:resource_estimator-main}
\end{figure}

\section{Numerical evaluation}

In the following, we numerically demonstrate the advantage of orbital-rotation shadows over state-of-the-art methods for fermionic partial tomography.
First, to validate the analytical results,
we compare the exact formula with empirical estimates for the representative 1-RDM entry \((p,q)=(1,2)\). As shown in Fig.~\ref{fig:emperical_estimate-main}, the estimate obtained from many shots
lies within the theoretical worst-case and best-case single-shot variance envelopes derived from our closed-form expression (The derivation of these bounds is given in Sec.~\ref{sec:comaprison_with_1rdm} of Supplementary Material). This agreement supports the exact formula expressed in Theorem~\ref{thm:exact_variance_1rdm_main}.
Moreover, for fixed particle number \(\eta\), the variance remains on the \( \mac{O}(\eta)\) scale and does not grow linearly with the number of modes \(N\), in sharp contrast with the \( \mac{O}(N)\)-type upper bound for the FGU shadow~\cite{Zhao:2020vxp}.

Next, we numerically assess the 
computational cost
required to estimate all entries of the 1-RDM for fermionic systems with fixed particle number \(\eta=20\) (details are given in Sec.~\ref{sec:Sample_cost} of Supplementary Material).  This setting reflects  ubiquitous tasks in simulating electronic structure using quantum computer~\cite{Helgaker2000}.
Figure~\ref{fig:resource_estimator-main} compares the 
required number of queries to state preparation unitary $U_{\psi}$ for a target state $|\psi\rangle$
at target accuracy
\(\varepsilon=10^{-2}\). 
In contrast to the FGU shadow, the orbital-rotation protocol exhibits essentially $N$-independent scaling, with the dominant dependence appearing only through $\eta$.
We also include idealized
Heisenberg-limited strategies as benchmarks, which assume access to both a state-preparation circuit and its inverse; their cost should therefore be
interpreted as coherent queries to the state-preparation oracles~\cite{huggins2022nearly, Koizumi2026Heiseberg,
Koizumi2026Faster}. Among the learning protocols considered, the orbital-rotation protocol requires the fewest resources in the regime shown.

\section{Conclusion and Outlook}
In this work, we clarified how particle-number conservation leads to a
provable advantage in fermionic learning tasks through the orbital-rotation shadow protocol.  We first gave a self-contained justification of tomographic completeness of the protocol, resolving a technical gap in the existing argument (Theorem~\ref{thm:tomographic_completeness_main}).
We then proved as in Theorem~\ref{thm:entrywise-variance_main} that, for each fixed order \(k\), the orbital shadow protocol estimates any \(k\)-RDM element with variance \(\mac{O}_k(\eta^k)\). 
This yields an optimal $k$-RDM element-wise tomography protocol that matches the information-theoretic lower bound $\Omega_k(\eta^k/\varepsilon^2)$ for any single-copy adaptive measurement protocol, which we derived in Theorem~\ref{thm:lower_bound_main}.
We also derived an exact
closed-form variance formula for the \(1\)-RDM estimator and confirmed
numerically that the variance remains governed by the particle number rather
than by the total number of modes, leading to substantial sample-complexity
improvements over existing fermionic-shadow protocols in the regimes studied.

Several directions remain open.  First, it is important to sharpen the
dependence on the order \(k\).  Although our entrywise sample complexity is
optimal in its \(\eta\)-dependence for each fixed \(k\), the constants hidden in the scaling evaluation are unlikely to be optimal.  Based on the structure observed in low-order cases and on the combinatorics of occupied
\(k\)-particle sectors, we expect that the matching upper and lower bounds may
hold with a sharper scaling of
\(\Theta\!\left(k\binom{\eta}{k}/\varepsilon^2\right)\), although we do not
prove such a bound here.

Second, from a practical viewpoint, it would be valuable to replace the random orbital rotations used here by a discrete ensemble of number-conserving FGU that retains the same particle-number-induced advantage. 
While the fermionic-shadow protocol of Ref.~\cite{Zhao:2020vxp,wan2023matchgate} already employs a discrete ensemble to achieve $\widetilde{O}_k(N^k/\epsilon^2)$ sample complexity for $k$-RDM learning, it remains open whether a discrete, efficiently implementable ensemble can eliminate this polynomial dependence on $N$. 
Such a construction would reduce the compilation and calibration overhead associated with continuous orbital rotations and make the protocol more compatible with devices with finite-precision control.

Third,  it is highly nontrivial whether the advantage by particle-number symmetry persists under the task of state certification. 
It has been recently shown in qubit systems that adaptive single-qubit measurement suffices for efficient certification of any pure states~\cite{gupta2026few}, and thus it is intriguing to explore what is the minimum requirement for certification task in fermionic systems.

\section*{Acknowledgements}
The authors wish to thank Andrew Zhao, 
Luning Zhao, Pei Zeng, Qi Ye, Kento Tsubouchi, and Wataru Mizukami for fruitful discussions.
Y. K. is supported by the Program for Leading Graduate Schools (MERIT-WINGS), JST BOOST
Grant Number JPMJBS2418.
K.W. is supported by JSPS KAKENHI Grant Number JP24KJ1963 and JST ASPIRE Grant Number JPMJAP2316.
T.P.T. is supported by JSPS KAKENHI Grant Number JP26KJ1547.
N.Y. is supported by JST Grant Number JPMJPF2221, JST CREST Grant Number JPMJCR23I4, IBM Quantum, Google Quantum AI, JST ASPIRE Grant Number JPMJAP2316, JST ERATO Grant Number JPMJER2302, JST [Moonshot R\&D] [Grant Number JPMJMS256J], and Institute of AI and Beyond of the University of Tokyo.
\ 
\\

{\it Note added.} During the preparation of this manuscript, we became aware of the independent and contemporaneous work of West, Cerezo, and Larocca~\cite{west2026particle}. Both works build on Low's orbital-rotation shadow estimator~\cite{low2022classical}, whose unbiasedness relies on tomographic completeness fixed by our work. 
The main focus of Ref.~\cite{west2026particle} is to establish mode-independent guarantees for
Slater-determinant overlaps and linear combination of
1-RDM elements,
while we study the simultaneous estimation of all \(k\)-RDM elements
to show that 
entrywise variance
at most \(\varepsilon^2\) is achieved with \(\mathcal{O}_k(\eta^k/\varepsilon^2)\) samples,
which provably matches the information-theoretic lower bound of \(\Omega_k(\eta^k/\varepsilon^2)\).

\bibliography{ref} 

\clearpage
\onecolumngrid
\appendix

\setcounter{section}{0}
\renewcommand{\thesection}{S\arabic{section}}

\setcounter{subsection}{0}
\renewcommand{\thesubsection}{\arabic{subsection}}

\makeatletter
\renewcommand{\p@subsection}{\thesection.}
\makeatother

\counterwithout{equation}{section}
\setcounter{equation}{0}
\renewcommand{\theequation}{S\arabic{equation}}

\counterwithout{theorem}{section}
\setcounter{theorem}{0}
\renewcommand{\thetheorem}{S.\arabic{theorem}}

\renewcommand{\theHsection}{appendix.section.\arabic{section}}
\renewcommand{\theHsubsection}{appendix.subsection.\arabic{section}.\arabic{subsection}}
\renewcommand{\theHequation}{appendix.equation.\arabic{equation}}
\renewcommand{\theHtheorem}{appendix.theorem.\arabic{theorem}}

\startcontents[appendix]
\onecolumngrid

\begin{center}
    {\Large\bfseries Supplementary Material for ``Provably Efficient Learning }\\[0.5em]
    {\Large\bfseries  of Fermionic Correlations under Particle-Number Symmetry"}
\end{center}


\tableofcontents

\addtocontents{toc}{\protect\setcounter{tocdepth}{2}}

\section{Preliminary}

\subsection{Notation}
In this section, we introduce the notation used throughout papers.
Let $N, \eta, k$ be positive integers satisfying $1 \leq k \leq \eta < N$, where $N$, $\eta$, and $k$ denote the number of modes, the particle number, and the target order of the $k$-body reduced density matrix, respectively. Additionally, for a vector space $V$, we denote $\mtr{End}(V) \coloneq \{ f : V \to V \mid f \text{ is a linear operator.} \} $ as endomorphism algebra.
For a positive integer $m$, we denote $[m] \coloneq (1,2,\ldots,m)$ and the symmetry group of $m$ degrees as $\mathfrak{S}_m$. For a random variable \(X\) with probability distribution \(\mu\), we write
\(X\sim\mu\). For simplicity, when a compact group or a
compact homogeneous space is equipped with its normalized invariant probability measure, we use the same notation for sampling from this measure. For example,
$
    u\sim \mtr U(N)
$
means that \(u\) is sampled from the normalized Haar measure on the $N$-dimensional unitary group
\(\mtr U(N)\).

The target Hilbert space is the $\eta$-th exterior power of the
$N$-dimensional complex vector space $\mathbb{C}^N$, denoted by $\qty( \mathbb{C}^N)^{\wedge \eta}$. Throughout this section, we use the notation \(\mac H_\eta := (\mathbb C^N)^{{\wedge \eta }}\), so that \(\mac H_\eta=(\mathbb C^N)^{\wedge \eta }\) is the physical \(\eta\)-particle sector.
For simplicity, we denote by $\ket{z} \coloneqq a_z^\dagger \ket{\mathrm{vac}}$ the basis vector of
$\mathbb{C}^N$ for $1 \leq z \leq N$~\footnote{More precisely, the creation operator
$a_z^\dagger : \wedge^0 \mathbb{C}^N \simeq \mathbb{C} \to \mathbb{C}^N$
acts as $a_z^\dagger \ket{\mathrm{vac}} = \ket{z}$.}.
Then, the basis elements of the $\eta$-particle fermionic space $\qty( \mathbb{C}^N)^{\wedge \eta}$ are expressed as
$    \bigwedge_{j=1}^\eta \ket{z_j},$
with total dimension $\binom{N}{\eta}$.
These basis states are labeled by the occupied-mode configurations
$\vec{z} \in \mathcal{S}_{N,\eta}$, where
\begin{align}
    \mathcal{S}_{N,\eta} \coloneqq \{ (z_1, \ldots, z_\eta) \mid 1 \leq z_1 < z_2 < \cdots < z_\eta \leq N \}
\end{align}
is the set of all strictly increasing $\eta$-tuples of integers between $1$ and $N$. For notational simplicity, we write
\begin{align}
     \ket{\vec z} =   \ket{z_1,\ldots,z_\eta}
    :=
    \ket{z_1}\wedge\cdots\wedge\ket{z_\eta} \in \mac{H}_{\eta}.
\end{align}
For example, \(\ket{1,2,3,4}\) denotes
\(\ket{1}\wedge\ket{2}\wedge\ket{3}\wedge\ket{4}\). Additionally, following Ref.~\cite{low2022classical}, we often use a notation $\ket{[\eta] } $ to denote
\begin{align}
   \ket{ [\eta]} \coloneq \ket{1,2,\ldots, \eta} \in \mac{H}_\eta.
\end{align}

Throughout papers, we identify each tuple with the set of its entries and then list the
resulting set again in strictly increasing order. More precisely, for
\(\vec p=(p_1,\ldots,p_k)\), \(\vec q=(q_1,\ldots,q_\ell)\), and
\(\vec r=(r_1,\ldots,r_m)\), we define
\begin{align}
    \vec p \cap \vec q
    &\coloneqq
    \operatorname{inc}\!\qty(
        \{p_1,\ldots,p_k\}\cap \{q_1,\ldots,q_\ell\}
    ), \\
    \vec r \setminus \vec p
    &\coloneqq
    \operatorname{inc}\!\qty(
        \{r_1,\ldots,r_m\}\setminus \{p_1,\ldots,p_k\}
    ),
\end{align}
where \(\operatorname{inc}(A)\) denotes the unique strictly increasing tuple
whose entries are the elements of the finite set \(A\). 
In particular, \(\operatorname{inc}(\emptyset)\) is the empty tuple.

For \(\vec z\in \mathcal S_{N,\eta}\), we denote by
\(P_{\vec z}\) the projector onto the subspace
spanned by the occupied modes \(\{\ket{z_1},\ldots,\ket{z_\eta}\}\).
More explicitly,
\begin{align}
    P_{\vec z}
    :=
    \sum_{j=1}^{\eta} \ketbra{z_j}{z_j}
    \in \mtr{End}(\mathcal H_1).
\end{align}
In this paper, we strictly distinguish this one-particle projector from the
corresponding projector on the \(\eta\)-particle sector.  The latter is
given by the exterior power
\begin{align}
    \Pi_{\vec z}^{(\eta) } = (P_{\vec z})^{\wedge\eta}
    \in \mtr{End}(\mathcal H_\eta).
\end{align}
For a decomposable vector
\(\ket{v_1}\wedge\cdots\wedge\ket{v_\eta}\in\mathcal H_\eta\), this
operator acts as
\begin{align}
    (P_{\vec z})^{\wedge\eta}
    \bigl(
        \ket{v_1}\wedge\cdots\wedge\ket{v_\eta}
    \bigr)
    =
    P_{\vec z}\ket{v_1}\wedge\cdots\wedge P_{\vec z}\ket{v_\eta}.
\end{align}
In particular, for an occupation-basis vector
\(\ket{\vec y}\in\mathcal H_\eta\), we have
$
    (P_{\vec z})^{\wedge\eta}\ket{\vec y}
    =
    \delta_{\vec y,\vec z}\ket{\vec z}.
$
Hence
$
    (P_{\vec z})^{\wedge\eta}
    =
    \ketbra{\vec z}{\vec z}.
$
Thus \(\Pi_{\vec z}^{(\eta)}\) is the Slater determinant projector
associated with the occupation pattern \(\vec z\), whereas
\(P_{\vec z}\) is the corresponding one-particle projector.

\subsection{Introduction of $k$-RDM, contraction map, and extension map}

In this section, we review \(k\)-body reduced density matrices (\(k\)-RDMs) through the contraction map and its trace-dual, the extension map. From the perspective of the contraction map, \(k\)-RDMs are obtained by reducing an original density operator to a lower-particle sector. 
This reduction viewpoint goes back to the early development of reduced-density-matrix theory, where the many-electron problem was reformulated in terms of lower-order density matrices \cite{Coleman1963,GarrodPercus1964}. The same viewpoint also underlies variational RDM methods. For Hamiltonians with at most two-body interactions, the energy is a linear functional of the 2-RDM, so the ground-state problem can be formulated as an optimization problem over 2-RDMs subject to \(N\)-representability constraints \cite{GarrodPercus1964,Mazziotti2004,Mazziotti2005}. 
Thus, in the present formalism, a \(k\)-RDM is naturally regarded as the lower-particle object obtained from a many-particle state by a contraction map. 

Conversely, the extension map introduced here is the trace-dual of the contraction map. It realizes an operator on the reduced \(k\)-particle space as a physical \(k\)-body observable acting on the \(\eta\)-particle sector. Thus, the contraction map describes the usual reduction from many-particle states to lower-order RDMs, whereas its trace-dual provides the corresponding operator-level lifting required to evaluate \(k\)-body observables on the physical sector. This contraction--extension duality will be used throughout the paper to formulate the orbital-rotation fermionic shadow estimator in a sector-consistent way.

In what follows, we consider an \(\eta\)-particle fermionic state
\begin{align}
    \rho \in \operatorname{End}(\mac H_\eta),
    \qquad
    \rho \ge 0,
    \qquad
    \operatorname{tr}_{\mac H_\eta}[\rho]=1,
\end{align}
where \(N\) is the number of fermionic modes and \(1\leq \eta < N\).

We now introduce \(k\)-RDMs as contractions of the physical \(\eta\)-particle density operator. This viewpoint is standard in reduced-density-matrix theory: reduced density matrices are obtained from the many-particle density matrix by tracing out, or equivalently contracting, the remaining particles \cite{Coleman1963,GarrodPercus1964}. Throughout this paper, we use the unnormalized contraction convention: no additional combinatorial prefactor is included in the contraction map, so that the contraction map can be defined as the trace-dual of the extension map introduced below.

For \(0\leq k\leq \eta\), we define the contraction map
\begin{align}
    \mathcal C_{\eta,k}:
    \operatorname{End}(\mac H_\eta)
    \longrightarrow
    \operatorname{End}(\mac H_k)
\end{align}
by requiring that its matrix elements are given by
\begin{align}
    \left\langle \vec p \middle|
        \mathcal C_{\eta,k}(Y)
    \middle| \vec q \right\rangle
    =
    \operatorname{tr}_{\mac H_\eta}
    \!\left[
        Y\,
        a_{p_1}^\dagger \cdots a_{p_k}^\dagger
        a_{q_k}\cdots a_{q_1}
    \right],
\end{align}
for \(Y\in \operatorname{End}(\mac H_\eta)\) and \(\vec p=(p_1<\cdots<p_k),\vec q=(q_1<\cdots<q_k)\in \mac S_{N,k}\). 
Intuitively, the contraction map \(\mac C_{\eta,k}\) forgets
\(\eta-k\) particles while keeping the same \(N\)-mode system. Hence the
weight assigned to a \(k\)-particle configuration \(\vec a\) is obtained
by adding up the weights of all \(\eta\)-particle configurations
\(\vec b\) that contain \(\vec a\):
\begin{align}
    \bra{\vec a}
    \mac C_{\eta,k}(Y)
    \ket{\vec a}
    =
    \sum_{\substack{\vec b\in\mac S_{N,\eta}\\ \vec a\subseteq \vec b}}
    \bra{\vec b}Y\ket{\vec b}.
\end{align}
The \(k\)-body reduced density matrix of the state \(\rho\) is then defined as
\begin{align}
    D_\rho^{(k)}
    :=
    \mathcal C_{\eta,k}(\rho)
    \in \operatorname{End}(\mac H_k).
\end{align}
Thus, \(D_\rho^{(k)}\) is the reduced \(k\)-particle object obtained by contracting the \(\eta\)-particle density operator, and its matrix elements reproduce the number-conserving \(k\)-body correlation functions of the original fermionic state. With the above unnormalized convention, every normalized \(\eta\)-particle state \(\rho\) satisfies
\begin{align}
    \operatorname{tr}_{\mac H_k}[D_\rho^{(k)}]
    =
    \binom{\eta}{k}.
\end{align}
where \(\tr_{\mac H_\eta}\) denotes the trace on \(\mtr{End}(\mac H_\eta)\).

We next introduce the trace-dual operation of the contraction map. The extension of a lower-particle operator to a higher-particle sector is often expressed by the Grassmann wedge product with an identity operator \cite{Mazziotti2023, Mazziotti2007RDM}. Following this viewpoint, for \(X\in \operatorname{End}(\mac H_k)\), we write its extension to the physical \(\eta\)-particle sector as
\begin{align}
    X\wedge \1_{\mac H_{\eta-k}}
    \in
    \operatorname{End}(\mac H_\eta),
\end{align}
where \(\1_{\mac H_{\eta-k}}\) denotes the identity operator on \(\mac H_{\eta-k}=(\mathbb C^N )^{\wedge {(\eta-k)}}\), and \(\wedge\) denotes the Grassmann wedge product of operators.
In the normalization used in this paper, this Grassmann wedge extension is defined by its action on matrix units. For
\(\vec p=(p_1<\cdots<p_k)\) and \(\vec q=(q_1<\cdots<q_k)\) in \(\mac S_{N,k}\), we set
\begin{align}
    \ketbra{\vec p}{\vec q}
    \wedge \1_{\mac H_{\eta-k}}
    :=
    a_{p_1}^\dagger\cdots a_{p_k}^\dagger
    a_{q_k}\cdots a_{q_1}
    \big|_{\mac H_\eta}.
\end{align}
By linearity, if
\begin{align}
    X
    =
    \sum_{\vec p,\vec q\in \mac S_{N,k}}
    X_{\vec p,\vec q}
    \ketbra{\vec p}{\vec q}
    \in \operatorname{End}(\mac H_k),
\end{align}
then
\begin{align}
    X\wedge \1_{\mac H_{\eta-k}}
    =
    \sum_{\vec p,\vec q\in \mac S_{N,k}}
    X_{\vec p,\vec q}\,
    a_{p_1}^\dagger\cdots a_{p_k}^\dagger
    a_{q_k}\cdots a_{q_1}
    \big|_{\mac H_\eta}.
\end{align}
For simplicity, we define the extension map
\begin{align}
    \mathcal E_{k,\eta}:
    \operatorname{End}(\mac H_k)
    \longrightarrow
    \operatorname{End}(\mac H_\eta)
\end{align}
by
\begin{align}
    \mathcal E_{k,\eta}(X)
    :=
    X\wedge \1_{\mac H_{\eta-k}}.
\end{align}

This extension map is trace-dual to the contraction map \(\mathcal C_{\eta,k}\), up to the fixed matrix-element convention used above. Indeed, for all \(Y\in \operatorname{End}(\mac H_\eta)\) and \(X\in \operatorname{End}(\mac H_k)\), the definition of \(\mathcal C_{\eta,k}\) gives
\begin{align}
    \operatorname{tr}_{\mac H_\eta}
    \left[
        Y\,
        \mathcal E_{k,\eta}(X)
    \right]
    &=
    \sum_{\vec p,\vec q\in \mac S_{N,k}}
    X_{\vec p,\vec q}
    \operatorname{tr}_{\mac H_\eta}
    \left[
        Y\,
        a_{p_1}^\dagger\cdots a_{p_k}^\dagger
        a_{q_k}\cdots a_{q_1}
    \right] \\
    &=
    \sum_{\vec p,\vec q\in \mac S_{N,k}}
    X_{\vec p,\vec q}
    \left\langle \vec p \middle|
        \mathcal C_{\eta,k}(Y)
    \middle| \vec q \right\rangle .
\end{align}
Equivalently,
\begin{align}
    \operatorname{tr}_{\mac H_\eta}
    \left[
        Y\,
        \mathcal E_{k,\eta}(X)
    \right]
    =
    \operatorname{tr}_{\mac H_k}
    \left[
        X^{\mathsf T}
        \mathcal C_{\eta,k}(Y)
    \right],
    \label{eq:duality_contraction_extension}
\end{align}
where the transpose is taken with respect to the occupation basis. This transpose only reflects the convention that the matrix unit \(\ketbra{\vec p}{\vec q}\) is extended to the monomial
\begin{align}
    a_{p_1}^\dagger\cdots a_{p_k}^\dagger
    a_{q_k}\cdots a_{q_1}.
\end{align}

With this notation, a reduced \(k\)-particle operator \(X\in \operatorname{End}(\mac H_k)\) defines the physical observable
\begin{align}
    O_X^{(k)}
    :=
    \mathcal E_{k,\eta}(X)
    =
    X\wedge \1_{\mac H_{\eta-k}}
    \in \operatorname{End}(\mac H_\eta).
\end{align}
For an \(\eta\)-particle state \(\rho\), the \(k\)-RDM then satisfies
\begin{align}
    \operatorname{tr}_{\mac H_\eta}
    \left[
        \rho\,
        \left(
            X\wedge \1_{\mac H_{\eta-k}}
        \right)
    \right]
    =
    \sum_{\vec p,\vec q\in \mac S_{N,k}}
    X_{\vec p,\vec q}
    D_{\rho;\vec p,\vec q}^{(k)}.
\end{align}
Therefore, contraction sends a physical \(\eta\)-particle operator to a reduced \(k\)-particle operator, whereas the Grassmann wedge product \(X\wedge \1_{\mac H_{\eta-k}}\) gives the corresponding trace-dual extension from the reduced sector back to the physical sector.

\subsection{Basic properties of the Grassmann wedge extension}
\label{subsec:extension-basic}

We collect the elementary properties of the extension map and the contraction map defined in the previous section that will be used later.
\begin{lemma}[Basic properties of the Grassmann wedge extension]
\label{lem:extension-basic}
Let $r,t$ be integers. Then, the extension map \(\mathcal E_{t,r}\) satisfies the following properties.
\begin{enumerate}
    \item If \(0\leq t\leq r\), then
    \begin{align}
        \mathcal E_{t,r}\!\left(\1_{\mac H_t}\right)
        =
        \binom{r}{t}\1_{\mac H_r}.
        \label{eq:extension_identity}
    \end{align}

    \item If \(X\in\operatorname{End}(\mac H_r)\), then
    \begin{align}
        \mathcal E_{r,r}(X)=X.
        \label{eq:extension_same_sector}
    \end{align}

    \item If \(0\leq s\leq r\leq m\), then for every \(X\in\operatorname{End}(\mac H_s)\),
    \begin{align}
        \mathcal E_{r,m}\!\left(\mathcal E_{s,r}(X)\right)
        =
        \binom{m-s}{r-s}\mathcal E_{s,m}(X).
        \label{eq:extension_composition}
    \end{align}
\end{enumerate}
\end{lemma}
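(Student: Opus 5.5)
The plan is to work entirely in the occupation-number basis, using the definition of \(\mathcal E_{t,r}\) on matrix units as the restriction of the normal-ordered monomial \(a_{p_1}^\dagger\cdots a_{p_t}^\dagger a_{q_t}\cdots a_{q_1}\) to \(\mac H_r\). By linearity, each item reduces to a statement about these monomials acting on basis vectors \(\ket{\vec b}\). The one fermionic fact needed throughout is the following: for an increasing tuple \(\vec q\subseteq\vec b\), the operator \(a_{q_t}\cdots a_{q_1}\) maps \(\ket{\vec b}\) to \(\pm\ket{\vec b\setminus\vec q}\), and it annihilates \(\ket{\vec b}\) if \(\vec q\not\subseteq\vec b\). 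Moreover, \(a_{q_1}^\dagger\cdots a_{q_t}^\dagger\) maps \(\ket{\vec b\setminus \vec q}\) back to \(\ket{\vec b}\) with the same sign, because the product is the adjoint of the lowering string.

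For item 1, I write \(\1_{\mac H_t}=\sum_{\vec q\in\mac S_{N,t}}\ketbra{\vec q}{\vec q}\). This gives \(\mathcal E_{t,r}(\1)=\sum_{\vec q}a_{q_1}^\dagger\cdots a_{q_t}^\dagger a_{q_t}\cdots a_{q_1}=\sum_{\vec q}n_{q_1}\cdots n_{q_t}\), since distinct modes let the number operators factor with no sign change. On \(\ket{\vec b}\) with \(|\vec b|=r\), this sum counts the \(t\)-subsets of \(\vec b\), which is \(\binom{r}{t}\). For item 2, take \(t=r\). The monomial \(a^\dagger_{\vec p}a_{\vec q}\) acting on \(\mac H_r\) kills every \(\ket{\vec b}\) except \(\vec b=\vec q\), which it sends to the vacuum with a sign. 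The creation string then produces \(\ket{\vec p}\) with a sign, and because \(a_{q_r}\cdots a_{q_1}\ket{q_1,\ldots,q_r}=\ket{\mathrm{vac}}\) in the paper's convention, the two signs cancel and the monomial equals \(\ketbra{\vec p}{\vec q}\). I will need to fix the wedge sign convention carefully, but this is routine.

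Item 3 is the main step. By linearity I take \(X=\ketbra{\vec p}{\vec q}\) with \(\vec p,\vec q\in\mac S_{N,s}\). Then \(\mathcal E_{s,r}(X)=A|_{\mac H_r}\), where \(A:=a^\dagger_{p_1}\cdots a^\dagger_{p_s}a_{q_s}\cdots a_{q_1}\). Expanding \(A|_{\mac H_r}=\sum_{\vec c,\vec d}\bra{\vec c}A\ket{\vec d}\ketbra{\vec c}{\vec d}\) in \(\mathrm{End}(\mac H_r)\) and applying \(\mathcal E_{r,m}\) replaces each \(\ketbra{\vec c}{\vec d}\) by the monomial \(a^\dagger_{\vec c}a_{\vec d}\) on \(\mac H_m\). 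I expect this to be cleaner to organize as an operator identity on Fock space: \(\sum_{\vec c,\vec d}\bra{\vec c}A\ket{\vec d}a^\dagger_{\vec c}a_{\vec d}\) acting on \(\mac H_m\) should equal \(A\,\binom{\hat n - s}{r-s}\), with \(\hat n\) the number operator. The reasoning is that every nonzero \(\ket{\vec d}\) has the form \(\vec q\cup\vec e\) with \(\vec e\) an \((r-s)\)-subset disjoint from \(\vec q\), and correspondingly \(\vec c=\vec p\cup\vec e\) carrying a common sign. Hence the sum should factor as \(A\sum_{\vec e}n_{e_1}\cdots n_{e_{r-s}}\), restricted to \(\vec e\) disjoint from \(\vec q\), with the extra annihilate-create pairs commuted through. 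After \(A\) removes \(\vec q\) on \(\mac H_m\), \(m-s\) particles remain, and the count of admissible \(\vec e\) is \(\binom{m-s}{r-s}\).

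The obstacle is checking these signs: that \(a^\dagger_{\vec p\cup\vec e}a_{\vec q\cup\vec e}\) equals \(\pm a^\dagger_{\vec p}a_{\vec q}\,n_{e_1}\cdots n_{e_{r-s}}\), with the sign \(\pm\) matching exactly the sign in \(\bra{\vec p\cup\vec e}A\ket{\vec q\cup\vec e}\). I will first try to reorder both strings by moving the \(\vec e\) operators to the inside as an adjacent pair \(a^\dagger_{e_j}a_{e_j}\). Each move costs the same transposition parity on both sides, because the reorderings of the creation and annihilation strings mirror each other. The same reordering computes the matrix element, so the two signs agree. As a fallback, I would compare both sides directly on basis vectors \(\ket{\vec b}\in\mac H_m\).
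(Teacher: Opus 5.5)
Your proposal is correct. Items 1 and 2 go exactly as in the paper: expand $\1_{\mac H_t}$ into diagonal matrix units and count the $t$-subsets of the occupied set, then identify the top-sector monomial with $\ketbra{\vec p}{\vec q}$.

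For item 3, the two routes share the same count but differ in rigor. The paper stays in the wedge picture. It writes $\mathcal E_{r,m}(\mathcal E_{s,r}(X))=(X\wedge\1_{\mac H_{r-s}})\wedge\1_{\mac H_{m-r}}$ and observes informally that each set of $s$ active particles is reached once for each of the $\binom{m-s}{r-s}$ intermediate $r$-particle sets containing it. It never checks that the fermionic signs of the intermediate matrix elements $\bra{\vec c}\mathcal E_{s,r}(X)\ket{\vec d}$ recombine correctly after re-extension. Your second-quantized identity $\sum_{\vec c,\vec d}\bra{\vec c}A\ket{\vec d}\,a^\dagger_{\vec c}a_{\vec d}=A\sum_{\vec e}n_{e_1}\cdots n_{e_{r-s}}$ on $\mac H_m$ supplies exactly that check. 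The price is sign bookkeeping, which is where your write-up needs tightening.

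\begin{itemize}
\item \textbf{Sign justification.} The reorderings of the creation string and of the annihilation string do \emph{not} cost the same parity in general. For $\vec p=(1)$, $\vec q=(3)$, $\vec e=(2)$ they give $+1$ and $-1$. What holds, and is all you need, is the following. Let $\epsilon_1,\epsilon_2$ be the sorting signs of $(\vec p,\vec e)$ and $(\vec q,\vec e)$. Then $a^\dagger_{\vec p\cup\vec e}a_{\vec q\cup\vec e}=\epsilon_1\epsilon_2\,a^\dagger_{\vec p}a_{\vec q}\,n_{e_1}\cdots n_{e_{r-s}}$ and $\bra{\vec p\cup\vec e}A\ket{\vec q\cup\vec e}=\epsilon_1\epsilon_2$, so each term carries the factor $(\epsilon_1\epsilon_2)^2=+1$.
\item \textbf{Disjointness.} Nonzero matrix elements force $\vec e$ to avoid $\vec p$ as well as $\vec q$. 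This is harmless for your count. The terms with $\vec e\cap\vec p\neq\emptyset$ satisfy $A\,n_{e_1}\cdots n_{e_{r-s}}=0$. Moreover, $\binom{m-s}{r-s}$ is only needed on basis states $\ket{\vec b}$ with $A\ket{\vec b}\neq 0$, where $\vec b\setminus\vec q$ automatically avoids $\vec p$.
\end{itemize}
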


\begin{proof}
The identity property follows by counting occupied \(t\)-mode subsets. Since
\begin{align}
    \1_{\mac H_t}
    =
    \sum_{\vec p\in\mac S_{N,t}}
    \ketbra{\vec p}{\vec p},
\end{align}
Equivalently, we obtain
\begin{align}
    \mathcal E_{t,r}\!\left(\1_{\mac H_t}\right)
    =
    \sum_{\vec p\in\mac S_{N,t}}
    a_{p_1}^\dagger\cdots a_{p_t}^\dagger
    a_{p_t}\cdots a_{p_1}
    \big|_{\mac H_r}.
\end{align}
Let \(\ket{\vec s}\in\mac H_r\) be a basis vector, with \(\vec s\in\mac S_{N,r}\). The annihilation string \(a_{p_t}\cdots a_{p_1}\) vanishes unless \(\vec p\subset\vec s\). If \(\vec p\subset\vec s\), it removes the occupied modes in \(\vec p\) up to a fermionic sign, and the corresponding creation string restores \(\ket{\vec s}\) with the same sign. Hence
\begin{align}
    \mathcal E_{t,r}\!\left(\1_{\mac H_t}\right)\ket{\vec s}
    =
    \sum_{\substack{\vec p\in\mac S_{N,t}\\ \vec p\subset\vec s}}
    \ket{\vec s}
    =
    \binom{r}{t}\ket{\vec s}.
\end{align}
Since the occupation basis spans \(\mac H_r\), this proves Eq.~\eqref{eq:extension_identity}.

The same-sector property follows directly from the matrix-unit action. For \(\vec p,\vec q\in\mac S_{N,r}\), the operator
\begin{align}
    a_{p_1}^\dagger\cdots a_{p_r}^\dagger
    a_{q_r}\cdots a_{q_1}
    \big|_{\mac H_r}
\end{align}
is equal to \(\ketbra{\vec p}{\vec q}\). Therefore \(\mathcal E_{r,r}(X)=X\) for all \(X\in\operatorname{End}(\mac H_r)\).

Finally, we prove the composition rule. The Grassmann wedge form gives
\begin{align}
    \mathcal E_{s,r}(X)
    =
    X\wedge \1_{\mac H_{r-s}}.
\end{align}
Applying \(\mathcal E_{r,m}\) once more yields
\begin{align}
    \mathcal E_{r,m}\!\left(\mathcal E_{s,r}(X)\right)
    =
    \left(
        X\wedge \1_{\mac H_{r-s}}
    \right)
    \wedge
    \1_{\mac H_{m-r}}.
\end{align}
Equivalently, on the \(m\)-particle sector, this operation first chooses \(r-s\) auxiliary particles to combine with the original \(s\)-particle operator, and then chooses the remaining \(m-r\) particles. For a fixed set of \(s\) particles on which \(X\) acts, the number of intermediate \(r\)-particle sectors containing it is \(\binom{m-s}{r-s}\). Hence
\begin{align}
    \mathcal E_{r,m}\!\left(\mathcal E_{s,r}(X)\right)
    =
    \binom{m-s}{r-s}
    \mathcal E_{s,m}(X),
\end{align}
which proves Eq.~\eqref{eq:extension_composition}.
\end{proof}

Next, we provide a basic property of the contraction map for a projection map.
\begin{lemma}[Contraction of exterior powers of projections]
\label{lem:contraction-exterior-projection}
Let \(P=P^2=P^\dagger\in\operatorname{End}(\mathbb C^N)\) be an orthogonal projection of rank \(\ell\). For \(0\leq t\leq r\leq N\), the contraction map satisfies
\begin{align}
    \mathcal C_{r,t}\!\left(P^{\wedge r}\right)
    =
    \binom{\ell-t}{r-t}
    P^{\wedge t}.
    \label{eq:contraction_exterior_projection}
\end{align}
In particular, if \(\operatorname{rank}P=r\), then
\begin{align}
    \mathcal C_{r,t}\!\left(P^{\wedge r}\right)
    =
    P^{\wedge t}.
    \label{eq:contraction_rank_r_projection}
\end{align}
\end{lemma}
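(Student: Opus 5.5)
The plan is to reduce to a diagonal projection by unitary covariance and then compute matrix elements directly.

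\textbf{Step 1 (reduction to a coordinate projection).} Diagonalize $P = u P_{[\ell]} u^\dagger$ with $u\in\mathrm U(N)$ and $P_{[\ell]}=\sum_{j=1}^{\ell}\ketbra{j}{j}$. Then $P^{\wedge r}=U_r(u)\,P_{[\ell]}^{\wedge r}\,U_r(u)^\dagger$, where $U_r(u)=u^{\wedge r}$ is the induced orbital rotation.

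Next I would check that the contraction map is covariant:
\begin{align}
\mathcal C_{r,t}\bigl(U_r(u) Y U_r(u)^\dagger\bigr)=U_t(u)\,\mathcal C_{r,t}(Y)\,U_t(u)^\dagger .
\end{align}
By the duality \eqref{eq:duality_contraction_extension}, this is equivalent to covariance of the extension map, $\mathcal E_{t,r}(U_t X U_t^\dagger)=U_r\,\mathcal E_{t,r}(X)\,U_r^\dagger$, up to the transpose convention. Covariance of $\mathcal E_{t,r}$ holds because $U_r(u)\,a_p^\dagger\,U_r(u)^\dagger=\sum_i u_{ip}a_i^\dagger$ and $U_r(u)\,a_q\,U_r(u)^\dagger=\sum_j \bar u_{jq}a_j$. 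Substituting these into the monomial $a_{p_1}^\dagger\cdots a_{p_t}^\dagger a_{q_t}\cdots a_{q_1}$ and re-antisymmetrizing gives exactly the matrix elements $\det(u_{\vec i,\vec p})\,\overline{\det(u_{\vec j,\vec q})}$ of $U_t(u)$.

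The transpose in the duality needs care. The cleaner route is to define everything basis-free: the monomial map $\ketbra{\vec p}{\vec q}\mapsto a^\dagger_{\vec p}a_{\vec q}$ is the second-quantized lift $X\mapsto \sum X_{\vec p\vec q}\,a^\dagger_{\vec p}a_{\vec q}$, which is manifestly $\mathrm U(N)$-equivariant. Since $\mathcal C$ is its adjoint with respect to the Hilbert--Schmidt pairing, $\mathcal C$ inherits equivariance. Combined with $U_t(u)P_{[\ell]}^{\wedge t}U_t(u)^\dagger=P^{\wedge t}$, it therefore suffices to treat $P=P_{[\ell]}$.

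\textbf{Step 2 (direct computation for $P_{[\ell]}$).} Here $P_{[\ell]}^{\wedge r}=\sum_{\vec b\subseteq[\ell],\,|\vec b|=r}\ketbra{\vec b}{\vec b}$ is diagonal in the occupation basis. I would compute
\begin{align}
\langle\vec p|\mathcal C_{r,t}(Y)|\vec q\rangle=\sum_{\vec b}\langle \vec b|\,a^\dagger_{\vec p}a_{\vec q}\,|\vec b\rangle .
\end{align}
Each term vanishes unless $\vec p=\vec q$, since $a^\dagger_{\vec p}a_{\vec q}\ket{\vec b}$ is an occupation state with $\vec b\setminus\vec q\cup\vec p$ occupied. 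When $\vec p=\vec q\subseteq\vec b$, the term equals $+1$ because the signs cancel, as in Lemma~\ref{lem:extension-basic}.

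This reduces the problem to counting. The diagonal entry at $\vec p$ is the number of $r$-subsets $\vec b\subseteq[\ell]$ containing $\vec p$. That number is $\binom{\ell-t}{r-t}$ if $\vec p\subseteq[\ell]$, and $0$ otherwise. This is exactly $\binom{\ell-t}{r-t}$ times the matrix of $P_{[\ell]}^{\wedge t}$, which proves \eqref{eq:contraction_exterior_projection}; the convention $\binom{a}{b}=0$ for $b>a$ covers $r>\ell$. The special case $\ell=r$ gives $\binom{r-t}{0}=1$, which is \eqref{eq:contraction_rank_r_projection}.

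\textbf{Main obstacle.} The main difficulty is Step 1: establishing covariance of the contraction map cleanly under the paper's transpose convention. If that proves awkward, a fallback avoids Step 1 entirely. Write $P=\sum_{j=1}^{\ell}\ketbra{v_j}{v_j}$ with orthonormal $v_j$, complete to an orthonormal basis, and use the basis-independence of the operator $\sum_{\vec p}a^\dagger_{\vec p}\,(\cdot)\,a_{\vec p}$-type sums. In practice this is the same equivariance argument phrased through the rotated modes $b_j^\dagger=\sum_i (v_j)_i a_i^\dagger$. The counting of Step 2 is then carried out in the rotated mode basis.
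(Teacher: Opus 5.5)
Your Step~2 is the paper's proof. The paper writes ``choose an orthonormal one-particle basis in which $P=\sum_{i=1}^{\ell}\ketbra{i}{i}$'' and then counts the $r$-subsets $\vec z$ with $\vec p\subset\vec z\subset[\ell]$, exactly as you do. The only difference is that you try to justify the basis change, and that justification fails for the contraction map as the paper defines it.

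The sentence ``$\mathcal C$ is its adjoint with respect to the Hilbert--Schmidt pairing'' is false. The Hilbert--Schmidt adjoint of $\mathcal E_{t,r}$ has matrix elements $\operatorname{tr}[Y a^\dagger_{q_1}\cdots a^\dagger_{q_t}a_{p_t}\cdots a_{p_1}]$. By contrast, $\langle\vec p|\mathcal C_{r,t}(Y)|\vec q\rangle=\operatorname{tr}[Y a^\dagger_{p_1}\cdots a^\dagger_{p_t}a_{q_t}\cdots a_{q_1}]$ is its transpose. For example, $\mathcal C_{r,r}(Y)=Y^{\mathsf T}$, which also follows from the paper's duality together with $\mathcal E_{r,r}=\mathrm{id}$. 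Using $U_r(u)^\dagger a_p^\dagger U_r(u)=\sum_i\overline{u_{pi}}\,a_i^\dagger$, one gets
\begin{equation*}
\mathcal C_{r,t}\bigl(U_r(u)\,Y\,U_r(u)^\dagger\bigr)=U_t(\bar u)\,\mathcal C_{r,t}(Y)\,U_t(\bar u)^\dagger .
\end{equation*}
So the transpose in the duality is not a harmless convention: it replaces $u$ by $\bar u$.

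Running your Steps~1--2 with the correct covariance gives
\begin{equation*}
\mathcal C_{r,t}(P^{\wedge r})=\binom{\ell-t}{r-t}(\bar P)^{\wedge t}=\binom{\ell-t}{r-t}\,(P^{\wedge t})^{\mathsf T},
\end{equation*}
and the identity as stated fails for non-real $P$. Take $N=2$, $r=t=\ell=1$ and $P=\ketbra{v}{v}$ with $\ket{v}=(\ket{1}+i\ket{2})/\sqrt2$. Then $\bra{1}\mathcal C_{1,1}(P)\ket{2}=\operatorname{tr}[P\ketbra{1}{2}]=\bra{2}P\ket{1}=i/2$, while $\bra{1}P\ket{2}=-i/2$. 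Your rotated-mode fallback is the same argument and breaks at the same place.

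So neither your proof nor the paper's can be completed as stated. The lemma holds only when $P^{\wedge t}$ is real in the occupation basis (e.g.\ coordinate projectors $P_{\vec z}$) or the binomial coefficient vanishes. Alternatively, it holds for all $P$ once the contraction is redefined as the genuine Hilbert--Schmidt adjoint, $\langle\vec p|\mathcal C_{r,t}(Y)|\vec q\rangle=\operatorname{tr}[Y a^\dagger_{q_1}\cdots a^\dagger_{q_t}a_{p_t}\cdots a_{p_1}]$. That redefinition is also what makes the paper's later use of $\mathcal C_{\eta,\eta}(X)=X$ correct. Under the corrected convention your argument is complete. It is a more careful version of the paper's proof, whose ``choose a basis'' step silently hides exactly this issue.
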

\begin{proof}
Choose an orthonormal one-particle basis in which
$
    P
    =
    \sum_{i=1}^{\ell}
    \ketbra{i}{i}.
$
Then
\begin{align}
    P^{\wedge r}
    =
    \sum_{\substack{\vec z\in\mac S_{N,r}\\
    \vec z\subset [\ell]}}
    \ketbra{\vec z}{\vec z},
    \qquad
    P^{\wedge t}
    =
    \sum_{\substack{\vec y\in\mac S_{N,t}\\
    \vec y\subset [\ell]}}
    \ketbra{\vec y}{\vec y}.
\end{align}
We compute the matrix elements of
\(\mathcal C_{r,t}(P^{\wedge r})\). By the definition of the contraction map,
for \(\vec p,\vec q\in\mac S_{N,t}\),
\begin{align}
    \left\langle \vec p \middle|
        \mathcal C_{r,t}(P^{\wedge r})
    \middle| \vec q \right\rangle
    =
    \operatorname{tr}_{\mac H_r}
    \left[
        P^{\wedge r}
        a_{p_1}^\dagger\cdots a_{p_t}^\dagger
        a_{q_t}\cdots a_{q_1}
    \right].
\end{align}
Using the above expansion of \(P^{\wedge r}\), this becomes
\begin{align}
    \sum_{\substack{\vec z\in\mac S_{N,r}\\
    \vec z\subset [\ell]}}
    \left\langle \vec z \middle|
        a_{p_1}^\dagger\cdots a_{p_t}^\dagger
        a_{q_t}\cdots a_{q_1}
    \middle| \vec z \right\rangle .
\end{align}
The diagonal matrix element is nonzero only when \(\vec p=\vec q\) and
\(\vec p\subset \vec z\). In that case, the annihilation string removes the
occupied modes in \(\vec p\), and the creation string restores the same modes,
so the value is \(1\). Hence
\begin{align}
    \left\langle \vec p \middle|
        \mathcal C_{r,t}(P^{\wedge r})
    \middle| \vec q \right\rangle
    =
    \delta_{\vec p,\vec q}
    \#\left\{
        \vec z\in\mac S_{N,r}
        \mid
        \vec p\subset \vec z\subset [\ell]
    \right\}.
\end{align}
If \(\vec p\not\subset[\ell]\), this number is zero. If
\(\vec p\subset[\ell]\), then the remaining \(r-t\) occupied modes must be
chosen from the \(\ell-t\) elements of \([\ell]\setminus \vec p\). Therefore
\begin{align}
    \#\left\{
        \vec z\in\mac S_{N,r}
        \mid
        \vec p\subset \vec z\subset [\ell]
    \right\}
    =
    \binom{\ell-t}{r-t}.
\end{align}
Thus
\begin{align}
    \left\langle \vec p \middle|
        \mathcal C_{r,t}(P^{\wedge r})
    \middle| \vec q \right\rangle
    =
    \binom{\ell-t}{r-t}
    \left\langle \vec p \middle|
        P^{\wedge t}
    \middle| \vec q \right\rangle .
\end{align}
Since this holds for all \(\vec p,\vec q\in\mac S_{N,t}\), we obtain
\begin{align}
    \mathcal C_{r,t}\!\left(P^{\wedge r}\right)
    =
    \binom{\ell-t}{r-t}
    P^{\wedge t}.
\end{align}
If \(\ell=r\), then \(\binom{\ell-t}{r-t}=1\), and hence
\begin{align}
    \mathcal C_{r,t}\!\left(P^{\wedge r}\right)
    =
    P^{\wedge t}.
\end{align}
\end{proof}

\section{Orbital-rotation Classical Shadows}

\subsection{Previous work on fermionic shadow tomography}

In this section, we briefly overview the related works of fermionic shadow tomography. Classical shadow tomography provides a general framework for estimating many
observables from randomized measurements
\cite{Huang:2020tih, elben2023randomized}.  In a standard formulation, one
draws a random unitary \(U\) from an ensemble \(\mathcal{U}\), applies it to a
state \(\rho\), and measures the rotated state in a fixed computational basis.
For a measurement outcome \(z\), the corresponding averaged measurement channel
is given by
\begin{align}
\label{eq:shadow_channel}
    \mathcal{M}_{\mathcal{U}}(\rho)
    \coloneqq
    \mathbb{E}_{U\sim\mathcal{U}}
    \sum_{z}
    \bra{z}U\rho U^\dagger\ket{z}\,
    U^\dagger\ketbra{z}U .
\end{align}
When this channel is invertible, an unbiased estimator of the state is obtained
as
\begin{align}
\label{eq:shadow_def}
    \hat{\rho}_{U,z}
    \coloneqq
    \mathcal{M}_{\mathcal{U}}^{-1}
    \bigl(
        U^\dagger\ketbra{z}U
    \bigr).
\end{align}
This formalism is particularly effective for qubit systems with local Pauli
measurements, where the sample complexity can be controlled in terms of the
Pauli weight of the observables.

For fermionic systems, however, the relevant notion of locality is not
necessarily qubit locality.  A \(k\)-body fermionic observable is local in the
fermionic mode algebra, but after a fermion-to-qubit encoding
\cite{bravyi2002fermionic}, it is generally mapped to a linear combination of
Pauli strings whose weights depend on the encoding and may grow with the number
of modes.  Therefore, the usual local-Pauli shadow bound, which scales as
\(\mathcal{O}(3^w)\) for Pauli weight \(w\), does not directly provide an
encoding-independent guarantee for estimating \(k\)-particle reduced density
matrices.

This observation motivates shadow-tomography protocols that exploit the
fermionic structure directly
\cite{Zhao:2020vxp,wan2023matchgate,ogorman2022fermionic,low2022classical}.
In particular, when the particle number is fixed, it is natural to use
particle-number-preserving random single-particle basis rotations.  Low's
protocol \cite{low2022classical} follows this approach by
drawing a Haar-random unitary \(u\sim \mathrm{Haar}(\mtr{U}(N))\) on the
single-particle space and applying its second-quantized action to the
\(\eta\)-particle sector.  We denote this induced action by
\begin{align}
    U_\eta(u) \coloneqq  u^{\wedge \eta } .
\end{align}
After applying \(U_\eta(u)\), the state is measured in the occupation-number
basis, yielding an occupation pattern
\(\vec z\in\mac{S}_{N,\eta}\). 
By the Born rule, this outcome occurs with probability
$
    p_\rho(\vec z\mid u)
    =
    \tr_{\mac{H}_\eta}\!\qty[
        \rho\, U_\eta^\dagger(u) P_{\vec z}^{\wedge \eta} U_\eta(u)
    ] .
$
As a result, the target system \(\rho\) is projected onto \(U_\eta^\dagger(u) P_{\vec z}^{\wedge \eta} U_\eta(u)\).

Low's estimator for the \(k\)-RDM is then constructed from the classical data
\((u,\vec z)\).  In Low's notation, it can be written as
\begin{align}
    \widehat{D}^{(k)}(u,\vec z)
    &=
    U_k^\dagger\!\left(v_{\vec z}^\dagger u\right)
    E_{\eta,k}
    U_k\!\left(v_{\vec z}^\dagger u\right),
    \\
    E_{\eta,k}
    &=
    \sum_{\vec r\in\mac{S}_{N,k}}
    \ketbra{\vec r}\,
    \frac{
        \binom{\eta-s'}{k-s'}
        \binom{N-\eta+s'}{s'}
    }{
        (-1)^{k+s'}\binom{k}{s'}
    },
\end{align}
where \(U_k(w)\coloneqq  w^{\wedge k}\), \(v_{\vec z}\in\mtr{U}(N)\) is a mode
permutation that maps the reference occupation pattern
\([\eta]\coloneqq (1,\ldots,\eta)\) to \(\vec z\), and
\[
    s' \coloneqq |\vec r\cap[\eta]|.
\]
This protocol exploits the fixed-particle-number structure and gives improved
sample-complexity guarantees compared with approaches that do not use the
particle-number constraint explicitly.

The expression above is written with respect to the auxiliary reference pattern
\([\eta]\) and the choice of a permutation \(v_{\vec z}\).  In the following
sections, we reformulate the same estimator in a basis-free Grassmannian form,
where the measurement outcome is regarded as an \(\eta\)-dimensional subspace
\(R\in\mac{P}_\eta\).  This reformulation separates the intrinsic measurement
outcome from the auxiliary coordinate choices, gives a more transparent
description of the inverse measurement channel on the \(k\)-RDM space, and
provides a convenient starting point for deriving variance and covariance
bounds for general \(k\)-body observables.

\subsection{Tomographic completeness}
\label{sec:tomographic_completeness}

In this section, we revisit the tomographic-completeness
argument. The inverse-channel construction used in Low's estimator relies on an informational-completeness property~\cite{dAriano2004informationally, Huang:2020tih} of the measurement projectors.  In the orbital-rotation shadow protocol, this means that the orbit of a single reference Slater projector under number-conserving basis rotations must span the whole operator space.  
We state this property formally as follows.

\begin{theorem}[Tomographic completeness]
\label{thm:tomograhic_completeness_app}
Let \(N\) and \(\eta\) be integers satisfying \(1\le \eta < N\). Then
\begin{align}
    \label{eq:tomo-complete-span}
    \mtr{span}_{\mathbb C}
    \left\{
        U_\eta(u)\Pi^{(\eta)}_{[\eta]}U_\eta(u)^\dagger
        \mid
        u\in\mtr U(N)
    \right\}
    =
    \mtr{End}(\mac H_\eta),
\end{align}
where
$
    \Pi^{(\eta)}_{[\eta]} \coloneq \ketbra{[\eta]}
    \in \mtr{End}(\mac H_\eta)$, and $
    [\eta] \coloneq (1,2,\ldots,\eta)\in \mac S_{N,\eta}.
$
\end{theorem}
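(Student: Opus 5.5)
The plan is to avoid any explicit construction and argue by orthogonal complements with respect to the Hilbert--Schmidt inner product \(\langle A,B\rangle\coloneq \operatorname{tr}_{\mac H_\eta}[A^\dagger B]\) on \(\mtr{End}(\mac H_\eta)\).

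Let \(W\) denote the complex span on the left-hand side of Eq.~\eqref{eq:tomo-complete-span}. Since \(\mtr{End}(\mac H_\eta)\) is finite-dimensional, it suffices to show \(W^\perp=\{0\}\). Suppose \(X\in W^\perp\). Then
\begin{align}
    \bra{[\eta]}U_\eta(u)^\dagger X\, U_\eta(u)\ket{[\eta]}=0
\end{align}
for every \(u\in\mtr U(N)\). Here we use that \(\langle \Pi,X\rangle=\operatorname{tr}[\Pi X]\) for the Hermitian projector \(\Pi=U_\eta(u)\Pi^{(\eta)}_{[\eta]}U_\eta(u)^\dagger\). Because \(U_\eta(u)=u^{\wedge\eta}\), we have \(U_\eta(u)\ket{[\eta]}=\ket{u_1}\wedge\cdots\wedge\ket{u_\eta}\), where \(u_j\) is the \(j\)-th column of \(u\). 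Hence \(\bra{\psi}X\ket{\psi}=0\) for every Slater determinant \(\ket{\psi}=\ket{v_1}\wedge\cdots\wedge\ket{v_\eta}\) built from orthonormal \(v_j\).

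The first step extends this vanishing to arbitrary vectors \(v_1,\ldots,v_\eta\in\mathbb C^N\). If the \(v_j\) are linearly dependent, the wedge vanishes. Otherwise, Gram--Schmidt writes \((v_1,\ldots,v_\eta)=(e_1,\ldots,e_\eta)T\) with \((e_j)\) orthonormal and \(T\) upper triangular and invertible. Then \(\wedge_j v_j=\det(T)\,\wedge_j e_j\), so the expectation is \(|\det T|^2\) times a vanishing quantity. Thus
\begin{align}
    g(v_1,\ldots,v_\eta)\coloneq \bigl(\wedge_j\bra{v_j}\bigr)X\bigl(\wedge_j\ket{v_j}\bigr)=0
\end{align}
identically on \((\mathbb C^N)^{\eta}\).

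The key step, and the main technical obstacle, is polarization: passing from vanishing on the diagonal to vanishing of the full sesquilinear form
\begin{align}
    F(w,v)\coloneq \bigl(\wedge_j\bra{w_j}\bigr)X\bigl(\wedge_j\ket{v_j}\bigr).
\end{align}
This form is a polynomial that is holomorphic in the coordinates of \(v=(v_1,\ldots,v_\eta)\) and antiholomorphic in those of \(w\). The diagonal \(g(v)=F(v,v)\) is therefore a polynomial in the real variables \(\mathrm{Re}\,v,\mathrm{Im}\,v\), equivalently in the independent symbols \(v,\bar v\). A polynomial in \((v,\bar v)\) that vanishes on all of \(\mathbb C^{N\eta}\) has all its monomial coefficients \(v^\alpha\bar v^\beta\) equal to zero, since the monomials \(v^\alpha\bar v^\beta\) are linearly independent as functions on \(\mathbb C^{N\eta}\). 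These coefficients are exactly the coefficients of \(F(w,v)\) as a polynomial in \((\bar w,v)\), so \(F\equiv0\). I would write this out carefully, for instance via the standard lemma that a sesquilinear-type polynomial vanishing on the diagonal is zero, or concretely by substituting \(v_j\mapsto v_j+\lambda w_j\) and extracting coefficients of \(\lambda^a\bar\lambda^b\). This is the place where one must be careful, because \(F\) is multilinear in \(\eta\) vector arguments, not just one.

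Finally, taking \(w,v\) to be tuples of standard basis vectors, \(F\) gives the matrix elements \(\bra{\vec y}X\ket{\vec z}\) for all \(\vec y,\vec z\in\mac S_{N,\eta}\), up to sign. Hence \(X=0\), so \(W^\perp=\{0\}\) and \(W=\mtr{End}(\mac H_\eta)\). The hypothesis \(\eta<N\) is not really needed; for \(\eta=N\) the statement is trivial since \(\dim\mac H_\eta=1\). The argument also shows that \(W\) is the complex span of the Hermitian projectors, which is what the inverse-channel construction needs.
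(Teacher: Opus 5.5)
Your proof is correct, but it takes a genuinely different route from the paper's.

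The paper argues constructively. For $\vec p\neq\vec q$ it takes a block-diagonal product of Givens rotations on the pairs $(p'_j,q'_j)$ of non-common modes, with phases $\phi_j\in\{0,\pi/2,\pi,3\pi/2\}$. It expands the rotated projector $U_\eta(u)\Pi_{\vec p}U_\eta(u)^\dagger$ over replacement patterns $I,J\subseteq[m]$ and isolates $\ketbra{\vec p}{\vec q}$ by a discrete Fourier sum over the $4^m$ phase choices. A permutation then carries $\Pi_{\vec p}$ to $\Pi^{(\eta)}_{[\eta]}$.

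You instead dualize. Orthogonality to the orbit means the expectation of $X$ vanishes on Slater determinants built from orthonormal frames. Gram--Schmidt extends this to every decomposable vector. Because $F(w,v)$ is a polynomial in $(\bar w,v)$ on the complex vector space $(\mathbb C^N)^{\eta}$, its diagonal determines it, so $X=0$.

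Each step is sound. Distinct monomials $\bar w^\alpha v^\beta$ remain distinct after setting $w=v$ with $v,\bar v$ treated as independent, so the coefficient comparison is rigorous. Your one-parameter alternative also works directly: in $g(v+\lambda w)$, the coefficient of $\bar\lambda^{\eta}\lambda^{0}$ is exactly $F(w,v)$, so the multilinearity causes no extra difficulty.

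What your route buys:
\begin{itemize}
\item It is shorter, basis-free, and valid for all $1\le\eta\le N$.
\item It explains why completeness holds: Slater determinants are the image of a polynomial map from a complex vector space, and that image spans $\mathcal H_\eta$. Hence no nonzero operator can have vanishing expectation on all of them.
\item It never needs the eigenvectors of $\ketbra{\vec p}{\vec q}+\ketbra{\vec q}{\vec p}$ to be Slater determinants, which is exactly where Low's original argument breaks.
\end{itemize}

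What the paper's route buys in exchange: it produces explicit finite decompositions with explicit coefficients. It also shows that a finite, discrete family of rotations (permutations composed with quarter-phase Givens rotations) already spans $\mathrm{End}(\mathcal H_\eta)$. A duality argument cannot supply this, and it bears on the paper's outlook toward discrete number-conserving ensembles.
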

We remark that this tomographic-completeness statement is essential to justify the inversion of the measurement channel.  However, one step in the original proof requires an additional justification.  We therefore give a self-contained rigorous proof of Theorem~\ref{thm:tomograhic_completeness_app} 
after explaining why the
proof strategy in Ref.~\cite{low2022classical} is insufficient.

To explain this reason, we introduce the concept of Slater rank \cite{eckert2002quantum}.  Recall that a vector in
\( (\mathbb C^N)^{\wedge \eta} \) is called a Slater determinant, or a
decomposable \(\eta\)-vector, when it can be written as $
    \ket{\psi}
    =
    \ket{v_1}\wedge\cdots\wedge \ket{v_\eta}
$
for linearly independent one-particle orbitals
\(\ket{v_1},\ldots,\ket{v_\eta}\in\mathbb C^N\). Generally, the Slater rank
of \(\ket{\Psi} \in \qty( \mathbb{C}^N)^{\wedge \eta} \) is the smallest integer \(r\) for which
\begin{align}
    \ket{\psi}
    =
    \sum_{\alpha=1}^r
    c_\alpha\,
    \ket{v_{\alpha,1} }\wedge\cdots\wedge \ket{v_{\alpha,\eta} }.
\end{align}
Thus Slater rank one is equivalent to being a single Slater
determinant.  Moreover, single-particle rotations preserve Slater
rank, because 
\begin{align}
    U_\eta(u)
    \left(
        \ket{v_1}\wedge\cdots\wedge \ket{v_\eta}
    \right)
    =
    u\ket{v_1}\wedge\cdots\wedge u\ket{v_\eta} .
\end{align}

Using this concept, we point out a subtle gap in the proof of
Theorem~2 in Ref.~\cite{low2022classical}. To prove tomographic completeness, the original proof
considers the Hermitian combination of matrix units for distinct
\(
    \vec p,\vec q\in\mac S_{N,k}
\)
and \(\phi\in[0,2\pi)\):
\begin{align}
    D^{\vec p}_{\vec q;\phi}
    \coloneq
    e^{i\phi}\ketbra{\vec p}{\vec q}
    +
    e^{-i\phi}\ketbra{\vec q}{\vec p}
    \in \mtr{End}(\mac H_k).
\end{align}
This operator has normalized eigenvectors
$
    \ket{\phi_\pm}
    =
    \frac{1}{\sqrt2}
    \left(
        e^{i\phi/2}\ket{\vec p}
        \pm
        e^{-i\phi/2}\ket{\vec q}
    \right)
    \in \mac H_k,
$
which satisfy
$
    D^{\vec p}_{\vec q;\phi}\ket{\phi_\pm}
    =
    \pm \ket{\phi_\pm}.
$
By definition, $ D^{\vec p}_{\vec q;\phi}$ can be expressed as follows, 
\begin{align}
    D^{\vec p}_{\vec q;\phi}
    =
    \ketbra{\phi_+}{\phi_+}
    -
    \ketbra{\phi_-}{\phi_-}.
\end{align}
We remark that this is a valid spectral decomposition, but the resulting
eigenvectors $\ket{\phi_\pm}$ are not generally obtained by single-particle rotations.  More
precisely, one cannot find a single-particle rotation \(U_k(w)\) for $k\geq 2$ satisfying
\begin{align}
    U_k(w)\ket{\vec p}=\ket{\phi_+},
    \qquad
    U_k(w)\ket{\vec q}=\ket{\phi_-}.
    \label{eq:contradicted_unitary}
\end{align}
Note that, for \(k=1\), this causes no difficulty, since every nonzero one-particle
vector has Slater rank one and such a rotation can be chosen.  For
\(k\ge 2\), however, the eigenvectors \(\ket{\phi_\pm}\) need not have
Slater rank one, so Eq.~\eqref{eq:contradicted_unitary} cannot generally be
satisfied, as shown in the simple example below.  Indeed, this would contradict the preservation of Slater rank
under single-particle rotations.  Therefore, for \(k\ge 2\), the step in the
original proof that uses such a \(U_k(w)\) to express
\(D^{\vec p}_{\vec q}\) as a linear combination of projectors
\(U_\eta(u)\Pi^{(\eta)}_{[\eta]}U_\eta(u)^\dagger\) is not fully justified.

For instance, when we consider a case for $k=2$ and set
\(
    \vec p=(1,2) \in \mac{S}_{4,2}
\)
and
\(
    \vec q=(3,4) \in \mac{S}_{4,2}
\), the corresponding eigenvectors are expressed as follows,
\begin{align}
    \ket{\phi_+}
    =
    \frac{1}{\sqrt2}
    \left(
        e^{i\phi/2}\ket{1,2}
        +
        e^{-i\phi/2}\ket{3,4}
    \right)
    \in (\mathbb C^4)^{\wedge 2} .
\end{align}
This vector is the sum of two Slater determinants, and hence has
Slater rank at most two.  However, the Slater rank of this vector is not one. Indeed, if a
two-particle vector were a single Slater determinant, say
\(\ket{\chi}=\ket{v}\wedge \ket{w} \in \mac{H}_2 \), then its exterior square would vanish:
\begin{align}
    \ket{\chi}\wedge\ket{\chi}
    =
    \ket{v}\wedge \ket{w} \wedge \ket{v} \wedge \ket{w}
    =
    0.
\end{align}
The last equality follows from the alternating property of the exterior
product, namely \(\ket{v}\wedge\ket{v}=0\) for every
\(\ket{v}\in\mathbb{C}^N\).
Thus a nonzero exterior square shows that the vector is not of Slater
rank one.  For the present vector, we compute
\begin{align}
    \ket{\phi_+}\wedge\ket{\phi_+}
    &=
    \frac12
    \left(
        e^{i\phi/2}\ket{1,2}
        +
        e^{-i\phi/2}\ket{3,4}
    \right)
    \wedge
    \left(
        e^{i\phi/2}\ket{1,2}
        +
        e^{-i\phi/2}\ket{3,4}
    \right) \\
    &=
    \ket{1,2,3,4}
    \neq 0 .
\end{align}
Therefore \(\ket{\phi_+}\) is not a single Slater determinant, and it does not have Slater rank one.

We therefore do not prove tomographic completeness by diagonalizing
matrix-unit combinations.  Instead, we extract the desired matrix
units as discrete Fourier components of projectors onto genuine
rotated Slater determinants.
\\

\noindent 
\textit{Proof of Theorem~\ref{thm:tomograhic_completeness_app}.} 
We first outline the proof.  Since the matrix units
$\ketbra{\vec p}{\vec q}$ form a basis of
\(\mtr{End}(\mac H_\eta)\), it is sufficient to show that each such
matrix unit is contained in the span on the LHS of
Eq.~\eqref{eq:tomo-complete-span}.  The diagonal case like $\ketbra{\vec p}$ is obtained
directly by applying a permutation matrix to the reference Slater
determinant \(\ket{[\eta]}\).  For distinct configurations \(\vec p\neq \vec q\), write
\(\vec p'=(p'_1,\ldots,p'_m)\) and
\(\vec q'=(q'_1,\ldots,q'_m)\) for the non-common modes of
\(\vec p\) and \(\vec q\), respectively. For each \(j\in[m]\), we introduce an independent phase
\(\phi_j\in[0,2\pi)\) and consider a two-dimensional Givens rotation
acting only on the subspace
\(\operatorname{span}\{\ket{p'_j},\ket{q'_j}\}\).  With respect to the
ordered basis \((\ket{p'_j},\ket{q'_j})\), its action is given by
\begin{align}
    \frac{1}{\sqrt{2}}
    \mqty[
        1 & -e^{-i\phi_j} \\
        e^{i\phi_j} & 1
    ] .
\end{align}
Since these subspaces are mutually orthogonal, the product of these
rotations defines a well-defined block-diagonal one-particle unitary,
which we denote by \(u(\vec\phi;\vec p',\vec q')\) where $\vec \phi = (\phi_1 , \ldots, \phi_m ) \in [0,2\pi)^{m}$.  We then consider
the following rotated Slater projector
\begin{align}
        U_\eta(u(\vec\phi;\vec p',\vec q'))\Pi_{\vec p}
    U_\eta(u(\vec\phi;\vec p',\vec q'))^\dagger .
    \label{eq:target_projector}
\end{align}
The core idea is to extract the desired matrix unit from a linear combination of rotated Slater projectors.  Namely, we
choose an appropriate phase set \(A\subset[0,2\pi)^m\) and coefficients
\(c_{\vec\phi}\) such that
\begin{align}
    \sum_{\vec\phi\in A}
    c_{\vec\phi}\,
    U_\eta(u(\vec\phi;\vec p',\vec q'))\Pi_{\vec p}
    U_\eta(u(\vec\phi;\vec p',\vec q'))^\dagger
    =
    \ketbra{\vec p}{\vec q}.
\end{align}
Such a choice is possible because the unitary
\(u(\vec\phi;\vec p',\vec q')\) is a product of independent
two-dimensional Givens rotations. By expanding the rotated projector in the occupation-number basis, we
obtain a linear combination of matrix units whose coefficients depend on
the phases $\vec \phi$:
\begin{align}
    U_\eta(u(\vec\phi;\vec p',\vec q'))\Pi_{\vec p}
    U_\eta(u(\vec\phi;\vec p',\vec q'))^\dagger
    =
    \frac{1}{2^m}
    \sum_{\vec s,\vec t}
\,
    e^{i\Theta_{\vec s,\vec t}(\vec\phi)}
    \ketbra{\vec s}{\vec t}.
\end{align}
Here, the phase \(\Theta_{\vec s,\vec t}(\vec\phi)\) is determined by
the pair \((\vec s,\vec t)\).  Indeed, different replacement patterns
give different phase factors.  In particular, every unwanted matrix unit has a phase  different from that of the target term
\(\ketbra{\vec p}{\vec q}\).  This allows us to choose a finite phase set
and suitable coefficients so that the discrete phase sum cancels all
unwanted terms.

Following this core idea, we will prove the theorem.
For $   \vec p,\vec q\in\mac S_{N,\eta},$
the rank-one operators
$
    \ket{\vec p}\!\bra{\vec q}
$
form a basis for
\(
    \mtr{End}(\mac H_\eta)
\).
Therefore, it suffices to show that each operator
\(
    \ket{\vec p}\!\bra{\vec q}
\)
belongs to the span in Eq.~\eqref{eq:tomo-complete-span}.

For the remainder of the proof, we use the following notation. Let
\(
    \vec p,\vec q\in\mac S_{N,\eta}
\)
with \(\vec p\neq \vec q\). When forming intersections and set differences,
we identify ordered tuples with their underlying sets. To this end, we introduce
$
    \vec r
    \coloneq
    \vec p\cap \vec q,
    m
    \coloneq
    \eta-|\vec r|.
$
We also write the non-common modes as
$
    \vec p' \coloneq  \vec p\setminus\vec r
    =
    (p_1' < \cdots < p_m'), \vec q' \coloneq  \vec q\setminus\vec r
    =
    (q_1' < \cdots < q_m') \in \mac{S}_{N, m}.
$
The modes in \(\vec r\) are kept fixed throughout the argument, and we rotate
only within the two-dimensional subspaces
\(
    \mtr{span}\{\ket{p_j'},\ket{q_j'}\}
\).
This keeps the proof entirely within the \(\eta\)-particle sector
\(\mac H_\eta\).

We first recall the one-particle calculation.  For distinct
$p',q'\in[N]$ and $\phi\in[0,2\pi)$, define
$u(\phi;p',q')\in\mtr U(N)$ by
\begin{equation}
    u(\phi;p',q')\ket{p'}
    \coloneq
    \frac{1}{\sqrt 2}
    \left(
        \ket{p'}+e^{i\phi}\ket{q'}
    \right),
    \qquad
    u(\phi;p',q')\ket{q'}
    \coloneq
    \frac{1}{\sqrt 2}
    \left(
        -e^{-i\phi}\ket{p'}+\ket{q'}
    \right),
\end{equation}
and let it act as the identity on all other basis vectors. By using this Givens rotation, we will introduce
\begin{equation}
    \Pi^{(1)}(\phi;p',q')
    \coloneq
    U_1\!\left(u(\phi;p',q')\right)
    \ketbra{p'}\,
    U_1\!\left(u(\phi;p',q')\right)^\dagger .
\end{equation}
This matrix is also expressed as 
\begin{equation}
    \Pi^{(1)}(\phi;p',q')
    =
    \frac12
    \left(
        \ketbra{p'}{p'}
        +
        \ketbra{q'}{q'}
        +
        e^{-i\phi}\ketbra{p'}{q'}
        +
        e^{i\phi}\ketbra{q'}{p'}
    \right).
\end{equation}
Let
$\Phi_4\coloneq\{0,\pi/2,\pi,3\pi/2\}$.  By the discrete
Fourier orthogonality relation
\begin{equation}
    \sum_{\phi\in\Phi_4} e^{i\ell\phi}
    =
    \begin{cases}
        4, & \ell\equiv 0 \pmod 4,\\
        0, & \ell\not\equiv 0 \pmod 4,
    \end{cases}
\end{equation}
we have
\begin{equation}
\begin{aligned}
    \frac12
    \sum_{\phi\in\Phi_4}
    e^{i\phi}\,
    \Pi^{(1)}(\phi;p',q')
    &=
    \frac14
    \sum_{\phi\in\Phi_4}
    e^{i\phi}
    \left(
        \ketbra{p'}{p'}
        +
        \ketbra{q'}{q'}
        +
        e^{-i\phi}\ketbra{p'}{q'}
        +
        e^{i\phi}\ketbra{q'}{p'}
    \right)
    \\
    &=
    \frac14
    \left[
        \left(\sum_{\phi\in\Phi_4} e^{i\phi}\right)
        \left(
            \ketbra{p'}{p'}
            +
            \ketbra{q'}{q'}
        \right)
        +
        \left(\sum_{\phi\in\Phi_4} 1\right)
        \ketbra{p'}{q'}
        +
        \left(\sum_{\phi\in\Phi_4} e^{2i\phi}\right)
        \ketbra{q'}{p'}
    \right]
    \\
    &=
    \frac14
    \left(
        0
        +
        4\ketbra{p'}{q'}
        +
        0
    \right)
    =
    \ketbra{p'}{q'}.
\end{aligned}
\label{eq:one-particle-fourier-extraction}
\end{equation}
Therefore, $
    \ketbra{p'}{q'}
    =
    \frac12
    \sum_{\phi\in\Phi_4}
    e^{i\phi}\,
    \Pi^{(1)}(\phi;p',q') 
$ holds, so a one-particle off-diagonal matrix unit is a linear
combination of rotated diagonal projectors.

We now apply the same idea to $\ketbra{\vec p}{\vec q}$ with $\vec{p}', \vec q' \in \mac{S}_{N,m}$.
For $\vec\phi=(\phi_1,\ldots,\phi_m)\in\Phi_4^{\times m}$, let
$u(\vec\phi;\vec p',\vec q')\in\mtr U(N)$ be the unitary that
acts as $u(\phi_j;p'_j,q'_j)$ on
$\mtr{span}\{\ket{p'_j},\ket{q'_j}\}$ for each $j=1,\ldots,m$,
and as the identity on the orthogonal complement.  Since the
modes $p'_1,\ldots,p'_m,q'_1,\ldots,q'_m$ are all distinct,
these two-dimensional rotations act on mutually orthogonal
subspaces and hence define a well-defined block-diagonal unitary.

We will introduce the extended version of $\Pi^{(1)} (\phi;  p', q')$ as follows,
\begin{equation}
    \Pi^{(\eta)}(\vec\phi;\vec p,\vec q)
    \coloneq
    U_\eta \!\left(u(\vec\phi;\vec p',\vec q')\right)
    \Pi_{\vec p}
    U^\dagger_\eta\!\left(u(\vec\phi;\vec p',\vec q')\right) .
\end{equation}
We now expand this projector explicitly.  For $I\subseteq[m]$,
we will define the occupation configuration of length $m$ as follows,
\begin{equation}
    \vec \sigma_I
    \coloneq
    \mtr{sort}
    \left(
        \{q'_j:j\in I\}
        \cup
        \{p'_j:j\notin I\}
    \right)
\end{equation}
where $\mtr{sort}$ means that the occupied modes are written in
increasing order.  We also write
$\phi_I\coloneq\sum_{j\in I}\phi_j$, with
$\phi_\emptyset=0$. For example, suppose that
$
    \vec p=(1,3,5),
    \vec q=(1,4,6) \in \mac{S}_{6,3}.
$ Then the common part is $\vec r=(1)$, and the non-common modes
are $ \vec{p}' = (3,5), \vec{q}'=(4,6)$, hence $m=|\vec p| - |\vec r| = 2$. The four subsets of $[m]$
give the following occupation configurations:
\begin{equation}
\begin{array}{c|c|c}
    I & \vec \sigma_I & \vec r \cup \vec \sigma_I\\ \hline
    \emptyset & (3,5) = \vec{p}' & (1,3,5)=\vec p \\
    \{1\}    & (4,5) & (1,4,5) \\
    \{2\}  & (3,6) & (1,3,6) \\
    \{1,2\} & (4,6) = \vec{q}' & (1,4,6)=\vec q
\end{array}
\end{equation}
Thus, $I$ records which of the modes $p'_j$ are replaced by
$q'_j$.  For instance, $I=\{1\}$ means that $p'_1=3$ is
replaced by $q'_1=4$, while $p'_2=5$ is left unchanged. In the same example, the corresponding phase sums are $
    \phi_\emptyset=0,
    \phi_{\{1\}}=\phi_1,
    \phi_{\{2\}}=\phi_2,
    \phi_{\{1,2\}}=\phi_1+\phi_2.
$

Using this notation, we will introduce appropriate Givens rotations 
\(u(\vec\phi;\vec p',\vec q')\). We will define the action of
\(u(\vec\phi;\vec p',\vec q')\) on the non-common element basis vector as follows,
\begin{align}
    \forall j \in [m], \ u(\vec\phi;\vec p',\vec q')\ket{p'_j}
    \coloneq 
    \frac{1}{\sqrt{2}}
    \left(
        \ket{p'_j}+e^{i\phi_j}\ket{q'_j}
    \right),
\end{align}
Additionally, the action of \(u(\vec\phi;\vec p',\vec q')\) on the common vector $\ket{\vec r} = \ket{\vec p \cap \vec q}$ is defined by
\begin{align}
    \forall j \in [ \abs{\vec r}], \ u(\vec\phi;\vec p',\vec q')\ket{ r_j } \coloneq \ket{r_j}. 
    \label{eq:def_action_Givens_rot_common}
\end{align}
By multilinearity of the exterior product, we obtain
\begin{align}
    u(\vec\phi;\vec p',\vec q')^{\wedge m}\ket{\vec p'}
    &=
    \bigwedge_{j=1}^m
    u(\vec\phi;\vec p',\vec q')\ket{p'_j} \\
    &=
    \frac{1}{2^{m/2}}
    \bigwedge_{j=1}^m
    \left(
        \ket{p'_j}+e^{i\phi_j}\ket{q'_j}
    \right) \\
    &=
    \frac{1}{2^{m/2}}
    \sum_{I\subseteq[m]}
    e^{i\phi_I}
    \qty(\bigwedge_{j\notin I}\ket{p'_j} )
    \wedge
    \qty( \bigwedge_{j\in I}\ket{q'_j} ),
\end{align}
where \(\phi_I:=\sum_{j\in I}\phi_j\).  After reordering each exterior
product into increasing order, a fermionic sign appears.  Denoting this
sign by \(\tau_I\in\{\pm1\}\), we obtain
\begin{align}
    u(\vec\phi;\vec p',\vec q')^{\wedge m}\ket{\vec p'}
    =
    \frac{1}{2^{m/2}}
    \sum_{I\subseteq[m]}
    \tau_I e^{i\phi_I}\ket{\vec \sigma_I}.
    \label{eq:expanded-rotated-slater-vector}
\end{align}
Here $\tau_\emptyset=1$, because $I=\emptyset$ corresponds to
the original ordered vector $\ket{\vec p'}$.  

Taking the outer product of both sides of
Eq.~\eqref{eq:expanded-rotated-slater-vector} and using the fact that
\(u(\vec\phi;\vec p',\vec q')\) acts trivially on the common mode
\(\ket{\vec r}\), as shown in
Eq.~\eqref{eq:def_action_Givens_rot_common}, we obtain
\begin{equation}
    \Pi(\vec\phi;\vec p,\vec q)
    =
    \frac{1}{2^m}
    \sum_{I,J\subseteq[m]}
    \tau_I\tau_J
    e^{i(\phi_I-\phi_J)}
    (\ket{\vec \sigma_I} \wedge \ket {\vec r} ) (\bra {\vec \sigma_J}  \wedge \bra{\vec r}) .
    \label{eq:expanded-rotated-slater-projector}
\end{equation}
The desired matrix unit corresponds to $I=\emptyset$ and
$J=[m]$.  The corresponding term is
$\frac{1}{2^m}\epsilon_{\vec p',\vec q'}
e^{-i\sum_{j=1}^m\phi_j}\ketbra{\vec p}{\vec q}$, where \(\epsilon_{\vec p',\vec q'}\in\{\pm1\}\) is the sign factor
arising from \(\tau_I\), \(\tau_J\), and the sorting needed to write
\(\ket{\vec \sigma_I}\wedge\ket{\vec r}\) and
\(\bra{\vec \sigma_J}\wedge\bra{\vec r}\) in the standard ordered basis.
Similarly, the reverse matrix unit $\ketbra{\vec q}{\vec p}$
corresponds to $I=[m]$ and $J=\emptyset$, and has phase
$e^{i\sum_j\phi_j}$.

We now isolate $\ketbra{\vec p}{\vec q}$ by a discrete
Fourier projection. We will multiply
Eq.~\eqref{eq:expanded-rotated-slater-projector} by
$e^{i\sum_{\ell=1}^m\phi_\ell}$ and sum over
$\vec\phi\in\Phi_4^{\times m}$.  For fixed $I,J\subseteq[m]$, the phase
factor becomes
$\prod_{\ell=1}^m e^{i(1+\mtr{id}_{\ell\in I}
-\mtr{id}_{\ell\in J})\phi_\ell}$, where $\mtr{id}$ denotes an indicator function defined by
\begin{align}
    \mtr{id}_{l \in I} =\begin{cases}
        1 \quad & \text{when } l \in I, \\
        0 \quad & \text{otherwise.}
    \end{cases}
\end{align}
Using this function, we obtain the following equality,
\begin{equation}
    \sum_{\vec\phi\in\Phi_4^{\times m}}
    e^{i\sum_{\ell=1}^m\phi_\ell}
    e^{i(\phi_I-\phi_J)}
    =
    \prod_{\ell=1}^m
    \left(
        \sum_{\phi_\ell\in\Phi_4}
        e^{i\left(
            1+\mtr{id}_{\ell\in I}
            -
            \mtr{id}_{\ell\in J}
        \right)\phi_\ell}
    \right).
    \label{eq:product-fourier-factor}
\end{equation}
For each $\ell$, the integer
$1+\mtr{id}_{\ell\in I}-\mtr{id}_{\ell\in J}$ is one of
$0,1,2$.  By the discrete Fourier orthogonality relation on
$\Phi_4$, the inner sum in
Eq.~\eqref{eq:product-fourier-factor} is nonzero only when this
integer is $0$.  This happens if and only if
$\ell\notin I$ and $\ell\in J$.  For this to hold for every
$\ell=1,\ldots,m$, we must have $I=\emptyset$ and $J=[m]$.

Thus every term in the expansion vanishes under the Fourier sum
except the desired matrix unit $\ketbra{\vec p'}{\vec q'}$.
Consequently, recalling that $\ket{\vec p'} = \ket{\vec p \setminus \vec r} $ and $\ket{\vec q'} = \ket{\vec q \setminus \vec r}$ provides
\begin{equation}
    \sum_{\vec\phi\in\Phi_4^{\times m}}
    e^{i\sum_{j=1}^m\phi_j}
    \Pi(\vec\phi;\vec p,\vec q)
    = 2^m
    \epsilon_{\vec p',\vec q'} \ketbra{\vec p}{\vec q}.
\end{equation}
Since $\epsilon_{\vec p',\vec q'}^{-1}
=\epsilon_{\vec p',\vec q'}$, we obtain
\begin{equation}
    \ketbra{\vec p}{\vec q}
    =
    \frac{\epsilon_{\vec p',\vec q'}}{2^m}
    \sum_{\vec\phi\in\Phi_4^{\times m}}
    e^{i\sum_{j=1}^m\phi_j}
    \Pi(\vec\phi;\vec p,\vec q).
    \label{eq:matrix-unit-fourier-extraction-detailed}
\end{equation}
This proves that every off-diagonal matrix unit
$\ketbra{\vec p'}{\vec q'}$ is a linear combination of rotated
Slater projectors with initial projector $\Pi_{\vec p}$. 

It remains only to replace the initial projector $\Pi_{\vec p}$
by the fixed reference projector $\Pi_{[\eta]}$.  Choose a
permutation matrix $v_{\vec p}\in\mtr U(N)$ such that
$U_\eta(v_{\vec p})\ket{[\eta]}=\ket{\vec p}$.  Then
$\Pi_{\vec p}
=U_\eta(v_{\vec p})\Pi_{[\eta]}U_\eta(v_{\vec p})^\dagger$.
Using the homomorphism property
$U_\eta(u)U_\eta(v)=U_\eta(uv)$, we get
\begin{equation}
    \Pi(\vec\phi;\vec p,\vec q)
    =
    U_\eta\!\left(
        u(\vec\phi;\vec p',\vec q')v_{\vec p}
    \right)
    \Pi_{[\eta]}
    U^\dagger_\eta\!\left(
        u(\vec\phi;\vec p',\vec q')v_{\vec p}
    \right) .
\end{equation}
Hence every projector appearing in
Eq.~\eqref{eq:matrix-unit-fourier-extraction-detailed} is of
the required form
$U_\eta(u)\Pi_{[\eta]}U_\eta(u)^\dagger$, with
$u\in\mtr U(N)$.  Therefore,
\begin{equation}
    \ketbra{\vec p}{\vec q}
    \in
    \mtr{span}_{\mathbb C}
    \left\{
        U_\eta(u)\Pi_{[\eta]}U_\eta(u)^\dagger
        \mid
        u\in\mtr U(N)
    \right\}.
\end{equation}
Since the matrix units $\ketbra{\vec p}{\vec q}$, where
$\vec p,\vec q\in\mac S_{N,\eta}$, form a basis of
$\mtr{End}(\mac H_\eta)$, the theorem follows.
\qed
\\

This theorem establishes the informational completeness of the
orbital-rotation shadow protocol on the fixed-particle-number
sector: the corresponding measurement channel is injective on
this sector and admits an algebraic inverse on its image.  The proof is constructive, since each matrix unit
\(\ketbra{\vec p}{\vec q}\) is extracted by a finite Fourier
combination of rotated Slater projectors.  Thus, the inversion step is
justified at the algebraic level.

\section{Grassmannian reformulation}
\label{sec:Grassmannian_reformulation}

\subsection{Measurement channel as a Grassmannian integral}
\label{sec:Grassmannian_moment}

In this section, we introduce a Grassmannian formulation of the orbital-rotation shadow. 
This reformulation is useful for two reasons:
First, it rewrites the average over Haar-random orbital rotations and measurement outcomes in the computational basis as an invariant Grassmannian integral, which leads to a transparent expression for the measurement channel and its inverse. 
Second, it makes explicit what randomness is used in the protocol: the relevant measurement frame is not the full $\eta$-particle Hilbert space, but the Grassmannian of $\eta$-dimensional one-particle subspaces. 
In this sense, the protocol is randomized over the manifold of single Slater determinants compatible with particle-number symmetry.

To introduce the Grassmannian formulation, we focus on a sample
\((u,\vec z)\) of the orbital-rotation shadow protocol, where
\(u\sim \mathrm{Haar}(\mathrm U(N))\) and
\(\vec z\in\mac S_{N,\eta}\).  The measurement effect associated with
this outcome, pulled back to the original basis, is
$
    U_\eta(u)^\dagger P_{\vec z}^{\wedge\eta} U_\eta(u)
    =
    (u^\dagger P_{\vec z}u)^{\wedge\eta}.
$
Hence \((u,\vec z)\) naturally determines the rank-\(\eta\)
one-particle projector
$
R(u,\vec z)=u^\dagger P_{\vec z}u
$
and hence a point of the Grassmannian
\begin{align}
\mathcal P_\eta
=
\left\{
R\in \mathrm{End}(\mathbb C^N)
\mid
R^2=R,\ R^\dagger=R,\ \mathrm{Tr}(R)=\eta
\right\}
\simeq \mathrm{Gr}_\eta(\mathbb{C}^N) \simeq \mtr{U}(N) / \qty( \mtr{U}(\eta) \times \mtr{U}(N-\eta) ).    
\end{align}
From this observation, the measurement outcome can be regarded intrinsically as a single Slater-determinant projector on $\mac{H}_\eta= (\mathbb C^N)^{\wedge \eta}$. 

In what follows, we use this Grassmannian picture only to state the resulting $\mtr{U}(N)$-invariant integral form of the measurement channel. The geometric construction of the well-defined invariant measure, together with the detailed justification of its equivalence to the original orbital-rotation protocol, is deferred
to Sec.\ref{sec:Geometric_costruction}.

Under Low's orbital-rotation shadow protocol, a measurement
outcome $(u,\vec z) \in \mtr{U}(N) \times \mac{S}_{N,\eta}$ determines the
rank-$\eta$ projector $u^\dagger P_{\vec z}u$, which represents the occupied
$\eta$-dimensional single-particle subspace pulled back to the original basis.
Let $F$ be a test function on an $\eta$-particle subspace $\mac{H}_\eta$. For this function, we define
the corresponding moment by
\begin{align}
    \mathbb{E} [F]
    \coloneqq
    \int_{\mtr{U}(N)} \dd{u}
    \sum_{\vec z \in \mac{S}_{N,\eta}}
    \tr\!\left(
        \rho\,
        U_\eta^\dagger(u)
         P^{\wedge \eta }_{\vec z}
        U_\eta(u)
    \right)
    F\!\left(u^\dagger P_{\vec z}u\right).
\end{align}

Using the \(\mtr{U}(N)\)-invariant measure on
\(\mtr{Gr}_{\eta}(\mathbb{C}^N)\) induced by the Haar measure on
\(\mtr{U}(N)\), we obtain
\begin{align}
    \mathbb{E} [F]
    &=
    \int_{\mtr{U}(N)} \dd{u}
    \sum_{\vec z \in \mac{S}_{N,\eta}}
    \tr\!\left(
        \rho\,
        U_\eta^\dagger(u)
        P^{\wedge \eta}_{\vec z}
        U_\eta(u)
    \right)
    F\!\left(u^\dagger P_{\vec z}u\right)                                      \\
    &=
    \int_{\mtr{U}(N)} \dd{u}
    \sum_{\vec z \in \mac{S}_{N,\eta}}
    \tr\!\left(
        \rho\,
        U_\eta^\dagger(w_{\vec z})
        P_{[\eta]}^{\wedge \eta}
        U_\eta(w_{\vec z})
    \right)
    F\!\left(w_{\vec z}^\dagger P_{[\eta]} w_{\vec z}\right)                    \\
    &=
    \sum_{\vec z \in \mac{S}_{N,\eta}}
    \int_{\mtr{U}(N)} \dd{w}\,
    \tr\!\left(
        \rho\,
        U_\eta^\dagger(w)
        P_{[\eta]}^{\wedge \eta}
        U_\eta(w)
    \right)
    F\!\left(w^\dagger P_{[\eta]} w\right)                                      \\
    &=
    \sum_{\vec z \in \mac{S}_{N,\eta}}
    \int_{\mtr{Gr}_\eta(\mathbb{C}^N)} \dd{R}\,
    \tr\!\left(
        \rho\,R^{\wedge \eta}
    \right)
    F(R)                                                                        \\
    &=
    \binom{N}{\eta}
    \int_{\mtr{Gr}_\eta(\mathbb{C}^N)} \dd{R}\,
    \tr\!\left(
        \rho\,R^{\wedge \eta}
    \right)
    F(R).
    \label{eq:Grassmannian_moment}
\end{align}
Here, for each \(\vec z\in \mac{S}_{N,\eta}\), we choose a permutation
unitary \(v_{\vec z}\in \mtr{U}(N)\) such that
$U_\eta(v_{\vec z})\ket{[\eta]}=\ket{\vec z}$. Equivalently,
$P_{\vec z}=v_{\vec z}P_{[\eta]}v_{\vec z}^\dagger$, and hence
$P_{\vec z}^{\wedge\eta}
=
U_\eta(v_{\vec z})
P_{[\eta]}^{\wedge\eta}
U_\eta^\dagger(v_{\vec z})$.
The second equality follows by setting
$w_{\vec z}=v_{\vec z}^\dagger u$. The third equality uses the left
invariance of the Haar measure, which allows us to replace
\(w_{\vec z}\) by a new integration variable \(w\).  The passage from the
unitary integral to the integral over  Grassmannian is obtained by pushing forward
the Haar measure under the map
$w\mapsto w^\dagger P_{[\eta]}w$. The corresponding geometric identification
is discussed in Sec.~\ref{sec:Geometric_costruction}.

\subsection{Derivation of the inverse measurement channel from Grassmannian moments}
\label{sec:inverse-measurement-channel-grassmannian}

In this section, we derive the inverse of the orbital-rotation measurement channel in terms of the contraction and extension maps. By applying the contraction map to the channel output, we obtain its \(t\)-particle exterior moment. We show that this moment depends only on the \(s\)-particle contractions of the input operator with \(s\leq t\). Hence, the measurement moments can be solved successively in increasing order of \(t\), and the resulting relations admit a simple binomial inversion. This gives both the inverse measurement channel and an unbiased estimator for the \(k\)-RDM. In this way, we recover Low's orbital-rotation fermionic shadow estimator \cite{low2022classical} without using the explicit hypergeometric-sum inversion appearing in the original derivation. This agreement confirms that the Grassmannian formulation gives a sector-consistent and equivalent description of the orbital-rotation shadow protocol.

To end this, we will use two technical facts. We first record a reference overlap formula that will be used to determine the coefficients in the Grassmannian moment expansion. This formula evaluates the overlap between the reference \(\eta\)-particle Slater projector and a \(t\)-particle coordinate minor.

\begin{lemma}[Reference overlap moment]
\label{lem:reference-overlap-moment}
Let \(\vec a\in\mac S_{N,t}\), and set
$
    q:=|\vec a\cap[\eta]|.
$
Then
\begin{align}
    \binom{N}{\eta}
    \int_{\mtr{Gr}_{\eta} (\mathbb{C}^N ) }
    \det R_{[\eta],[\eta]}\,
    \det R_{\vec a,\vec a}\,\dd R 
    =
    \frac{1}{\binom{N+1}{t}}
    \sum_{s=0}^{q}
    \frac{\binom{\eta-s}{t-s}}{\binom{t}{s}}
    \binom{q}{s}.
    \label{eq:reference-overlap-moment}
\end{align}
\end{lemma}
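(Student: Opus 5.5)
The plan is to use the stabilizer symmetry of the reference Slater projector to reduce the claim to $t+1$ scalar constants $c_0,\dots,c_t$ (indexed by the overlap $q$), and then to evaluate these constants with a Jacobi-ensemble (Selberg/Aomoto) integral. First I rewrite both determinants as overlaps with exterior powers of $R$. With $R=w^\dagger P_{[\eta]}w$ we have $\det R_{[\eta],[\eta]}=\tr_{\mac H_\eta}(\Pi^{(\eta)}_{[\eta]}R^{\wedge\eta})$ and $\det R_{\vec a,\vec a}=\bra{\vec a}R^{\wedge t}\ket{\vec a}$. Since $\operatorname{rank}R=\eta$, Lemma~\ref{lem:contraction-exterior-projection} gives $R^{\wedge t}=\mathcal C_{\eta,t}(R^{\wedge\eta})$. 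The left-hand side of Eq.~\eqref{eq:reference-overlap-moment} is therefore $\bra{\vec a}T_t\ket{\vec a}$, where
\begin{align}
    T_t\coloneq\binom{N}{\eta}\int_{\mtr{Gr}_\eta(\mathbb C^N)}\det R_{[\eta],[\eta]}\,R^{\wedge t}\,\dd R\in\mtr{End}(\mac H_t).
\end{align}

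Next I use the symmetry. Both the invariant measure and the weight $\det R_{[\eta],[\eta]}$ are invariant under $v\in \mtr U(\eta)\times\mtr U(N-\eta)$, the stabilizer of $P_{[\eta]}$. Hence $T_t$ commutes with $v^{\wedge t}$. As a representation of this group, $\mac H_t$ decomposes multiplicity-freely as $\bigoplus_{s}(\mathbb C^\eta)^{\wedge s}\otimes(\mathbb C^{N-\eta})^{\wedge(t-s)}$, and each summand is irreducible. By Schur's lemma, $T_t=\sum_s c_s\,\Pi_s$, where $\Pi_s$ projects onto the span of the $\ket{\vec a}$ with $|\vec a\cap[\eta]|=s$. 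In particular $\bra{\vec a}T_t\ket{\vec a}=c_q$ depends only on $q$, which explains the form of the right-hand side.

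To determine the $c_s$, I pair $T_t$ against the invariant test operators $(P_{[\eta]}^{\wedge s})\wedge(P_{[\eta]^c}^{\wedge(t-s)})$. By the symmetry reduction this gives $\binom{\eta}{s}\binom{N-\eta}{t-s}c_s$ on one side. On the other side it gives an integral of the mixed elementary symmetric function of the two diagonal blocks of $R$, which is the coefficient of $x^sy^{t-s}$ in $\det(1+xR_{[\eta],[\eta]}+yR_{[\eta]^c,[\eta]^c})$. Writing $R$ through principal angles, $R_{[\eta],[\eta]}$ has eigenvalues $\lambda_1,\dots,\lambda_m$ (with $m=\min(\eta,N-\eta)$, the remaining ones equal to $1$), and the nonzero spectrum of $R_{[\eta]^c,[\eta]^c}$ is $\{1-\lambda_i\}$. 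The pushforward of $\dd R$ to the $\lambda$'s is the Jacobi ensemble with density proportional to $\prod_i\lambda_i^{\eta-m}(1-\lambda_i)^{N-\eta-m}\prod_{i<j}|\lambda_i-\lambda_j|^2$, and the weight contributes a further factor $\prod_i\lambda_i$. Each $c_s$ thus becomes a Selberg-type integral of an elementary-symmetric-type polynomial, which Aomoto's extension of Selberg's formula evaluates in closed form. The prefactor $1/\binom{N+1}{t}$ should come out of the shifted Selberg normalization, because the extra $\prod\lambda_i$ shifts $N\to N+1$.

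The main obstacle is this last evaluation: the mixed $e_s(\lambda)e_{t-s}(1-\lambda)$ moments are not a single Aomoto integral. I would first try to avoid computing them directly. The alternative is to expand $\det R_{\vec a,\vec a}$ by Cauchy--Binet in $|\vec a\cap[\eta]|$ blocks, or to note that the claimed right-hand side is a binomial transform in $q$. It would then suffice to verify the inverted identity
\begin{align}
    \sum_{q}(-1)^{s-q}\binom{s}{q}c_q=\frac{\binom{\eta-s}{t-s}}{\binom{N+1}{t}\binom{t}{s}},
\end{align}
where the left-hand side is the "connected" moment, which should reduce to a single Aomoto integral with $e_s(\lambda)$ only. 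If that still resists, I would check small cases ($t=1$, where $c_0,c_1$ follow from the trace normalization $\tr T_t=\binom{\eta}{t}$ and $\tr(P_{[\eta]}^{\wedge t}T_t)$ computable from Lemma~\ref{lem:contraction-exterior-projection}) and induct on $t$ using the extension composition rule of Lemma~\ref{lem:extension-basic}.
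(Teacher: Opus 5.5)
Your symmetry reduction is sound: $\det R_{[\eta],[\eta]}$ and $\dd R$ are invariant under the stabilizer $\mtr U(\eta)\times\mtr U(N-\eta)$, and $\mac H_t\simeq\bigoplus_s(\mathbb C^\eta)^{\wedge s}\otimes(\mathbb C^{N-\eta})^{\wedge(t-s)}$ is multiplicity-free, so $\bra{\vec a}T_t\ket{\vec a}=c_q$. But this only explains why the answer depends on $\vec a$ through $q$. The content of the lemma is the values $c_q$, and your proposal never computes them, so there is a genuine gap.

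Your fallbacks do not close it. The quantity $\tr(P_{[\eta]}^{\wedge t}T_t)=\binom{N}{\eta}\int\det R_{[\eta],[\eta]}\,e_t(R_{[\eta],[\eta]})\,\dd R$ is a genuine weighted second moment. Lemma~\ref{lem:contraction-exterior-projection} is a pointwise identity in $R$ and cannot produce it. An induction on $t$ via $\mac C_{t+1,t}(R^{\wedge(t+1)})=(\eta-t)R^{\wedge t}$ yields only $t+1$ linear relations among the $t+2$ constants at level $t+1$. So one fresh moment per level (for instance the value for $\vec a\subseteq[\eta]$) is still required. You also cannot appeal to Lemma~\ref{lem:exterior-grassmannian-moment-hierarchy}, since the paper derives it from this lemma.

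Moreover, the obstacle you identified stems from an error. $\tr(\Pi_sR^{\wedge t})$ is a sum of principal minors $\det R_{S,S}$ whose index sets straddle $[\eta]$ and $[\eta]^c$, so it involves the off-diagonal block of $R$. It is the coefficient of $x^sy^{t-s}$ in $\det\bigl(I+(xP_{[\eta]}+yP_{[\eta]^c})R\bigr)$, not in $\det(I+xR_{[\eta],[\eta]})\det(I+yR_{[\eta]^c,[\eta]^c})$, and its moments are not $e_s(\lambda)e_{t-s}(1-\lambda)$ moments.

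In principal-angle coordinates, $R$ is a direct sum of rank-one $2\times2$ blocks plus trivial $1\times1$ blocks. The determinant then equals $\prod_{i=1}^{\eta}\bigl(1+y+(x-y)\tilde\lambda_i\bigr)=\sum_r(1+y)^{\eta-r}(x-y)^r e_r(\tilde\lambda)$, with $\tilde\lambda$ the full spectrum of $R_{[\eta],[\eta]}$. So you only need $\binom{N}{\eta}\int\det R_{[\eta],[\eta]}\,e_r(\tilde\lambda)\,\dd R$, which is a single Aomoto integral. For $2\eta\le N$ it has $\alpha=2$, $\beta=N-2\eta+1$, $\gamma=1$ and equals $\binom{\eta}{r}\binom{\eta+1}{r}/\binom{N+1}{r}$. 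After that, a binomial identity must still be verified.

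The paper sidesteps random-matrix integrals entirely:
\begin{enumerate}
\item It uses the marginal identity $\det R_{\vec a,\vec a}=\sum_{\vec b\supseteq\vec a}\det R_{\vec b,\vec b}$, which follows from $\mac C_{\eta,t}(R^{\wedge\eta})=R^{\wedge t}$, to reduce to $\eta$-particle overlaps.
\item It reads each overlap off as a diagonal entry of the two-fold twirl over $\wedge^\eta\mtr U(N)$ from Low's Theorem~8, namely $\frac{1}{\binom{N+1}{\eta}\binom{N}{\eta}}\frac{\eta+1}{\eta+1-|[\eta]\cap\vec b|}$.
\item It sums over $\vec b$ with Vandermonde and hockey-stick identities (Lemma~\ref{lem:technical-binomial-sum}).
\end{enumerate}
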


The proof of Lemma~\ref{lem:reference-overlap-moment} follows from the two-fold Grassmannian moment formula together with a combinatorial summation. Since the latter calculation is somewhat lengthy and does not play a conceptual role in the main argument, we defer it to Lemma~\ref{lem:technical-binomial-sum} and Appendix. 

Additionally, to write the measurement channel and inverse channel with contraction and extension map, we provide the following lemma,
\begin{lemma}
    \label{lem:expression_U(N)-invariant}
    Let $0 \leq t \leq \eta < N $. We assume a map $\Psi: \mtr{End}(\mac{H}_\eta) \to \mtr{End}(\mac{H}_t)  $ satisfies
    \begin{align}
        \Psi( U_\eta(u) X U_\eta^\dagger (u)  ) = U_t (u)\Psi(X) U_t^\dagger (u), 
        \label{eq:U(N)-equivariant}
    \end{align}
    for every $u \in \mtr{U}(N)$ and $X \in \mtr{End} (\mac{H}_\eta)$. 
    Then, $\Psi$ is expressed as
    \begin{align}
        \Psi(\cdot ) = \sum^t_{s=0} c_{s} \mac{E}_{s,t} ( \mac{C}_{\eta, s} (\cdot)), 
    \end{align}
    where $c_s$ are coefficients.
\end{lemma}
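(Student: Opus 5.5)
The approach is to compare dimensions. I would show that the space of $\mtr U(N)$-equivariant linear maps $\mtr{End}(\mac H_\eta)\to\mtr{End}(\mac H_t)$ has dimension $d+1$, where $d:=\min(t,N-\eta)$. I would then exhibit $d+1$ linearly independent maps among the candidates $\mac E_{s,t}\circ\mac C_{\eta,s}$, $0\le s\le t$. Throughout I take $\Psi$ to be linear, as is implicit in the statement. When $d<t$, the remaining maps with $s>d$ lie automatically in the span, so the claimed form still holds with suitable coefficients.

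\emph{Step 1: the candidates are equivariant.} The contraction map is built from $a^\dagger_{p_1}\cdots a^\dagger_{p_s}a_{q_s}\cdots a_{q_1}$. The collection of these monomials transforms covariantly under $U_\eta(u)$, with the coefficient matrix transforming by $u^{\wedge s}$ and its conjugate. Hence $\mac C_{\eta,s}(U_\eta(u)XU_\eta(u)^\dagger)=U_s(u)\,\mac C_{\eta,s}(X)\,U_s(u)^\dagger$. By the duality in Eq.~\eqref{eq:duality_contraction_extension}, $\mac E_{s,t}$ is equivariant as well, so each composition satisfies Eq.~\eqref{eq:U(N)-equivariant}.

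\emph{Step 2: dimension count.} Write $\mtr{End}(\mac H_\eta)\simeq\Lambda^\eta\otimes(\Lambda^\eta)^*$. By the Pieri rule this representation is multiplicity-free: it decomposes as $\bigoplus_{s=0}^{\min(\eta,N-\eta)}V_{\lambda_s}$, where $\lambda_s$ is the highest weight $(1^s,0^{N-2s},(-1)^s)$. The same holds for $\mtr{End}(\mac H_t)$ with $0\le s\le\min(t,N-t)$. By Schur's lemma, $\dim\mtr{Hom}_{\mtr U(N)}$ equals the number of common labels $s$, namely $\min(t,N-t,\eta,N-\eta)+1=\min(t,N-\eta)+1=d+1$, using $t\le\eta$. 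This step is standard but requires checking the decomposition of $\Lambda^\eta\otimes\Lambda^{N-\eta}$ (via $(\Lambda^\eta)^*\simeq\Lambda^{N-\eta}\otimes\det^{-1}$).

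\emph{Step 3: linear independence, the main obstacle.} I would use a triangularity argument based on the ``excitation level'' of matrix units. For $\vec p,\vec q\in\mac S_{N,\eta}$ with $|\vec p\setminus\vec q|=m$, the entry $\langle\vec a|\mac C_{\eta,s}(\ketbra{\vec p}{\vec q})|\vec b\rangle=\langle\vec q|a^\dagger_{\vec a}a_{\vec b}|\vec p\rangle$ vanishes unless the $s$-body operator can move $m$ particles. So it vanishes whenever $s<m$, and for $s=m$ it is nonzero when $\vec a=\vec q\setminus\vec p$ and $\vec b=\vec p\setminus\vec q$.

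Next, $\mac E_{m,t}$ is injective for $m\le\min(t,N-t)$. To see this, pair it against $\mac C_{t,m}$ and use Lemma~\ref{lem:contraction-exterior-projection}-type counting: $\mac C_{t,m}\mac E_{m,t}$ acts on the top irreducible component by a nonzero scalar. For each $m\le d$, pairs $\vec p,\vec q$ with $|\vec p\setminus\vec q|=m$ exist because $m\le\min(\eta,N-\eta)$. Consequently, for $m=0,1,\dots,d$ the evaluation matrix $\bigl[(\mac E_{s,t}\circ\mac C_{\eta,s})(\ketbra{\vec p_m}{\vec q_m})\bigr]_{m,s}$ is triangular with nonzero diagonal. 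This gives linear independence of the maps with $s=0,\dots,d$.

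The delicate point is verifying that the diagonal term $\mac E_{m,t}\mac C_{\eta,m}(\ketbra{\vec p_m}{\vec q_m})$ is nonzero when $t<\eta$. I would check this directly: it equals a signed $a^\dagger_{\vec q\setminus\vec p}a_{\vec p\setminus\vec q}$ restricted to $\mac H_t$. This operator is nonzero on $\mac H_t$ because we can pick a $t$-particle state containing $\vec p\setminus\vec q$ and avoiding $\vec q\setminus\vec p$, which is possible since $m\le t$ and $t+m\le N$.

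Combining Steps 1--3, the $d+1$ independent equivariant maps span the $(d+1)$-dimensional space $\mtr{Hom}_{\mtr U(N)}$. Therefore every equivariant $\Psi$ equals $\sum_{s=0}^{t}c_s\,\mac E_{s,t}\circ\mac C_{\eta,s}$, with $c_s=0$ allowed for $s>d$.
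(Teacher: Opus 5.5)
Your proposal is correct and follows essentially the same route as the paper. The multiplicity-free decomposition of $\mathrm{End}(\mathcal H_r)$ together with Schur's lemma gives an equivariant Hom-space of dimension $\min(t,N-\eta)+1$. Linear independence of $\mathcal E_{s,t}\circ\mathcal C_{\eta,s}$ for $s\le\min(t,N-\eta)$ then comes from the same triangular evaluation on matrix units with $|\vec p\setminus\vec q|=m$. The diagonal entry is certified nonzero on a $t$-particle basis state that contains $\vec p\setminus\vec q$ and avoids $\vec q\setminus\vec p$; this is the paper's $\ket{B_q\cup F_q}\mapsto\pm\ket{A_q\cup F_q}$, read off through scalar functionals $\ell_q$ rather than your operator-valued entries. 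Your side claim that $\mathcal E_{m,t}$ is injective can be dropped: the top-component argument you sketch would not by itself establish it, and the direct check already supplies the nonzero diagonal entries.
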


Using Lemmas~\ref{lem:reference-overlap-moment} and \ref{lem:expression_U(N)-invariant}, we can determine the explicit form of the measurement channel and the inverse channel. To this end, we will prove the following lemma.

\begin{lemma}
\label{lem:exterior-grassmannian-moment-hierarchy}
Let
\(
    \mac M_\eta:
    \operatorname{End}(\mac H_\eta)
    \longrightarrow
    \operatorname{End}(\mac H_\eta)
\)
be the orbital-rotation shadow measurement channel expressed by
\begin{align}
    \mac M_\eta(\rho)
    =
    \binom{N}{\eta}
    \int_{\mtr{Gr}_{\eta} (\mathbb{C}^N ) }
    \operatorname{tr}_{\mac H_\eta}
    \left[
        \rho R^{\wedge \eta}
    \right]
    R^{\wedge \eta}\,\dd R  .
    \label{eq:measurement-channel-wedge-dual}
\end{align}
For \(0\leq t\leq \eta\), define
\begin{align}
    \mac A_t(\rho)
    :=
    \mac C_{\eta,t}
    \left(
        \mac M_\eta(\rho)
    \right)
    \in \operatorname{End}(\mac H_t).
\end{align}
Then, for every \(\rho\in\operatorname{End}(\mac H_\eta)\),
\begin{align}
    \mac A_t(\rho )
    &=
    \binom{N}{\eta}
    \int_{\mtr{Gr}_{\eta} (\mathbb{C}^N ) }
    \operatorname{tr}_{\mac H_\eta}
    \left[
        \rho R^{\wedge \eta}
    \right]
    R^{\wedge t}\,\dd R 
    \label{eq:At-definition-wedge-dual}
    \\
    &=
    \frac{1}{\binom{N+1}{t}}
    \sum_{s=0}^{t}
    \frac{\binom{\eta-s}{t-s}}{\binom{t}{s}}
    \mac E_{s,t}
    \left(
        \mac C_{\eta,s}(\rho)
    \right).
    \label{eq:grassmannian-moment-identity-wedge-dual}
\end{align}
\end{lemma}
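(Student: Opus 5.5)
The plan is to prove the two equalities in turn. The first, Eq.~\eqref{eq:At-definition-wedge-dual}, is immediate. The contraction map $\mac C_{\eta,t}$ is linear, so it can be moved inside the Grassmannian integral in Eq.~\eqref{eq:measurement-channel-wedge-dual}. Every $R\in\mtr{Gr}_\eta(\mathbb C^N)$ is a rank-$\eta$ projection, so Lemma~\ref{lem:contraction-exterior-projection}, in the form Eq.~\eqref{eq:contraction_rank_r_projection}, gives $\mac C_{\eta,t}(R^{\wedge\eta})=R^{\wedge t}$ pointwise. This yields Eq.~\eqref{eq:At-definition-wedge-dual} directly.

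For the second equality, I first establish $\mtr U(N)$-equivariance of $\mac A_t$. Substitute $R\mapsto uRu^\dagger$ and use invariance of $\dd R$, together with $(uRu^\dagger)^{\wedge r}=U_r(u)R^{\wedge r}U_r(u)^\dagger$. This gives $\mac A_t(U_\eta(u)\rho U_\eta(u)^\dagger)=U_t(u)\mac A_t(\rho)U_t(u)^\dagger$. Lemma~\ref{lem:expression_U(N)-invariant} then supplies coefficients $c_s$ (depending on $N,\eta,t$) with
\begin{align}
    \mac A_t(\rho)=\sum_{s=0}^t c_s\,\mac E_{s,t}\bigl(\mac C_{\eta,s}(\rho)\bigr).
\end{align}
It remains to show that $c_s=\binom{N+1}{t}^{-1}\binom{\eta-s}{t-s}/\binom{t}{s}$.

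To pin down the $c_s$, I test on $\rho=\Pi_{[\eta]}=\ketbra{[\eta]}$ and take the diagonal matrix element $\bra{\vec a}\cdot\ket{\vec a}$ for $\vec a\in\mac S_{N,t}$ with $q=|\vec a\cap[\eta]|$, letting $q$ range over $0,\ldots,\min(t,\eta)$.
\begin{itemize}
    \item On the integral side, $\tr[\Pi_{[\eta]}R^{\wedge\eta}]=\det R_{[\eta],[\eta]}$ and $\bra{\vec a}R^{\wedge t}\ket{\vec a}=\det R_{\vec a,\vec a}$. So the left-hand side is exactly the quantity in Lemma~\ref{lem:reference-overlap-moment}.
    \item On the structural side, $\mac C_{\eta,s}(\Pi_{[\eta]})=P_{[\eta]}^{\wedge s}$ by Lemma~\ref{lem:contraction-exterior-projection}. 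Then $\bra{\vec a}\mac E_{s,t}(P_{[\eta]}^{\wedge s})\ket{\vec a}$ counts the $s$-subsets $\vec b\subset\vec a$ with $\vec b\subset[\eta]$, by the same annihilation–creation argument used in Lemma~\ref{lem:extension-basic}. That count is $\binom{q}{s}$.
\end{itemize}
Equating the two sides gives, for every admissible $q$,
\begin{align}
    \sum_{s=0}^{q}c_s\binom{q}{s}=\frac{1}{\binom{N+1}{t}}\sum_{s=0}^{q}\frac{\binom{\eta-s}{t-s}}{\binom{t}{s}}\binom{q}{s}.
\end{align}
This system is lower-triangular with unit diagonal in $q$, so induction on $q$ (equivalently, binomial inversion) forces $c_s$ to equal the claimed value for $s\le\min(t,\eta)=t$.

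Finally, I need the case $N\ge\eta+t$, so that configurations $\vec a$ with every value $q=0,\ldots,t$ actually exist. If $N$ is small, some $q$ are unavailable. In that case I would argue that the corresponding operators $\mac E_{s,t}\mac C_{\eta,s}$ are linearly dependent, so the formula still holds. Alternatively, the test states can be taken to be $\Pi_{\vec z}$ for general $\vec z$ rather than $[\eta]$.

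The main obstacle is this coefficient identification. The rest is bookkeeping, and all of the genuinely hard integration is absorbed into Lemma~\ref{lem:reference-overlap-moment}, whose right-hand side was evidently shaped for exactly this matching.
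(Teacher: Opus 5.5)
Your first equality and the equivariance step are fine, and when $N\ge\eta+t$ the lower-triangular system in $q$ does close the argument. The gap is the regime $N<\eta+t$. The lemma covers this regime, and it is genuinely needed: Lemma~\ref{lem:inverse-measurement-channel} uses $\mac A_t$ for every $t\le\eta$, so whenever $\eta>N/2$ the values $t>N-\eta$ land there.

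In that regime the realizable overlaps are only $q=t-m,\ldots,t$, with $m:=\min(t,N-\eta)<t$. So your induction cannot start at $q=0$: every available equation involves $c_0,\ldots,c_q$ with $q\ge t-m$. Worse, the $c_s$ are not even unique. Writing $\Phi_s:=\mac E_{s,t}\circ\mac C_{\eta,s}$, the $t+1$ maps $\Phi_0,\ldots,\Phi_t$ lie in an $(m+1)$-dimensional space.

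Your first fix (``the operators are linearly dependent, so the formula still holds'') only restates what has to be proved. You would need to show that the dependency relations among the $\Phi_s$ are exactly those invisible to the diagonal entries at $\Pi_{[\eta]}$.

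Your second fix yields nothing new. We have $\Pi_{\vec z}=U_\eta(v)\Pi_{[\eta]}U_\eta(v)^\dagger$ for a permutation $v$. By equivariance, the diagonal entries of $\mac A_t(\Pi_{\vec z})$ at $\vec a$ depend only on $|\vec a\cap\vec z|$, which has the same restricted range.

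The missing ingredient is an injectivity statement: an equivariant map is determined by its value at $\Pi_{[\eta]}$. The paper gets this from tomographic completeness (Theorem~\ref{thm:tomograhic_completeness_app}) and never identifies coefficients at all. It compares $\mac A_t$ directly with the right-hand-side map $\mac B_t$ of Eq.~\eqref{eq:grassmannian-moment-identity-wedge-dual}:
\begin{itemize}
\item Both maps are equivariant, so by tomographic completeness it suffices to prove $\mac A_t(\Pi_{[\eta]})=\mac B_t(\Pi_{[\eta]})$.
\item Both operators are diagonal, by invariance under diagonal phase unitaries.
\item Their diagonal entries coincide at every realizable $\vec a$, by Lemma~\ref{lem:reference-overlap-moment}.
\end{itemize}
This works uniformly in $N$.

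Alternatively, you can repair your own route with Lemma~\ref{lem:equivariant-map-basis}. Since $\Phi_0,\ldots,\Phi_m$ already span the equivariant maps, write $\mac A_t-\mac B_t=\sum_{s=0}^{m}d_s\Phi_s$. Your diagonal test then says that the polynomial $x\mapsto\sum_{s=0}^{m}d_s\binom{x}{s}$, of degree at most $m$, vanishes at the $m+1$ points $x=t-m,\ldots,t$. Hence all $d_s=0$.
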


\begin{proof}
The first identity follows from the contraction identity
\begin{align}
    \mac C_{\eta,t}
    \left(
        R^{\wedge \eta}
    \right)
    =
    R^{\wedge t},
\end{align}
valid for \(R\in\mac P_\eta\). Applying \(\mac C_{\eta,t}\) to
\(\mac M_\eta(X)\) therefore gives Eq.~\eqref{eq:At-definition-wedge-dual}.

It remains to prove the explicit expression
\eqref{eq:grassmannian-moment-identity-wedge-dual}. 
To this end, we remark that $ \mac{A}_t (\rho) $ has the following property
\begin{align}
    \mac A_t\!\left(U_\eta(u)\rho U_\eta(u)^\dagger\right)
    &=
    \binom{N}{\eta}
    \int
    \tr_{\mac{H}_\eta} \left[
        \rho (u^\dagger R u)^{\wedge \eta}
    \right]
    R^{\wedge t}\,\dd R                                                    \\
    &=
    \binom{N}{\eta}
    \int
    \tr_{\mac{H}_\eta} \left[
        \rho (R')^{\wedge \eta}
    \right]
    (uR'u^\dagger)^{\wedge t}\,\dd R'                                      \\
    &=
    U_t(u)\mac A_t(\rho)U_t(u)^\dagger .
\end{align}
Here, we use the invariance of \(\dd R\).
Thus, by Lemma \ref{lem:expression_U(N)-invariant}, $\mac{A}_t$ can be expressed by a linear combination of $\mac{E}_{s,t}(\mac C_{\eta, s}(\cdot))$. 

We introduce the linear map \(\mac B_t\) by
\begin{align}
    \mac B_t(\cdot)
    \coloneqq
    \frac{1}{\binom{N+1}{t}}
    \sum_{s=0}^{t}
    \frac{\binom{\eta-s}{t-s}}{\binom{t}{s}}
    \mac E_{s,t}
    \left(
        \mac C_{\eta,s}(\cdot)
    \right).
\end{align}
It remains to prove that \(\mac A_t=\mac B_t\).

By Theorem~\ref{thm:tomograhic_completeness_app}, it is enough to verify this
identity on the \(\mtr U(N)\)-orbit of \(P_{[\eta]}^{\wedge \eta}\), namely
to show that
\begin{align}
    \mac A_t
    \left(
        U_\eta(u) P_{[\eta]}^{\wedge \eta} U_\eta(u)^\dagger
    \right)
    =
    \mac B_t
    \left(
        U_\eta(u) P_{[\eta]}^{\wedge \eta} U_\eta(u)^\dagger
    \right)
\end{align}
for every \(u\in\mtr U(N)\).
Since both \(\mac A_t\) and \(\mac B_t\) satisfy the condition
in Eq.~\eqref{eq:U(N)-equivariant}, the above identity follows once we prove
it at the reference point.  Therefore, it suffices to show that
\begin{align}
    \mac A_t
    \left(
        P_{[\eta]}^{\wedge \eta}
    \right)
    =
    \mac B_t
    \left(
        P_{[\eta]}^{\wedge \eta}
    \right).
\end{align}

Additionally, 
We remark that both \(\mac A_t(P_{[\eta]}^{\wedge \eta} )\) and \(\mac B_t(P_{[\eta]}^{\wedge \eta})\) are
diagonal in the occupation-number basis.
For \(\mac A_t(P_{[\eta]}^{\wedge \eta})\), let
$
    u_d=\operatorname{diag}(e^{i\theta_1},\ldots,e^{i\theta_N})
    \in \mtr U(N)
$
be an arbitrary diagonal unitary matrix.  Since \(P_{[\eta]}^{\wedge \eta}\) is invariant under \(U_\eta(u_d)\), and since \(\mac A_t\) satisfy the condition
in Eq.~\eqref{eq:U(N)-equivariant}, we have
\begin{align}
    U_t(u_d)\mac A_t(P_{[\eta]}^{\wedge \eta})U_t(u_d)^\dagger
    =
    \mac A_t(P_{[\eta]}^{\wedge \eta}).
\end{align}
Thus \(\mac A_t(P_{[\eta]}^{\wedge \eta})\) commutes with all diagonal phase rotations.  Hence
its off-diagonal matrix elements in the occupation-number basis vanish, so
\(\mac A_t(P_{[\eta]}^{\wedge \eta})\) is diagonal.
Similarly, for \(\mac B_t(P_{[\eta]}^{\wedge \eta})\), we use
\(
    \mac C_{\eta,s}(P_{[\eta]}^{\wedge \eta})
    =
    P_{[\eta]}^{\wedge s}.
\)
The operator \(P_{[\eta]}^{\wedge s}\) is also invariant under every
diagonal unitary \(U_s(d)\).  Since \(\mac E_{s,t}\) is
\(\mtr U(N)\)-equivariant, each operator
\(
    \mac E_{s,t}\!\left(P_{[\eta]}^{\wedge s}\right)
\)
commutes with all diagonal phase rotations on \(\mac H_t\).  Hence each of
them is diagonal in the occupation-number basis.  Therefore their linear
combination \(\mac B_t(P_{[\eta]}^{\wedge \eta})\) is also diagonal. Therefore, it suffices to compare the diagonal matrix elements.

We will choose arbitrary $\vec a \in \mac{S}_{N,t} $ and set $q\coloneq \abs{\vec a \cap [\eta]}$. Such $q$ satisfies $\max(0, t-(N-\eta )) \leq q \leq t$.
By substituting $\rho = P_{[\eta]}^{\wedge \eta}$ into Eq.~\eqref{eq:At-definition-wedge-dual}, we have 
\begin{align}
    \left\langle \vec a \middle|
        \mac A_t(P_{[\eta]}^{\wedge \eta})
    \middle| \vec a \right\rangle
    &= \binom{N}{\eta}
    \int_{\mtr{Gr}_{\eta} (\mathbb{C}^N ) }
    \operatorname{tr}_{\mac H_\eta}
    \left[
        P_{[\eta]}^{\wedge \eta} R^{\wedge \eta}
    \right]
    \bra{\vec a} R^{\wedge t} \ket{\vec a}\,\dd R 
    \\
    &= \binom{N}{\eta} 
    \int_{\mtr{Gr}_{\eta} (\mathbb{C}^N ) }
    \det R_{[\eta], [\eta]} \det R_{\vec a, \vec a}\,\dd R   
    \\
    &=\frac{1}{\binom{N+1}{t}}
    \sum_{s=0}^{q}
    \frac{\binom{\eta-s}{t-s}}{\binom{t}{s}}
    \binom{q}{s} 
    \label{eq:expression_At_1}
\end{align}
Here, the third equality follows from
Lemma~\ref{lem:reference-overlap-moment}.

On the other hand, the contraction identity gives  
$
    \mac C_{\eta,s}
    \left(
        P_{[\eta]}^{\wedge \eta}
    \right)
    =
    P_{[\eta]}^{\wedge s}
$, so we obtain,
\begin{align}
    \left\langle \vec a \middle|
        \mac E_{s,t}
        \left(
            P_{[\eta]}^{\wedge s}
        \right)
    \middle| \vec a \right\rangle
    =
    \binom{q}{s}.
\end{align}
Therefore, 
\begin{align}
    \left\langle \vec a \middle|
        \mac B_t(P_{[\eta]}^{\wedge \eta})
    \middle| \vec a \right\rangle
    &=
    \frac{1}{\binom{N+1}{t}}
    \sum_{s=0}^{t}
    \frac{\binom{\eta-s}{t-s}}{\binom{t}{s}} \binom{q}{s} \\
    &=\frac{1}{\binom{N+1}{t}}
    \sum_{s=0}^{q}
    \frac{\binom{\eta-s}{t-s}}{\binom{t}{s}}
    \binom{q}{s},
\end{align}
because we have $\binom{q}{s} = 0$ for $s > q$.
Now we have 
\begin{align}
    \bra{\vec a}\mac A_t
    \left(
        P_{[\eta]}^{\wedge \eta}
    \right) \ket{\vec a}
    =
    \bra{\vec a}
    \mac B_t
    \left(
        P_{[\eta]}^{\wedge \eta}
    \right) \ket{\vec a},
\end{align}
for arbitrary $\ket{\vec a} \in \mac{H}_t $.

\end{proof}

Intuitively, since \(\mac A_t(X)\) is obtained by contracting the orbital-rotation shadow \(\mac M_\eta(X)\), it extracts the \(t\)-body
reduced moment data of the measurement-channel output. Although this contraction lowers the order to \(t\), the binomial inversion formula shows that the resulting hierarchy of reduced moments can be reassembled, via the
extension maps \(\mac E_{t,k}\), into the \(k\)-body component for any
\(0\le k\le \eta\). In particular, by taking \(k=\eta\), we obtain the
explicit form of the inverse measurement channel.

\begin{lemma}[Binomial inversion of the extension map]
\label{lem:binomial-inversion-moment-hierarchy}
Let \(\rho \in\operatorname{End}(\mac H_\eta)\), and set
\begin{align}
    D_\rho^{(s)}
    :=
    \mac C_{\eta,s}(\rho)
    \in \operatorname{End}(\mac H_s),
    \qquad
    0\leq s\leq \eta.
\end{align}
Suppose that the reduced Grassmannian moments $\mac A_t$ satisfy
\begin{align}
    \mac A_t(\rho)
    =
    \frac{1}{\binom{N+1}{t}}
    \sum_{s=0}^{t}
    \frac{\binom{\eta-s}{t-s}}{\binom{t}{s}}
    \mac E_{s,t}
    \left(
        D_\rho^{(s)}
    \right),
    \qquad
    0\leq t\leq \eta.
    \label{eq:moment-hierarchy-before-inversion}
\end{align}
Then, for every \(0\leq k\leq \eta\),
\begin{align}
    D_\rho^{(k)}
    =
    \sum_{t=0}^{k}
    (-1)^{k+t}
    \binom{N+1}{t}
    \frac{\binom{\eta-t}{k-t}}{\binom{k}{t}}
    \mac E_{t,k}
    \left(
        \mac A_t(\rho)
    \right).
    \label{eq:binomial-inversion-moment-hierarchy}
\end{align}
\end{lemma}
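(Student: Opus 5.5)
The plan is to substitute the hypothesis \eqref{eq:moment-hierarchy-before-inversion} into the right-hand side of \eqref{eq:binomial-inversion-moment-hierarchy}. We then collapse the resulting double sum with the composition rule of Lemma~\ref{lem:extension-basic} and show that, for each $s$, the coefficient multiplying $\mac E_{s,k}(D^{(s)}_\rho)$ is $\delta_{s,k}$. The identity then follows from $\mac E_{k,k}=\mathrm{id}$ (property 2 of Lemma~\ref{lem:extension-basic}). No property of $\mac A_t$ beyond the assumed hierarchy is needed, so the argument is purely combinatorial.

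\textbf{Step 1: substitute and reduce to a single extension map.} Insert $\mac A_t(\rho)$ into the right-hand side. The factors $\binom{N+1}{t}$ cancel exactly. By linearity of $\mac E_{t,k}$ and Eq.~\eqref{eq:extension_composition}, we have $\mac E_{t,k}(\mac E_{s,t}(D^{(s)}_\rho))=\binom{k-s}{t-s}\mac E_{s,k}(D^{(s)}_\rho)$ for $0\le s\le t\le k$. Exchanging the order of summation (over $0\le s\le t\le k$) then gives
\begin{align}
    \sum_{s=0}^{k}\Gamma_{s}\,\mac E_{s,k}\!\left(D^{(s)}_\rho\right),
    \qquad
    \Gamma_s\coloneq\sum_{t=s}^{k}(-1)^{k+t}
    \frac{\binom{\eta-t}{k-t}\binom{\eta-s}{t-s}\binom{k-s}{t-s}}{\binom{k}{t}\binom{t}{s}} .
\end{align}

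\textbf{Step 2: simplify the coefficient.} Use the two trinomial identities
\begin{align}
    \binom{k}{t}\binom{t}{s}=\binom{k}{s}\binom{k-s}{t-s},
    \qquad
    \binom{\eta-s}{t-s}\binom{\eta-t}{k-t}=\binom{\eta-s}{k-s}\binom{k-s}{t-s}.
\end{align}
Both are valid since $k\le\eta$, so all binomials are ordinary. With them the summand becomes $(-1)^{k+t}\binom{\eta-s}{k-s}\binom{k-s}{t-s}/\binom{k}{s}$. Hence
\begin{align}
    \Gamma_s=\frac{\binom{\eta-s}{k-s}}{\binom{k}{s}}\sum_{j=0}^{k-s}(-1)^{k-s-j}\binom{k-s}{j}
    =\frac{\binom{\eta-s}{k-s}}{\binom{k}{s}}(1-1)^{k-s}=\delta_{s,k}.
\end{align}
At $s=k$ the prefactor is $1$. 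Therefore the right-hand side equals $\mac E_{k,k}(D^{(k)}_\rho)=D^{(k)}_\rho$.

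\textbf{Main obstacle.} The only delicate point is the bookkeeping in Step 2. One must recognise the two trinomial (subset-of-subset) identities that turn the summand into a pure alternating binomial sum. One must also check that no binomial coefficient degenerates. Here $\binom{k}{t}\binom{t}{s}\neq0$ for $s\le t\le k$, and $\eta-t\ge k-t\ge0$, so the identities hold without boundary cases. The interchange of sums and the use of Eq.~\eqref{eq:extension_composition} are routine.
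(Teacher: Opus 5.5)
Your proposal is correct and follows the paper's proof essentially step for step. You substitute the hierarchy, cancel the $\binom{N+1}{t}$ factors, and collapse the double sum with the composition rule $\mac E_{t,k}\circ\mac E_{s,t}=\binom{k-s}{t-s}\mac E_{s,k}$. You then reduce the coefficient with the same two subset-of-subset identities to $\frac{\binom{\eta-s}{k-s}}{\binom{k}{s}}\sum_{t=s}^{k}(-1)^{k+t}\binom{k-s}{t-s}=\delta_{s,k}$, and conclude with $\mac E_{k,k}=\mathrm{id}$.
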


\begin{proof}
Substituting Eq.~\eqref{eq:moment-hierarchy-before-inversion} into the
right-hand side of Eq.~\eqref{eq:binomial-inversion-moment-hierarchy}, the
factors \(\binom{N+1}{t}\) cancel and we obtain
\begin{align}
    &
    \sum_{t=0}^{k}
    (-1)^{k+t}
    \frac{\binom{\eta-t}{k-t}}{\binom{k}{t}}
    \mac E_{t,k}
    \left[
        \sum_{s=0}^{t}
        \frac{\binom{\eta-s}{t-s}}{\binom{t}{s}}
        \mac E_{s,t}
        \left(
            D_\rho^{(s)}
        \right)
    \right]
    \\
    &=
    \sum_{s=0}^{k}
    \alpha_{k,s}
    \mac E_{s,k}
    \left(
        D_\rho^{(s)}
    \right),
\end{align}
where we used the composition rule
\begin{align}
    \mac E_{t,k}
    \left(
        \mac E_{s,t}(Z)
    \right)
    =
    \binom{k-s}{t-s}
    \mac E_{s,k}(Z),
\end{align}
and
\begin{align}
    \alpha_{k,s}
    =
    \sum_{t=s}^{k}
    (-1)^{k+t}
    \frac{\binom{\eta-t}{k-t}}{\binom{k}{t}}
    \frac{\binom{\eta-s}{t-s}}{\binom{t}{s}}
    \binom{k-s}{t-s}.
\end{align}
Using
\begin{align}
    \binom{\eta-s}{t-s}
    \binom{\eta-t}{k-t}
    =
    \binom{\eta-s}{k-s}
    \binom{k-s}{t-s},
    \qquad
    \binom{k}{t}\binom{t}{s}
    =
    \binom{k}{s}\binom{k-s}{t-s},
\end{align}
we simplify the coefficient as
\begin{align}
    \alpha_{k,s}
    =
    \frac{\binom{\eta-s}{k-s}}{\binom{k}{s}}
    \sum_{t=s}^{k}
    (-1)^{k+t}
    \binom{k-s}{t-s}.
\end{align}
The last sum is the elementary binomial identity
\begin{align}
    \sum_{t=s}^{k}
    (-1)^{k+t}
    \binom{k-s}{t-s}
    =
    \delta_{s,k}.
\end{align}
Hence \(\alpha_{k,s}=0\) for \(s<k\), while \(\alpha_{k,k}=1\). Therefore only the \(s=k\) term remains, and
\begin{align}
    \sum_{t=0}^{k}
    (-1)^{k+t}
    \binom{N+1}{t}
    \frac{\binom{\eta-t}{k-t}}{\binom{k}{t}}
    \mac E_{t,k}
    \left(
        \mac A_t(\rho)
    \right)
    =
    \mac E_{k,k}
    \left(
        D_\rho^{(k)}
    \right)
    =
    D_\rho^{(k)}.
\end{align}
This proves the claim.
\end{proof}

By the tomographic completeness established in Theorem~\ref{thm:tomograhic_completeness_app}, the measurement channel
\(\mac M_\eta\) is injective on \(\operatorname{End}(\mac H_\eta)\).
Since \(\operatorname{End}(\mac H_\eta)\) is finite-dimensional and
\(\mac M_\eta\) is a linear endomorphism of this space, \(\mac M_\eta\)
is invertible. The following lemma identifies its inverse explicitly by Lemma~\ref{lem:binomial-inversion-moment-hierarchy}.

\begin{lemma}[Inverse measurement channel]
\label{lem:inverse-measurement-channel}
The measurement channel \(\mac M_\eta\) is invertible on
\(\operatorname{End}(\mac H_\eta)\), and its inverse is
\begin{align}
    \mac M_\eta^{-1}(Y)
    =
    \sum_{t=0}^{\eta}
    (-1)^{\eta+t}
    \frac{\binom{N+1}{t}}{\binom{\eta}{t}}
    \mac E_{t,\eta}
    \left(
        \mac C_{\eta,t}(Y)
    \right).
    \label{eq:inverse-measurement-channel-binomial}
\end{align}
\end{lemma}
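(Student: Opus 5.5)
The plan is to obtain the formula directly from Lemma~\ref{lem:binomial-inversion-moment-hierarchy} with $k=\eta$, and to get invertibility from tomographic completeness. First, invertibility. Theorem~\ref{thm:tomograhic_completeness_app} together with the Grassmannian form \eqref{eq:measurement-channel-wedge-dual} shows that $\mac M_\eta$ is injective. Indeed, suppose $\mac M_\eta(X)=0$. Taking the trace against $X^\dagger$ gives $\binom{N}{\eta}\int |\operatorname{tr}[XR^{\wedge\eta}]|^2\,\dd R=0$. So $\operatorname{tr}[XR^{\wedge\eta}]$ vanishes for almost every $R$, hence for every $R$ by continuity. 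Since the projectors $R^{\wedge\eta}=U_\eta(u)^\dagger\Pi_{[\eta]}U_\eta(u)$ span $\operatorname{End}(\mac H_\eta)$, this forces $X=0$. Because $\operatorname{End}(\mac H_\eta)$ is finite-dimensional, injectivity gives invertibility, as the paper already remarks.

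Second, the explicit inverse. Write $\mac L(Y)$ for the right-hand side of \eqref{eq:inverse-measurement-channel-binomial}. It suffices to show $\mac L(\mac M_\eta(\rho))=\rho$ for all $\rho\in\operatorname{End}(\mac H_\eta)$, because $\mac M_\eta$ is bijective and so this identifies $\mac L$ as $\mac M_\eta^{-1}$. By definition $\mac C_{\eta,t}(\mac M_\eta(\rho))=\mac A_t(\rho)$. Lemma~\ref{lem:exterior-grassmannian-moment-hierarchy} shows that these $\mac A_t$ satisfy the hierarchy \eqref{eq:moment-hierarchy-before-inversion}. We may therefore apply Lemma~\ref{lem:binomial-inversion-moment-hierarchy} with $k=\eta$. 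Its coefficient becomes $\binom{\eta-t}{\eta-t}/\binom{\eta}{t}=1/\binom{\eta}{t}$, which yields
\begin{align}
    \sum_{t=0}^{\eta}(-1)^{\eta+t}\frac{\binom{N+1}{t}}{\binom{\eta}{t}}\mac E_{t,\eta}\!\left(\mac C_{\eta,t}(\mac M_\eta(\rho))\right)=D_\rho^{(\eta)}=\mac C_{\eta,\eta}(\rho).
\end{align}

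The final step is to check that $\mac C_{\eta,\eta}$ is the identity on $\operatorname{End}(\mac H_\eta)$. On $\mac H_\eta$, the monomial $a^\dagger_{p_1}\cdots a^\dagger_{p_\eta}a_{q_\eta}\cdots a_{q_1}$ equals $\ketbra{\vec p}{\vec q}$; this is item 2 of Lemma~\ref{lem:extension-basic}. Hence $\langle\vec p|\mac C_{\eta,\eta}(\rho)|\vec q\rangle=\operatorname{tr}[\rho\ketbra{\vec p}{\vec q}]=\langle\vec q|\rho|\vec p\rangle$. This has a transpose relative to $\rho$, and that is the only delicate point. The monomial convention $a^\dagger_{\vec p}a_{\vec q}$ against $\operatorname{tr}[\rho\,\cdot\,]$ reads off $\rho_{\vec q\vec p}$, so literally $\mac C_{\eta,\eta}(\rho)=\rho^{\mathsf T}$.

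I expect this bookkeeping to be the main obstacle: the paper's matrix-element convention, as in \eqref{eq:duality_contraction_extension}, carries exactly this transpose, and it must be reconciled with the claimed formula. I would handle it as follows. Every $\mac E_{t,\eta}\circ\mac C_{\eta,t}$ commutes with transposition in the occupation basis, because both maps act on matrix units by fermionic sign bookkeeping that is symmetric in $\vec p\leftrightarrow\vec q$. Also $\mac M_\eta$ commutes with transposition, because $\dd R$ is invariant under $R\mapsto \overline R$. Consequently the transpose appearing in $D^{(\eta)}_\rho=\rho^{\mathsf T}$ can be moved through consistently. Alternatively, one can adopt the natural reading $D^{(k)}_{\vec p,\vec q}=\langle\vec p|\mac C_{\eta,k}(\rho)|\vec q\rangle$ with $\mac C_{\eta,\eta}=\mathrm{id}$, which is the reading under which Lemmas~\ref{lem:exterior-grassmannian-moment-hierarchy} and \ref{lem:binomial-inversion-moment-hierarchy} were derived. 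Either way, $\mac L\circ\mac M_\eta=\mathrm{id}$, which proves the lemma.
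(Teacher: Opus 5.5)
Your argument is the paper's. Invertibility comes from Theorem~\ref{thm:tomograhic_completeness_app}; your computation $\operatorname{tr}[X^\dagger\mac M_\eta(X)]=\binom{N}{\eta}\int|\operatorname{tr}[XR^{\wedge\eta}]|^2\,\mathrm{d}R$ spells out the injectivity step that the paper only asserts. The formula then follows from Lemma~\ref{lem:binomial-inversion-moment-hierarchy} at $k=\eta$ together with $\mac C_{\eta,\eta}=\mathrm{id}$. Your transpose observation is also correct. With the matrix-element definition of $\mac C_{\eta,k}$ taken literally, $\mac C_{\eta,\eta}(Y)=Y^{\mathsf T}$, whereas the paper's proof simply asserts $\mac C_{\eta,\eta}(X)=X$.

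However, your first resolution fails, so the claim that it works ``either way'' must go. Commuting transposes through $\mac E_{t,\eta}\circ\mac C_{\eta,t}$ and $\mac M_\eta$ cannot turn $\rho^{\mathsf T}$ into $\rho$. Under the literal convention the hierarchy \eqref{eq:moment-hierarchy-before-inversion} still holds as written, because both sides are transposes of their natural-convention counterparts. Lemma~\ref{lem:binomial-inversion-moment-hierarchy} then yields $\mac L(\mac M_\eta(\rho))=\mac C_{\eta,\eta}(\rho)=\rho^{\mathsf T}$. That is, $\mac L=\mathsf T\circ\mac M_\eta^{-1}$, which is not the inverse.

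This already fails for $\eta=1$. There $\mac M_1(\rho)=(\rho+\operatorname{tr}(\rho)\1)/(N+1)$, so $\mac M_1^{-1}(Y)=(N+1)Y-\operatorname{tr}(Y)\1$, whereas the literal right-hand side is $(N+1)Y^{\mathsf T}-\operatorname{tr}(Y)\1$.

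Only your second reading is valid. Take $\langle\vec q|\mac C_{\eta,k}(Y)|\vec p\rangle=\operatorname{tr}[Y a^\dagger_{p_1}\cdots a^\dagger_{p_k}a_{q_k}\cdots a_{q_1}]$, the $\mathrm U(N)$-equivariant partial trace, for which $\mac C_{\eta,\eta}=\mathrm{id}$. This is the convention the paper tacitly uses throughout. The literal map intertwines conjugation by $U_\eta(u)$ with conjugation by $U_k(\bar u)$, so the paper's equivariance arguments and the identity $\mac C_{\eta,t}(R^{\wedge\eta})=R^{\wedge t}$ both need this reading. With it, your proof coincides with the paper's.
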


\begin{proof}
By the tomographic completeness established in Theorem \ref{thm:tomograhic_completeness_app},
it remains to identify the inverse explicitly. Let \(Y=\mac M_\eta(X)\).
Since $\mac A_t(\cdot)$ is defined by $\mac C_{\eta,t}
    \left(
        \mac M_\eta(\cdot)
    \right)$, we obtain
\begin{align}
    \mac A_t(X)
    =
    \mac C_{\eta,t}
    \left(
        \mac M_\eta(X)
    \right)
    =
    \mac C_{\eta,t}(Y).
\end{align}
Applying Lemma~\ref{lem:binomial-inversion-moment-hierarchy} with \(k=\eta\),
and using
\begin{align}
    D_X^{(\eta)}
    =
    \mac C_{\eta,\eta}(X)
    =
    X,
\end{align}
we obtain
\begin{align}
    X
    &= D^{(\eta)}_{X} \\
    &=  \sum_{t=0}^{\eta}
    (-1)^{\eta+t}
    \binom{N+1}{t}
    \frac{\binom{\eta-t}{\eta-t}}{\binom{\eta}{t}}
    \mac E_{t,k}
    \left(
        \mac A_t(X)
    \right). \\
    &=
    \sum_{t=0}^{\eta}
    (-1)^{\eta+t}
    \frac{\binom{N+1}{t}}{\binom{\eta}{t}}
    \mac E_{t,\eta}
    \left(
        \mac C_{\eta,t}(Y)
    \right).
\end{align}
Hence, the map expressed by
$
    \sum_{t=0}^{\eta}
    (-1)^{\eta+t}
    \frac{\binom{N+1}{t}}{\binom{\eta}{t}}
    \mac E_{t,\eta}
    \left(
        \mac C_{\eta,t}(\cdot)
    \right)
$
sends \(Y=\mac M_\eta(X)\) back to \(X\).
Since \(\mac M_\eta\) is invertible, this map is precisely
\(\mac M_\eta^{-1}\).
\end{proof}

With this expression, we obtain the single-shot estimator of orbital-rotation shadow. We remark that this estimator recovers the result of Ref.~\cite{low2022classical}. 

\begin{corollary}[Single-shot state shadow]
\label{cor:single-shot-state-shadow-binomial}
For a single measurement outcome \(R\in\mac P_\eta\), the corresponding state shadow is
\begin{align}
    \widehat\rho(R)
    :=
    \mac M_\eta^{-1}
    \left(
        R^{\wedge \eta}
    \right)
    =
    \sum_{t=0}^{\eta}
    (-1)^{\eta+t}
    \frac{\binom{N+1}{t}}{\binom{\eta}{t}}
    \mac E_{t,\eta}
    \left(
        R^{\wedge t}
    \right).
    \label{eq:single-shot-state-shadow-binomial}
\end{align}
Moreover,
\begin{align}
    \mathbb E[\widehat\rho(R)]
    =
    \rho.
\end{align}
\end{corollary}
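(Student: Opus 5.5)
The plan is to read off both claims from Lemma~\ref{lem:inverse-measurement-channel}, together with the contraction identity for rank-$\eta$ projectors (Lemma~\ref{lem:contraction-exterior-projection}).

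\textbf{Explicit form.} First, I substitute $Y=R^{\wedge\eta}$ into Eq.~\eqref{eq:inverse-measurement-channel-binomial}. Since $R\in\mac P_\eta$ is an orthogonal projection of rank exactly $\eta$, Eq.~\eqref{eq:contraction_rank_r_projection} with $r=\eta$ gives $\mac C_{\eta,t}(R^{\wedge\eta})=R^{\wedge t}$ for every $0\le t\le\eta$. Inserting this termwise yields Eq.~\eqref{eq:single-shot-state-shadow-binomial} directly.

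\textbf{Unbiasedness.} Next, I write the expectation over the classical data $(u,\vec z)$ as a Grassmannian integral. Taking the test function $F(R)=\mac M_\eta^{-1}(R^{\wedge\eta})$ in Eq.~\eqref{eq:Grassmannian_moment} gives
\begin{align}
\mathbb E[\widehat\rho(R)]=\binom{N}{\eta}\int_{\mtr{Gr}_\eta(\mathbb C^N)}\operatorname{tr}\!\left[\rho R^{\wedge\eta}\right]\mac M_\eta^{-1}\!\left(R^{\wedge\eta}\right)\dd R .
\end{align}
Because $\mac M_\eta^{-1}$ is a fixed linear map on the finite-dimensional space $\operatorname{End}(\mac H_\eta)$, it commutes with the integral. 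What remains inside is exactly $\mac M_\eta(\rho)$ as defined in Eq.~\eqref{eq:measurement-channel-wedge-dual}. Hence $\mathbb E[\widehat\rho(R)]=\mac M_\eta^{-1}(\mac M_\eta(\rho))=\rho$.

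\textbf{Where care is needed.} The only delicate point is that the pushforward to the Grassmannian in Eq.~\eqref{eq:Grassmannian_moment} is legitimate. That identification is deferred to Sec.~\ref{sec:Geometric_costruction}, and I will assume it here. Everything else is linearity plus the invertibility of $\mac M_\eta$, which follows from Theorem~\ref{thm:tomograhic_completeness_app}.
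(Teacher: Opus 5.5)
Your proposal is correct and follows the paper's proof essentially verbatim. You substitute $Y=R^{\wedge\eta}$ into Lemma~\ref{lem:inverse-measurement-channel}, use $\mac C_{\eta,t}(R^{\wedge\eta})=R^{\wedge t}$ for rank-$\eta$ projectors, and get unbiasedness from $\mathbb E[R^{\wedge\eta}]=\mac M_\eta(\rho)$ together with $\mac M_\eta^{-1}\circ\mac M_\eta=\mathrm{id}$. Your remark that the linear map $\mac M_\eta^{-1}$ commutes with the Grassmannian integral makes explicit a step the paper leaves implicit.
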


\begin{proof}
Apply Lemma~\ref{lem:inverse-measurement-channel} to \(Y=R^{\wedge \eta}\). Since \(R\) has rank \(\eta\), the contraction identity gives
\begin{align}
    \mac C_{\eta,t}
    \left(
        R^{\wedge \eta}
    \right)
    =
    R^{\wedge t}.
\end{align}
This proves Eq.~\eqref{eq:single-shot-state-shadow-binomial}. The unbiasedness follows from
\begin{align}
    \mathbb E
    \left[
        R^{\wedge \eta}
    \right]
    =
    \mac M_\eta(\rho),
\end{align}
and hence
\begin{align}
    \mathbb E[\widehat\rho(R)]
    =
    \mac M_\eta^{-1}
    \left(
        \mac M_\eta(\rho)
    \right)
    =
    \rho.
\end{align}
\end{proof}

\begin{corollary}[Reduced \(k\)-RDM estimator]
\label{cor:k-rdm-estimator-binomial}
Let \(0\leq k\leq \eta\). For a single Grassmannian measurement outcome \(R\in\mac P_\eta\), define
\begin{align}
    \widehat D^{(k)}(R)
    :=
    \sum_{t=0}^{k}
    (-1)^{k+t}
    \binom{N+1}{t}
    \frac{\binom{\eta-t}{k-t}}{\binom{k}{t}}
    \mac E_{t,k}
    \left(
        R^{\wedge t}
    \right).
    \label{eq:k-rdm-estimator-binomial}
\end{align}
Then \(\widehat D^{(k)}(R)\) is unbiased:
\begin{align}
    \mathbb E
    \left[
        \widehat D^{(k)}(R)
    \right]
    =
    \mac C_{\eta,k}(\rho)
    =
    D_\rho^{(k)}.
\end{align}
\end{corollary}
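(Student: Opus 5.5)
The plan is to reduce unbiasedness to the moment hierarchy of Lemma~\ref{lem:exterior-grassmannian-moment-hierarchy}, and then invert it with Lemma~\ref{lem:binomial-inversion-moment-hierarchy}. The estimator in Eq.~\eqref{eq:k-rdm-estimator-binomial} is linear in the random operators \(R^{\wedge t}\), and \(\mac E_{t,k}\) is a fixed linear map. So the first step is to take expectations term by term:
\begin{align}
    \mathbb E\!\left[\widehat D^{(k)}(R)\right]
    =
    \sum_{t=0}^{k}
    (-1)^{k+t}
    \binom{N+1}{t}
    \frac{\binom{\eta-t}{k-t}}{\binom{k}{t}}
    \mac E_{t,k}\!\left(\mathbb E\!\left[R^{\wedge t}\right]\right).
\end{align}
This only requires interchanging a finite sum and linear maps with the integral defining \(\mathbb E\).

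The second step identifies \(\mathbb E[R^{\wedge t}]\). We apply the Grassmannian moment formula Eq.~\eqref{eq:Grassmannian_moment} with the test function \(F(R)=R^{\wedge t}\), taken entrywise in the occupation basis. This gives
\begin{align}
    \mathbb E\!\left[R^{\wedge t}\right]
    =
    \binom{N}{\eta}\int_{\mtr{Gr}_\eta(\mathbb C^N)}
    \operatorname{tr}_{\mac H_\eta}\!\left[\rho R^{\wedge\eta}\right] R^{\wedge t}\,\dd R ,
\end{align}
which is exactly \(\mac A_t(\rho)\) in the form of Eq.~\eqref{eq:At-definition-wedge-dual}. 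Equivalently, by Lemma~\ref{lem:contraction-exterior-projection} we have \(R^{\wedge t}=\mac C_{\eta,t}(R^{\wedge\eta})\) for rank-\(\eta\) projectors, and \(\mathbb E[R^{\wedge\eta}]=\mac M_\eta(\rho)\), so \(\mathbb E[R^{\wedge t}]=\mac C_{\eta,t}(\mac M_\eta(\rho))=\mac A_t(\rho)\).

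The third step closes the argument. Lemma~\ref{lem:exterior-grassmannian-moment-hierarchy} shows that \(\mac A_t(\rho)\) satisfies the hierarchy Eq.~\eqref{eq:moment-hierarchy-before-inversion} for every \(0\le t\le\eta\), with \(D^{(s)}_\rho=\mac C_{\eta,s}(\rho)\). Lemma~\ref{lem:binomial-inversion-moment-hierarchy} then states that the right-hand side of the first display equals \(D^{(k)}_\rho=\mac C_{\eta,k}(\rho)\), which is the claim.

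The main obstacle is the second step: we must make sure the expectation over the raw data \((u,\vec z)\) agrees with the Grassmannian integral. In particular, the Born weight and the factor \(\binom{N}{\eta}\) must match the normalization used in \(\mac A_t\). I would rely on the derivation of Eq.~\eqref{eq:Grassmannian_moment} for this, since it is already stated for an arbitrary test function. As an alternative route that avoids this bookkeeping, one could contract the state shadow of Corollary~\ref{cor:single-shot-state-shadow-binomial}: apply \(\mac C_{\eta,k}\) to \(\widehat\rho(R)\) and use its unbiasedness. That route, however, requires a commutation formula between \(\mac C_{\eta,k}\) and \(\mac E_{t,\eta}\), which the direct route does not need.
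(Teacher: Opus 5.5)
Your proposal is correct and follows essentially the same route as the paper: identify \(\mathbb E[R^{\wedge t}]=\mac A_t(\rho)\), take expectations term by term, and combine the moment hierarchy of Lemma~\ref{lem:exterior-grassmannian-moment-hierarchy} with the binomial inversion of Lemma~\ref{lem:binomial-inversion-moment-hierarchy}. Your explicit check, via Eq.~\eqref{eq:Grassmannian_moment}, that the Born weight and the factor \(\binom{N}{\eta}\) match the normalization of \(\mac A_t\) spells out a step the paper states in one line.
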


\begin{proof}
For the measurement distribution induced by \(\rho\), we have
\begin{align}
    \mathbb E
    \left[
        R^{\wedge t}
    \right]
    =
    \mac A_t(\rho).
\end{align}
Taking the expectation of Eq.~\eqref{eq:k-rdm-estimator-binomial} and applying Lemma~\ref{lem:binomial-inversion-moment-hierarchy} with \(X=\rho\) gives
\begin{align}
    \mathbb E
    \left[
        \widehat D^{(k)}(R)
    \right]
    =
    D_\rho^{(k)}.
\end{align}
This proves the claim.
\end{proof}

\subsection{Comparison with Low's estimator}
\label{subsec:comparison-low-estimator}

In this section, we verify that the estimator obtained from the Grassmannian moment formula is equivalent to Low's orbital-rotation estimator~\cite{low2022classical}.  By unitary
covariance of exterior powers and of the extension maps, it suffices to
compare the two estimators at the reference outcome
$
    R=P_{[\eta]}
$.
We use the convention
\begin{align}
    R^{\wedge 0}
    =
    \1_{\mac H_0},
    \qquad
    \mac E_{0,k}(\1_{\mac H_0})
    =
    \1_{\mac H_k}.
\end{align}

Let \(\ket{\vec r}\) be a standard \(k\)-particle occupation-basis vector,
with \(\vec r\in\mac S_{N,k}\), and set
$
    s
    :=
    |\vec r\cap[\eta]|.
$
Since \(P_{[\eta]}\) projects onto the first \(\eta\) modes, the operator
\(\mac E_{\ell,k}(P_{[\eta]}^{\wedge \ell})\) counts the number of occupied
\(\ell\)-subsets of \(\vec r\) contained in \([\eta]\).  Hence
\begin{align}
    \mac E_{\ell,k}
    \left(
        P_{[\eta]}^{\wedge \ell}
    \right)
    \ket{\vec r}
    =
    \binom{s}{\ell}
    \ket{\vec r}.
    \label{eq:reference-extension-eigenvalue-general}
\end{align}
Therefore, from Eq.~\eqref{eq:k-rdm-estimator-binomial}, the diagonal coefficient of the Grassmannian-moment estimator is
\begin{align}
    \omega_{\eta,k}(s)
    &=
    \sum_{\ell=0}^{k}
    (-1)^{k+\ell}
    \binom{N+1}{\ell}
    \frac{\binom{\eta-\ell}{k-\ell}}{\binom{k}{\ell}}
    \binom{s}{\ell}
    \label{eq:grassmannian-coefficient-before-simplification}
    \\
    &=
    \sum_{\ell=0}^{s}
    (-1)^{k+\ell}
    \binom{N+1}{\ell}
    \frac{\binom{\eta-\ell}{k-\ell}}{\binom{k}{\ell}}
    \binom{s}{\ell},
    \label{eq:grassmannian-coefficient-truncated}
\end{align}
where we used \(\binom{s}{\ell}=0\) for \(\ell>s\).

We now simplify Eq.~\eqref{eq:grassmannian-coefficient-truncated}.  Using
$
    {\binom{s}{\ell}}{\binom{k}{\ell}}^{-1}
    =
    {\binom{k-\ell}{s-\ell}}{\binom{k}{s}}^{-1}
$
and
$
    \binom{\eta-\ell}{k-\ell}
    \binom{k-\ell}{s-\ell}
    =
    \binom{\eta-\ell}{s-\ell}
    \binom{\eta-s}{k-s},
$
we obtain
\begin{align}
    \omega_{\eta,k}(s)
    &=
    (-1)^k
    \frac{\binom{\eta-s}{k-s}}{\binom{k}{s}}
    \sum_{\ell=0}^{s}
    (-1)^\ell
    \binom{N+1}{\ell}
    \binom{\eta-\ell}{s-\ell}.
    \label{eq:grassmannian-coefficient-reduced-to-sum}
\end{align}

It remains to evaluate the finite sum in
Eq.~\eqref{eq:grassmannian-coefficient-reduced-to-sum}.
To this end, we will use the following relationship 
\begin{align}
    \qty[(1+z)^\eta \qty(\frac{-z}{1+z})^\ell ]_{z^{m}} 
    &= \qty[(-z)^\ell (1+z)^{\eta - \ell} ]_{z^m}  \\
    &= (-1)^{\ell} \binom{\eta - \ell}{m-\ell}
\end{align}
Here \([f(z)]_{z^{k}}\) denotes the coefficient of \(z^{k}\) in the power-series expansion of \(f(z)\).

By choosing
$
    a:=k-s,
    s=k-a,
$ we obtain
\begin{align}
    \sum_{\ell=0}^{s}
    (-1)^\ell
    \binom{N+1}{\ell}
    \binom{\eta-\ell}{s-\ell}
    &=
    \sum_{\ell=0}^{k-a}
    (-1)^\ell
    \binom{N+1}{\ell}
    \binom{\eta-\ell}{k-a-\ell}
    \label{eq:finite-sum-rewritten}
    \\
    &=
    \left[
        (1+z)^\eta
        \sum_{\ell\geq 0}
        \binom{N+1}{\ell}
        \left(
            \frac{-z}{1+z}
        \right)^\ell
    \right]_{z^{k-a}}
    \label{eq:finite-sum-generating-function-1}
    \\
    &=
    \left[
        (1+z)^\eta
        \left(
            1-\frac{z}{1+z}
        \right)^{N+1}
    \right]_{z^{k-a}}
    \label{eq:finite-sum-generating-function-2}
    \\
    &=
    \left[
        (1+z)^{\eta-N-1}
    \right]_{z^{k-a}}
    \label{eq:finite-sum-generating-function-3}
    \\
    &=
    (-1)^{k-a}
    \binom{N-\eta+k-a}{k-a}.
    \label{eq:finite-sum-evaluated}
\end{align}
Here, the third equality arises from $\sum_{\ell \geq 0} \binom{N+1}{\ell} \alpha^\ell = (1+\alpha)^(N+1)$ and the last equality comes from the definition of binomial coefficients.

Substituting Eq.~\eqref{eq:finite-sum-evaluated} into
Eq.~\eqref{eq:grassmannian-coefficient-reduced-to-sum}, and using
\(s=k-a\), we get
\begin{align}
    \omega_{\eta,k}(s)
    &=
    (-1)^k
    \frac{\binom{\eta-s}{k-s}}{\binom{k}{s}}
    (-1)^s
    \binom{N-\eta+s}{s}
    \\
    &=
    (-1)^{k+s}
    \frac{
        \binom{\eta-s}{k-s}
        \binom{N-\eta+s}{s}
    }{
        \binom{k}{s}
    }.
    \label{eq:grassmannian-coefficient-low-form}
\end{align}
This is exactly Low's diagonal coefficient,
\begin{align}
    \omega_{\eta,k}^{\rm Low}(s)
    =
    (-1)^{k+s}
    \frac{
        \binom{\eta-s}{k-s}
        \binom{N-\eta+s}{s}
    }{
        \binom{k}{s}
    }.
    \label{eq:low-diagonal-coefficient-general}
\end{align}
Hence the two estimators agree at the reference outcome \(R=P_{[\eta]}\).

Finally, for a general rotated outcome
$
    R
    =
    u^\dagger P_{[\eta]}u,
$
unitary covariance gives
\begin{align}
    \widehat D^{(k)}(R)
    =
    (u^{\wedge k})^\dagger
    \widehat D^{(k)}(P_{[\eta]})
    u^{\wedge k}.
\end{align}
Therefore, the Grassmannian-moment estimator agrees with Low's
orbital-rotation estimator.

\section{Variance of $k$-RDM Estimators}

In this section, we will derive provable variance guarantees for the orbital-rotation shadow estimator.  We first obtain closed-form variance formulae for the \(1\)-RDM entries, showing explicitly how the behavior of variance are determined by the true one- and two-body reduced density matrices.  We then prove a general entrywise bound for \(k\)-RDM single-shot estimators and translate it into a sample complexity for all $k$-RDM estimation task.

\subsection{Case: $k=1$}
\label{sec:1-rdm_variance}

Let \(\rho\) be an \(N\)-mode \(\eta\)-particle fermionic state, and let
\(D^{(1)}\) be its \(1\)-RDM. Let
\(R\in \mtr{Gr}_\eta(\mathbb C^N)\) denote the rank-\(\eta\)
one-particle projector induced by one sample of the number-conserving
fermionic shadow protocol. Throughout this section, we use the
single-shot \(1\)-RDM estimator
\begin{align}
    \widehat{D}^{(1)}(R)
    :=
    (N+1)R-\eta \1_{\mac H_1}
    \in \mtr{End}(\mac H_1).
\end{align}

We first treat off-diagonal entries, where the identity term in the
estimator does not contribute. We then treat diagonal entries, where the
same estimator contains an additional scalar shift.

\begin{theorem}[Off-diagonal entries]
\label{thm:explicit_variance_k=1}
For every \(p,q\in[N]\) with \(p\neq q\), the single-shot 1-RDM orbital-rotation shadow estimator satisfies
\begin{align}
\operatorname{Var}(\widehat D^{(1)}_{p,q})
&:=
\mathbb E
\left[
\left|
\widehat D^{(1)}_{p,q}-D^{(1)}_{p,q}
\right|^2
\right]
\\
&=
\frac{(N+1)(N+1-\eta)}{N(N+2)}
\left(
\eta+D^{(1)}_{p,p}+D^{(1)}_{q,q}
\right)
-
\frac{N+1}{N}D^{(2)}_{(p,q),(p,q)}
-
|D^{(1)}_{p,q}|^2,
\label{eq:explicit_formula_1RDM}
\end{align}
where
\begin{align}
D^{(2)}_{(p,q),(p,q)}
=
\operatorname{tr}_{\mac H_\eta}
\left[
\rho\,a_p^\dagger a_q^\dagger a_q a_p
\right]
\end{align}
is the \(((p,q),(p,q))\) element of the \(2\)-RDM. 
\end{theorem}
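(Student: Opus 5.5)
For $p\neq q$ the identity term in $\widehat D^{(1)}=(N+1)R-\eta\1$ does not contribute, so $\widehat D^{(1)}_{p,q}=(N+1)R_{pq}$. By Corollary~\ref{cor:k-rdm-estimator-binomial} the estimator is unbiased, hence
\begin{align}
\operatorname{Var}(\widehat D^{(1)}_{p,q})=(N+1)^2\,\mathbb E\bigl[|R_{pq}|^2\bigr]-|D^{(1)}_{p,q}|^2 .
\end{align}
It therefore suffices to compute the second moment $\mathbb E[R_{pq}R_{qp}]$, using $R^\dagger=R$. By Eq.~\eqref{eq:Grassmannian_moment} this equals
\begin{align}
\binom{N}{\eta}\int_{\mtr{Gr}_\eta(\mathbb C^N)}\operatorname{tr}_{\mac H_\eta}\bigl[\rho R^{\wedge\eta}\bigr]\,R_{pq}R_{qp}\,\dd R .
\end{align}
Equivalently, it is the linear functional $\operatorname{tr}[\rho\,\mac N(\ketbra{p}{q}\otimes\ketbra{q}{p})]$. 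Here $\mac N(X)=\binom{N}{\eta}\int R^{\wedge\eta}\operatorname{tr}[(R\otimes R)X]\,\dd R$ is a map from two-particle, not-necessarily-antisymmetric tensors to $\mathrm{End}(\mac H_\eta)$.

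The map $\mac N$ is $\mtr U(N)$-equivariant with respect to the action $u\otimes u$ on the input. Following the idea of Lemma~\ref{lem:expression_U(N)-invariant}, its output, paired with $\rho$, must be a linear combination of invariant contractions. Concretely, I would argue that
\begin{align}
\mathbb E[R_{pq}R_{qp}]=\alpha\,\eta+\beta\bigl(D^{(1)}_{pp}+D^{(1)}_{qq}\bigr)+\gamma\,D^{(2)}_{(p,q),(p,q)}+(\text{terms in }D^{(1)}_{pq},D^{(1)}_{qp}) ,
\end{align}
where $\alpha,\beta,\gamma$ depend only on $N$ and $\eta$. The justification is that the only $\mtr U(N)$-covariant degree-$\le 2$ functionals of $R\otimes R$ against $\rho$ come from $\mac E_{s,2}\circ\mac C_{\eta,s}$ for $s=0,1,2$, together with the swap (exchange) contraction. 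Matching indices $p,q,q,p$ then leaves exactly these RDM entries: $s=0$ gives the $\eta$ term, $s=1$ gives $D^{(1)}_{pp}+D^{(1)}_{qq}$, and $s=2$ gives the $2$-RDM diagonal entry together with an exchange piece.

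To fix $\alpha,\beta,\gamma$, I would evaluate both sides at reference states, exactly as in the proof of Lemma~\ref{lem:exterior-grassmannian-moment-hierarchy}. The quantities are linear in $\rho$ and covariant, so it suffices to check on Slater projectors $\rho=P_{[\eta]}^{\wedge\eta}$ placed with $p,q$ in the three positions: both occupied, one occupied, or neither occupied. For such states $D^{(1)}_{pq}=0$, which isolates the three unknown coefficients. In each case the left side becomes
\begin{align}
\binom{N}{\eta}\int\det R_{[\eta],[\eta]}\,|R_{pq}|^2\,\dd R ,
\end{align}
a degree-$(\eta+2)$ Grassmannian moment. I would compute it via the pushforward $R=w^\dagger P_{[\eta]}w$ and Haar Weingarten calculus. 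A cleaner route is to note that $\det R_{[\eta],[\eta]}=|\det w_{[\eta],[\eta]}|^2$. Reweighting by this factor turns the problem into a conditional Gaussian or Wishart-type computation in which $\eta+2$ rows play a role, giving rational functions with denominators $N(N+1)(N+2)$.

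Finally, I would check the consistency of the $D^{(1)}_{pq}$-type terms. On a covariant rotated reference such as $u(\phi;p,q)$ applied to a Slater state, the expected answer is that their coefficient cancels, leaving only the subtracted $|D^{(1)}_{pq}|^2$. I would also verify against the normalization identity $\sum_q|R_{pq}|^2=R_{pp}$.

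The main obstacle is the exact evaluation of the reweighted fourth-order Grassmannian moments for each of the three occupation patterns. The determinant weight couples all $\eta$ rows, and extracting clean closed forms such as $(N+1-\eta)/(N(N+2))$ requires either a careful Weingarten sum or an explicit reduction to a $\mtr{Gr}$ integral in a few variables. I would first attempt the latter, using the $\mtr U(\eta)\times\mtr U(N-\eta)$ invariance to restrict to the relevant $2\times2$ block.
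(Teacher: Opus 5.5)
Your reduction to $(N+1)^2\,\mathbb{E}\bigl[|R_{pq}|^2\bigr]-|D^{(1)}_{p,q}|^2$ is right, and so is the three-parameter form you posit. Fitting it on Slater references with $p,q$ both occupied, one occupied, or both empty is equivalent to what the paper does. The paper observes that $B_{p,q}=\binom{N}{\eta}\int|R_{pq}|^2R^{\wedge\eta}\,\mathrm{d}R$ commutes with $U_\eta(g)$ for $g\in\mathrm{U}(1)_p\times\mathrm{U}(1)_q\times\mathrm{U}(N-2)$. Schur's lemma then gives $B_{p,q}=\sum_{T\subseteq\{p,q\}}A_T\Pi^{(\eta)}_T$, and your three reference values are exactly $A_{\emptyset}$, $A_{\{p\}}=A_{\{q\}}$ and $A_{\{p,q\}}$.

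This local-symmetry argument is also the rigorous version of your equivariance claim. Lemma~\ref{lem:expression_U(N)-invariant} only covers maps into $\mathrm{End}(\mac H_t)$, so the symmetric (``swap'') sector of $\mathbb{E}_\rho[R\otimes R]$ would need a separate argument. By contrast, $|R_{pq}|^2$ is invariant under independent phases on modes $p$ and $q$, while $D^{(1)}_{p,q}$ and $D^{(1)}_{q,p}$ pick up nontrivial phases. That rules out any such term directly, with no need for a consistency check on rotated references.

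The genuine gap is the step that carries all of the quantitative content. You never evaluate $\binom{N}{\eta}\int\det R_{[\eta],[\eta]}\,|R_{pq}|^2\,\mathrm{d}R$. The routes you suggest, a Haar--Weingarten expansion of a degree-$(\eta+2)$ polynomial or a determinant-reweighted Wishart/Jacobi computation, grow in size with $\eta$ and are only sketched. As a result the coefficients $\frac{(N+1)(N+1-\eta)}{N(N+2)}$ and $\frac{N+1}{N}$ are never derived.

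The missing idea is to use the stabilizer symmetry a second time. The functional $\rho\mapsto\mathbb{E}_\rho|R_{pq}|^2$ is $\mathrm{U}(N-2)$-invariant, and $\bigwedge^{\eta-|T|}\operatorname{span}\{\ket{j}:j\neq p,q\}$ is irreducible. So you may replace the Slater reference with pattern $T$ by the averaged state $\Pi^{(\eta)}_T/\binom{N-2}{\eta-|T|}$, i.e. $A_T=\operatorname{Tr}[\Pi^{(\eta)}_TB_{p,q}]/\operatorname{Tr}[\Pi^{(\eta)}_T]$.

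The weight $\operatorname{Tr}[\Pi^{(\eta)}_TR^{\wedge\eta}]$ is then a polynomial of degree at most two in $R_{pp},R_{qq},R_{pq},R_{qp}$; for instance it is $\det R_{\{p,q\},\{p,q\}}$ when $T=\{p,q\}$, by Lemma~\ref{lem:trace_occupied_operator}. The $\eta$-fold determinant disappears, and only five moments remain: $\int|R_{pq}|^2$, $\int|R_{pq}|^2R_{pp}$, $\int|R_{pq}|^2R_{qq}$, $\int|R_{pq}|^2R_{pp}R_{qq}$ and $\int|R_{pq}|^4$. These follow in closed form from Theorem~\ref{thm:Grassmannian_convolution} and Lemma~\ref{lem:character_formula}, and assembling them yields the stated formula.
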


\begin{proof}
First, we outline the proof.
We reduce the argument to the representative off-diagonal entry \((1,2)\) and express the second moment of the estimator $\mathbb{E} \qty[ \abs{\widehat{D}^{(1)}_{p,q} }^2 ]$ as an operator-valued Grassmannian moment \(B_{1,2}\).  The symmetries
\(\mathrm U(1)\times \mathrm U(1)\times \mathrm U(N-2)\) imply that
\(B_{1,2}\) is block-diagonal with respect to the occupations of modes
\(1\) and \(2\), and Schur's lemma reduces its evaluation to four scalar
coefficients.  These coefficients are explicitly computed from Weingarten integral over Grassmannian, and substituting them into the second
moment formula yields the exact variance and the stated upper bound.

It suffices to consider the representative case \((p,q)=(1,2)\). From
the estimator formula,
\begin{align}
    \widehat{D}^{(1)}_{1,2}(R)
    =
    (N+1)R_{1,2},
\end{align}
where \(R_{1,2}:=\langle 1|R|2\rangle\) denotes the \((1,2)\)-entry of the projector \(R\).
From this expression and the unbiasedness, the variance of this estimator can be bounded as follows,
\begin{align}
    \mathbb E
\left[
\left|
\widehat D^{(1)}_{p,q}-D^{(1)}_{p,q}
\right|^2
\right] 
= \mathbb{E} \qty[ \abs{\widehat{D}^{(1)}_{p,q} }^2 ] - \abs{D^{(1)}_{p,q}}^2.
\end{align}
Using the Grassmannian formulation of the orbital-rotation protocol
developed in Sec.~\ref{sec:Grassmannian_moment}, we can rewrite
Eq.~\eqref{eq:Grassmannian_moment} as
\begin{align}
    \mathbb{E} \qty[ \abs{\widehat{D}^{(1)}_{1,2} }^2 ] =(N+1)^2  \binom{N}{\eta} \int_{\mtr{Gr}_\eta(\mathbb{C}^N)} \abs{R_{1,2}}^2 \tr_{\mac{H}_\eta }(\rho R^{\wedge \eta}) \dd{R}.
    \label{eq:1rdm_estimator_formula}
\end{align}
Therefore, it is sufficient to evaluate the corresponding operator-valued moment
\begin{align}
    B_{1,2}
    \coloneq
    \binom{N}{\eta}
    \int_{\mtr{Gr}_\eta(\mathbb C^N)}
    \abs{R_{1,2} }^2
    R^{\wedge \eta}
    \dd R
    \in \mtr{End}(\mac H_\eta).
\end{align}

Regarding $B_{1,2}$, the relevant symmetry is the invariance under local phase rotations of the first two modes, together with unitary rotations of the remaining
\(N-2\) modes. More precisely, let
\begin{align}
    G
    =
    \mtr U(1)_{\{1\}}
    \times
    \mtr U(1)_{\{2\}}
    \times
    \mtr U(N-2),
\end{align}
where \(\mtr U(N-2)\) acts on $
    \operatorname{span}\{\ket{3},\ldots,\ket{N}\}
    \subset
    \mathbb C^N$.
For \(g\in G\), the Grassmannian measure is invariant under
\(R\mapsto gRg^\dagger\), and the scalar factor \(\abs{R_{1,2}}^2\) is
also invariant under this transformation.  Indeed, if
\(g=\operatorname{diag}(e^{i\theta_1},e^{i\theta_2},h)\) where $h \in \mtr{U}(N-2)$, then
$
    (gRg^\dagger)_{1,2}
    =
    e^{i(\theta_1-\theta_2)}R_{1,2},
$
and hence \(\abs{(gRg^\dagger)_{1,2}}^2=\abs{R_{1,2}}^2\).  Therefore,
by a change of variables in the Grassmannian integral, we obtain
we obtain
\begin{align}
    U_\eta(g) B_{1,2} U_\eta(g)^\dagger
    =
    B_{1,2}.
\end{align}
These two \(\mtr U(1)\) symmetries imply
\begin{align}
    [B_{1,2},n_1]=[B_{1,2},n_2]=0,
\end{align}
where $n_j \coloneq a_j^\dagger a_j \eval_{\mac{H}_\eta} \in \mtr{End}(\mac{H}_\eta ) $ for $j \in \{1,\ldots, N \}$.
Thus \(B_{1,2}\) preserves each joint occupation sector of the modes
\(1\) and \(2\). On each such sector, the remaining degrees of freedom form
an exterior-power representation of \(\mtr U(N-2)\).

For each \(T\subseteq \{1,2\}\), consider the sector in which precisely
the modes in \(T\) are occupied among the modes \(1\) and \(2\).  After
fixing such an occupation pattern, the remaining part of the
\(\eta\)-particle space is identified with
\begin{align}
    \bigwedge^{\eta-\abs{T}}
    \operatorname{span}\{\ket{3},\ldots,\ket{N}\}.
\end{align}
This is an irreducible representation of \(\mtr U(N-2)\). Since \(B_{1,2}\)
commutes with the induced \(\mtr U(N-2)\)-action, Schur's lemma implies that
the restriction of \(B_{1,2}\) to each fixed occupation sector
\(T\subseteq \{1,2\}\) is a scalar multiple of the identity. Hence
\begin{align}
    B_{1,2}
    =
    A_\emptyset \Pi^{(\eta)}_{\emptyset}
    +
    A_{\{1\}} \Pi^{(\eta)}_{\{1\}}
    +
    A_{\{2\}} \Pi^{(\eta)}_{\{2\}}
    +
    A_{\{1,2\}} \Pi^{(\eta)}_{\{1,2\}}.
\end{align}
Here
\begin{align}
    \Pi^{(\eta)}_T
    \coloneq
    \prod_{a\in T} n_a
    \prod_{b\in L\setminus T}(\1-n_b),
    \qquad
    T\subseteq L=\{1,2\},
\end{align}
where the operators are restricted to \(\mac H_\eta\).

To determine the coefficient \(A_T\), we use
\begin{align}
    A_T
    =
    \frac{
        \Tr_{\mac H_\eta}
        \left[
            \Pi^{(\eta)}_T B_{1,2}
        \right]
    }{
        \Tr_{\mac H_\eta}
        \left[
            \Pi^{(\eta)}_T
        \right]
    }.
\end{align}
Together with
\begin{align*}
    \Tr_{\mac H_\eta}
    \left[
        \Pi^{(\eta)}_T
    \right]
    =
    \binom{N-2}{\eta-\abs{T}},
\end{align*}
this gives
\begin{align}
    A_T
    =
    \binom{N}{\eta}
    \binom{N-2}{\eta-\abs{T}}^{-1}
    \int_{\mtr{Gr}_\eta(\mathbb C^N)}
    \abs{R_{1,2}}^2
    \Tr_{\mac H_\eta}
    \left[
        \Pi^{(\eta)}_T
        R^{\wedge \eta}
    \right]
    \dd R .
\end{align}
The trace factors in the integrand are
\begin{align}
    \Tr_{\mac H_\eta}
    \left[
        \Pi^{(\eta)}_\emptyset
        R^{\wedge \eta}
    \right]
    &=
    1-R_{1,1}-R_{2,2}
    +
    (R_{1,1}R_{2,2} - R_{1,2}R_{2,1}), \\
    \Tr_{\mac H_\eta}
    \left[
        \Pi^{(\eta)}_{\{1\}}
        R^{\wedge \eta}
    \right]
    &=
    R_{1,1}
    -
    (R_{1,1}R_{2,2} - R_{1,2}R_{2,1}) , \\
    \Tr_{\mac H_\eta}
    \left[
        \Pi^{(\eta)}_{\{2\}}
        R^{\wedge \eta}
    \right]
    &=
    R_{2,2}
    -
    (R_{1,1}R_{2,2} - R_{1,2}R_{2,1}), \\
    \Tr_{\mac H_\eta}
    \left[
        \Pi^{(\eta)}_{\{1,2\}}
        R^{\wedge \eta}
    \right]
    &=
    R_{1,1}R_{2,2}-R_{1,2}R_{2,1}.
\end{align}
Here, we use the identities
$
    \tr_{\mac H_\eta}\!\left[n_1 R^{\wedge \eta}\right]
    =
    R_{1,1},
    \tr_{\mac H_\eta}\!\left[n_1 n_2 R^{\wedge \eta}\right]
    =
    \det R_{\{1,2\},\{1,2\}} = R_{1,1}R_{2,2} - R_{1,2}R_{2,1},
$
which are proved in Lemma~\ref{lem:trace_occupied_operator} in the Sec.\ref{app:technical_lemma}.
Consequently, the computation of all coefficients \(A_T\) reduces to the
following Weingarten integral over Grassmannian:
\begin{gather}
    \int_{\mtr{Gr}_\eta(\mathbb{C}^N ) } R_{1,2}R_{2,1}\dd R,
    \quad
    \int_{\mtr{Gr}_\eta(\mathbb{C}^N ) } R_{1,2}R_{2,1}R_{1,1}\dd R, 
    \quad
    \int_{\mtr{Gr}_\eta(\mathbb{C}^N ) } R_{1,2}R_{2,1}R_{22}\dd R, \\
    \int_{\mtr{Gr}_\eta(\mathbb{C}^N ) } R_{1,2}R_{2,1}R_{1,1}R_{2,2}\dd R,
    \quad
    \int_{\mtr{Gr}_\eta(\mathbb{C}^N ) } (R_{1,2}R_{2,1})^2\dd R.
\end{gather}
We left the explicit formula of these Grassmannian moments in Sec.~\ref{appendix:explicit_weingarten}.
Using the Weingarten formula for these Grassmannian moments, we
obtain
\begin{align}
    A_\emptyset
    &=
    \frac{\eta(N+1-\eta)}{N(N+1)(N+2)}, \\
    A_{\{1\}}
    =
    A_{\{2\}}
    &=
    \frac{(\eta+1)(N+1-\eta)}{N(N+1)(N+2)}, \\
    A_{\{1,2\}}
    &=
    \frac{(\eta+1)(N-\eta)}{N(N+1)(N+2)}.
\end{align}
From Eq.~\eqref{eq:1rdm_estimator_formula} and definition of $B_{1,2}$, we obtain
\begin{align}
    \mathbb E
    \left[
        \abs{\widehat{D}^{(1)}_{1,2}(R)}^2
    \right]
    =
    (N+1)^2
    \Tr_{\mac H_\eta}
    \left[
        \rho B_{1,2}
    \right].
\end{align}
Using
$    B_{1,2}
    =
    \sum_{T\subseteq\{1,2\}}A_T\Pi^{(\eta)}_T
$
and the identities
\begin{align}
    \Tr[\rho\Pi^{(\eta)}_\emptyset]
    =
    1-D^{(1)}_{1,1}-D^{(1)}_{2,2}+D^{(2)}_{(1,2),(1,2)}, &\quad
    \Tr[\rho\Pi^{(\eta)}_{\{1\}}] =
    D^{(1)}_{1,1}-D^{(2)}_{(1,2),(1,2)}, \\
    \Tr[\rho\Pi^{(\eta)}_{\{2\}}] =
    D^{(1)}_{2,2}-D^{(2)}_{(1,2),(1,2)}, &\quad
    \Tr[\rho\Pi^{(\eta)}_{\{1,2\}}]
    =
    D^{(2)}_{(1,2),(1,2)},
\end{align}
we obtain
\begin{align}
    \Tr_{\mac H_\eta}
    \left[
        \rho B_{1,2}
    \right]
    &=
    A_\emptyset
    +
    (A_{\{1\}}-A_\emptyset)(D^{(1)}_{1,1}+D^{(1)}_{2,2})
    +
    (A_\emptyset-2A_{\{1\}}+A_{\{1,2\}})
    D^{(2)}_{(1,2),(1,2)}
    \\
    &=
    \frac{(N+1-\eta)(\eta+D^{(1)}_{1,1}+D^{(1)}_{2,2})}{N(N+1)(N+2)}
    -
    \frac{D^{(2)}_{(1,2),(1,2)}}{N(N+1)}.
\end{align}
Since $\mathbb E
    \left[
        \abs{\widehat{D}^{(1)}_{1,2}(R)}^2
    \right] = (N+1)^2\Tr_{\mac H_\eta}
    \left[
        \rho B_{1,2}
    \right]  $, the following identity holds,
\begin{align}
    \mathbb E
    \left[
        \abs{\widehat{D}^{(1)}_{1,2}(R)}^2
    \right]
    =
    \frac{(N+1)(N+1-\eta)}{N(N+2)}
    \left(
        \eta+D^{(1)}_{1,1}+D^{(1)}_{2,2}
    \right)
    -
    \frac{N+1}{N}D^{(2)}_{(1,2),(1,2)}.
\end{align}
Since the estimator is unbiased,
$
    \mathbb E[\widehat{D}^{(1)}_{1,2}(R)]
    =
    D^{(1)}_{1,2},
$
the exact variance is evaluated as follows,
\begin{align}
    \operatorname{Var}
    \left(
        \widehat{D}^{(1)}_{1,2}
    \right)
    =
    \frac{(N+1)(N+1-\eta)}{N(N+2)}
    \left(
        \eta+D^{(1)}_{1,1}+D^{(1)}_{2,2}
    \right)
    -
    \frac{N+1}{N}D^{(2)}_{(1,2),(1,2)}
    -
    \abs{D^{(1)}_{1,2}}^2.
\end{align}
In particular, since
$
    0\leq D^{(1)}_{1,1},D^{(1)}_{2,2}\leq 1,
    D^{(2)}_{(1,2),(1,2)}\geq 0,
    \abs{D^{(1)}_{1,2}}^2\geq 0,
$
we have
\begin{align}
    \operatorname{Var}
    \left(
        \widehat{D}^{(1)}_{1,2}
    \right)
    \leq
    \eta+D^{(1)}_{1,1}+D^{(1)}_{2,2}
    \leq
    \eta+2.
\end{align}
The general off-diagonal case \(p\neq q\) follows by relabeling the modes.
\end{proof}

Next, we consider the variance of diagonal element estimator, 

\begin{lemma}
For every \(p\in[N]\), the single-shot 1-RDM orbital-rotation shadow estimator satisfies
\begin{align}
\operatorname{Var}(\widehat D^{(1)}_{p,p})
&:=
\mathbb E
\left[
\left|
\widehat D^{(1)}_{p,p}-D^{(1)}_{p,p}
\right|^2
\right]
\\
&=
\frac{
\eta(N+1-\eta)
+
2D^{(1)}_{p,p}(N+1-\eta)
-
(N+2) (D_{p,p}^{(1)})^2
}{N+2}.
\end{align}
In particular,
\begin{align}
    \operatorname{Var}(\widehat D^{(1)}_{p,p})
    \leq
    \eta+2
    =
    \mac{O}(\eta).
\end{align}
\end{lemma}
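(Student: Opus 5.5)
The plan mirrors the proof of Theorem~\ref{thm:explicit_variance_k=1}, with a smaller symmetry group. Relabelling modes, it suffices to treat one fixed $p$. The estimator entry is $\widehat D^{(1)}_{p,p}(R)=(N+1)R_{p,p}-\eta$. Corollary~\ref{cor:k-rdm-estimator-binomial} gives unbiasedness, so $\mathbb E[R_{p,p}]=(\eta+D^{(1)}_{p,p})/(N+1)$. Expanding the square then gives
\begin{align}
\operatorname{Var}(\widehat D^{(1)}_{p,p})
=(N+1)^2\,\mathbb E[R_{p,p}^2]-2\eta(\eta+D^{(1)}_{p,p})+\eta^2-(D^{(1)}_{p,p})^2 .
\end{align}
Everything therefore reduces to computing the single second moment $\mathbb E[R_{p,p}^2]$. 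By Eq.~\eqref{eq:Grassmannian_moment}, this moment equals $\operatorname{tr}_{\mac H_\eta}[\rho B_p]$, where
\begin{align}
B_p:=\binom{N}{\eta}\int_{\mtr{Gr}_\eta(\mathbb C^N)} R_{p,p}^2\,R^{\wedge\eta}\,\dd R .
\end{align}

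Next I would exploit the symmetry group $G=\mtr U(1)_{\{p\}}\times\mtr U(N-1)$, where the $\mtr U(N-1)$ factor acts on the complement of $\ket{p}$. The weight $R_{p,p}^2$ is $G$-invariant, and so is $\dd R$. Hence $B_p$ commutes with $n_p$ and with the $\mtr U(N-1)$ action. The two occupation sectors of mode $p$ are irreducible exterior powers of the complement. Schur's lemma then gives
\begin{align}
B_p=A_0(\1-n_p)+A_1 n_p .
\end{align}
The coefficients follow by taking traces, using $\operatorname{tr}[n_pR^{\wedge\eta}]=R_{p,p}$ (Lemma~\ref{lem:trace_occupied_operator}) together with $\operatorname{tr}[n_p]=\binom{N-1}{\eta-1}$ and $\operatorname{tr}[\1-n_p]=\binom{N-1}{\eta}$:
\begin{align}
A_1=\frac{\binom{N}{\eta}}{\binom{N-1}{\eta-1}}\int R_{p,p}^3\,\dd R,\qquad
A_0=\frac{\binom{N}{\eta}}{\binom{N-1}{\eta}}\int (R_{p,p}^2-R_{p,p}^3)\,\dd R .
\end{align}

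The only genuinely new input, and the step I expect to be the main obstacle, is evaluating the Grassmannian moments $\int R_{p,p}^m\,\dd R$ for $m=2,3$. I would obtain them either from the explicit Weingarten formulas of Sec.~\ref{appendix:explicit_weingarten} or directly, as follows. Write $R_{p,p}=\|P_{[\eta]}w\ket{p}\|^2$ with $w$ Haar distributed. Since $w\ket{p}$ is a uniform unit vector, $R_{p,p}\sim\mathrm{Beta}(\eta,N-\eta)$, and hence
\begin{align}
\int R_{p,p}^m\,\dd R=\prod_{i=0}^{m-1}\frac{\eta+i}{N+i}.
\end{align}
Substituting these values should give
\begin{align}
A_1=\frac{(\eta+1)(\eta+2)}{(N+1)(N+2)},\qquad A_0=\frac{\eta(\eta+1)}{(N+1)(N+2)} .
\end{align}
Since $\operatorname{tr}[\rho n_p]=D^{(1)}_{p,p}$, this yields
\begin{align}
\mathbb E[R_{p,p}^2]=A_0+(A_1-A_0)D^{(1)}_{p,p}=\frac{(\eta+1)(\eta+2D^{(1)}_{p,p})}{(N+1)(N+2)} .
\end{align}

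Inserting this into the variance expansion and multiplying through by $N+2$ is routine algebra. The result is $-\eta^2+\eta(N+1)-2\eta D^{(1)}_{p,p}+2(N+1)D^{(1)}_{p,p}-(N+2)(D^{(1)}_{p,p})^2$, which is exactly the claimed numerator. For the bound, drop the nonpositive term $-(D^{(1)}_{p,p})^2$ and use $0\le D^{(1)}_{p,p}\le 1$. This gives
\begin{align}
\operatorname{Var}(\widehat D^{(1)}_{p,p})\le\frac{(\eta+2)(N+1-\eta)}{N+2}\le\eta+2,
\end{align}
where the last step uses $N+1-\eta\le N+2$.
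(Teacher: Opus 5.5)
Your proposal is correct and follows the paper's proof essentially step for step. You use the same $\mathrm{U}(1)\times\mathrm{U}(N-1)$ Schur reduction of $B_p$ to two sector coefficients, the same trace formulas for $A_0$ and $A_1$, and the same final algebra and bound. The only difference is that you get $\int R_{p,p}^2\,\dd R$ and $\int R_{p,p}^3\,\dd R$ from the $\mathrm{Beta}(\eta,N-\eta)$ law of $R_{p,p}$ (which the paper itself proves inside Lemma~\ref{lem:upper_bound_of_tildeR}) rather than from the Grassmannian Weingarten formulas, and both give the same values.
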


\begin{proof}
It suffices to consider the representative case \(p=1\). In this case,
the identity term in the estimator contributes, and hence
\begin{align}
    \label{eq:expression_1rdm_diagonal_estimator}
    \widehat{D}^{(1)}_{1,1}(R)
    =
    (N+1)R_{1,1}-\eta .
\end{align}
For this estimator, we will introduce the corresponding operator-valued moment as follows,
\begin{align}
    B_{1,1}
    \coloneq
    \binom{N}{\eta}
    \int_{\mtr{Gr}_\eta(\mathbb C^N)}
    R_{1,1}^2
    R^{\wedge \eta}
    \dd R
    \in \mtr{End}(\mac H_\eta).
\end{align}
By invariance under
$
    \mtr U(1)_{\{1\}}\times \mtr U(N-1),
$
the operator \(B_{1,1}\) preserves the occupation sector of the first mode.
Schur's lemma therefore gives
\begin{align}
    B_{1,1}
    =
    A_\emptyset(\1_{\mac{H}_\eta }-n_1)
    +
    A_{\{1\}}n_1 .
\end{align}
Following the similar procedure in the proof of Theorem~\ref{thm:explicit_variance_k=1}, the coefficients are obtained from
\begin{align}
    A_\emptyset
    &=
    \binom{N}{\eta}
    \binom{N-1}{\eta}^{-1}
    \int_{\mtr{Gr}_\eta(\mathbb C^N)}
    R_{1,1}^2(1-R_{1,1})
    \dd R,
    \\
    A_{\{1\}}
    &=
    \binom{N}{\eta}
    \binom{N-1}{\eta-1}^{-1}
    \int_{\mtr{Gr}_\eta(\mathbb C^N)}
    R_{1,1}^3
    \dd R.
\end{align}
By calculating Weingarten integral over Grassmannian, we obtain
\begin{align}
    A_\emptyset
    =
    \frac{\eta(\eta+1)}{(N+1)(N+2)},
    \qquad
    A_{\{1\}}
    =
    \frac{(\eta+1)(\eta+2)}{(N+1)(N+2)}.
\end{align}
The evaluation details is left to Sec.~\ref{appendix:explicit_weingarten}.
Consequently,
\begin{align}
    \Tr_{\mac H_\eta}
    \left[
        \rho B_{1,1}
    \right]
    &=
    A_\emptyset(1-D^{(1)}_{1,1})
    +
    A_{\{1\}}D^{(1)}_{1,1}
    \\
    &=
    \frac{(\eta+1)(\eta+2D^{(1)}_{1,1})}{(N+1)(N+2)}.
\end{align}
Moreover, by recalling Eq.~\eqref{eq:expression_1rdm_diagonal_estimator}, the unbiasedness of the estimator gives
$
    \mathbb E
    \left[
        R_{1,1}
    \right]
    =
    \frac{D^{(1)}_{1,1}+\eta}{N+1}.
$
From this fact, we obtain
\begin{align}
    \mathbb E
    \left[
        \left(
            \widehat{D}^{(1)}_{1,1}(R)
        \right)^2
    \right]
    &=
    (N+1)^2
    \Tr_{\mac H_\eta}
    \left[
        \rho B_{1,1}
    \right]
    -
    2\eta(N+1)
    \mathbb E[R_{1,1}]
    +
    \eta^2
    \\
    &=
    \frac{(N+1-\eta)(\eta+2D^{(1)}_{1,1})}{N+2}.
\end{align}
Since
$
    \mathbb E[\widehat{D}^{(1)}_{1,1}(R)]
    =
    D^{(1)}_{1,1},
$
the exact variance formula can be expressed as follows,
\begin{align}
    \operatorname{Var}
    \left(
        \widehat{D}^{(1)}_{1,1}
    \right)
    &=
    \frac{(N+1-\eta)(\eta+2D^{(1)}_{1,1})}{N+2}
    -
    (D^{(1)}_{1,1})^2
    \\
    &=
    \frac{
        \eta(N+1-\eta)
        +
        2D^{(1)}_{1,1}(N+1-\eta)
        -
        (N+2)(D^{(1)}_{1,1})^2
    }{N+2}.
\end{align}
In particular, since \(0\leq D^{(1)}_{1,1}\leq 1\),
\begin{align}
    \operatorname{Var}
    \left(
        \widehat{D}^{(1)}_{1,1}
    \right)
    \leq
    \eta+2D^{(1)}_{1,1}
    \leq
    \eta+2.
\end{align}
The general diagonal case follows by relabeling the modes.
\end{proof}

Combining the off-diagonal and diagonal cases, we obtain, for every
\(p,q\in[N]\),
\begin{align}
    \operatorname{Var}(\widehat D^{(1)}_{p,q})
    \leq
    \eta+2
    =
    \mac{O}(\eta).
\end{align}

Taken together, we obtain the general variance formula for $1$-RDM as follows,
\begin{align}
\operatorname{Var}\!\left(\widehat D^{(1)}_{p,q}\right)
={}&
\frac{(N+1-\delta_{pq})(N-\eta+1)}{N(N+2)}
\Bigl(
\eta+D^{(1)}_{p,p}+D^{(1)}_{q,q}
\Bigr)
\notag\\
&-
(1-\delta_{pq})\frac{N+1}{N}
D^{(2)}_{p q,\,p q}
-
\left|D^{(1)}_{p,q}\right|^2 .
\label{eq:explicit_variance_1rdm_app}
\end{align}

For small fixed \(k \leq 3\), the same moment-based method can also yield
closed-form variance formulas by evaluating a finite number of
Weingarten integral over Grassmannian.  Indeed, for a fixed matrix element
\((\vec p,\vec q)\), the estimator \(\widehat D^{(k)}_{\vec p,\vec q}(R)\)
is a polynomial of degree at most \(k\) in the entries of \(R\).  Hence
its squared modulus has degree at most \(2k\).  After decomposing the
corresponding operator-valued moment into local occupation sectors
supported on \(L=\vec p\cup\vec q\), the coefficient of each sector
involves an additional factor
\(\operatorname{tr}[\Pi_T^{(\eta)}R^{\wedge\eta}]\), which is a polynomial
of degree at most \(|L|\le 2k\).  Therefore the required scalar
Grassmannian integrals have total degree at most
\begin{align}
    2k+|\vec p\cup\vec q|\le 4k .
\end{align}
In particular, the \(k=2\) and \(k=3\) cases require only coordinate
moments of degree at most \(8\) and \(12\), respectively.  Thus the
calculation remains finite and system-size independent for each fixed
\(k\), although the number of sector coefficients and Weingarten terms
grows rapidly with \(k\). 

While explicit evaluation of Weingarten integral over Grassmannian can yield closed-form
variance formulas for fixed \(k\), this approach does not scale well
to higher-order RDMs.  Indeed, the required Weingarten computations involve
high-degree coordinate moments and rapidly growing permutation sums, making
the exact symbolic evaluation computationally prohibitive.  In the next
section, we therefore replace exact moment evaluation with uniform moment
bounds, which are sufficient to prove an \(N\)-independent asymptotic
variance bound for general fixed \(k\).

\subsection{General case}

We first state the target result of this section. For \(\vec p,\vec q\in\mac S_{N,k}\), let
$
    \widehat D^{(k)}_{\vec p,\vec q}(R)
    :=
    \bra{\vec p}\widehat D^{(k)}(R)\ket{\vec q}
$
denote the \((\vec p,\vec q)\)-entry of the single-shot estimator associated with a measurement outcome \(R\). Here \(R\) is the rank-\(\eta\) projector obtained from the orbital-rotation measurement, written as \(R=uP_{\vec z}u^\dagger\) for \((u,\vec z)\in\mtr U(N)\times\mac S_{N,\eta}\). The entrywise variance of this estimator is defined by
\begin{align}
    \operatorname{Var}
    \left(
        \widehat D^{(k)}_{\vec p,\vec q}
    \right)
    :=
    \mathbb E_{\rho}
    \left[
        \left|
            \widehat D^{(k)}_{\vec p,\vec q}(R)
            -
            D^{(k)}_{\rho;\vec p,\vec q}
        \right|^2
    \right].
\end{align}

The main goal of this section is to prove the following entrywise variance bound.

\begin{theorem}[Upper bound for $k$-RDM estimator]
\label{thm:entrywise-variance}
For every fixed \(k\), there exists a constant \(C_k>0\), depending only on \(k\), such that for all \(N,\eta\) satisfying \(1\leq k\leq \eta < N\), all \(\eta\)-particle states \(\rho\in\operatorname{End}(\mac H_\eta)\), and all \(\vec p,\vec q\in\mac S_{N,k}\),
\begin{align}
    \operatorname{Var}
    \left(
        \widehat D^{(k)}_{\vec p,\vec q}
    \right)
    \leq
    C_k\eta^k .
\end{align}
In particular, one may take
\(
    C_k=16(k+1)\sqrt{(4k)!}.
\)
\end{theorem}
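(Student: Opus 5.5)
The plan is to bound the variance by the second moment, decompose the corresponding operator-valued moment into occupation sectors (as in the $k=1$ proof), and control each sector coefficient with Cauchy--Schwarz plus a uniform Grassmannian moment bound.

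\textbf{Step 1 (reduce to a second moment).} I would use $\operatorname{Var}(\widehat D^{(k)}_{\vec p,\vec q})\le \mathbb E|\widehat D^{(k)}_{\vec p,\vec q}(R)|^2$. By Eq.~\eqref{eq:Grassmannian_moment}, this equals $\operatorname{tr}[\rho B_{\vec p,\vec q}]$, where
\[
B_{\vec p,\vec q}=\binom{N}{\eta}\int_{\mtr{Gr}_\eta(\mathbb C^N)}|\widehat D^{(k)}_{\vec p,\vec q}(R)|^2R^{\wedge\eta}\,\dd R .
\]
It therefore suffices to bound $\|B_{\vec p,\vec q}\|_{\mathrm{op}}$ uniformly in $\rho$.

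\textbf{Step 2 (expand the estimator entry).} From Corollary~\ref{cor:k-rdm-estimator-binomial},
\[
\widehat D^{(k)}_{\vec p,\vec q}(R)=\sum_{t=0}^k c_t X_t(R),\qquad c_t=(-1)^{k+t}\binom{N+1}{t}\frac{\binom{\eta-t}{k-t}}{\binom{k}{t}} .
\]
Here $X_t(R)=\bra{\vec p}\mac E_{t,k}(R^{\wedge t})\ket{\vec q}$ is a signed sum of at most $\binom{k}{t}$ $t\times t$ minors $\det R_{\vec a,\vec b}$, with $\vec a\subset\vec p$ and $\vec b\subset\vec q$. Only terms with $\vec p\setminus\vec a=\vec q\setminus\vec b$ survive, since the spectator particles must match. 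Cauchy--Schwarz then gives $|\widehat D^{(k)}_{\vec p,\vec q}|^2\le(k+1)\sum_t c_t^2|X_t|^2$, which is the source of the factor $(k+1)$ in $C_k$. It suffices to bound each $c_t^2\,\|B^{(t)}\|_{\mathrm{op}}$ by $O_k(\eta^k)$, where $B^{(t)}$ is the moment with $|X_t|^2$ in place of $|\widehat D^{(k)}_{\vec p,\vec q}|^2$.

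\textbf{Step 3 (sector decomposition).} Set $L=\vec p\cup\vec q$. Exactly as in the proof of Theorem~\ref{thm:explicit_variance_k=1}, $|X_t(R)|^2$ is invariant under $\mtr U(1)^{L}\times\mtr U(N-|L|)$, because each minor only picks up a phase. Schur's lemma then makes $B^{(t)}$ a sum $\sum_{T\subseteq L}A_T\Pi^{(\eta)}_T$, so $\|B^{(t)}\|_{\mathrm{op}}=\max_T|A_T|$, with
\[
A_T=\frac{\binom N\eta}{\binom{N-|L|}{\eta-|T|}}\int|X_t|^2h_T\,\dd R,\qquad h_T=\operatorname{tr}[\Pi_T^{(\eta)}R^{\wedge\eta}]\in[0,1].
\]
I would then apply Cauchy--Schwarz, $\int|X_t|^2h_T\le(\int|X_t|^4)^{1/2}(\int h_T)^{1/2}$, using $h_T^2\le h_T$ and $\int h_T\,\dd R=\binom{N-|L|}{\eta-|T|}/\binom N\eta$, which follows from $\mtr U(N)$-invariance and $\operatorname{tr}\Pi_T^{(\eta)}=\binom{N-|L|}{\eta-|T|}$. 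This gives
\[
A_T\le\Bigl(\binom N\eta\big/\tbinom{N-|L|}{\eta-|T|}\Bigr)^{1/2}\Bigl(\int|X_t|^4\,\dd R\Bigr)^{1/2}.
\]

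\textbf{Step 4 (uniform moment bound; main obstacle).} The hard step is a uniform bound on the degree-$4t\le4k$ Grassmannian moments of minors, of the form
\[
\int|\det R_{\vec a,\vec b}|^4\,\dd R\lesssim(4t)!\,(\eta/N)^{4t}
\]
up to $k$-dependent factors. I would obtain this by writing $R=u^\dagger P_{[\eta]}u=\sum_{j\le\eta}\ket{\bar u_j}\bra{\bar u_j}$, expanding the minors as sums over $\eta^{t}$-type index choices of products of Haar entries, and bounding the Haar moments $\mathbb E\prod|u_{ij}|^{2}$ of degree $\le4t$ by $(4t)!/N^{4t}$ up to constants. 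The crude Weingarten bound for $N\ge 2\cdot 4t$ would supply this, with the small-$N$ cases absorbed into $C_k$; this is where $\sqrt{(4k)!}$ should come from. The delicate part is the bookkeeping of powers. The prefactor $c_t^2\sim N^{2t}\eta^{2(k-t)}$ must be cancelled by the minor moments (gaining $N^{-2t}$ and $\eta^{t}$ from the traces), and the $\binom{N}{\eta}/\binom{N-|L|}{\eta-|T|}\le(N/\eta)^{|T|}(N/(N-\eta))^{|L|-|T|}$ factor must not spoil the balance. Should the $(N/(N-\eta))$ factor cause trouble when $\eta$ is close to $N$, I would first try exploiting the particle--hole symmetry $R\mapsto\1-R$. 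Failing that, I would replace the global Cauchy--Schwarz by one applied within each sector, with $h_T$ expressed as an explicit polynomial in the $L$-block of $R$.

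Collecting the $(k+1)$ terms in $t$, the sector maximum, and the constants from Step~4 should then yield $C_k\eta^k$ with $C_k=16(k+1)\sqrt{(4k)!}$.
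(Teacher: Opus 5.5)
Your proposal has two genuine gaps, and both sit exactly where the paper's proof does its real work.

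\textbf{Step 2 discards a necessary cancellation.} Your claim that it suffices to bound each $c_t^2\|B^{(t)}\|_{\mathrm{op}}$ by $O_k(\eta^k)$ is false whenever $\vec p\cap\vec q\neq\emptyset$, because the individual terms are genuinely larger than the variance.

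\begin{itemize}
\item For $\vec p=\vec q$ one has $X_0\equiv1$ and $B^{(0)}=\binom N\eta\int R^{\wedge\eta}\,\dd R=\1_{\mac H_\eta}$. Hence $c_0^2\|B^{(0)}\|_{\mathrm{op}}=\binom{\eta}{k}^2\sim\eta^{2k}/(k!)^2$. Already for $k=1$, both pieces of $(N+1)R_{11}-\eta$ have second moment $\approx\eta^2$, while the variance is $\approx\eta$.
\item For $|\vec p\cap\vec q|=k-t$ with $t<k$, $X_t$ is a single off-diagonal minor. Then $c_t^2\int|X_t|^2\,\dd R$ is of order $N^{2t}\eta^{2(k-t)}(\eta/N^2)^t=\eta^{2k-t}$ for $\eta\ll N$. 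This is exactly the $\eta^{2k-t}$ left over by your own power count in Step 4.
\item Concretely, for $k=2$, $\vec p=(1,2)$, $\vec q=(1,3)$, one has $\widehat D^{(2)}_{\vec p,\vec q}=-\frac{(N+1)(\eta-1)}{2}R_{23}+\binom{N+1}{2}(R_{11}R_{23}-R_{13}R_{21})$. Each piece has second moment of order $\eta^3$. Writing $R_{11}=\eta/N+\widetilde R_{11}$ collapses the coefficient of $R_{23}$ to $(N+1)/2$.
\end{itemize}

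No termwise estimate over $t$ can see this cancellation. The idea you are missing is the paper's centered re-expansion $\widehat D^{(k)}(R)=\sum_a\gamma_a\mac E_{a,k}(\widetilde R^{\wedge a})$ with $\widetilde R=R-\frac\eta N\1$. Combined with it are two bounds:
\begin{itemize}
\item the generating-function bound $|\gamma_a|\lesssim_k N^a\eta^{(k-a)/2}$, instead of the naive $N^a\eta^{k-a}$, obtained from $(1+z)^{\eta-N-1}(1+(1-\eta/N)z)^{N+1-a}$;
\item the matching centered entry moments $\int|\widetilde R_{xy}|^{4a}\,\dd R\le(4a)!\,\xi^{2a}/N^{4a}$, obtained from the Beta laws of $R_{xx}$ and of $|R_{xy}|^2=X(1-X)Y$.
\end{itemize}
Only after centering is a triangle inequality over the expansion index harmless. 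Uncentered minor moments of size $(\eta/N)^{4t}$, as in your Step 4, cannot produce $\eta^k$.

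\textbf{Step 3 loses a power of $N/\eta$.} Your sector decomposition is the same device the paper uses, applied there to the centered monomials $g_{\vec r,\pi}$. The problem is your estimate of the sector weight. Write $w_T:=h_T\binom N\eta/\binom{N-|L|}{\eta-|T|}$, so that $A_T=\int|X_t|^2w_T\,\dd R$. Then $h_T^2\le h_T$ amounts to $\int w_T^2\,\dd R\le\binom N\eta/\binom{N-|L|}{\eta-|T|}$, which for $T=L$ is at least $(N/\eta)^{|L|}$.

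Already for $k=1$, $p\ne q$, $T=\{p,q\}$, inserting $\int|R_{12}|^4\,\dd R\approx2\eta^2/N^4$ gives $(N+1)^2A_T\lesssim\sqrt2\,N$ rather than $O(\eta)$. So the dangerous factor is $N/\eta$ at low filling, not $N/(N-\eta)$ near $\eta\approx N$.

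The paper instead computes $\int w_T^2\,\dd R$ exactly, using Low's two-fold-twirl diagonal $\frac{\eta+1}{\eta+1-m}\big/\bigl(\binom{N+1}{\eta}\binom N\eta\bigr)$ and a Vandermonde-type sum. This gives the $N$-independent bound $(|T|+1)(|L|-|T|+1)$, i.e., $w_T$ concentrates near $1$.

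Your fallback of a sector-wise Cauchy--Schwarz could repair this second problem if it produced such an $N$-independent bound. The first problem, however, cannot be fixed without recombining the terms of different order $t$.
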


Before diving into the proof, let us briefly provide the outline in four steps. First, we rewrite the estimator in a centered form,
\begin{align}
    \widehat D^{(k)}(R)
    =
    \sum_{a=0}^{k}
    \gamma_a\,
    \mac E_{a,k}(\widetilde R^{\wedge a}),
    \qquad
    \widetilde R:=R-\frac{\eta}{N}\1_{\mac H_1}.  
\end{align}
This representation allows us to reduce the proof of the
\(\eta^k\)-scaling entrywise variance bound to two ingredients:
bounds on the coefficients \(\gamma_a\) and bounds on 
moments of the centered matrix \(\widetilde R\).
Second, Lemma~\ref{lem:upper_bound_gamma_a} gives a combinatorial bound on \(|\gamma_a|\). Third, Lemma~\ref{lem:upper_bound_of_Lak} reduces the \(L^2(\nu_\rho)\)-norm of each centered matrix element to scalar Grassmannian moment bounds. Fourth, Lemma~\ref{lem:upper_bound_of_tildeR} provides these scalar moment bounds. Combining the three estimates yields the desired entrywise variance bound.

First, we rewrite expression of $\widehat{D}^{(k)} (R)$ with $\tilde{R} $ that allows us to obtain the upper bound scaling $\eta^k$.

\begin{lemma}
    \label{lem:centered-estimator-expansion}
    Let $R$ be a projector operator on $\mac{H}_1 $. Assume that $\tilde{R} \coloneq R - (\eta / N ) \1_{\mac{H}_1 } $, then the following identity holds, 
  \begin{align}
      \widehat{D}^{(k)} (R) &= \sum^k_{a=0} \gamma_a \mac{E}_{a,k} (\widetilde R^{\wedge a}) ,\\
      \label{eq:def_gamma_a}
      \text{where } \ \gamma_a &\coloneq \sum^k_{t=a} (-1)^{k+t} \binom{N+1}t \frac{\binom{\eta - t}{k-t} }{ \binom{k}t } 
    \qty(\frac{\eta}N)^{t-a} \binom{k-a}{t-a} 
  \end{align}  
\end{lemma}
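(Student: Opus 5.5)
The plan is to substitute $R=\widetilde R+c\,\1_{\mac H_1}$ with $c:=\eta/N$ into the estimator of Corollary~\ref{cor:k-rdm-estimator-binomial} and expand each exterior power $R^{\wedge t}$ binomially. The key identity to establish is
\begin{align}
    (\widetilde R+c\1_{\mac H_1})^{\wedge t}
    =
    \sum_{a=0}^{t} c^{t-a}\,
    \mac E_{a,t}\!\left(\widetilde R^{\wedge a}\right),
    \label{eq:plan-binomial-wedge}
\end{align}
where by convention $\widetilde R^{\wedge 0}=\1_{\mac H_0}$ and $\mac E_{0,t}(\1_{\mac H_0})=\1_{\mac H_t}$. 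I would prove \eqref{eq:plan-binomial-wedge} by acting on a decomposable vector $\ket{v_1}\wedge\cdots\wedge\ket{v_t}$. Multilinearity gives $\bigwedge_j(\widetilde R+c)\ket{v_j}=\sum_{S\subseteq[t]}c^{t-|S|}\bigwedge_j X_j^{S}\ket{v_j}$, with $X_j^S=\widetilde R$ for $j\in S$ and $X_j^S=\1$ otherwise. The remaining point is that $\mac E_{a,t}(Y)$ acts on such a vector as $\sum_{|S|=a}$ (apply $Y$ to the factors in $S$, identity elsewhere). This follows from the matrix-unit definition $\ketbra{\vec p}{\vec q}\wedge\1=a^\dagger_{\vec p}a_{\vec q}$: the second-quantized operator $\sum X_{\vec p\vec q}a^\dagger_{\vec p}a_{\vec q}$ is exactly the $a$-body operator summed over all $a$-subsets of particles. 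This is the same counting already used in Lemma~\ref{lem:extension-basic}. Both sides of \eqref{eq:plan-binomial-wedge} are linear in the operator, and decomposable vectors span $\mac H_t$, so the identity follows.

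Next I substitute \eqref{eq:plan-binomial-wedge} into the estimator and apply the composition rule \eqref{eq:extension_composition}, $\mac E_{t,k}\mac E_{a,t}=\binom{k-a}{t-a}\mac E_{a,k}$. This gives
\begin{align}
    \widehat D^{(k)}(R)
    =
    \sum_{t=0}^{k}\sum_{a=0}^{t}
    (-1)^{k+t}\binom{N+1}{t}\frac{\binom{\eta-t}{k-t}}{\binom{k}{t}}
    c^{t-a}\binom{k-a}{t-a}\,
    \mac E_{a,k}\!\left(\widetilde R^{\wedge a}\right).
\end{align}
Exchanging the order of summation to $\sum_{a=0}^k\sum_{t=a}^k$ then yields exactly $\gamma_a$ as defined in \eqref{eq:def_gamma_a}.

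The main obstacle is justifying \eqref{eq:plan-binomial-wedge} carefully. Two points need care. First, the subset-sum description of $\mac E_{a,t}$ must hold for non-Hermitian, non-projector $Y$ and not only on occupation basis states; matching matrix elements of $a^\dagger_{\vec p}a_{\vec q}$ with the first-quantized subset sum, including fermionic signs, takes a short verification. Second, the $a=0$ term must be consistent with the stated identity convention. Everything else is bookkeeping with binomial coefficients.
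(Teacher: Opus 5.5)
Your proposal is correct and follows essentially the same route as the paper's proof. The paper also writes $R=\frac{\eta}{N}\1+\widetilde R$, expands $R^{\wedge t}$ multilinearly on a decomposable vector, groups the subset sums with $|A|=a$ into $\mac E_{a,t}(\widetilde R^{\wedge a})$, applies the composition rule \eqref{eq:extension_composition}, and exchanges the order of summation to obtain $\gamma_a$. The paper simply asserts the subset-sum action of $\mac E_{a,t}$, so your added checks of that action against the matrix-unit definition and of the $a=0$ convention tighten the argument without changing it.
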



\begin{proof}
By definition of the shadow estimator expressed in Eq.\eqref{eq:k-rdm-estimator-binomial}, $\widehat{D}^{(k)} (R)$ is expressed as a linear combination of $\mac{E}_{t, k} (R^{\wedge t})$ for $t \in \{0,1,\ldots ,k\} $. Therefore, if we decompose $\mac{E}_{t, k} (R^{\wedge t})$ into a linear combination of $\mac{E}_{t, k} (\widetilde R^{\wedge \eta})$, we derive the target expression. 

By the defining action of the exterior power, for
$\ket{v_1},\ldots,\ket{v_t}\in\mathbb C^N$, we have
\begin{align}
    R^{\wedge t}
    \left(
        \ket{v_1}\wedge\cdots\wedge\ket{v_t}
    \right)
    &=
    R\ket{v_1}\wedge\cdots\wedge R\ket{v_t} \\
    &=
    \left(\theta\ket{v_1}+\widetilde R\ket{v_1}\right)
    \wedge\cdots\wedge
    \left(\theta\ket{v_t}+\widetilde R\ket{v_t}\right) \\
    &=
    \sum_{A\subset [t]}
    \theta^{t-|A|}
    \ket{w_1^{(A)}}\wedge\cdots\wedge\ket{w_t^{(A)}}.
\end{align}
Here $[t]=\{1,\ldots,t\}$, and
\begin{align}
    \ket{w_\alpha^{(A)}}
    &=
    \begin{cases}
        \widetilde R\ket{v_\alpha}, & \alpha\in A,\\
        \ket{v_\alpha}, & \alpha\notin A.
    \end{cases}
\end{align}
On the other hand, for each $0\le a\le t$, the operator
$\mac{E}_{a,t}\left(\widetilde R^{\wedge a}\right)$ is characterized by
\begin{align}
    \mac{E}_{a,t}
    \left(
        \widetilde R^{\wedge a}
    \right)
    \left(
        \ket{v_1}\wedge\cdots\wedge\ket{v_t}
    \right)
    &=
    \sum_{\substack{A\subset [t]\\ |A|=a}}
    \ket{w_1^{(A)}}\wedge\cdots\wedge\ket{w_t^{(A)}}.
    \label{eq:lift_action_on_simple_wedge}
\end{align}
From this fact, we can obtain the following relationship, 
\begin{align}
    R^{\wedge t}
    \left(
        \ket{v_1}\wedge\cdots\wedge\ket{v_t}
    \right) &= \sum_{A\subset [t]}
    \theta^{t-|A|}
    \ket{w_1^{(A)}}\wedge\cdots\wedge\ket{w_t^{(A)}}. \\
    &=  \sum^t_{a = 0} \sum_{\substack{A\subset [t]\\ |A|=a} }
    \theta^{t-|A|} \ket{w_1^{(A)}}\wedge\cdots\wedge\ket{w_t^{(A)}} \\
    &= \sum^t_{a=0} \theta^{t-a} \sum_{\substack{A\subset [t]\\ |A|=a}}
    \ket{w_1^{(A)}}\wedge\cdots\wedge\ket{w_t^{(A)}}. \\
    &= \sum^t_{a=0} \theta^{t-a}
    \mac{E}_{a,t}
    \left(
        \widetilde R^{\wedge a}
    \right) \left(
        \ket{v_1}\wedge\cdots\wedge\ket{v_t}
    \right).
\end{align}
Therefore, grouping the terms in the multilinear expansion according to
$a=|A|$, we obtain
\begin{align}
    R^{\wedge t}
    &=
    \sum_{a=0}^{t}
    \theta^{t-a}
    \mac{E}_{a,t}
    \left(
        \widetilde R^{\wedge a}
    \right).
\end{align}

Applying $\mac{E}_{t, k}$ to both sides, we obtain
\begin{align}
    \mac{E}_{t, k}
    \left(
        R^{\wedge t}
    \right)
    &=
    \sum_{a=0}^{t}
    \theta^{t-a}
    \mac{E}_{t, k}
    \left(
        \mac{E}_{a,t}
        \left(
            \widetilde R^{\wedge a}
        \right)
    \right).
\end{align}
By the composition rule for the extension maps in Eq.~\eqref{eq:extension_composition}, we obtain
\begin{align}
    \mac{E}_{t, k}
    \left(
        R^{\wedge t}
    \right)
    &=
    \sum_{a=0}^{t}
    \theta^{t-a}
    \binom{k-a}{t-a}
    \mac{E}_{a, k}
    \left(
        \widetilde R^{\wedge a}
    \right) \\
    &=
    \sum_{a=0}^{t}
    \left(\frac{\eta}{N}\right)^{t-a}
    \binom{k-a}{t-a}
    \mac{E}_{a, k}
    \left(
        \widetilde R^{\wedge a}
    \right)
    \label{eq:relationship_R_tildeR}
\end{align}
where we used $\theta=\eta/N$ in the last equality.
By using Eqs.~\eqref{eq:k-rdm-estimator-binomial} and \eqref{eq:relationship_R_tildeR}, we compute
\begin{align}
    \widehat{D}^{(k)}(R)
    &=
    \sum_{t=0}^{k}
    (-1)^{k+t}
    \binom{N+1}{t}
    \frac{\binom{\eta-t}{k-t}}{\binom{k}{t}}
    \mac{E}_{t, k}
    \left(
        R^{\wedge t}
    \right) \\
    &=
    \sum_{t=0}^{k}
    (-1)^{k+t}
    \binom{N+1}{t}
    \frac{\binom{\eta-t}{k-t}}{\binom{k}{t}}
    \sum_{a=0}^{t}
    \left(\frac{\eta}{N}\right)^{t-a}
    \binom{k-a}{t-a}
    \mac{E}_{a, k}
    \left(
        \widetilde R^{\wedge a}
    \right) \\
    &=
    \sum_{a=0}^{k}
    \left[
    \sum_{t=a}^{k}
    (-1)^{k+t}
    \binom{N+1}{t}
    \frac{\binom{\eta-t}{k-t}}{\binom{k}{t}}
    \left(\frac{\eta}{N}\right)^{t-a}
    \binom{k-a}{t-a}
    \right]
    \mac{E}_{a, k}
    \left(
        \widetilde R^{\wedge a}
    \right) \\
    &=
    \sum_{a=0}^{k}
    \gamma_a
    \mac{E}_{a, k}
    \left(
        \widetilde R^{\wedge a}
    \right).
\end{align}
Here the coefficient $\gamma_a$ is defined in Eq.~\eqref{eq:def_gamma_a}.
    
\end{proof}

With this expression in hand, we next bound the variance of the each entry of $k$-RDM estimator.  Since the estimator is unbiased, we have
\begin{align}
    \operatorname{Var}
    \left[
        \widehat{D}^{(k)}_{\rho; \vec p,\vec q}
    \right]
    &=
    \mathbb{E}
    \left[
        \left|
        \widehat{D}^{(k)}_{\rho; \vec p,\vec q}(R)
        \right|^2
    \right]
    -
    \left|
        D^{(k)}_{\rho;\vec p,\vec q}
    \right|^2 \\
    &\le
    \mathbb{E}
    \left[
        \left|
        \widehat{D}^{(k)}_{\rho; \vec p,\vec q}(R)
        \right|^2
    \right],
    \label{eq:entrywise-variance-first-bound}
\end{align}
For its analysis, we define the $\rho$-dependent probability measure $\nu_\rho$ on
$\mtr{Gr}_\eta(\mathbb{C}^N)$ by
$
    \nu_\rho(\dd R)
    :=
    \binom{N}{\eta}
    \operatorname{tr}_{\mac H_\eta}
    \left[
        \rho\,R^{\wedge \eta}
    \right]
    \dd R .
$
For a measurable function
$
    f:\mtr{Gr}_\eta(\mathbb{C}^N)\to\mathbb{C},
$
we use the notation
\begin{align}
    \norm{f}_{L^2(\nu_\rho)}^2
    :=
    \int_{\mtr{Gr}_\eta(\mathbb{C}^N)}
    \abs{f(R)}^2\,\nu_\rho(\dd R).
    \label{eq:L2-norm-nu-rho}
\end{align}
By using this notation, 
we obtain
\begin{align}
    \mathbb{E}
    \left[
        \left|
        \widehat{D}^{(k)}_{\rho; \vec p,\vec q}(R)
        \right|^2
    \right]
    &=
    \binom{N}{\eta}
    \int_{\mtr{Gr}_\eta(\mathbb{C}^N)}
    \dd R\,
    \operatorname{tr}_{\mac H_\eta}
    \left[
        \rho\,R^{\wedge \eta}
    \right]
    \left|
        \sum_{a=0}^{k}
        \gamma_a\,
        \bra{\vec p}
        \mac{E}_{a, k}\!\left(\widetilde R^{\wedge a}\right)
        \ket{\vec q}
    \right|^2
 \\
    &=
    \int_{\mtr{Gr}_\eta(\mathbb{C}^N)}
    \left|
        \sum_{a=0}^{k}
        \gamma_a\,
        \bra{\vec p}
        \mac{E}_{a, k}\!\left(\widetilde R^{\wedge a}\right)
        \ket{\vec q}
    \right|^2
    \nu_\rho(\dd R)
  \\
    &=
    \left\|
        \sum_{a=0}^{k}
        \gamma_a\,
        \bra{\vec p}
        \mac{E}_{a, k}\!\left(\widetilde R^{\wedge a}\right)
        \ket{\vec q}
    \right\|_{L^2(\nu_\rho)}^2 .
    \label{eq:entrywise-second-moment-L2}
\end{align}
Applying the triangle inequality in $L^2(\nu_\rho)$, we further obtain
    \begin{align}
    \left\|
    \sum_{a=0}^k
    \gamma_a \bra{\vec p} \mac{E}_{a,k}(\widetilde R^{\wedge a}  ) \ket{\vec q}
    \right\|_{L^2(\nu_\rho)}^2 \leq \qty(\sum^k_{a=0} \abs{\gamma_a} \norm{ \bra{\vec p} \mac{E}_{a,k}(\widetilde R^{\wedge a}  ) \ket{\vec q} }_{L^2 (\nu_\rho) } )^2.
    \label{eq:target_evaluation_formula}
\end{align}
Thus, the entrywise variance bound is reduced to two separate estimates:
one for the scalar coefficient $\abs{\gamma_a}$, and the other for the
$L^2(\nu_\rho)$-norm.

Next, we evaluate the upper bound of $\abs{\gamma_a} $.
\begin{lemma}
    \label{lem:upper_bound_gamma_a}
    Let $1\le a\le k\le \eta\le N$, and let $\gamma_a$ be defined by
Eq.~\eqref{eq:def_gamma_a}. Then the following inequality holds,
    \begin{align}
            \abs{\gamma_a} \leq \binom{N+1}{a} \binom{k}{a}^{-1} \frac{k^{(k-a)/2 }}{(k-a)!} \eta^{(k-a)/2}.
    \end{align}
\end{lemma}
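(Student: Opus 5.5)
The plan is to factor out the prefactor $\binom{N+1}{a}\binom{k}{a}^{-1}$ exactly, using binomial identities. That leaves a single alternating sum $S_{n}$ with $n:=k-a$, and the task becomes showing $|S_n|\le k^{n/2}\eta^{n/2}/n!$.

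\textbf{Factoring the prefactor.} Apply $\binom{k}{t}\binom{t}{a}=\binom{k}{a}\binom{k-a}{t-a}$, which gives $\binom{k-a}{t-a}/\binom{k}{t}=\binom{t}{a}/\binom{k}{a}$. Then apply $\binom{N+1}{t}\binom{t}{a}=\binom{N+1}{a}\binom{N+1-a}{t-a}$. Writing $j:=t-a$, $M:=N+1-a$ and $x:=\eta/N$, Eq.~\eqref{eq:def_gamma_a} becomes
\begin{align}
    \gamma_a=\frac{\binom{N+1}{a}}{\binom{k}{a}}\,S_n,
    \qquad
    S_n:=\sum_{j=0}^{n}(-1)^{n+j}\binom{M}{j}x^{j}\binom{\eta-a-j}{n-j}.
\end{align}

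\textbf{Generating function.} As in the comparison with Low's estimator, use $\binom{\eta-a-j}{n-j}=[z^{n}]\,z^{j}(1+z)^{\eta-a-j}$ and sum the binomial series in $j$. This should give
\begin{align}
    S_n=(-1)^{n}\,[z^{n}]\,(1+z)^{\eta-a-M}\bigl(1+(1-x)z\bigr)^{M}=(-1)^n[z^n]\,e^{g(z)},
\end{align}
where
\begin{align}
    g(z):=(\eta-a-M)\log(1+z)+M\log\bigl(1+(1-x)z\bigr)=\sum_{r\ge1}c_r z^{r}.
\end{align}
The point of the centering $\widetilde R=R-(\eta/N)\1_{\mac H_1}$ now shows up as a cancellation in the low-order cumulants. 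Explicitly,
\begin{align}
    c_r=\frac{(-1)^{r+1}}{r}\Bigl[(\eta-a-M)+M(1-x)^{r}\Bigr].
\end{align}
For $r=1$ this is $c_1=\eta-a-Mx=-a-\eta(1-a)/N$, so $|c_1|\le a\le k$, with no factor of $\eta$. For $r\ge2$, use $0\le 1-(1-x)^{r}\le rx$ and $Mx\le \eta+1$. This gives $|c_r|\le (\eta+1)+a$, so every cumulant of order at least two is $\mathcal O(\eta)$ (more precisely $\mathcal O(k\eta)$ when $a\le k\le\eta$).

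\textbf{Majorant bound.} Since $e^{g}$ is a power series in the $c_r$ with nonnegative combinatorial coefficients, we have $|[z^n]e^{g}|\le[z^n]e^{\hat g}$, where $\hat g(z)=\sum_r|c_r|z^r$. Expand $[z^n]e^{\hat g}$ as a sum over multisets of parts $r_1,\dots,r_\ell$ with $\sum r_i=n$, weighted by $\prod|c_{r_i}|/(\text{symmetry factors})$. A part of size $1$ contributes at most $k$. A part of size $r\ge2$ contributes at most $C\eta$ but uses up $r\ge2$ units of $n$. Using $\eta\ge k$, each unit of degree therefore costs at most a factor $\sqrt{k\eta}$. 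Counting the multisets with the $1/\ell!$ and $1/r$ factors, I expect a bound of the form $(k\eta)^{n/2}/n!$ up to the stated constant. A cleaner alternative is to compare with an explicit majorant such as $\exp\bigl(kz+k\eta\,z^2/(1-z)\bigr)$ and read off its $z^n$ coefficient. For $a=k$ we have $n=0$ and $S_0=1$, so the claim is trivially an equality.

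\textbf{Main obstacle.} The hardest step is this last one: turning ``$c_1=O(k)$, $c_{r\ge2}=O(\eta)$'' into exactly $k^{n/2}\eta^{n/2}/n!$ rather than a bound with worse constants such as $e^{O(k)}$. If the majorant series does not give the clean constant, I would fall back on a Cauchy estimate on $|z|=\sqrt{n/(k\eta)}$ combined with $n^{n/2}\ge n!/e^{n}$-type bounds. I would then absorb any extra $k$-dependent factor into $C_k$ in Theorem~\ref{thm:entrywise-variance}, which only needs $|\gamma_a|\lesssim_k \binom{N+1}{a}\eta^{(k-a)/2}$.
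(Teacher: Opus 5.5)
Your reduction is the paper's own. The two binomial identities give $\gamma_a=\binom{N+1}{a}\binom{k}{a}^{-1}S_n$, the binomial series gives $S_n=(-1)^n[z^n](1+z)^{\eta-N-1}\bigl(1+(1-x)z\bigr)^{N+1-a}$, and the paper also passes to the logarithm and a majorant with nonnegative coefficients.

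The gap is the final step. It is the only place the stated constant $k^{n/2}\eta^{n/2}/n!$ can come from, and you leave it as an expectation. With the cumulant bounds you actually derive, it does not close:
\begin{itemize}
\item From $|c_1|\le a$ and $|c_2|\le\eta+1+a$, the best conclusion at $n=2$ is $|[z^2]e^{g}|\le a^2/2+\eta+1+a$. For $a=k-2$ and $\eta=k$ this equals $k^2/2+1>k\eta/2$.
\item Your alternative majorant $\exp\bigl(kz+k\eta z^2/(1-z)\bigr)$ has $z^2$-coefficient $k^2/2+k\eta$, which is three times the target when $\eta=k$.
\item The Cauchy-estimate fallback with radius $\sqrt{n/(k\eta)}$ loses a $k$-dependent factor of at least $n!\,n^{-n/2}$.
\end{itemize}
Such a weaker bound would be enough for an $\mathcal O_k(\eta^k)$ variance statement. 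It does not prove the inequality as stated, nor the explicit $C_k=16(k+1)\sqrt{(4k)!}$ derived from it.

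The missing ingredient is sharper bookkeeping of the cumulants. The exact identity
\[
r|c_r|=(N+1-\eta)-(N+1-a)(1-x)^r=\eta\sum_{j=1}^{r-1}(1-x)^j+(a-1)(1-x)^r+1
\]
gives $|c_1|\le a$ and $|c_r|\le\bigl((r-1)\eta+a\bigr)/r$. The factor $1/r$ and the coefficient $a$ are exactly what your bound $\eta+1+a$ throws away.

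Using $\eta\ge k$, weaken this to $|c_r|\le b_r(\eta/k)^{r/2}$ with $b_r:=\bigl((r-1)k+a\bigr)/r$, valid for every $r\ge1$. After the rescaling $w=z\sqrt{\eta/k}$, the majorant is summable in closed form, since $\sum_{r\ge1}b_rw^r=kw/(1-w)+(k-a)\log(1-w)$. Therefore
\[
|S_n|\le(\eta/k)^{n/2}[w^n](1-w)^{n}e^{kw/(1-w)}=(\eta/k)^{n/2}\sum_{\ell=0}^{n}(-1)^{n-\ell}k^\ell/\ell!.
\]
Because $n=k-a\le k-1$, the terms $k^\ell/\ell!$ are nondecreasing in $\ell$. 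The alternating sum is therefore at most its last term $k^n/n!$, which yields exactly $|S_n|\le k^{n/2}\eta^{n/2}/n!$.

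The factor $(1-w)^{k-a}$, produced by keeping $a$ in $b_r$, supplies the cancellation that your cruder majorants miss.
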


\begin{proof}
The proof proceeds by rewriting the coefficient \(\gamma_a\) in terms of a single auxiliary sum \(S_{a,K}\).  We then estimate this auxiliary sum by
the technical lemmas in Sec.\ref{app:technical_lemma}, which yields
the desired bound on \(\gamma_a\).
We can rewrite this term as follows:
\begin{align}
    \sum^k_{t=a} (-1)^{k+t} \binom{N+1}{t}
    \frac{\binom{\eta - t}{k-t}}{ \binom{k}{t}}
    \left(\frac{\eta}{N}\right)^{t-a}
    \binom{k-a}{t-a}
    &=
    \sum^k_{t=a} (-1)^{k+t}
    \left(\frac{\eta}{N}\right)^{t-a}
    \binom{N+1}{t}
    \binom{\eta - t}{k-t}
    \frac{\binom{k-a}{t-a}}{\binom{k}{t}}  \\
    &=
    \sum^k_{t=a} (-1)^{k+t}
    \left(\frac{\eta}{N}\right)^{t-a}
    \binom{N+1}{t}
    \binom{t}{a}
    \binom{\eta - t}{k-t}
    \binom{k}{a}^{-1}  \\
    &=
    \sum^k_{t=a} (-1)^{k+t}
    \left(\frac{\eta}{N}\right)^{t-a}
    \binom{N+1}{a}
    \binom{N+1-a}{t-a}
    \binom{\eta - t}{k-t}
    \binom{k}{a}^{-1}  \\
    &=
    \binom{N+1}{a}
    \binom{k}{a}^{-1}
    \sum^{K}_{m=0}
    (-1)^{K+m}
    \left(\frac{\eta}{N}\right)^m
    \binom{N+1-a}{m}
    \binom{\eta-a-m}{K-m}  \\
    &=
    \binom{N+1}{a}
    \binom{k}{a}^{-1}
    S_{a,K}.
    \label{eq:equality_gamma}
\end{align}
Here, the second equality uses
$\binom{k-a}{t-a}\binom{k}{t}^{-1}
=
\binom{t}{a}\binom{k}{a}^{-1}$,
and the third equality uses
$\binom{N+1}{t}\binom{t}{a}
=
\binom{N+1}{a}\binom{N+1-a}{t-a}$.
In the fourth equality, we set $m:=t-a$ and $K:=k-a$.
In the last equality, we introduce the notation
\begin{align}
        S_{a,K}
    :=
    \sum^{K}_{m=0}
    (-1)^{K+m}
    \left(\frac{\eta}{N}\right)^m
    \binom{N+1-a}{m}
    \binom{\eta-a-m}{K-m}.
\end{align}

By continuing the generating-function computation and using 
combinatorial manipulations, this summation can be bounded as follows:
\begin{align}
    \abs{S_{a,K}} 
    &\leq \eta^{(k-a)/2} \frac{k^{(k-a)/2 }}{(k-a)!}
    \label{eq:the_upper_bound_Sak}
\end{align}
The details of the derivation are given in lemmas \ref{lem:formula_Sak} and  \ref{lem:upper_bound_Sak} in Sec.~\ref{app:technical_lemma}.

By substituting Eq.~\eqref{eq:the_upper_bound_Sak} into Eq.~\eqref{eq:equality_gamma}, we obtain
\begin{align}
    \abs{\gamma_a} \leq \binom{N+1}{a} \binom{k}{a}^{-1} \frac{k^{(k-a)/2 }}{(k-a)!} \eta^{(k-a)/2}.
\end{align}
\end{proof}

Next, we evaluate the upper bound of $L^2$ norm.

\begin{lemma}
\label{lem:upper_bound_of_Lak}
Let \(1\leq a\leq k\leq \eta <  N\), and let
\(\vec p,\vec q\in\mac S_{N,k}\). Let \(\rho\in\operatorname{End}(\mac H_\eta)\) be an \(\eta\)-particle state, and let \(\nu_\rho\) be the probability measure on \(\mtr{Gr}_\eta(\mathbb C^N)\) defined in Eq.~\eqref{eq:L2-norm-nu-rho}. For \(R\in\mtr{Gr}_\eta(\mathbb C^N)\), set
$
    \widetilde R
    :=
    R-\frac{\eta}{N}\1_{\mac H_1}.
$
Then
\begin{align}
    \norm{
        \bra{\vec p}
        \mac E_{a,k}
        \left(
            \widetilde R^{\wedge a}
        \right)
        \ket{\vec q}
    }_{L^2(\nu_\rho)}^2
    &\leq
    \frac{k!}{(k-a)!}
    \sum_{\substack{\vec r\subset \vec p\cap \vec q\\ |\vec r|=k-a}}
    \sum_{\pi\in\mathfrak S_a}
    \left(
        \int_{\mtr{Gr}_\eta(\mathbb C^N)}
        \left|
            \prod_{\nu=1}^a
            \widetilde R_{(\vec p\setminus \vec r)_\nu,
            (\vec q\setminus \vec r)_{\pi(\nu)}}
        \right|^4
        \dd R
    \right)^{1/2}
    (a+1).
\end{align}
\end{lemma}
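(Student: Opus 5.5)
First I would write the matrix element explicitly. By the characterization of $\mac E_{a,k}$ on simple wedges (Eq.~\eqref{eq:lift_action_on_simple_wedge}), $\mac E_{a,k}(\widetilde R^{\wedge a})$ acts on $\ket{\vec q}$ by choosing $a$ of the $k$ occupied modes of $\vec q$, applying $\widetilde R$ to each of them, and keeping the other $k-a$ modes unchanged. Taking the overlap with $\bra{\vec p}$ requires the untouched modes $\vec r$ to lie in $\vec p\cap\vec q$, with $|\vec r|=k-a$. The surviving factor is then $\pm\det \widetilde R_{\vec p\setminus\vec r,\,\vec q\setminus\vec r}$. Hence
\begin{align}
    \bra{\vec p}\mac E_{a,k}(\widetilde R^{\wedge a})\ket{\vec q}
    =\sum_{\substack{\vec r\subset\vec p\cap\vec q\\ |\vec r|=k-a}}
    \pm\sum_{\pi\in\mathfrak S_a}\sgn(\pi)\prod_{\nu=1}^a
    \widetilde R_{(\vec p\setminus\vec r)_\nu,(\vec q\setminus\vec r)_{\pi(\nu)}} .
\end{align}
This is a signed sum of at most $M:=\binom{k}{k-a}a!=k!/(k-a)!$ monomials. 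Cauchy--Schwarz on the outer sum gives $|\cdot|^2\le M\sum_{\vec r,\pi}|\prod_\nu \widetilde R_{\cdots}|^2$, which accounts for the prefactor $k!/(k-a)!$.

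Next I would remove the $\rho$-dependence of $\nu_\rho$. For each monomial $f=\prod_\nu\widetilde R_{\cdots}$, Cauchy--Schwarz with respect to $dR$ gives
\begin{align}
    \int |f|^2\,\nu_\rho(\dd R)\le \Bigl(\int|f|^4\dd R\Bigr)^{1/2}\Bigl(\int \binom{N}{\eta}^2\tr[\rho R^{\wedge\eta}]^2\dd R\Bigr)^{1/2}.
\end{align}
The task is then to bound the second factor by $a+1$, or by a constant times that. Using a bare supremum bound on the density would lose a factor $\binom{N}{\eta}$, so this step is the main obstacle. The remedy is to avoid taking the density globally: $f$ depends only on the entries of $R$ in the rows and columns $L=(\vec p\setminus\vec r)\cup(\vec q\setminus\vec r)$, with $|L|\le 2a$.

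I would condition on this local data. By the $\mtr U(N-|L|)$-invariance used in the proof of Theorem~\ref{thm:explicit_variance_k=1}, the operator $\binom{N}{\eta}\int |f|^2 R^{\wedge\eta}\dd R$ is block diagonal over the occupation sectors $T\subseteq L$. It is a scalar $A_T$ on each sector, and $A_T$ is a ratio of the form $\binom{N}{\eta}\binom{N-|L|}{\eta-|T|}^{-1}\int|f|^2\tr[\Pi_T R^{\wedge\eta}]\dd R$. The trace $\tr[\rho\,\cdot]$ is then a convex combination of the $A_T$, so it is at most $\max_T A_T$. The remaining step is to bound each $A_T$ by Cauchy--Schwarz against $\int\tr[\Pi_TR^{\wedge\eta}]^2\dd R$. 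This last quantity is a low-degree coordinate minor moment that can be controlled by Lemma~\ref{lem:reference-overlap-moment}-type formulas, with normalization comparable to $\binom{N-|L|}{\eta-|T|}^2/\binom{N}{\eta}^2$.

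Summing over the sectors, or equivalently over the number of occupied modes among the $a$ row indices, which gives $a+1$ distinct values of the relevant count, should produce the factor $a+1$. Combining this with the monomial count gives the stated inequality. The delicate part I would check most carefully is that the sector-normalization ratios stay uniformly bounded in $N$ and $\eta$, so that only the factor $a+1$ survives.
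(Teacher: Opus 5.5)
Your route is the paper's route. You expand into at most $k!/(k-a)!$ signed monomials $g_{\vec r,\pi}$ and apply Cauchy--Schwarz. You then use the $\mathrm U(1)^{|L|}\times\mathrm U(N-|L|)$ symmetry to make $B_{\vec r,\pi}=\binom{N}{\eta}\int|g_{\vec r,\pi}|^2R^{\wedge\eta}\,\dd R$ a scalar $A_T$ on each sector $T\subseteq L$, and bound $\operatorname{tr}(\rho B_{\vec r,\pi})\le\max_T A_T$. Finally you apply Cauchy--Schwarz in $\dd R$ to $A_T=\int|g_{\vec r,\pi}|^2\,w_T\,\dd R$, where $w_T(R)=\binom{N}{\eta}\binom{N-\ell}{\eta-t}^{-1}\operatorname{tr}[\Pi_TR^{\wedge\eta}]$, $\ell=|L|$, $t=|T|$, $h=\ell-t$.

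The gap is the last step, which is where $N$-independence and the factor $a+1$ actually come from. Your account of that step is wrong. Summing the $A_T$ over sectors would lose up to the number of sectors, $2^{|L|}$, not $a+1$, and your convex-combination bound already makes any sum unnecessary. Also, sectors are labelled by subsets $T\subseteq L$ with $|L|\le 2a$, not by an occupation count among the $a$ row indices. As written, the proposal does not show that the sector normalizations stay bounded, and that is the substance of the lemma.

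What is missing is an exact evaluation of $\int w_T^2\,\dd R$ for a single sector. First write $\operatorname{tr}[\Pi_TR^{\wedge\eta}]=\sum_{A\subseteq L^c,\,|A|=\eta-t}\det R_{T\cup A,\,T\cup A}$, so that every term is an $\eta\times\eta$ principal minor. Then use the diagonal two-fold twirl formula, which is the input behind Lemma~\ref{lem:reference-overlap-moment}:
\begin{align}
\int\det R_{\vec b,\vec b}\,\det R_{\vec b',\vec b'}\,\dd R=\frac{1}{\binom{N+1}{\eta}\binom{N}{\eta}}\,\frac{\eta+1}{\eta+1-|\vec b\cap\vec b'|},\qquad \vec b,\vec b'\in\mac S_{N,\eta},
\end{align}
with $\vec b=T\cup A$, $\vec b'=T\cup A'$ and $|\vec b\cap\vec b'|=t+|A\cap A'|$. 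Group the pairs $(A,A')$ by $s=|A\cap A'|$, use $\binom{\eta-t}{s}\frac{1}{\eta-t+1-s}=\frac{1}{\eta-t+1}\binom{\eta-t+1}{s}$, and apply Vandermonde. This gives
\begin{align}
\int w_T(R)^2\,\dd R=\frac{\eta+1}{\eta-t+1}\cdot\frac{N-\eta+1}{N-\eta-h+1}\cdot\frac{N-\ell+1}{N+1}\le(t+1)(h+1).
\end{align}
The two ratio bounds use $\eta\ge t$ and $N-\eta\ge h$; the latter holds for every nonempty sector. AM--GM then gives $\bigl(\int w_T^2\,\dd R\bigr)^{1/2}\le\sqrt{(t+1)(h+1)}\le \ell/2+1\le a+1$, since $\ell\le 2a$. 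This single-sector bound is the source of the factor $a+1$ in the statement.
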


\begin{proof}
For simplicity, for fixed $\vec p , \vec q \in \mac{S}_{N,k}$, we denote $f(R)
:=\langle \vec p|\mac E_{a,k}(\widetilde R^{\wedge a})
            |\vec q\rangle 
$, and 
\begin{align}
    g_{\vec r, \pi} (R) \coloneq
    \prod^a_{\nu=1 } \widetilde R_{(\vec p\setminus \vec r)_\nu,
            (\vec q\setminus \vec r)_{\pi(\nu)}}.
\end{align}

We first outline the proof strategy.
The proof first expands the matrix element of 
\(\mathcal E_{a,k}(\widetilde R^{\wedge a})\) into a sum of \(a\times a\) minors, and then into monomials \(g_{\vec r,\pi}(R)\) indexed by \(\vec r\) and \(\pi\).  
Applying the Cauchy--Schwarz inequality to this finite expansion reduces the desired \(L^2(\nu_\rho)\)-bound to estimating the individual terms \(\|g_{\vec r,\pi}\|_{L^2(\nu_\rho)}^2\).  
Each such term is controlled by exploiting the local symmetry on the index set \(L_{\vec r,\pi}\), decomposing the associated averaged operator into local occupation sectors, and bounding the resulting sector coefficients by a fourth moment estimate.

Here, the matrix element
\(\bra{\vec p}\mac E_{a,k}(\widetilde R^{\wedge a})\ket{\vec q}\)
admits an expansion in terms of \(a\times a\) minors of \(\widetilde R\): More explicitly,
\begin{align}
    \bra{\vec p}\mac{E}_{a,k}(\widetilde R^{\wedge a})\ket{\vec q}
    &=
    \sum_{\substack{\vec r\subset \vec p\cap \vec q\\ |\vec r|=k-a}}
    \varepsilon_{\vec r}(\vec p,\vec q)
    \det \widetilde R_{\vec p\setminus \vec r,\vec q\setminus \vec r}  \\
    &=
    \sum_{\substack{\vec r\subset \vec p\cap \vec q\\ |\vec r|=k-a}}
    \varepsilon_{\vec r}(\vec p,\vec q)
    \sum_{\pi\in\mathfrak{S}_a}
    \sgn(\pi)
    \prod_{\nu=1}^a
    \widetilde R_{(\vec p\setminus \vec r)_\nu,
    (\vec q\setminus \vec r)_{\pi(\nu)}} .
\end{align}
        By denoting $\varepsilon_{\vec r, \pi} = \varepsilon_{\vec r}(\vec p,\vec q) \sgn(\pi)$, this expansion gives 
        \begin{align}
            f(R)
            =
            \sum_{\substack{\vec r\subset \vec p\cap\vec q\\ |\vec r|=k-a}}
            \sum_{\pi\in\mathfrak S_a}
            \varepsilon_{\vec r,\pi}\,
            \prod_{\nu=1}^a
    \widetilde R_{(\vec p\setminus \vec r)_\nu,
    (\vec q\setminus \vec r)_{\pi(\nu)}} ,
        \end{align}
        Since \(|\varepsilon_{\vec r,\pi}|=1\), we can obtain the upper bound of $f(R)$ as follows,
        \begin{align}
            \abs{f(R)}^2 &= \abs{
            \sum_{\substack{\vec r\subset \vec p\cap\vec q\\ |\vec r|=k-a}}
            \sum_{\pi\in\mathfrak S_a}
            \varepsilon_{\vec r,\pi}\,
            \prod_{\nu=1}^a
             \widetilde R_{(\vec p\setminus \vec r)_\nu,
    (\vec q\setminus \vec r)_{\pi(\nu)}} }^2  \\
            &\leq  \sum_{\substack{\vec r\subset \vec p\cap\vec q\\ |\vec r|=k-a}}
            \sum_{\pi\in\mathfrak S_a} \abs{\varepsilon_{\vec r, \pi}}^2 
            \sum_{\substack{\vec r\subset \vec p\cap\vec q\\ |\vec r|=k-a}} \sum_{\pi\in\mathfrak S_a}  \abs{\prod_{\nu=1}^a
             \widetilde R_{(\vec p\setminus \vec r)_\nu,
    (\vec q\setminus \vec r)_{\pi(\nu)}}}^2 \\
            &\leq \frac{k!}{(k-a)!} \sum_{\substack{\vec r\subset \vec p\cap\vec q\\ |\vec r|=k-a}} \sum_{\pi\in\mathfrak S_a}  \abs{\prod_{\nu=1}^a
             \widetilde R_{(\vec p\setminus \vec r)_\nu,
    (\vec q\setminus \vec r)_{\pi(\nu)}}}^2,
    \label{eq:Charlie}
\end{align}
Here, the first inequality comes from Cauchy-Schwartz inequality, and the second inequality comes from the following inequality.
\begin{align}
        \qty(\sum_{\pi\in\mathfrak{S}_a}1)
    \qty(
    \sum_{\substack{\vec r\subset \vec p\cap \vec q\\ |\vec r|=k-a}}1
    )
    &=
    a!\binom{|\vec p\cap \vec q|}{k-a}  \\
    &\leq
    a!\binom{k}{k-a}
    =
    \frac{k!}{(k-a)!}.
\end{align}
        Therefore, we can write Eq.~\eqref{eq:Charlie} as follows, 
        \begin{align}
            \norm{f}_{L^2(\nu_\rho)}^2
            \le
            \frac{k!}{(k-a)!}
            \sum_{\vec r,\pi}
            \norm{g_{\vec r,\pi}}_{L^2(\nu_\rho)}^2.
        \end{align}

        It remains to bound each product term.  
        For each \(\vec r \subset \vec p \cap \vec q\) and \(\pi \in \mathfrak{S}_a\), we will define
        $
            L_{\vec r,\pi}
            :=
            \{(\vec p\setminus\vec r)_j,
            (\vec q\setminus\vec r)_{\pi(j)}
            :1\le j \le a\}.
        $
        Then \(\ell_{\vec r,\pi}:=|L_{\vec r,\pi}|\le 2a\).  
        The scalar function \(\abs{g_{\vec r,\pi}(R)}^2\) is invariant under local phase rotations of
        the modes in \(L_{\vec r,\pi}\) and under \(U(N-\ell_{\vec r,\pi})\) rotations
        on the orthogonal complement.  Hence the operator
        \begin{align}
            B_{\vec r,\pi}
            :=
            \binom{N}{\eta}
            \int_{\mtr{Gr}_\eta(\mathbb C^N)}
            \abs{g_{\vec r,\pi}(R)}^2 R^{\wedge\eta}\dd R
        \end{align}
        is scalar on each local occupation sector determined by
        \(L_{\vec r,\pi}\).  Thus, for every density matrix \(\rho\),
        \begin{align}
            \norm{g_{\vec r,\pi}}_{L^2(\nu_\rho)}^2
            =
            \operatorname{tr}(\rho B_{\vec r,\pi})
            \le
            \norm{B_{\vec r,\pi}}_{\mathrm{op}}.
        \end{align}
        For a nonzero local sector \(T\subset L_{\vec r,\pi}\), the corresponding
        scalar coefficient is
        \begin{align}
            A_T
            =
            \int_{\mtr{Gr}_\eta(\mathbb C^N)}
            \abs{g_{\vec r,\pi}(R)}^2
            w_T(R)\dd R.
        \end{align}

        By Cauchy's inequality,
        \begin{align}
            A_T
            \le
            \left(
                \int_{\mtr{Gr}_\eta(\mathbb C^N)}
                \abs{g_{\vec r,\pi}(R)}^4
                \dd R
            \right)^{1/2}
            \left(
                \int_{\mtr{Gr}_\eta(\mathbb C^N)}
                w_T(R)^2
                \dd R
            \right)^{1/2}.
        \end{align}
        Lemma~\ref{lem:bound_for_w_tR} gives
        \begin{align}
            \left(
                \int_{\mtr{Gr}_\eta(\mathbb C^N)} 
                w_T(R)^2\dd R
            \right)^{1/2}
            &\le
            \sqrt{(t+1)(\ell_{\vec r,\pi}-t+1)} \\
            &\le
            \frac{\ell_{\vec r,\pi}}{2}+1
            \le
            a+1.
        \end{align}
        Therefore
        \begin{align}
            \norm{g_{\vec r,\pi}}_{L^2(\nu_\rho)}^2
            \le
            (a+1)
            \left(
                \int_{\mtr{Gr}_\eta(\mathbb C^N)}
                \abs{g_{\vec r,\pi}(R)}^4
                \dd R
            \right)^{1/2}.
        \end{align}
        Combining the preceding estimates recovers the desired inequality.
\end{proof}

From this lemma, we avoid the integral that includes the antisymmetrized operator that is very difficult to directly evaluate and reduce the problem into the standard Weingarten integral over Grassmannian. For this integral term, we obtain the following evaluation,

\begin{lemma}
\label{lem:upper_bound_of_tildeR}
Let \(a \leq \eta <N\) be a positive integer, and let
\(\vec x, \vec y \in \mac{S}_{N,a}\). Then, for arbitrary
\(\vec x\) and \(\vec y\),
\begin{align}
    \int_{\mtr{Gr}_\eta(\mathbb{C}^N)}
    \abs{
    \prod_{\nu=1}^a
    \widetilde R_{\vec x_\nu,\vec y_\nu}
    }^4 \dd R
    \leq
    (4a)! \frac{\xi^{2a}}{N^{4a}},
\end{align}
where $\xi = \min(\eta, N-\eta)$. 
\end{lemma}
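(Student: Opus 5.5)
We need to bound $\int |\prod_\nu \widetilde R_{x_\nu y_\nu}|^4 dR$ where $R$ is uniform on the Grassmannian. The quantity is the expectation of a product of $4a$ entries of $\widetilde R$ and their conjugates. I plan to realize $R$ explicitly and then control each entry by a moment bound.

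\textbf{Realization of $R$.} Write $R = V V^\dagger$, with $V$ the first $\eta$ columns of a Haar unitary $u$. By the symmetry $R \mapsto \1 - R$, which maps $\mtr{Gr}_\eta$ to $\mtr{Gr}_{N-\eta}$ and sends $\widetilde R \mapsto -\widetilde{(\1-R)}$ (since $\1-R-\frac{N-\eta}{N}\1 = -(R-\frac{\eta}{N}\1)$), the integrand is invariant up to sign. So we may assume $\eta = \xi \le N/2$.

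\textbf{Reduction to single entries.} First apply the generalized Hölder inequality to the $4a$ factors:
\begin{align}
\int \Bigl|\prod_{\nu=1}^a \widetilde R_{x_\nu y_\nu}\Bigr|^4 dR \le \prod_{\nu=1}^a \Bigl(\int |\widetilde R_{x_\nu y_\nu}|^{4a}\, dR\Bigr)^{1/a}.
\end{align}
This reduces everything to single-entry moments $\mathbb E|\widetilde R_{ij}|^{4a}$. It therefore suffices to show
\begin{align}
\mathbb E|\widetilde R_{ij}|^{2m} \le (2m)!\,\frac{\xi^m}{N^{2m}}, \qquad m = 2a.
\end{align}
This gives the claimed $(4a)!\,\xi^{2a}/N^{4a}$; a product of $a$ copies of $((4a)!)^{1/a}$ would be too much, but Hölder gives exactly one factor $(4a)!$.

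\textbf{Off-diagonal entries.} For $i\neq j$, $\widetilde R_{ij} = \sum_{s=1}^\eta u_{is}\overline{u_{js}}$. I would compute $\mathbb E|R_{ij}|^{2m}$ by unitary invariance: $R_{ij} = \langle i|R|j\rangle$. Writing $e_i$ and $e_j$ in a rotated frame, $R_{ij}$ has the same law as $\tfrac12(R_{++}-R_{--})$, where $e_\pm = (e_i \pm e_j)/\sqrt2$; the phase of $R_{ij}$ can be handled the same way. It is cleaner, though, to use the known Dirichlet/Beta structure. Comparing with the Gaussian model $V = G(G^\dagger G)^{-1/2}$, one gets $\mathbb E|R_{ij}|^{2m}$ exactly as a ratio of rising factorials of order $\eta^m/N^{2m}$ times $m!$-type factors. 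I expect a bound like $\mathbb E|R_{ij}|^{2m} \le m!\,\eta^{(m)}\cdots/N^{(2m)}$, which suffices.

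\textbf{Diagonal entries.} For $i=j$, $R_{ii}$ is Beta$(\eta, N-\eta)$ distributed, and $\widetilde R_{ii} = R_{ii} - \eta/N$ is centered. Its central moments of a Beta variable satisfy $\mathbb E|X-\mu|^{2m} \le C_m (\mu(1-\mu)/(N+1))^m$. With $\mu(1-\mu)\le \xi/N$, this gives $\xi^m/N^{2m}$ up to a factor of $(2m)!$. To get the combinatorial constant I would expand the centered moment via cumulants, or use sub-Gaussian/Bernstein-type moment bounds for Beta variables.

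\textbf{Main obstacle.} The hard step is the uniform, $N$-independent constant $(2m)!$ in the single-entry moments, especially the diagonal centered Beta moments, where cancellation of the mean is essential. The first thing I would try is a representation $R_{ii} = \frac{\sum_{s\le\eta}E_s}{\sum_{s\le N}E_s}$ with i.i.d. exponentials, together with a careful expansion. As a fallback, I would use the Weingarten expansion truncated at leading order with crude $(2m)!$ counting of pairings.
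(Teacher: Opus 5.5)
Your Hölder step is the same as the paper's first step, and you correctly reduce the lemma to the single-entry estimate $\int\abs{\widetilde R_{xy}}^{4a}\dd R\le(4a)!\,\xi^{2a}/N^{4a}$ for both $x=y$ and $x\neq y$. The $R\mapsto\1-R$ reduction to $\eta\le N/2$ is valid, though not needed. The gap is that these two single-entry bounds are the entire content of the lemma, and your proposal proves neither.

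\textbf{Off-diagonal case.} For $x\neq y$ you only say you \emph{expect} a rising-factorial formula. The one concrete identity you give is false as stated: $\tfrac12(R_{++}-R_{--})=\operatorname{Re}R_{ij}$ is real-valued, so it cannot have the law of $R_{ij}$. The missing idea is the factorization the paper uses.
\begin{itemize}
\item From $R=R^2=R^\dagger$ you get $\sum_{z\neq x}\abs{R_{xz}}^2=R_{xx}(1-R_{xx})$.
\item Invariance of $\dd R$ under the stabilizer $1\oplus\mtr U(N-1)$ of $\ket{x}$ makes the direction of $(R_{xz})_{z\neq x}$ uniform on the unit sphere of $\mathbb C^{N-1}$ and independent of $X=R_{xx}$.
\item Hence $\abs{R_{xy}}^2=X(1-X)Y$, with $X\sim\operatorname{Beta}(\eta,N-\eta)$ and $Y\sim\operatorname{Beta}(1,N-2)$ independent.
\item Beta moments then give the exact value $(2a)!\,\eta^{\overline{2a}}(N-\eta)^{\overline{2a}}/\bigl((N-1)^{\overline{2a}}N^{\overline{4a}}\bigr)$. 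This is at most $((2a)!)^2\xi^{2a}/N^{4a}\le(4a)!\,\xi^{2a}/N^{4a}$.
\end{itemize}

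\textbf{Diagonal case.} For $x=y$ you leave the centered bound $\mathbb E[(X-\theta)^{2r}]\le(2r)!\,\xi^r/N^{2r}$ open. Some of the routes you list cannot deliver it. The Beta law's sub-Gaussian variance proxy is of order $1/N$, not of order $\operatorname{Var}X\asymp\xi/N^2$, so that route yields only $N^{-r}$-type bounds; these are far too weak when $\xi\ll N$. Bernstein-type bounds can give the right scaling, but reaching the exact constant $(2r)!$ would need further work.

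The paper closes this with the Beta Stein identity $\mathbb E[(X-\theta)^{n+1}]=\frac nN\,\mathbb E[X(1-X)(X-\theta)^{n-1}]$. For $m_n=\mathbb E[(X-\theta)^n]$ this gives the recursion $m_{n+1}=\frac{n}{N+n}\bigl[\theta(1-\theta)m_{n-1}+(1-2\theta)m_n\bigr]$. By induction, $\abs{m_n}\le s_n\,\xi^{\lfloor n/2\rfloor}/N^n$, where $s_0=1$, $s_1=0$, $s_{n+1}=n(s_n+s_{n-1})$, so $s_n\le n!$.

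Your exponential-representation idea can also be completed, as follows.
\begin{itemize}
\item By beta--gamma independence, $X=\Gamma_\eta/S$ is independent of $S=\sum_{s\le N}E_s$, where $\Gamma_\eta=\sum_{s\le\eta}E_s$.
\item Therefore $m_n=\mathbb E[Z^n]/N^{\overline n}$, with $Z=(X-\theta)S=(1-\theta)\sum_{s\le\eta}(E_s-1)-\theta\sum_{s>\eta}(E_s-1)$.
\item For $r\ge2$, the cumulants obey $\abs{\kappa_r(Z)}\le(r-1)!\,\eta(N-\eta)/N\le(r-1)!\,\xi$.
\item Expanding moments over singleton-free set partitions gives the same derangement-number bound.
\end{itemize}

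As written, however, neither single-entry estimate is established, so the proposal does not yet prove the lemma.
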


\begin{proof}
The proof first applies H\"older's inequality to reduce the moment of the
product to \(4a\)-th moment bounds for individual matrix entries
\(\widetilde R_{xy}\).  For diagonal entries, \(R_{xx}\) is identified with
a \(\operatorname{Beta}(\eta,N-\eta)\) random variable, and the required
bound follows from the centered beta-moment estimate.  For off-diagonal
entries, the projection identity \(R^2=R\) gives
\(\sum_{z\neq x}|R_{xz}|^2=R_{xx}(1-R_{xx})\), while the conditional
direction is uniform on the sphere; hence
\(|R_{xy}|^2=X(1-X)Y\) with \(Y\sim\operatorname{Beta}(1,N-2)\), and the
beta moment formula gives the same bound.

We use the following simple form of H\"older's inequality
(see, e.g., Ref.~\cite{Folland1999}): for measurable functions
\(f_1,\ldots,f_a\) on a measure space \((X,\lambda)\),
\begin{equation*}
    \int_X \left|\prod_{\nu=1}^a f_\nu\right| \, d\lambda
    \leq
    \prod_{\nu=1}^a
    \left(
        \int_X |f_\nu|^a \, d\lambda
    \right)^{1/a}.
\end{equation*}
Applying this inequality with
\(X=\mtr{Gr}_\eta(\mathbb{C}^N)\) and
\(f_\nu(R)=\abs{\widetilde R_{\vec x_\nu,\vec y_\nu}}^4\), we obtain
\begin{align}
    \int_{\mtr{Gr}_\eta(\mathbb{C}^N)}
    \abs{
    \prod_{\nu=1}^a
    \widetilde R_{\vec x_\nu,\vec y_\nu}
    }^4 \dd R
    &=
    \int_{\mtr{Gr}_\eta(\mathbb{C}^N)}
    \prod_{\nu=1}^a
    \abs{\widetilde R_{\vec x_\nu,\vec y_\nu}}^4 \dd R \\
    &\leq
    \prod_{\nu=1}^a
    \qty(
    \int_{\mtr{Gr}_\eta(\mathbb{C}^N)}
    \abs{\widetilde R_{\vec x_\nu,\vec y_\nu}}^{4a}
    \dd R
    )^{1/a}.
\end{align}
Thus, it remains to estimate the cases
\(\vec x_\nu=\vec y_\nu\) and \(\vec x_\nu\neq \vec y_\nu\), respectively.

For simplicity, fix \(\nu\) and write
\(x=\vec x_\nu\) and \(y=\vec y_\nu\). First, we consider the case
\(x=y\). In this case, since
\(\widetilde R_{xx}=R_{xx}-\eta/N\), we evaluate the centered moment of
the diagonal element \(R_{xx}\). The relation between \(R_{xx}\) and
the Beta distribution is obtained as follows. The projector \(R\) can
be represented as
\(R = u^\dagger P_\eta u\),
where \(u\sim \mtr{U}(N)\) and
\(P_\eta=\operatorname{diag}(1,\ldots,1,0,\ldots,0)\) is the coordinate
rank-\(\eta\) projector. Hence
\begin{align}
    R_{xx}
    =
    \langle x|u^\dagger P_\eta u|x\rangle
    =
    \sum_{\alpha=1}^{\eta} |u_{x\alpha}|^2 .
\end{align}
Since \(u \sim \mtr{U}(N)\), the \(x\)-th row of \(u\) is distributed
uniformly on the unit sphere in \(\mathbb C^N\)~\cite{killip2017matrix}.
Let \(\ket{g}=(g_1,\ldots,g_N)\), where \(g_j\) are independent standard
complex Gaussian random variables. Then
\(\ket{g}/\sqrt{\braket{g}}\) is also uniformly distributed on the unit
sphere of \(\mathbb C^N\). Thus, the random vector
\((u_{x1},\ldots,u_{xN})\) has the same distribution as
\(\ket{g}/\sqrt{\braket{g}}\), and \(R_{xx}\) can be expressed in distribution as
\begin{align}
    R_{xx}
    &=
    \sum_{\alpha=1}^{\eta}
    \abs{\frac{g_\alpha}{\sqrt{\braket{g}}}}^2 \\
    &=
    \frac{\sum_{\alpha=1}^{\eta}\abs{g_\alpha}^2}
    {\sum_{\alpha'=1}^{N}\abs{g_{\alpha'}}^2}.
\end{align}
Recalling the definition of the \(\chi^2\) distribution, we have
\begin{align}
    2\sum_{\alpha=1}^{\eta}\abs{g_\alpha}^2
    \sim \chi^2_{2\eta},
    \qquad
    2\sum_{\alpha=\eta+1}^{N}\abs{g_\alpha}^2
    \sim \chi^2_{2(N-\eta)},
\end{align}
Here \(\chi^2_m\) denotes the chi-square distribution with \(m\) degrees
of freedom.
Moreover, these two random variables are independent, since they depend on
disjoint sets of independent Gaussian variables. Therefore we can express
\begin{align}
    R_{xx}
    \sim
    \frac{\chi^2_{2\eta}}{\chi^2_{2\eta}+\chi^2_{2(N-\eta) }}.
\end{align}
By the standard chi-square characterization of the Beta distribution (see, e.g. Ref.~\cite{krishnamoorthy2006handbook}), this
implies
\begin{align}
    R_{xx}\sim \operatorname{Beta}(\eta,N-\eta).
\end{align}

Let \(X\sim \mtr{Beta}(\eta,N-\eta)\) and set \(\theta=\eta/N\). Then, for a positive integer $r$,the following inequality holds,
\begin{align}
    \mathbb{E}[ (X - \theta)^{2r} ] \leq (2r)! \frac{\xi^{r}}{N^{2r}}, 
\end{align}
where $\xi = \min(\eta, N-\eta)$. 
Using this property, we obtain
\begin{align}
    \int_{\mtr{Gr}_\eta(\mathbb{C}^N)}
    \abs{R_{xx}-\eta/N}^{4a} \dd R
    &=
    \int_{\mtr{Gr}_\eta(\mathbb{C}^N)}
    (R_{xx}-\eta/N)^{4a} \dd R \\
    &=
    \mathbb{E}[ (X - \theta)^{4a} ] \\
    &\leq
    (4a)!
    \frac{\xi^{2a}}{N^{4a}} .
\end{align}

Next, we consider the case \(x\neq y\). In this case,
\(\widetilde R_{xy}=R_{xy}\). Since \(R=u^\dagger P_\eta u\) is an orthogonal
projection, it satisfies \(R^2=R\) and \(R=R^\dagger\). Using this property, we obtain
\begin{align}
    R_{xx} = R^2_{xx}
    =
     \sum_z R_{xz} R_{zx} = \sum_z \abs{R_{xz}}^2.
\end{align}
It follows that
\begin{align}
    \sum_{z\neq x}\abs{R_{xz}}^2
    &=
    \sum_z \abs{R_{xz}}^2 - \abs{R_{xx}}^2
    =
    R_{xx}-R_{xx}^2 \\
    &= R_{xx}(1-R_{xx}) .
\end{align}
Thus, although the individual off-diagonal entries \(R_{xz}\) with
\(z\neq x\) are random, their total squared sum is determined by the
diagonal entry: When we write
\(
    X=R_{xx}\sim \mtr{Beta}(\eta,N-\eta),
\)
we have
\begin{align}
    \sum_{z\neq x}\abs{R_{xz}}^2
    =
    X(1-X).
\end{align}
Additionally, for
\begin{align}
    \ket{v}\coloneq (R_{xz})_{z\neq x}\in\mathbb C^{N-1},
\end{align}
we have \(\braket{v}=X(1-X)\).  By Lemma~\ref{lem:direction-offdiag},
conditional on \(X\), the normalized vector
$
        {\ket{v}}/{\sqrt{X(1-X)}} = \ket{v}/\sqrt{\braket{v} }
$
is uniformly distributed on the unit sphere in \(\mathbb C^{N-1}\), or
equivalently has the same distribution as a row of a Haar-random unitary in
\(\mtr{U}(N-1)\).

Following the same procedure in the case $x=y$, a row of a Haar-random unitary in $\mtr{U}(N-1)$ can
be represented as
\begin{align}
        \frac{\ket{h}}{\sqrt{\braket{h}}},
    \qquad
    \ket{h}=(h_z)_{z\neq x},
\end{align}
where the \(h_z\)'s are independent standard complex Gaussian random
variables.  Here, the following relationship holds,
\begin{align}
    \frac{\abs{h_y}^2}{\sum_{z\neq x}\abs{h_z}^2}
    \sim
    \frac{\chi^2_2}{\chi^2_2+\chi^2_{2(N-2)}}
.
\end{align}
Hence, by the standard chi-square characterization of the Beta distribution, we obtain 
\begin{align}
            \abs{\frac{R_{xy}}{\sqrt{\braket{v}}}}^2 \sim     \mtr{Beta}(1,N-2)
\end{align}
Therefore, by writing \(Y\sim\operatorname{Beta}(1,N-2)\), we obtain 
\begin{align} 
    \abs{R_{xy}}^2 = X(1-X)Y .
\end{align} 
Here, \(X\) and \(Y\) are independent.

Therefore,
\begin{align}
    \int_{\mtr{Gr}_\eta(\mathbb{C}^N)}
    \abs{R_{xy}}^{4a} \dd R
    =
    \mathbb{E}\qty[X^{2a}(1-X)^{2a}]
    \mathbb{E}\qty[Y^{2a}] .
\end{align}
For a beta random variable \(Z\sim\mtr{Beta}(\alpha,\beta)\), we use the
standard moment identity
\begin{align}
    \mathbb E[Z^k]
    =
    \frac{(\alpha)^{\overline{k}}}
    {(\alpha+\beta)^{\overline{k}}},
\end{align}
see, e.g., Ref.~\cite{krishnamoorthy2006handbook}.  More generally, the
same beta-integral calculation gives
\begin{align}
    \mathbb{E}\qty[Z^k(1-Z)^k]
    &=
    \frac{
        (\alpha)^{\overline{k}}
        (\beta)^{\overline{k}}
    }
    {
        (\alpha+\beta)^{\overline{2k}}
    },
\end{align}
where
\((q)^{\overline{k}}
\coloneq
q(q+1)\cdots(q+k-1)\).
Combining these identities gives
\begin{align}
    \int_{\mtr{Gr}_\eta(\mathbb{C}^N)}
    \abs{R_{xy}}^{4a} \dd R
    =
    (2a)!
    \frac{
        (\eta)^{\overline{2a}}
        (N-\eta)^{\overline{2a}}
    }
    {
        (N-1)^{\overline{2a}}
        N^{\overline{4a}}
    } .
\end{align}
Hence, this quantity is bounded by
\begin{align}
    \int_{\mtr{Gr}_\eta(\mathbb{C}^N)}
    \abs{R_{xy}}^{4a} \dd R
    &\leq
    (2a)!
    \frac{
        \mu^{\overline{2a}}
    }
    {
        N^{\overline{4a}}
    } \\
    &\leq
    (4a)!
    \frac{\mu^{2a}}{N^{4a}} .
\end{align}
Here, we used
\((\mu)^{\overline{n}}
=
\mu(\mu+1)\cdots(\mu+n-1)
\leq
n!\mu^n\)
and
\(((2a)!)^2 \leq (4a)!\).

Thus, in both cases \(x=y\) and \(x\neq y\), we have
\begin{align}
    \int_{\mtr{Gr}_\eta(\mathbb{C}^N)}
    \abs{\widetilde R_{xy}}^{4a} \dd R
    \leq
    (4a)!
    \frac{\mu^{2a}}{N^{4a}} .
\end{align}
Substituting this estimate into the H\"older bound proves the claim.
\end{proof}

Finally, we get ready to evaluate the upper bound of the target variance.
\\

\noindent
\textit{Proof of Theorem.~\ref{thm:entrywise-variance}. } Recalling Eq.~\eqref{eq:target_evaluation_formula}, we need to evaluate the following term
\begin{align}
    \sum^k_{a=1} \abs{\gamma_a^{(k)}} \norm{ \bra{\vec p} \mac{E}_{a,k} (\tilde{R}^{\wedge a}) \ket{\vec q} }_{L^2(\nu_\rho )} 
\end{align}
From lemmas \ref{lem:upper_bound_gamma_a}, \ref{lem:upper_bound_of_Lak}, and \ref{lem:upper_bound_of_tildeR}, we obtain the following identities
\begin{align}
    \abs{\gamma_a } &\leq \binom{N+1}{a} \binom{k}{a}^{-1} \frac{k^{(k-a)/2 }}{(k-a)!} \eta^{(k-a)/2}, \\
    \norm{ \bra{\vec p} \mac{E}_{a,k} (\tilde{R}^{\wedge a}) \ket{\vec q} }_{L^2(\nu_\rho )}^2 
    &\leq (a+1) \qty(\frac{k!}{(k-a)! })^2 \sqrt{(4a)!}  \frac{\eta^{a}}{N^{2a} }  
\end{align}
Therefore,
\begin{align}
    \abs{\gamma_a^{(k)}} \norm{ \bra{\vec p} \mac{E}_{a,k} (\tilde{R}^{\wedge a}) \ket{\vec q} }_{L^2(\nu_\rho )}  &\leq   \binom{N+1}{a} \binom{k}{a}^{-1} 
     \frac{k^{(k-a)/2 }}{(k-a)!} \sqrt{a+1} (a!) \binom{k}{a} ((4a)!)^{1/4} \frac{\eta^{k/2}}{N^a} \\ 
     &\leq  \frac{\binom{N+1}{a}a!}{N^a} \frac{\sqrt{a+1} ((4a)!)^{1/4} k^{(k-a)/2} }{(k-a)!} \eta^{k/2} 
\end{align}
Here, we take a rough evaluation, 
\begin{align}
     \frac{\binom{N+1}{a}a!}{N^a}  \leq \frac{(N+1) N \ldots (N-a+2)}{N^a} \leq 1 + \frac1{N} \leq 2.
\end{align}
Based on this evaluation, we derive 
\begin{align}
     \sum^k_{a=1} \abs{\gamma_a^{(k)}} \norm{ \bra{\vec p} \mac{E}_{a,k} (\tilde{R}^{\wedge a}) \ket{\vec q} }_{L^2(\nu_\rho )} \leq 2 \eta^{k/2} \sum^k_{a=1} \frac{\sqrt{a+1} ((4a)!)^{1/4} k^{(k-a)/2} }{(k-a)!} 
\end{align}
Here, when we denote 
$
t_a
:=
\frac{\sqrt{a+1}\,((4a)!)^{1/4}k^{(k-a)/2}}{(k-a)!}
$,
for $1\le a\le k-1$, we have
\begin{align}
\frac{t_a}{t_{a+1}}
&\le
\frac{\sqrt{k}}{(k-a)(4a+1)}  \\
&\le
\frac{\sqrt{k}}{4k-3}  \\
&\le
\frac12 .
\end{align}
Hence
\begin{align}
\sum_{a=1}^k t_a
&\le t_k \sum^k_{a=1} \qty(\frac12)^a   \\
&\leq \frac{1 }{1 - (1/2)} t_k \\
&=
2\sqrt{k+1}\,((4k)!)^{1/4}.
\end{align}
Therefore
\begin{align}
    2 \eta^{k/2} \sum^k_{a=1} \frac{\sqrt{a+1} ((4a)!)^{1/4} k^{(k-a)/2} }{(k-a)!} 
\le
4\sqrt{k+1}\,((4k)!)^{1/4}\eta^{k/2}.
\end{align}
Consequently,
\begin{align}
\operatorname{Var}
\!\left(
\widehat D^{(k)}_{\vec p,\vec q}
\right)
\le
16(k+1)\sqrt{(4k)!}\,\eta^k.    
\end{align}
Thus we can take
$
C_k:=16(k+1)\sqrt{(4k)!}.
$ 
\qed
\\

The preceding proof gives a uniform particle-side estimate of order
\(\eta^k\) in the regime \(1\leq k\leq \eta<N\). Since the proof bounds
determinants after expanding them into permutation products, it does not
fully exploit cancellations associated with the finite rank of the hole
projector. A sharper determinant-level analysis can recover such
particle--hole duality, and possibly yield entry-dependent improvements in
terms of \(N-\eta\), including identically vanishing entries when the
required hole rank is insufficient. These refinements are not needed for
the stated uniform bound.

Building on the variance bound established for the single-shot estimator, we analyze the performance of the corresponding empirical-mean.  Let \(M_{\rm obs}\) denote the number of target
observables, which is the number of entries of the \(k\)-RDM to be estimated.
For $ \omega_j:=\mathrm{tr}(O_j\rho) $, for $
    j=1,\ldots,M_{\rm obs},
$
suppose that \(\widehat\omega_j\) is a single-shot unbiased estimator satisfying
\begin{align}
    \mathbb{E}[\widehat\omega_j]=\omega_j,
    \qquad
    \mathrm{Var}[\widehat\omega_j]
    =
    \mathbb{E}
    \left[
        \left|
        \widehat\omega_j-\omega_j
        \right|^2
    \right]
    \le V_j .
    \label{eq:general_variance_bound}
\end{align}
Given \(L\) independent trials, we introduce the typical mean estimator expressed as follows,
\begin{align}
    \widehat\omega_j(L)
    :=
    \frac{1}{L}
    \sum_{\ell=1}^{L}
    \widehat\omega_j^{(\ell)} .
\end{align}
Since the estimator is unbiased, its MSE coincides with its variance.  Hence, the independence of single-shot estimators yields
\begin{align}
    \mathrm{MSE}\!\left[\widehat\omega_j(L)\right]
    &:=
    \mathbb{E}_{\rho}
    \left[
        \left|
        \widehat\omega_j(L)-\omega_j
        \right|^2
    \right] \\
    &=
    \mathrm{Var}_{\rho}\!\left[\widehat\omega_j(L)\right]
    =
    \frac{1}{L}\,
    \mathrm{Var}_{\rho}[\widehat\omega_j]
    \le
    \frac{V_j}{L}.
    \label{eq:mse_from_variance_bound}
\end{align}
Therefore, choosing
$
    L
    \ge
    {\max_{1\le j\le M_{\rm obs}} V_j}/{\varepsilon^2}
    \label{eq:sample_complexity_from_variance}
$
guarantees
\begin{align}
    \forall j \in \{1,\ldots,M_{\mtr{obs}} \}, 
    \mtr{Var}[\widehat{\omega}_j (L) ] \leq \varepsilon^2.
\end{align}
From this discussion, we directly derive the following corollary regarding the sample complexity regarding variance.
\begin{corollary}
    Let \(N,\eta,k\) be integers satisfying \(1\le k\le \eta <  N\) and $\varepsilon \in (0,1)$.
    For every \(N\)-mode \(\eta\)-particle state
    \(\rho\in\operatorname{End}(\mac H_\eta)\), there exists a quantum algorithm that outputs
    estimators \(\widehat D^{(k)}_{\vec p,\vec q}\) of all entries of the
    \(k\)-RDM \(D^{(k)}_\rho\) satisfying
    \begin{align}
        \forall \vec p, \vec q \in \mac{S}_{N,k}, \quad
        \operatorname{Var}
        \left(
            \widehat D^{(k)}_{\vec p,\vec q}
        \right)
        &\le
        \varepsilon^2
    \end{align}
    for all \(\vec p,\vec q\in\mac S_{N,k}\), using
    \(
        L
        =
        \left\lceil
        16(k+1)\sqrt{(4k)!}{\eta^k} / {\varepsilon^2} 
        \right\rceil = e^{\mac{O} (k \ln k) } \eta^k  /\varepsilon^2
    \)
    copies of \(\rho\). 
    \label{cor:entrywise_mse_sample_complexity}
\end{corollary}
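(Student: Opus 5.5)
The corollary follows from Theorem~\ref{thm:entrywise-variance} together with the empirical-mean computation just above, and I would assemble it in three steps.

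\emph{Protocol.} Run the orbital-rotation shadow protocol $L$ times on independent copies of $\rho$. Each run samples $u\sim\mtr U(N)$, applies $U_\eta(u)$, and records the occupation outcome $\vec z$. From these data we form the rank-$\eta$ projector $R^{(\ell)}=u^\dagger P_{\vec z}u$ and the single-shot estimator $\widehat D^{(k)}(R^{(\ell)})$ of Corollary~\ref{cor:k-rdm-estimator-binomial}. For every entry $(\vec p,\vec q)\in\mac S_{N,k}\times\mac S_{N,k}$ we output the entrywise empirical mean
\[
\widehat D^{(k)}_{\vec p,\vec q}(L)=\frac{1}{L}\sum_{\ell=1}^{L}\widehat D^{(k)}_{\vec p,\vec q}(R^{(\ell)}).
\]
The same $L$ outcomes are reused for all entries. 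This reuse is harmless, because the claim bounds each entry's variance separately and needs no union bound or joint control.

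\emph{Per-entry variance.} Fix one entry $(\vec p,\vec q)$.
\begin{itemize}
\item By Corollary~\ref{cor:k-rdm-estimator-binomial}, each single-shot term is unbiased, so the mean is unbiased as well.
\item The runs are independent and identically distributed, so the summands are i.i.d.\ complex random variables. Hence the variance $\mathbb E[|X-\mathbb E X|^2]$ of the mean equals $\frac1L$ times the single-shot variance; this is Eq.~\eqref{eq:mse_from_variance_bound}.
\item Theorem~\ref{thm:entrywise-variance} bounds the single-shot variance by $V:=16(k+1)\sqrt{(4k)!}\,\eta^k$, uniformly in $\rho$, $N$ and the entry.
\end{itemize}
Choosing $L=\lceil V/\varepsilon^2\rceil$ therefore gives $\operatorname{Var}\le V/L\le\varepsilon^2$ for every entry at once.

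\emph{Asymptotic form.} It remains to check $16(k+1)\sqrt{(4k)!}=e^{\mac O(k\ln k)}$. By Stirling's formula, $\ln (4k)!\le 4k\ln(4k)=\mac O(k\ln k)$, so $\sqrt{(4k)!}=e^{\mac O(k\ln k)}$, while the factor $16(k+1)$ contributes only $e^{\mac O(\ln k)}$. Finally, the ceiling adds at most one, and $\eta^k/\varepsilon^2\ge 1$ since $\varepsilon<1$, so
\[
L\le 2V\eta^k/\varepsilon^2\cdot\eta^{-k}\cdot\eta^k/\eta^k=\frac{2V}{\varepsilon^2}=e^{\mac O(k\ln k)}\eta^k/\varepsilon^2 .
\]

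No step is a real obstacle, since everything rests on Theorem~\ref{thm:entrywise-variance}. The only point needing care is that independence and variance scaling are applied to complex-valued estimators with the modulus-squared definition of variance. This is standard: the cross terms vanish because the centered summands are independent with zero mean.
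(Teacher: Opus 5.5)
Your proposal is correct and is essentially the paper's argument: average $L$ i.i.d.\ unbiased single-shot orbital-rotation estimators, use $\operatorname{Var}[\widehat\omega_j(L)]=\operatorname{Var}[\widehat\omega_j]/L$ (Eq.~\eqref{eq:mse_from_variance_bound}), and insert the uniform bound $C_k\eta^k$ from Theorem~\ref{thm:entrywise-variance} to choose $L=\lceil C_k\eta^k/\varepsilon^2\rceil$. The only blemish is cosmetic: your last display, with the factors $\eta^{-k}\cdot\eta^k/\eta^k$, is a garbled way of writing $L\le V/\varepsilon^2+1\le 2V/\varepsilon^2$, which is all you need.
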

This result also yields performance guarantees for the standard
median-of-means estimator introduced in Ref.~\cite{Huang:2020tih}.
Let \(n_s\) denote the number of samples in each batch, i.e., the batch size,
and let \(n_B\) denote the number of batches.  For each
\(r=1,\ldots,n_B\), define the average over the \(r\)-th batch by
\begin{align}
    \overline{\omega}^{(r)}_j(n_s)
    :=
    \frac{1}{n_s}
    \sum_{\ell=1}^{n_s}
    \widehat{\omega}^{(r,\ell)}_j,
    \qquad
    j=1,\ldots,M_{\rm obs},
\end{align}
where \(\widehat{\omega}^{(r,\ell)}_j\) denotes the single-shot estimator
obtained from the \(\ell\)-th sample in the \(r\)-th batch.  The
median-of-means estimator is then defined by
\begin{align}
    \widehat{\omega}^{\rm MoM}_j(n_s,n_B)
    :=
    \operatorname{med}
    \left\{
        \overline{\omega}^{(r)}_j(n_s)
    \right\}_{r=1}^{n_B}.
\end{align}
This estimator satisfies the following property, 

\begin{corollary}[Median-of-means estimator from Ref.~\cite{Huang:2020tih}]
    Let \(\varepsilon,\delta\in(0,1)\).  Suppose that, for each
    \(j=1,\ldots,M_{\rm obs}\), the single-shot estimator
    \(\widehat{\omega}_j\) is unbiased and satisfies
    $
        \operatorname{Var}\!\left[\widehat{\omega}_j\right]
        \le V_j .
    $
    Set
    \begin{align}
        n_B
        =
        \left\lceil
        2\log\frac{2M_{\rm obs}}{\delta}
        \right\rceil ,
        \qquad
        n_s
        =
        \left\lceil
        \frac{34}{\varepsilon^2}
        \max_{1\le j\le M_{\rm obs}} V_j
        \right\rceil
        .
    \end{align}
    Then, using \(n_s n_B\) independent single-shot classical shadow, the median-of-means
    estimators satisfy
    \begin{align}
        \forall j\in\{1,\ldots,M_{\rm obs}\}, \quad 
        \Pr
        \left[
            \left|
                \widehat{\omega}^{\rm MoM}_j(n_s,n_B)
                -
                \omega_j
            \right|
            \le
            \varepsilon
        \right]
        \ge
        1-\delta .
    \end{align}
    \label{cor:mom_from_variance_bound}
\end{corollary}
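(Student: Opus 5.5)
The plan is the standard median-of-means argument of Ref.~\cite{Huang:2020tih}: Chebyshev's inequality for each batch average, a Chernoff--Hoeffding bound for the median, and a union bound over the $M_{\rm obs}$ observables.

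\textbf{Per-batch bound.} Fix $j$ and a batch $r$. The single-shot estimators $\widehat\omega^{(r,\ell)}_j$, $\ell=1,\ldots,n_s$, are i.i.d.\ and unbiased with variance at most $V_j$. As in Eq.~\eqref{eq:mse_from_variance_bound}, the batch mean $\overline\omega^{(r)}_j(n_s)$ is unbiased with $\mathbb E[|\overline\omega^{(r)}_j-\omega_j|^2]\le V_j/n_s\le \varepsilon^2/34$. Chebyshev's inequality then gives
\begin{align}
\Pr\!\left[|\overline\omega^{(r)}_j-\omega_j|>\varepsilon\right]\le \frac{1}{34}.
\end{align}
If the estimators are complex valued, I would apply the bound to $|\cdot|^2$ directly (Markov on the squared modulus), which is what the variance definition already controls.

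\textbf{Median step.} The event $|\widehat\omega^{\rm MoM}_j-\omega_j|>\varepsilon$ requires at least $n_B/2$ of the $n_B$ independent batch means to fail. For real-valued estimators this is immediate: if more than half the batch means lie within $\varepsilon$ of $\omega_j$, so does the median. The batches use disjoint samples, so the failure indicators are independent Bernoulli variables with mean $p\le 1/34$. Hoeffding's inequality gives
\begin{align}
\Pr\!\left[\textstyle\sum_r \mathbf 1_r\ge n_B/2\right]\le \exp\!\left(-2n_B(1/2-1/34)^2\right)\le e^{-n_B/2}.
\end{align}
Here $2(16/34)^2\approx 0.443$, which is slightly below $1/2$. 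The constant $1/2$ is therefore not quite reached with this crude Hoeffding bound. I would repair this in one of two ways:
\begin{itemize}
\item use the sharper Chernoff/KL form $\exp(-n_B\,D(1/2\|1/34))$, whose exponent $\approx 1.07$ comfortably exceeds $1/2$;
\item or follow Huang et al.'s bookkeeping literally, since their constants $34$ and $2\log(2M/\delta)$ were chosen precisely for this step.
\end{itemize}
Either way, with $n_B\ge 2\log(2M_{\rm obs}/\delta)$ the failure probability is at most $\delta/(2M_{\rm obs})$.

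\textbf{Union bound and main obstacle.} A union bound over $j$ gives simultaneous success with probability $\ge 1-\delta/2\ge 1-\delta$, which is in fact stronger than the stated per-$j$ claim. The only genuine subtlety is complex-valued $\widehat\omega_j$, since off-diagonal RDM entries are complex and the median of complex numbers is ill-defined. I would handle this by taking medians of the real and imaginary parts separately. Each part has variance $\le V_j$, so each is within $\varepsilon/\sqrt2$ with the required probability after inflating $n_s$ by a factor of $2$, or equivalently by absorbing this into the constant. A union bound over the two parts then gives the result, at the cost of absorbing a factor $2$ into $M_{\rm obs}$, which is already covered by the factor $2$ in $2M_{\rm obs}/\delta$.
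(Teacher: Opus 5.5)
Your proposal is correct and is the same Chebyshev-per-batch, binomial-tail, union-bound argument that the paper imports from Ref.~\cite{Huang:2020tih} without reproducing it; switching from the crude Hoeffding bound to the relative-entropy Chernoff bound is the right fix to reach the exponent $n_B/2$, and treating complex entries through separate real and imaginary medians addresses a point the paper leaves implicit. In that complex case you do not need to inflate $n_s$: each part has batch failure probability at most $2/34=1/17$, and $D(\tfrac12\,\|\,\tfrac1{17})=\tfrac12\ln(289/64)\approx 0.75>\tfrac12$, so the stated constants $34$ and $2\log(2M_{\rm obs}/\delta)$ already suffice.
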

By applying this corollary to the task of estimating all \(k\)-RDM entries
with entrywise additive error \(\varepsilon\), we set
\begin{align}
    M_{\mtr{obs}}
    =
    |\mac{S}_{N,k}|^2
    =
    \binom{N}{k}^2,
    \qquad
    V_j
    \le
    C_k\eta^k,
    \qquad
    C_k
    :=
    16(k+1)\sqrt{(4k)!}.
\end{align}
It is therefore sufficient to choose
\begin{align}
    n_s
    =
    \left\lceil
    \frac{34 C_k\eta^k}{\varepsilon^2}
    \right\rceil,
    \qquad
    n_B
    =
    \left\lceil
    2\log
    \frac{2\binom{N}{k}^2}{\delta}
    \right\rceil .
\end{align}
Consequently, the median-of-means estimator based on orbital-rotation
shadows requires
\begin{align}
    n_s n_B
    =
    \mac{O}
    \left(
        \frac{
            C_k\eta^k
        }{
            \varepsilon^2
        }
        \log
        \left[
            \frac{\binom{N}{k}^2}{\delta}
        \right]
    \right)
    =
    \mac{O}_k
    \left(
        \frac{
            \eta^k
        }{
            \varepsilon^2
        }
        \log
        \frac{N}{\delta}
    \right)
\end{align}
copies of \(\rho\) to estimate all entries of the \(k\)-RDM
\(D_\rho^{(k)}\) of an \(\eta\)-particle state \(\rho\) within additive
error \(\varepsilon\) with probability at least \(1-\delta\).

\section{Lower bound}

In this section, we evaluate the lower bound of learning all $k$-RDM elements under particle number symmetry. 
As we discussed in the previous section, when we consider the particle number symmetry, the dependence on $N$ vanishes and the intrinsic cost is determined by the particle number $\eta$. In this context, the lower bound of the sample complexity under particle number symmetry arises as a natural question. We answer this question by showing that the lower bound matches the scaling of our proposed upper bound, which implies that our proposal is optimal in terms of $\eta$.
Our lower bound result is expressed as follows.
\begin{theorem}
    \label{thm:lower_bound}
    Let $k, \eta, N$ be integers satisfying $k \leq \eta \leq N/2$, and let $\varepsilon \in (0, \mac{O}_k(1))$.
    For a given $\eta$-particle $N$-mode state $\rho$, any quantum algorithm that performs adaptive single-copy measurements of $\rho$ requires $\Omega_k(\eta^k)/\varepsilon^2$ copies of $\rho$ to estimate all elements of the associated $k$-RDM $\{ D^{(k)}_{\vec p, \vec q} \}_{\vec p, \vec q \in \mac{S}_{N,k}}$ within an element-wise additive error $\varepsilon$ with high probability. Here, the associated $k$-RDM elements are defined by
    \begin{align}
        D^{(k)}_{\vec p, \vec q} \coloneq \tr \qty[ \rho a^\dagger_{p_1} \ldots a^\dagger_{p_k} a_{q_k} \ldots a_{q_1} ].
    \end{align}
\end{theorem}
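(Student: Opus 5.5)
The plan is to follow the two-hypothesis reduction of Ref.~\cite{chen2022exponential}. Since $N\ge 2\eta$, I will work only on the first $2\eta$ modes. Let $\mac H'_\eta\subset\mac H_\eta$ be the $\eta$-particle sector of these modes, let $d:=\binom{2\eta}{\eta}$ be its dimension, and let $\tau:=\1_{\mac H'_\eta}/d$ be the maximally mixed state on it. I will compare $\tau$ with a random family of perturbed states
\begin{align}
    \rho_X:=\tau\bigl(\1+\alpha\,O_X\bigr),
    \qquad
    O_X:=\mac E_{k,\eta}(X)\big|_{\mac H'_\eta},
\end{align}
where $X$ is drawn from a symmetric ensemble ($X$ and $-X$ equally likely) of traceless Hermitian operators on the $k$-particle sector of the $2\eta$ modes. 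A concrete choice is $X=\sum_{\vec p,\vec q}s_{\vec p\vec q}\ketbra{\vec p}{\vec q}$ with independent Hermitian random signs $s_{\vec p\vec q}=\overline{s_{\vec q\vec p}}$, supported on index pairs with $\vec p\neq\vec q$ or made traceless. The amplitude $\alpha$ is fixed so that $\rho_X\ge0$, which requires $\alpha\|O_X\|_{\rm op}\le 1$. By the duality Eq.~\eqref{eq:duality_contraction_extension}, $D^{(k)}_{\rho_X}-D^{(k)}_\tau=(\alpha/d)\,\mac C_{\eta,k}(O_X)$. Since $\mac C_{\eta,k}\circ\mac E_{k,\eta}$ is $\mathrm U(2\eta)$-equivariant, Lemma~\ref{lem:expression_U(N)-invariant} applies and this difference equals $(\alpha/d)$ times an explicit combination of $X$ and its lower contractions, with combinatorial coefficients. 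Tracelessness of $X$ kills the lowest-order terms, so some entry of the difference has size at least $c_k\,\alpha\,\binom{\eta}{k}/\binom{2\eta}{k}$ times a constant-size entry of $X$. Setting this equal to $3\varepsilon$ shows that any $\varepsilon$-accurate entrywise $k$-RDM learner distinguishes $\tau$ from the mixture $\mathbb E_X\rho_X$ with constant advantage.

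Next I bound the distinguishing power of adaptive single-copy protocols. Every POVM can be refined to rank-one effects $w\ketbra{\psi}$ without losing information. The likelihood ratio of one outcome is then $1+\alpha\bra{\psi}O_X\ket{\psi}$, whose first-order term averages to zero by the $\pm X$ symmetry. Following the martingale/second-moment argument of Ref.~\cite{chen2022exponential}, the total variation distance after $T$ adaptive rounds is at most $O\bigl(\sqrt{T\,\delta}\bigr)$, where
\begin{align}
    \delta:=\alpha^2\sup_{\ket{\psi}\in\mac H'_\eta}\mathbb E_X\bigl|\bra{\psi}O_X\ket{\psi}\bigr|^2 .
\end{align}
By Eq.~\eqref{eq:duality_contraction_extension}, $\bra{\psi}O_X\ket{\psi}=\operatorname{tr}\bigl[X^{\mathsf T}D^{(k)}_\psi\bigr]$. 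With independent signs this gives $\mathbb E_X|\cdot|^2=\|D^{(k)}_\psi\|_{\rm HS}^2$ for every normalized $\eta$-particle pure state $\psi$. At this point I invoke the Hilbert--Schmidt estimates for $k$-RDMs in a fixed particle-number sector \cite{christiansen2024hilbert,visconti2026hilbert}, which I will use in the form $\|D^{(k)}_\psi\|_{\rm HS}^2\le C_k\eta^k$. Hence $\delta\le C_k\alpha^2\eta^k$. Constant advantage forces $T\gtrsim 1/\delta$, and substituting the value of $\alpha$ dictated by the $\varepsilon$-separation should give $T=\Omega_k(\eta^k/\varepsilon^2)$. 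The upper limit $\varepsilon<\varepsilon_k$ is what keeps the required $\alpha$ compatible with positivity of $\rho_X$.

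The step I expect to be the main obstacle is reconciling the normalizations. A larger separation per unit $\alpha$ wants $O_X$ large in the directions probed by the $k$-RDM entries, while positivity of $\rho_X$ requires $\alpha\|O_X\|_{\rm op}\le1$. A random-sign $X$ may have $\|\mac E_{k,\eta}(X)\|_{\rm op}$ much larger than its typical quadratic forms. My first attempt is to condition on the high-probability event $\|O_X\|_{\rm op}\le K_k\,\mathbb E\|O_X\|_{\rm op}$, using matrix concentration for sums of random signs times the fixed operators $\mac E_{k,\eta}(\ketbra{\vec p}{\vec q})$. Conditioning breaks exact symmetry only slightly, so the first-order cancellation survives up to a negligible error. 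If the operator-norm growth spoils the $\eta^k$ scaling, the fallback is to shrink the perturbation family: choose $X$ supported on a suitable set of $k$-tuples inside $[2\eta]$, for instance disjoint blocks, so that $\mac E_{k,\eta}(X)$ acts on essentially independent sectors with controlled norm. In that case the Hilbert--Schmidt estimate is applied only to the corresponding sub-block of $D^{(k)}_\psi$. I would check that the exponents still balance to $\eta^k/\varepsilon^2$, with every combinatorial factor absorbed into $c_k$.
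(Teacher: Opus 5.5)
Your reduction framework is the right one: a maximally mixed reference on a $2\eta$-mode sector, $\pm$-symmetric perturbations, rank-one refinement of the POVMs as in Ref.~\cite{chen2022exponential}, and the fixed-sector bound $\|D^{(k)}_\psi\|_{\mathrm{HS}}^2\le C_k\eta^k$. But the perturbation ensemble you chose cannot give the claimed scaling, and the exponents do not balance at the end.

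\textbf{The scaling fails.} Because $X$ is a random-sign sum over \emph{all} index pairs, every entry $D^{(k)}_{\vec p,\vec q}$ with $\vec p\cap\vec q=\emptyset$ moves by $\Theta_k(\alpha)$. The factor is $\binom{2\eta-2k}{\eta-k}/\binom{2\eta}{\eta}=\Theta_k(1)$, so the $\varepsilon$-separation forces $\alpha=\Theta_k(\varepsilon)$. Your own second-moment computation then gives $\delta\approx\alpha^2\sup_\psi\|D^{(k)}_\psi\|_{\mathrm{HS}}^2\le C_k\alpha^2\eta^k=\Theta_k(\varepsilon^2\eta^k)$. Hence $T\gtrsim 1/\delta$ yields only $T\gtrsim \eta^{-k}/\varepsilon^2$: the power of $\eta$ has the wrong sign. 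This is not bookkeeping. A hypothesis that shifts all $\Theta_k(\eta^{2k})$ entries at once is far easier to detect than one that shifts a single entry, and $\mathbb E_X\bra{\psi}O_X\ket{\psi}^2$ sees the full $\|D^{(k)}_\psi\|_{\mathrm{HS}}^2$ with no dilution.

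\textbf{Positivity also fails, and not only on a rare event.} Already for $k=1$, the top eigenvalue of $O_X=\sum_{p,q}X_{pq}a_p^\dagger a_q$ on the $\eta$-particle sector of $2\eta$ modes is the sum of the $\eta$ largest eigenvalues of a $2\eta\times 2\eta$ random-sign matrix. This is typically $\Theta(\eta^{3/2})$, so $\alpha\|O_X\|_{\mathrm{op}}\le 1$ forces $\alpha\lesssim\eta^{-3/2}$ and the construction cannot reach constant $\varepsilon$; conditioning does not help. Your disjoint-block fallback has the same defect on a smaller scale: it still superposes a number of directions growing with $\eta$. For $k=1$ one gets $\|O_X\|_{\mathrm{op}}=\Theta(\eta)$ and $\delta=\Theta(\alpha^2\eta)$.

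\textbf{The missing idea.} You need to randomize over \emph{which} entry is perturbed, with each hypothesis a single sparse, bounded perturbation. The paper takes $A\subseteq[N]$ with $|A|=m=(1+\beta)\eta$ and $\tau=\Pi_A/\binom{m}{\eta}$. It sets $\rho_O=\tau+(\varepsilon'/d_{\eta,A})O$, with $O$ uniform over $\pm X_{\vec p,\vec q},\pm Y_{\vec p,\vec q}$ for disjoint $\vec p,\vec q\subset A$. These are the Hermitian parts of $\Pi_A a^\dagger_{p_1}\cdots a^\dagger_{p_k}a_{q_k}\cdots a_{q_1}\Pi_A$ (Lemma~\ref{lem:target_op_lower_bound}).
\begin{itemize}
\item Each $O$ has eigenvalues in $\{0,\pm1\}$, so positivity is automatic for $\varepsilon'\le 1/2$.
\item $\operatorname{tr}[O^2]=2\binom{m-2k}{\eta-k}$ gives $\operatorname{tr}[\rho_O O]=2\varepsilon$ versus $\operatorname{tr}[\tau O]=0$, with $\varepsilon'=\varepsilon\binom{m}{\eta}/\binom{m-2k}{\eta-k}=\Theta_k(\varepsilon)$. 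So the separation per hypothesis is the same order as in your construction.
\item The gain is in the per-copy quantity: $\Delta=\sup_\psi|\Lambda_+|^{-1}\sum_{O\in\Lambda_+}\operatorname{tr}[\psi O]^2\le 4\|D^{(k)}_\psi\|_{\mathrm{HS}}^2/|\Lambda_+|\le 4C_k\eta^k/|\Lambda_+|$, with $|\Lambda_+|=\binom{m}{k}\binom{m-k}{k}=\Theta_k(\eta^{2k})$. The Hilbert--Schmidt estimate is thus divided by the number of candidate directions.
\item The one-sided leaf bound $T(\varepsilon')^2\Delta$ then gives $T=\Omega_k\bigl(|\Lambda_+|\,\eta^{-k}/\varepsilon^2\bigr)=\Omega_k(\eta^k/\varepsilon^2)$.
\end{itemize}
In your notation, this amounts to taking $X=\pm(\ketbra{\vec p}{\vec q}+\ketbra{\vec q}{\vec p})$, or its $-i$ analogue, for a single uniformly random disjoint pair instead of a random-sign sum over all pairs. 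The superposition costs exactly the factor $|\Lambda_+|=\Theta_k(\eta^{2k})$ that turns $\eta^k$ into $\eta^{-k}$.
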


\subsection{The core techniques}
Before showing the proof of this theorem, we briefly overview the core techniques related to an information-theoretic framework developed in Ref.~\cite{chen2022exponential}. This framework provides a general way to prove lower bounds on learning a target property of a given quantum state $\rho$ via any algorithm without external quantum memory: the only quantum memory we can access is the storage of $\rho$. Such a learning algorithm consists of several steps and a single step proceeds as follows: the algorithm measures a single copy of the given unknown state and records the measurement outcome in classical memory. Here, the state stored in classical memory is the collection of all previous measurement outcomes. This classical state is updated by a POVM measurement outcome on a single copy of the unknown quantum state $\rho$. By repeating this process $T$ times, the final classical state is used to make a prediction for a target property of $\rho$. To include any adaptive algorithm, the POVM measurement in each step can depend on the previous classical state.

The entire transition of the classical state can be represented by a rooted tree with depth $T$~\cite{chen2022exponential}. At the root, the classical state is initialized to $r$, which is independent of the unknown state $\rho$. The transition from a node $u$ at depth $t$ to a node $v$ at depth $t+1$ is determined by the outcome $s$ of the POVM $\{M_s^u\}$ on $\rho$. Here, the POVM $\{M_s^u\}$ depends on the previous classical state $u$ and can always be written as $M_s^u = w_s^u \ketbra{\psi_s^u}$ for a positive weight $w_s^u$ and a pure state $\ket{\psi_s^u}$. After $T$ measurements, the classical state lies in a leaf $l$ of the rooted tree. For instance, a single path from the root to a leaf is represented as
\begin{equation}
    r \equiv u_1 \xrightarrow{s_1} u_2 \xrightarrow{s_2} \cdots \xrightarrow{s_{T-1}} u_T \xrightarrow{s_T} l.
\end{equation}
Since the classical state is $r$ with probability 1 at the beginning, the probability $\mathbb{P}(l \given \rho)$ of arriving at $l$ along the above path is given by
\begin{equation}
    \label{def:probability_tree_prob}
    \mathbb{P}(l \given \rho) = \prod_{t=1}^{T} \left( w_{s_t}^{u_t} \langle \psi_{s_t}^{u_t} | \rho | \psi_{s_t}^{u_t} \rangle \right).
\end{equation}
The formal definition of this framework is given by

\begin{definition}[Tree representation for learning states~\cite{chen2022exponential}]
\label{def:tree_representation}
Fix an unknown $n$-qubit quantum state $\rho$. A learning algorithm without quantum memory can be expressed as a rooted tree $\mathcal{T}$ of depth $T$, where each node on the tree encodes all the measurement outcomes the algorithm has seen so far. The tree satisfies the following properties.
\begin{itemize}
    \item Each node $u$ is associated with a probability $\mathbb{P}(u \given \rho)$.
    \item For the root $r$ of the tree, $\mathbb{P}(r \given \rho) = 1$.
    \item At each non-leaf node $u$, we measure a POVM $\{\mathcal{M}_s^u\}_s$ on $\rho$ to obtain a classical outcome $s$. Each child node $v$ of the node $u$ is connected through the edge $e_{u,s}$.
    \item If $v$ is the child node of $u$ connected through the edge $e_{u,s}$, then
    \begin{equation}
        \mathbb{P}(v \given \rho) = \mathbb{P}(u \given \rho)\, w_s^u \langle \psi_s^u | \rho | \psi_s^u \rangle,
    \end{equation}
    where $\{w_s^u | \psi_s^u \rangle \langle \psi_s^u |\}_s$ is a rank-1 POVM that depends on the node $u$.
    \item Every root-to-leaf path is of length $T$. Note that for a leaf node $l$, $\mathbb{P}(l \given \rho)$ is the probability that the classical memory is in state $l$ after the learning procedure.
\end{itemize}
\end{definition}
The proof strategy for the lower bound on $T$ is to reduce the learning task of a target property to a two-hypothesis distinguishing task. In the distinguishing task, one of the following events is ensured to occur with equal probability:
\begin{itemize}
    \item The given unknown state $\rho$ is a reference state $\tau$,
    \item The given unknown state $\rho$ is selected uniformly at random from a set of states $\{\rho_x\}$,
\end{itemize}
where we later carefully design $\tau$ and $\{\rho_x\}$ to derive a strong lower bound. Then, the goal of the distinguishing task is to decide which event has occurred. If one can successfully perform the distinguishing task using a no-quantum-memory learning algorithm described by a rooted tree, the leaf probability distributions should be sufficiently different for the two events. According to Ref.~\cite{chen2022exponential}, this is formally stated as follows.
\begin{lemma}[Lemma 5.3 from Ref.~\cite{chen2022exponential}]
    \label{lem:The_condition_distinguishing}
    Consider any learning algorithm without quantum memory that is described by a rooted tree. The probability that the learning algorithm solves the two-hypothesis distinguishing task is upper bounded by
    \begin{align}
        \frac12 + \frac14 \sum_{l \in \mtr{leaf}} \abs{ \mathbb{E}_x [\mathbb{P}(l \given \rho_x)] - \mathbb{P}(l \given \tau) }.
    \end{align}
\end{lemma}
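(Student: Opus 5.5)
The plan is to reduce the claim to Le Cam's two-point bound on the classical leaf distributions. The key observation is that, once the rooted tree $\mathcal T$ of Definition~\ref{def:tree_representation} is fixed, all the information available to the algorithm is the leaf $l$ it ends in. Its final answer (``$\tau$'' or ``random $\rho_x$'') is therefore a possibly randomized function of $l$ alone. I would write this decision rule as a map $l\mapsto f(l)\in[0,1]$, where $f(l)$ is the probability of answering ``$\tau$'' given leaf $l$.

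First I would check that, for any state $\sigma$, the numbers $\mathbb P(l\given\sigma)$ form a probability distribution over leaves. This goes by induction on depth. At each non-leaf node $u$, the POVM completeness relation $\sum_s w_s^u\ketbra{\psi_s^u}=\1$ gives
\begin{align}
\sum_s \mathbb P(u\given\sigma)\,w_s^u\langle\psi_s^u|\sigma|\psi_s^u\rangle=\mathbb P(u\given\sigma)\operatorname{tr}\sigma=\mathbb P(u\given\sigma),
\end{align}
so probability mass is conserved from the root, where $\mathbb P(r\given\sigma)=1$, down to the leaves.

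Next I would identify the leaf distributions under the two hypotheses. Under the first, it is $p(l):=\mathbb P(l\given\tau)$. Under the second, $x$ is drawn uniformly and independently of the tree, and the same tree is run on $\rho_x$. By the law of total probability, the leaf distribution is then $q(l):=\mathbb E_x[\mathbb P(l\given\rho_x)]$, which is again a probability distribution. Note that $\mathbb P(l\given\rho)$ is multilinear in $\rho$ and not linear, so we must not replace it by $\mathbb P(l\given\mathbb E_x\rho_x)$; we only average the final leaf probabilities.

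Since the two events are equally likely, the success probability is
\begin{align}
P_{\rm succ}=\tfrac12\sum_l p(l)f(l)+\tfrac12\sum_l q(l)\bigl(1-f(l)\bigr)=\tfrac12+\tfrac12\sum_l f(l)\bigl(p(l)-q(l)\bigr).
\end{align}
Maximizing over $0\le f\le 1$ sets $f(l)=1$ exactly where $p(l)>q(l)$. This gives $P_{\rm succ}\le \tfrac12+\tfrac12\sum_{l:\,p>q}(p(l)-q(l))$. Because $\sum_l(p(l)-q(l))=0$, the positive part equals half of $\sum_l|p(l)-q(l)|$, which yields the stated bound $\tfrac12+\tfrac14\sum_l|\mathbb E_x[\mathbb P(l\given\rho_x)]-\mathbb P(l\given\tau)|$.

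There is no real obstacle in this argument. The only points needing care are that adaptivity is fully absorbed into the tree, so the decision truly depends only on $l$, and that the averaging over $x$ is done at the level of leaf probabilities.
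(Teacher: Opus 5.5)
Your argument is correct, and it is the standard Le Cam two-point argument behind Lemma~5.3 of Ref.~\cite{chen2022exponential}, which the paper cites without reproving: the final guess depends only on the leaf, the success probability is $\frac12+\frac12\sum_l f(l)\bigl(p(l)-q(l)\bigr)$, and optimizing over $f$ gives half the total variation distance, via the same positive-part identity the paper uses right after the lemma. Your caveat is also the right one: the average over $x$ must be taken on the leaf probabilities, not on the state, because $\mathbb{P}(l\mid\rho)$ has degree $T$ in $\rho$.
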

Therefore, any such learning algorithm that solves the distinguishing task with high probability, say $2/3$, must satisfy
\begin{align}
    \frac23 &\leq \frac12 + \frac14 \sum_{l \in \mtr{leaf}} \abs{ \mathbb{E}_x [\mathbb{P}(l \given \rho_x)] - \mathbb{P}(l \given \tau) } \\
    \Rightarrow \frac13 &\leq \frac12 \sum_{l \in \mtr{leaf}} \abs{ \mathbb{E}_x [\mathbb{P}(l \given \rho_x)] - \mathbb{P}(l \given \tau) }.
\end{align}
In the proof of Theorem~\ref{thm:lower_bound}, we first construct an appropriate state set $\{\rho_x\}$ and reference state $\tau$ that reduce learning the $k$-RDM to the two-hypothesis distinguishing task. After that, we derive the lower bound on $T$ by upper bounding the trace distance between the leaf probabilities $\mathbb{E}_x[\mathbb{P}(\cdot \given \rho_x)]$ and $\mathbb{P}(\cdot \given \tau)$.

\subsection{Proof of lower bound}

As a preparation for the proof of Theorem~\ref{thm:lower_bound}, we will prove the following lemma. 
\begin{lemma}
\label{lem:target_op_lower_bound}
Let $\eta, N, k$ be integers, and let $A\subseteq[N]$ be a subset of size
$m:=|A|$ with $k\le\eta\le m$ and $2k\le m$. Write
$\mathcal{H}_{\eta}:=\bigwedge^{\eta}\mathbb{C}^{N}$, and let $\Pi_{A}$
be the orthogonal projector onto
\begin{align}
    \mathcal{H}_{\eta,A}
    :=
    \operatorname{span}\bigl\{\ket{\vec x}\in\mathcal{H}_{\eta}
        : \operatorname{supp}(\vec x)\subseteq A\bigr\} \subset \mac{H}_\eta ,
\end{align}
the subspace of $\eta$-particle states occupying only the modes in $A$. For
disjoint $\vec p,\vec q\in\mac{S}_{N,k}$ satisfying $\ket{\vec p}, \ket{\vec q} \in \mac{H}_{k, A}$ , define the Hermitian operators on
$\mathcal{H}_{\eta}$
\begin{align}
    X_{\vec p,\vec q}
    &:=
    \sum_{\vec r}
    s_{\vec p, \vec q} (\vec r)
    \bigl(
        \ketbra{\vec p\cup\vec r}{\vec q\cup\vec r}
        +\ketbra{\vec q\cup\vec r}{\vec p\cup\vec r}\bigr) \in \mtr{End}(\mac{H}_\eta ),\\
    Y_{\vec p,\vec q}
    &:=
    -i\sum_{\vec r}
    s_{\vec p, \vec q} (\vec r)
    \bigl(
        \ketbra{\vec p\cup\vec r}{\vec q\cup\vec r}
        -\ketbra{\vec q\cup\vec r}{\vec p\cup\vec r}\bigr) \in \mtr{End}(\mac{H}_\eta ),
\end{align}
where the sum runs over all $\vec r\subseteq A\setminus(\vec p\cup\vec q)$
with $|\vec r|=\eta-k$ and $s_{\vec p, \vec q} (\vec r)$ is a sign factor defined by
\begin{align}
    s_{\vec p,\vec q}(\vec r)
    &:=
    \left\langle
        \vec p\cup\vec r
    \middle|
        a^\dagger_{p_1}\cdots a^\dagger_{p_k}
        a_{q_k}\cdots a_{q_1}
    \middle|
        \vec q\cup\vec r
    \right\rangle
    \in\{\pm1\}.
\end{align}

Then $X_{\vec p,\vec q}$ and $Y_{\vec p,\vec q}$ are
supported on $\mathcal{H}_{\eta,A}$ and traceless on $\mathcal{H}_{\eta}$, and their
eigenvalues belong to $\{\pm 1,0\}$. Moreover,
\begin{align}
    \operatorname{tr}_{\mathcal{H}_\eta}[X_{\vec p,\vec q}^2]
    =
    \operatorname{tr}_{\mathcal{H}_\eta}[Y_{\vec p,\vec q}^2]
    =
    2\binom{m-2k}{\eta-k}.
\end{align}
\end{lemma}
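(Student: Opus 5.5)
Since $\vec p$ and $\vec q$ are disjoint, I will identify $X_{\vec p,\vec q}$ (and $Y_{\vec p,\vec q}$) with a direct sum of $2\times 2$ blocks, one for each admissible $\vec r$.

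\emph{Block structure.} Fix $\vec r\subseteq A\setminus(\vec p\cup\vec q)$ with $|\vec r|=\eta-k$. Put $\vec x_{\vec r}:=\vec p\cup\vec r$ and $\vec y_{\vec r}:=\vec q\cup\vec r$; both lie in $\mac S_{N,\eta}$ and are supported in $A$. They are distinct, because $\vec p\neq\vec q$ and both are disjoint from $\vec r$. They also determine $\vec r$: indeed $\vec r=\vec x_{\vec r}\setminus\vec p$. Hence for $\vec r\neq\vec r'$ the pairs $\{\vec x_{\vec r},\vec y_{\vec r}\}$ and $\{\vec x_{\vec r'},\vec y_{\vec r'}\}$ are disjoint. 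The cross cases are ruled out as follows:
\begin{itemize}
\item $\vec x_{\vec r}=\vec y_{\vec r'}$ would force $\vec p\subseteq\vec q\cup\vec r'$, hence $\vec p\subseteq\vec r'$, contradicting $\vec r'\cap\vec p=\emptyset$ (note $k\ge1$);
\item $\vec x_{\vec r}=\vec x_{\vec r'}$ would force $\vec r=\vec r'$.
\end{itemize}
Consequently the operators $V_{\vec r}:=\mathrm{span}\{\ket{\vec x_{\vec r}},\ket{\vec y_{\vec r}}\}$ are mutually orthogonal subspaces of $\mathcal H_{\eta,A}$.

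\emph{Spectrum and support.} On each $V_{\vec r}$, the operator $X_{\vec p,\vec q}$ acts as $s\,\sigma_x$ and $Y_{\vec p,\vec q}$ acts as $s\,\sigma_y$, where $s=s_{\vec p,\vec q}(\vec r)$. On the orthogonal complement of $\bigoplus_{\vec r}V_{\vec r}$ both operators vanish. For $s\in\{\pm1\}$, the matrices $s\sigma_x$ and $s\sigma_y$ are Hermitian, traceless, and have eigenvalues $\pm1$. Therefore $X_{\vec p,\vec q}$ and $Y_{\vec p,\vec q}$ are Hermitian, supported on $\mathcal H_{\eta,A}$, traceless, and have spectrum contained in $\{\pm1,0\}$. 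Their squares are the projector $\sum_{\vec r}\Pi_{V_{\vec r}}$.

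The claim that $s_{\vec p,\vec q}(\vec r)\in\{\pm1\}$ needs a short check. The operator $a^\dagger_{p_1}\cdots a^\dagger_{p_k}a_{q_k}\cdots a_{q_1}$ annihilates the modes of $\vec q$ from $\ket{\vec q\cup\vec r}$, where they are all occupied. It then creates the modes of $\vec p$, which are all empty at that point since $\vec p\cap(\vec q\cup\vec r)=\emptyset$. The result is $\pm\ket{\vec p\cup\vec r}$, with a sign coming only from reordering.

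\emph{Trace of the square.} From the block structure, $\operatorname{tr}[X_{\vec p,\vec q}^2]=\operatorname{tr}[Y_{\vec p,\vec q}^2]=2\,\#\{\vec r\}$. The set $A\setminus(\vec p\cup\vec q)$ has exactly $m-2k$ elements, using $\vec p,\vec q\subseteq A$ and $\vec p\cap\vec q=\emptyset$. So the number of admissible $\vec r$ is $\binom{m-2k}{\eta-k}$, and the identity follows. When $\eta-k>m-2k$ this count is $0$, and the statement holds trivially.

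The argument is elementary. The only step needing care is the orthogonality/disjointness of the blocks, which I have reduced above to the facts that $\vec p\cap\vec q=\emptyset$ and $\vec r\cap(\vec p\cup\vec q)=\emptyset$.
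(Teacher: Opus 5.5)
Your proof is correct and follows the paper's argument: you decompose into the mutually orthogonal two-dimensional blocks $\operatorname{span}\{\ket{\vec p\cup\vec r},\ket{\vec q\cup\vec r}\}$, on which $X_{\vec p,\vec q}$ and $Y_{\vec p,\vec q}$ act as Pauli matrices, and then count the $\binom{m-2k}{\eta-k}$ spectator sets. Your version is a bit more careful than the paper's: you check the disjointness of the blocks explicitly, and you keep the sign, so the blocks are $s_{\vec p,\vec q}(\vec r)\sigma_x$ and $s_{\vec p,\vec q}(\vec r)\sigma_y$ rather than bare Pauli matrices, though this changes none of the conclusions.
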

\begin{proof}
By construction, every term maps a basis state supported in $A$ to another
such state, so $X_{\vec p,\vec q}$ and $Y_{\vec p,\vec q}$ vanish on the
orthogonal complement of $\mathcal{H}_{\eta,A}$. Hence, denoting by
$\Pi_A\in\mtr{End}(\mac{H}_\eta)$ the projector onto $\mac{H}_{\eta,A}$, we have
$
    X_{\vec p,\vec q}=\Pi_{A}X_{\vec p,\vec q}\Pi_{A},
$
and likewise for $Y_{\vec p,\vec q}$.

For each spectator set $\vec r\subseteq A\setminus(\vec p\cup\vec q)$ with
$|\vec r|=\eta-k$, consider the two-dimensional subspace
\begin{align}
    \mathcal{V}_{\vec r}
    :=
    \operatorname{span}\{\ket{\vec p\cup\vec r},\ \ket{\vec q\cup\vec r}\}.
\end{align}
Since $\vec p$ and $\vec q$ are disjoint and both disjoint from $\vec r$,
distinct $\vec r$ yield mutually orthogonal subspaces, and
$X_{\vec p,\vec q}, Y_{\vec p,\vec q}$ leave each $\mathcal{V}_{\vec r}$
invariant, acting on it exactly as the Pauli matrices $X$ and $Y$ in the
ordered basis $(\ket{\vec p\cup\vec r},\ \ket{\vec q\cup\vec r})$. On the
orthogonal complement of $\bigoplus_{\vec r}\mathcal{V}_{\vec r}$ both
operators vanish. Therefore their eigenvalues belong to $\{\pm 1,0\}$, and
since each two-dimensional block is traceless,
\begin{align}
    \operatorname{tr}_{\mathcal{H}_\eta}[X_{\vec p,\vec q}]
    =
    \operatorname{tr}_{\mathcal{H}_\eta}[Y_{\vec p,\vec q}]
    =0 .
\end{align}
Finally, on each block $X^{2}=Y^{2}=\1_{\mathcal{V}_{\vec r}}$, so
$X_{\vec p,\vec q}^{2}=Y_{\vec p,\vec q}^{2}$ equals the orthogonal projector
onto $\bigoplus_{\vec r}\mathcal{V}_{\vec r}$. The number of spectator sets is
$\binom{m-2k}{\eta-k}$ because this number is equivalent to choosing $\eta-k$ modes from the $m-2k$ elements of
$A\setminus(\vec p\cup\vec q)$), and each contributes a two-dimensional
block, so
\begin{align}
    \operatorname{tr}_{\mathcal{H}_\eta}[X_{\vec p,\vec q}^{2}]
    =
    \operatorname{tr}_{\mathcal{H}_\eta}[Y_{\vec p,\vec q}^{2}]
    =
    2\binom{m-2k}{\eta-k}.
\end{align}
\end{proof}

Additionally, we remark the expression of $X_{\vec p, \vec q}, Y_{\vec p, \vec q }$ via second-quantized operators. The operator
$\sum_{\vec r}\bigl(\ketbra{\vec p\cup\vec r}{\vec q\cup\vec r}
+\ketbra{\vec q\cup\vec r}{\vec p\cup\vec r}\bigr)\in\mtr{End}(\mac{H}_\eta)$
is exactly the $A$-restriction of the second-quantized operator. Indeed,
$a^\dagger_{p_1}\cdots a^\dagger_{p_k}a_{q_k}\cdots a_{q_1}$ acts on an
occupation-basis state by first annihilating the modes $q_1,\ldots,q_k$ and then creating the modes
$p_1,\ldots,p_k$, so it sends
$\ket{\vec q\cup\vec r}\mapsto\pm\ket{\vec p\cup\vec r}$ with the spectator set
$\vec r$ left untouched, and annihilates every other basis state. Conjugating
by $\Pi_A$ keeps only the terms with $\vec p\cup\vec r,\vec q\cup\vec r\subseteq A$,
i.e. $\vec r\subseteq A\setminus(\vec p\cup\vec q)$ with $|\vec r|=\eta-k$;
hence,
\begin{align}
    \Pi_A\bigl(a^\dagger_{p_1}\cdots a^\dagger_{p_k}a_{q_k}\cdots a_{q_1}\bigr)\Pi_A
    =
    \sum_{\vec r} s_{\vec p,\vec q} (\vec r)\ketbra{\vec p\cup\vec r}{\vec q\cup\vec r}.
\end{align}
Therefore, the following identities hold
\begin{align}
    X_{\vec p, \vec q} &= \Pi_A \qty( a^\dagger_{p_1}\cdots a^\dagger_{p_k}a_{q_k} {\cdots a_{q_1}}+ a^\dagger_{q_1}\cdots a^\dagger_{q_k}a_{p_k} \cdots a_{p_1} ) \Pi_A, \\
    Y_{\vec p, \vec q} &= -i\Pi_A \qty( a^\dagger_{p_1}\cdots a^\dagger_{p_k}a_{q_k}{\cdots a_{q_1}} - a^\dagger_{q_1}\cdots a^\dagger_{q_k}a_{p_k} \cdots a_{p_1} ) \Pi_A, 
\end{align}

Using these operators, we construct one part of the hypothesis ensemble
$\{\rho_x\}$. We first introduce the following sets of observables:
\begin{align}
    \label{eq:Lambda_+}
    \Lambda_+ &\coloneq \bigl\{\, X_{\vec p,\vec q},\ Y_{\vec p,\vec q}
        \in \mtr{End}(\mac{H}_\eta)
        \;\big|\; \vec p,\vec q \in \mac{S}_{A,k},\    \vec p\cap\vec q=\emptyset   \bigr\},\\
        \label{eq:Lambda_-}
    \Lambda_- &\coloneq \bigl\{\, -O \;\big|\; O\in\Lambda_+ \,\bigr\},
    \qquad
    \Lambda \coloneq \Lambda_+\cup\Lambda_- .
\end{align}
Here
$\mac{S}_{A,k}:=\{\vec r\subseteq A\mid|\vec r|=k\} \subset \mac{S}_{N,k} $ denotes the set of
size-$k$ subsets of $A$, which labels the occupation basis of the
$k$-particle sector $\mac{H}_{k,A}$. 
The number of disjoint unordered pairs $\{\vec p,\vec q\}$ is
$\tfrac12\binom{m}{k}\binom{m-k}{k}$, and each pair contributes the two
observables $X_{\vec p,\vec q}$ and $Y_{\vec p,\vec q}$, so the size of $\Lambda$ is expressed by
$    |\Lambda| = 2\binom{m}{k}\binom{m-k}{k}
$
~\footnote{Equivalently, one may work with ordered pairs $(\vec p,\vec q)$; the resulting identifications $X_{\vec p,\vec q}=X_{\vec q,\vec p}$ and $Y_{\vec p,\vec q}=-Y_{\vec q,\vec p}$ change the count only by a constant factor and do not affect the asymptotic scaling.}.
By Lemma~\ref{lem:target_op_lower_bound}, every $O\in\Lambda$ is traceless
with eigenvalues in $\{0,\pm1\}$, so that $\norm{O}_{\mtr{op}}=1$. Based on
this set, we define the ensemble $\{\rho_x\}:=\{\rho_O\}_{O\in\Lambda}$
together with the reference state $\tau$ as follows:
\begin{align}
    \tau &\coloneq \frac{\Pi_A}{d_{\eta,A}},
    \quad\text{where } d_{\eta,A}\coloneq \dim\mac{H}_{\eta,A}=\binom{m}{\eta},\\
    \label{eq:expression_of_rhoO}
    \rho_O &\coloneq \tau + \frac{\varepsilon'}{d_{\eta,A}}\,O
    \quad(\forall O\in\Lambda),
    \quad\text{where } \varepsilon'\coloneq
        \varepsilon\,\frac{d_{\eta,A}}{\binom{m-2k}{\eta-k}}\in(0,1/2].
\end{align}
Here $\Pi_A \in \mtr{End} ( \mac{H}_\eta)$ denotes the projection operator on $\mac{H}_{\eta, A}$. From the definitions of $\Pi_A$ and $O\in\Lambda$, we can directly show that
$\Pi_A\tau\Pi_A=\tau$ and $\Pi_A\rho_O\Pi_A=\rho_O$ for every $O\in\Lambda$;
that is, $\tau$ and $\rho_O$ are $\eta$-particle states supported on
$\mac{H}_{\eta,A}$. We will prove Theorem~\ref{thm:lower_bound} using these states. 
\\

\noindent
\textit{ Proof of Theorem~\ref{thm:lower_bound}.}
We first outline the proof. The argument proceeds in four steps.
First, we show that a lower bound for element-wise $k$-RDM tomography
reduces to a lower bound for the two-hypothesis distinguishing task between the reference state $\tau$ and the ensemble $\{\rho_O\}_{O}$. Second, for a learning algorithm without quantum memory described by a rooted tree, we upper bound the probability of solving this distinguishing task. Following Ref.~\cite{chen2022exponential}, this probability is controlled by the total variation distance between the leaf distributions of $\tau$ and $\{\rho_O\}_O$, which we in turn bound in terms of the number of copies $T$.
Third, we show that the resulting bound is governed by a single quantity $\Delta$, which is determined by the Hilbert--Schmidt norm of the $k$-RDM of an arbitrary $\eta$-particle state.
Finally, we bound this Hilbert--Schmidt norm using Refs.~\cite{christiansen2024hilbert,visconti2026hilbert}. Combining these ingredients and simplifying the resulting binomial coefficients, we obtain the lower bound $T=\Omega_k(\eta^k)/\varepsilon^2$.

First, we see that the element-wise $k$-RDM tomography can be reduced to the two-hypothesis
distinguishing task. Recall from the construction that the target operators are the Hermitian
combinations
\begin{align}
    X_{\vec p,\vec q}
    &=
    \Pi_A\bigl(
        a^\dagger_{p_1}\cdots a^\dagger_{p_k}a_{q_k}\cdots a_{q_1}
        +a^\dagger_{q_1}\cdots a^\dagger_{q_k}a_{p_k}\cdots a_{p_1}
    \bigr)\Pi_A,\\
    Y_{\vec p,\vec q}
    &=
    -i\,\Pi_A\bigl(
        a^\dagger_{p_1}\cdots a^\dagger_{p_k}a_{q_k}\cdots a_{q_1}
        -a^\dagger_{q_1}\cdots a^\dagger_{q_k}a_{p_k}\cdots a_{p_1}
    \bigr)\Pi_A.
\end{align}
Since $\sigma\in\{\tau,\rho_O\}$ is supported on $\mac{H}_{\eta,A}$, we have
$\Pi_A\sigma\Pi_A=\sigma$, so by linearity of the trace,
\begin{align}
    \tr_{\mac{H}_\eta }[\sigma X_{\vec p,\vec q}]
    &=
    \tr_{\mac{H}_\eta }\!\bigl[\sigma\,
        a^\dagger_{p_1}\cdots a^\dagger_{p_k}a_{q_k}\cdots a_{q_1}\bigr]
    +\tr_{\mac{H}_\eta }\!\bigl[\sigma\,
        a^\dagger_{q_1}\cdots a^\dagger_{q_k}a_{p_k}\cdots a_{p_1}\bigr],\\
    &= D^{(k)}_{\sigma ; \vec p, \vec q} + D^{(k)}_{\sigma ; \vec q, \vec p} \\
    \tr_{\mac{H}_\eta }[\sigma Y_{\vec p,\vec q}]
    &=
    -i (D^{(k)}_{\sigma ; \vec p, \vec q} - D^{(k)}_{\sigma ; \vec q, \vec p} ).
\end{align}
From these identities, element-wise
$\varepsilon$-error estimates of all $k$-RDM elements yield element-wise
$2\varepsilon$-error estimates of $\tr_{\mac{H}_\eta }[\sigma O]$ for every
$O\in\Lambda$. If the target state is $\tau$,
\begin{align}
    \tr_{\mac{H}_\eta }[\tau O]
    =\frac{1}{d_{\eta,A}}\tr_{\mac{H}_\eta }[O]=0
    \qquad\forall O\in\Lambda.
\end{align}
On the other hand, if the target state is $\rho_O$,
\begin{align}
    \tr_{\mac{H}_\eta }[\rho_O O]
    =\tr_{\mac{H}_\eta }[\tau O]
    +\frac{\varepsilon'}{d_{\eta,A}}\tr_{\mac{H}_\eta }[O^2]
    =\frac{\varepsilon'}{d_{\eta,A}}\cdot
        2\binom{m-2k}{\eta-k}
    =2\varepsilon,
\end{align}
where $\tr_{\mac{H}_\eta }[O^2]=2\binom{m-2k}{\eta-k}$ for every $O\in\Lambda$
follows from Lemma~\ref{lem:target_op_lower_bound}. Thus, for every $\rho_O$
there is at least one observable that produces a gap of $2\varepsilon$
between $\operatorname{tr}[\tau O]$ and $\operatorname{tr}[\rho_O O]$, so the
successful element-wise estimation of all $k$-RDM elements solves the
distinguishing task. Consequently, a lower bound for the distinguishing task
yields a lower bound for the element-wise tomography of the $k$-RDM.

Second, we upper bound the success probability of any no-quantum-memory
learning algorithm that uses $T$ copies of the state. The total variation distance of the leaf distributions is evaluated as follows. 
\begin{align}
    \frac12\sum_{l\in\mtr{leaf}}
        \bigl|\mathbb{E}_O[\mathbb{P}(l\given\rho_O)]
            -\mathbb{P}(l\given\tau)\bigr|
    &=
    \sum_{\substack{l\in\mtr{leaf}:\\
        \mathbb{P}(l\given\tau)\ge\mathbb{E}_O[\mathbb{P}(l\given\rho_O)]}}
        \bigl(\mathbb{P}(l\given\tau)
            -\mathbb{E}_O[\mathbb{P}(l\given\rho_O)]\bigr)\\
    &=
    \sum_{\substack{l\in\mtr{leaf}:\\
        \mathbb{P}(l\given\tau)\ge\mathbb{E}_O[\mathbb{P}(l\given\rho_O)]}}
        \mathbb{P}(l\given\tau)
        \left(1-\frac{\mathbb{E}_O[\mathbb{P}(l\given\rho_O)]}
            {\mathbb{P}(l\given\tau)}\right).
\end{align}
where the first equality is the standard total variation identity
$\frac12\sum_{l}|\mathbb{P}_\alpha (l)-\mathbb{P}_\beta (l)|=\sum_{l:\,\mathbb{P}_\beta (l)\ge \mathbb{P}_\alpha (l)}(\mathbb{P}_\beta (l)-\mathbb{P}_\alpha (l))$, valid for any
two probability distributions $\mathbb{P}_\alpha , \mathbb{P}_\beta$, applied with
$\mathbb{P}_\alpha (l)=\mathbb{E}_O[\mathbb{P}(l\given\rho_O)]$ and
$\mathbb{P}_\beta (l)=\mathbb{P}(l\given\tau)$.

Below, we evaluate the ratio of the probabilities at the leaf $l$. By following Eq.~\eqref{def:probability_tree_prob} and  fixing an
arbitrary path from the root $r$ to the leaf $l$, we can write
\begin{align}
    \frac{\mathbb{E}_O[\mathbb{P}(l\given\rho_O)]}{\mathbb{P}(l\given\tau)}
    &=
    \mathbb{E}_O
    \frac{\prod_{t=1}^{T}\bigl(w_{s_t}^{u_t}
        \langle\psi_{s_t}^{u_t}|\rho_O|\psi_{s_t}^{u_t}\rangle\bigr)}
         {\prod_{t=1}^{T}\bigl(w_{s_t}^{u_t}
        \langle\psi_{s_t}^{u_t}|\tau|\psi_{s_t}^{u_t}\rangle\bigr)}
        \\
    &=
    \mathbb{E}_O
    \frac{\prod_{t=1}^{T}\bigl(
        \langle\psi_{s_t}^{u_t}|\tau|\psi_{s_t}^{u_t}\rangle
        +\varepsilon' d_{\eta,A}^{-1}
        \langle\psi_{s_t}^{u_t}|O|\psi_{s_t}^{u_t}\rangle\bigr)}
         {\prod_{t=1}^{T}\bigl(
        \langle\psi_{s_t}^{u_t}|\tau|\psi_{s_t}^{u_t}\rangle\bigr)}
            && (\because \text{Eq.}~\eqref{eq:expression_of_rhoO} ) \\
    &=
    \mathbb{E}_O\prod_{t=1}^{T}
    \left(1+\varepsilon'
        \frac{\langle\psi_{s_t}^{u_t}|\Pi_A O\Pi_A|\psi_{s_t}^{u_t}\rangle}
             {\langle\psi_{s_t}^{u_t}|\Pi_A|\psi_{s_t}^{u_t}\rangle}\right)\\
    &=
    \mathbb{E}_O\exp\left[\sum_{t=1}^{T}
        \ln\left(1+\varepsilon'
        \frac{\langle\psi_{s_t}^{u_t}|\Pi_A O\Pi_A|\psi_{s_t}^{u_t}\rangle}
             {\langle\psi_{s_t}^{u_t}|\Pi_A|\psi_{s_t}^{u_t}\rangle}\right)\right]\\
    &\ge
    \exp\left[\sum_{t=1}^{T}\mathbb{E}_O
        \ln\left(1+\varepsilon'
        \frac{\langle\psi_{s_t}^{u_t}|\Pi_A O\Pi_A|\psi_{s_t}^{u_t}\rangle}
             {\langle\psi_{s_t}^{u_t}|\Pi_A|\psi_{s_t}^{u_t}\rangle}\right)\right]
             && (\because \text{Jensen's inequality})
             \\
    &=
    \exp\left[\sum_{t=1}^{T}\frac{1}{|\Lambda|}\sum_{O\in\Lambda}
        \ln\left(1+\varepsilon'
        \frac{\langle\psi_{s_t}^{u_t}|\Pi_A O\Pi_A|\psi_{s_t}^{u_t}\rangle}
             {\langle\psi_{s_t}^{u_t}|\Pi_A|\psi_{s_t}^{u_t}\rangle}\right)\right]\\
    &=
    \exp\Biggl[\sum_{t=1}^{T}\frac{1}{2|\Lambda_+|}\sum_{O\in\Lambda_+}
        \biggl\{
        \ln\!\left(1+\varepsilon'
        \frac{\langle\psi_{s_t}^{u_t}|\Pi_A O\Pi_A|\psi_{s_t}^{u_t}\rangle}
             {\langle\psi_{s_t}^{u_t}|\Pi_A|\psi_{s_t}^{u_t}\rangle}\right)
    \notag\\
    &\qquad\qquad\qquad
        +\ln\!\left(1-\varepsilon'
        \frac{\langle\psi_{s_t}^{u_t}|\Pi_A O\Pi_A|\psi_{s_t}^{u_t}\rangle}
             {\langle\psi_{s_t}^{u_t}|\Pi_A|\psi_{s_t}^{u_t}\rangle}\right)
        \biggr\}\Biggr]
        && (\because \text{Eq}.~\eqref{eq:Lambda_-})
        \\
    &=
    \exp\left[\sum_{t=1}^{T}\frac{1}{2|\Lambda_+|}\sum_{O\in\Lambda_+}
        \ln\left(1-(\varepsilon')^{2}
        \left(\frac{\langle\psi_{s_t}^{u_t}|\Pi_A O\Pi_A|\psi_{s_t}^{u_t}\rangle}
             {\langle\psi_{s_t}^{u_t}|\Pi_A|\psi_{s_t}^{u_t}\rangle}\right)^{2}
        \right)\right]
        && (\because \ \ln(1+a) + \ln(1-a) = \ln(1-a^2) )
        \\
    &\ge
    \label{eq:long_inquality}
    \exp\left[-\sum_{t=1}^{T}\frac{(\varepsilon')^{2}}{|\Lambda_+|}
        \sum_{O\in\Lambda_+}
        \left(\frac{\langle\psi_{s_t}^{u_t}|\Pi_A O\Pi_A|\psi_{s_t}^{u_t}\rangle}
             {\langle\psi_{s_t}^{u_t}|\Pi_A|\psi_{s_t}^{u_t}\rangle}\right)^{2}
        \right]
        && (\because\ \ln(1-u)\ge-2u\ \text{for } u\in(0,1/2])
        .
\end{align}
Here, the second inequality follows from Jensen's inequality
$\mathbb{E}[e^{Z}]\ge e^{\mathbb{E}[Z]}$ for the convex function $\exp(\cdot)$ and a random variable $Z$. Regarding the last inequality, we remark that
\begin{align}
    (\varepsilon')^{2}
        \left(\frac{\langle\psi_{s_t}^{u_t}|\Pi_A O\Pi_A|\psi_{s_t}^{u_t}\rangle}
             {\langle\psi_{s_t}^{u_t}|\Pi_A|\psi_{s_t}^{u_t}\rangle}\right)^{2} \leq 1/2
\end{align}
because we set $\varepsilon' \in (0,1/2]$ in Eq.~\eqref{eq:expression_of_rhoO} and $\norm{O}_{\mtr{op}} = 1 $ leads to $\abs{ \bra{\psi} \Pi_A O \Pi_A \ket{\psi} } \leq \bra{\psi} \Pi_A^2 \ket{\psi}  =  \bra{\psi}  \Pi_A \ket{\psi}$.  

Third, to rewrite the RHS of this Eq.~\eqref{eq:long_inquality}, we introduce $\Delta$ as follows,
\begin{align}
    \label{eq:expression_Delta}
    \Delta \coloneq \sup_{ \substack{\psi \geq 0 , \ \tr[\psi]=1, \\ \psi \in \mtr{End}(\mac{H}_{\eta}  ) } } \frac1{\abs{\Lambda_+}} \sum_{O \in \Lambda_+ } (\tr[\psi O])^2. 
\end{align}
For each $t$, write
$\psi_{t,A}:=\dfrac{\Pi_A\ket{\psi^{u_t}_{s_t}}\bra{\psi^{u_t}_{s_t}}\Pi_A}
{\bra{\psi^{u_t}_{s_t}}\Pi_A\ket{\psi^{u_t}_{s_t}}}$,
which is a density operator on $\mathcal{H}_{\eta,A} \subset \mac{H}_\eta$. Then
$    \frac{\langle\psi_{s_t}^{u_t}|\Pi_A O\Pi_A|\psi_{s_t}^{u_t}\rangle}
         {\langle\psi_{s_t}^{u_t}|\Pi_A|\psi_{s_t}^{u_t}\rangle}
    =\tr[\psi_{t,A}\,O],
$
so the inner average over $\Lambda_+$ is bounded, for every $t$, by the
definition of $\Delta$:
\begin{align}
    \frac{1}{|\Lambda_+|}\sum_{O\in\Lambda_+}
    \left(\frac{\langle\psi_{s_t}^{u_t}|\Pi_A O\Pi_A|\psi_{s_t}^{u_t}\rangle}
               {\langle\psi_{s_t}^{u_t}|\Pi_A|\psi_{s_t}^{u_t}\rangle}\right)^{2}
    =\frac{1}{|\Lambda_+|}\sum_{O\in\Lambda_+}\bigl(\tr[\psi_{t,A}\,O]\bigr)^2
    \le\Delta .
\end{align}
Substituting this into the exponent and using $e^{-x}\ge 1-x$, we obtain
\begin{align}
    \exp\left[-\sum_{t=1}^{T}\frac{(\varepsilon')^{2}}{|\Lambda_+|}
        \sum_{O\in\Lambda_+}
        \left(\frac{\langle\psi_{s_t}^{u_t}|\Pi_A O\Pi_A|\psi_{s_t}^{u_t}\rangle}
             {\langle\psi_{s_t}^{u_t}|\Pi_A|\psi_{s_t}^{u_t}\rangle}\right)^{2}\right]
    \ge
    \exp\bigl(-T(\varepsilon')^{2}\Delta\bigr)
    \ge
    1-T(\varepsilon')^{2}\Delta .
\end{align}
We remark that the restriction to the fixed-$\eta$ sector in the supremum of Eq.~\eqref{eq:expression_Delta} is essential for the analysis that follows: it lets us bound $\Delta$ by the Hilbert--Schmidt norm of the $k$-RDM of an
$\eta$-particle state, to which the fixed-$\eta$ estimates of
Refs.~\cite{christiansen2024hilbert,visconti2026hilbert} apply.

From these discussions, we can derive
\begin{align}
     \frac12 \sum_{l \in \mtr{leaf }} \abs{ \mathbb{E}_O [\mathbb{P} (l \given \rho_O)  ] - \mathbb{P} (l \given \tau ) }  
     &\leq \sum_{ \substack{l \in \mtr{leaf}, \\ \mathbb{P} (l \mid \tau ) \geq \mathbb{E}_O [ \mathbb{P} ( l \mid O)  ] } } \mathbb{P}(l \mid \tau )  \qty[ 1 - ( 1 - T (\varepsilon')^2 \Delta ) ] \\
     &\le T (\varepsilon')^2 \Delta.
\end{align}
From Lemma~\ref{lem:The_condition_distinguishing}, we conclude that any learning algorithm without quantum memory that solves the distinguish task with success probability at least $2/3$ requires
\begin{align}
    \label{eq:lower_bound_of_T}
    T \geq \frac1{3 (\varepsilon')^2 \Delta} = \frac13 \qty( \frac{ \binom{m-2k}{\eta- k}}{ \binom{m}{\eta}} )^2 \frac1{\varepsilon^2 \Delta}. 
\end{align}
Here we use $\varepsilon' = \varepsilon d_{\eta , A} / \binom{m-2k}{\eta - k}$ from Eq.~\eqref{eq:expression_of_rhoO}. 

Next, we reduce the bound on the remaining factor $\Delta$ to evaluating
the Hilbert--Schmidt norm of the $k$-RDM. For any state
$\psi\in\mtr{End}(\mac{H}_{\eta})$,
\begin{align}
    \sum_{O\in\Lambda_+}\bigl(\tr[\psi O]\bigr)^2
    &= \sum_{\substack{\vec p,\vec q\in\mac{S}_{A,k}\\ \vec p\cap\vec q=\emptyset}}
        \Bigl[\bigl(\tr[\psi X_{\vec p,\vec q}]\bigr)^2
        +\bigl(\tr[\psi Y_{\vec p,\vec q}]\bigr)^2\Bigr]\\
    &= \sum_{\substack{\vec p,\vec q\in\mac{S}_{A,k}\\ \vec p\cap\vec q=\emptyset}}
        \Bigl[\bigl(2\,\mtr{Re}\,D^{(k)}_{\psi;\vec p,\vec q}\bigr)^2
        +\bigl(2\,\mtr{Im}\,D^{(k)}_{\psi;\vec p,\vec q}\bigr)^2\Bigr]\\
    &= \sum_{\substack{\vec p,\vec q\in\mac{S}_{A,k}\\ \vec p\cap\vec q=\emptyset}}
        4\,\bigl|D^{(k)}_{\psi;\vec p,\vec q}\bigr|^2\\
    &\le \sum_{\vec p,\vec q\in\mac{S}_{A,k}}
        4\,\bigl|D^{(k)}_{\psi;\vec p,\vec q}\bigr|^2\\
    &\le 4\sum_{\vec p,\vec q\in\mac{S}_{N,k}}
        \bigl|D^{(k)}_{\psi;\vec p,\vec q}\bigr|^2
    = 4\,\bigl\|D^{(k)}_{\psi}\bigr\|_{\mtr{HS}}^2 .
\end{align}
Here $\|\cdot\|_{\mtr{HS}}$ denotes the Hilbert--Schmidt norm.

We evaluate $\|D^{(k)}_{\psi}\|_{\mtr{HS}}$ using the results of
Refs.~\cite{christiansen2024hilbert,visconti2026hilbert}. First, for $k=1$,
since $\|D^{(1)}_{\psi}\|_{\mtr{op}}\leq1$ and
$\|D^{(1)}_{\psi}\|_{\mtr{tr}}=\eta$, the Hilbert--Schmidt norm
satisfies~\cite{christiansen2024hilbert}
\begin{align}
    \|D^{(1)}_{\psi}\|_{\mtr{HS}}^2
    \le \|D^{(1)}_{\psi}\|_{\mtr{op}}\,\|D^{(1)}_{\psi}\|_{\mtr{tr}}
    = \eta .
\end{align}
For $k=2$, Theorem~1 of Ref.~\cite{christiansen2024hilbert} gives
\begin{align}
    \|D^{(2)}_{\psi}\|_{\mtr{HS}}^2 \le 5\eta^2 .
\end{align}
More generally, Theorem~1 of Ref.~\cite{visconti2026hilbert} provides the bound
\begin{align}
    \|D^{(k)}_{\psi}\|_{\mtr{HS}}^2 \le C_k\,\eta^{k},
\end{align}
where $C_k$ is a constant depending only on $k$. Combining this with the
convexity of the Hilbert--Schmidt norm, Eq.~\eqref{eq:expression_Delta} is
bounded as
\begin{align}
    \Delta
    \le
    \frac{4}{|\Lambda_+|}\,\|D^{(k)}_{\psi}\|_{\mtr{HS}}^2
    \le
    4C_k\binom{m}{k}^{-1}\binom{m-k}{k}^{-1}\eta^{k},
\end{align}
where we have used $|\Lambda_+|=\binom{m}{k}\binom{m-k}{k}$, which follows
from the definition in Eq.~\eqref{eq:Lambda_+}.

Combining the above, the lower bound on $T$ in
Eq.~\eqref{eq:lower_bound_of_T} can be expressed as
\begin{align}
    T \ge \frac{1}{12 C_k}
    \binom{m-2k}{\eta-k}^2
    \binom{m}{\eta}^{-2}
    \binom{m}{k}\binom{m-k}{k}
    \frac{\eta^{-k}}{\varepsilon^2}.
\end{align}
To clarify its scaling, we use the two identities
\begin{align}
    \frac{\binom{m-2k}{\eta-k}}{\binom{m}{\eta}}
    =\frac{\binom{\eta}{k}\binom{m-\eta}{k}}{\binom{m}{2k}\binom{2k}{k}},
    \qquad
    \binom{m}{k}\binom{m-k}{k}=\binom{m}{2k}\binom{2k}{k},
\end{align}
which rewrite $T$ in the convenient form
\begin{align}
    T
    \geq 
    \frac{1}{12 C_k}\,
    \frac{\binom{\eta}{k}^2\binom{m-\eta}{k}^2}
         {\binom{m}{2k}\binom{2k}{k}}\,
    \frac{\eta^{-k}}{\varepsilon^2}.
\end{align}
Here, the binomial coefficients obey
\begin{align}
    \binom{m}{2k}\binom{2k}{k}\le\frac{m^{2k}}{(k!)^2},
\end{align}
Additionally, since $\eta \ge k$, we obtain 
\begin{align}
    \binom{\eta}{k} &= \frac{\eta^k}{k!} \prod^{k-1}_{j=0} \qty(1 - \frac{j}{\eta}) \\
    &\geq \frac{\eta^k}{k!} \prod^{k-1}_{j=0} \qty(1 - \frac{j}{k}) \\
    &= \frac{\eta^k}{k!} \frac{k!}{k^k} = \frac{\eta^k}{k^k}.
    \qquad
\end{align}
Similarly, if we assume that $m - \eta \geq k$, we obtain
\begin{align}
        \binom{m-\eta}{k}\ge\frac{((m-\eta)^k)}{k^k},
    \qquad
\end{align}
Substituting these estimates yields
\begin{align}
    T
    \ge
    \frac{(k!)^2}{12 C_k\,(k)^{4k}}
    \left(\frac{m-\eta}{m}\right)^{2k}
    \frac{\eta^k}{\varepsilon^2}.
\end{align}
Fixing $m=(1+\beta )\eta$ for a constant $\beta \geq 1$ gives
$\frac{m-\eta}{m}=\frac{\beta}{1+\beta}$, and therefore
\begin{align}
    T
    \ge
    \frac{(k!)^2}{12 C_k\,(k)^{4k}}
    \left(\frac{\beta}{1+\beta}\right)^{2k}
    \frac{\eta^k}{\varepsilon^2}
    .
\end{align}
From this expression, if we take $\beta \geq 1$, we obtain 
\begin{align}
    T
    \ge 
    \frac{(k!)^2}{12 C_k\,(2k^2)^{2k}}
    \frac{\eta^k}{\varepsilon^2}  =  \frac{ \Omega_k(\eta^k )}{\varepsilon^2}.
\end{align}

Finally, we translate this bound on $T$ into the sample complexity. We have shown $T=\Omega_k(\eta^k)/\varepsilon^2$ for any no-quantum-memory learning algorithm that solves the two-hypothesis distinguishing task with success probability at least $2/3$. As established above, the element-wise estimation of all $k$-RDM elements to additive error $\varepsilon$ solves this distinguishing task; hence the same lower bound applies to the $k$-RDM estimation problem. Moreover, by the tree representation of Definition~\ref{def:tree_representation}, the depth $T$ is exactly the number of measurement steps, and each step consumes a single copy of $\rho$. Therefore the number of copies required to estimate all $k$-RDM elements within element-wise additive error $\varepsilon$ is also
$\Omega_k(\eta^k)/\varepsilon^2$, which completes the proof of
Theorem~\ref{thm:lower_bound}. \qed

\section{Numerical analysis}
This section numerically examines the performance of the orbital-rotation shadow protocol through the explicit variance formula for the 1-RDM estimator. We first verify the closed-form prediction by comparing sampled estimator fluctuations with the theoretical variance envelopes at fixed particle number. We then translate the resulting variance bounds into cost estimates for learning all 1-RDM entries and compare them with existing fermionic learning protocols.

\subsection{Comparison with the exact formula of 1-RDM}
\label{sec:comaprison_with_1rdm}

In this section, we numerically verify the exact variance formula for the
1-RDM estimator derived under the orbital-rotation shadow protocol. Specifically,
we compare the single-shot variance obtained from numerical simulations of the
orbital-shadow protocol with the variance predicted by
Theorem~\ref{thm:explicit_variance_k=1} for an arbitrary state $\rho$. The case
\(k=1\) is particularly well suited for this verification, since the associated
single-shot estimator has the simple form
\begin{equation}
    \widehat D^{(1)}(R)
    =
    (N+1)R-\eta \mathds{1}_{\mathbb{C}^N}.
\end{equation}
Although this estimator is linear in the Grassmannian coordinate \(R\), its
variance is governed by second-order Grassmannian moments and can be written
exactly in terms of the true one- and two-body reduced density matrices, as
shown in Theorem~\ref{thm:explicit_variance_k=1}.

We generate the single-shot estimator according to the measurement rule of the orbital-rotation shadow protocol. Throughout this numerical experiment, one realization of the random pair \((u,\vec z)\) is referred to as a trial. In the
\(m\)-th trial, we draw a Haar-random single-particle unitary
\(u_m\in \mtr{U}(N)\), apply the induced orbital rotation \(U_\eta(u_m)\) to
the \(\eta\)-particle state, and sample an occupation configuration
\(\vec z_m\in \mac{S}_{N,\eta}\) with conditional probability
\begin{equation}
    p_{\rho}(\vec z_m|u_m)
    =
    \tr_{\mathcal H_\eta}
    \left[
        \rho\,
        U_\eta(u_m)^\dagger
        \ketbra{\vec z_m}
        U_\eta(u_m)
    \right].
\end{equation}

The resulting trial \((u_m,\vec z_m)\) determines the induced-projector expressed by
\begin{equation}
    R_m
    =
    u_m^\dagger P_{\vec z_m}u_m
    \in {\rm Gr}_\eta(\mathbb{C}^N),
\end{equation}
where \(P_{\vec z_m} \in \mtr{End}(\mac{H}_1) \) is the coordinate projector associated with the observed
occupation configuration. Thus, \(R_m\) is a rank-\(\eta\) one-particle
projector representing the occupied subspace obtained in the \(m\)-th trial.
The corresponding single-shot 1-RDM estimator is then
\begin{equation}
    \widehat D^{(1)}(R_m)
    =
    (N+1)R_m-\eta \mathds{1}_{\mathbb{C}^N}.
\end{equation}
For each pair of distinct modes
\((p,q)\in \mac{S}_{N,1}\times \mac{S}_{N,1}\), we estimate the entrywise
variance of the orbital-rotation shadow estimator for the 1-RDM element
\(D^{(1)}_{p,q}\) from \(n_s\) independent trials as
\begin{equation}
    V^{\rm emp}_{p,q}(n_s)
    =
    \frac{1}{n_s}
    \sum_{m=1}^{n_s}
    \left|
    \widehat D^{(1)}_{p,q}(R_m)-D^{(1)}_{p,q}
    \right|^2 .
\end{equation}
We refer to \(V^{\rm emp}_{p,q}(n_s)\) as the empirical variance estimator.

We compare this empirical quantity with the theoretical prediction of
Theorem~\ref{thm:explicit_variance_k=1}. Specifically, the theorem predicts
that, for \(p\neq q\), the single-shot variance of the orbital-rotation shadow
1-RDM estimator is given by
\begin{align}
    V^{\rm th}_{p,q}(\rho)
    &=
    \frac{(N+1)(N+1-\eta)}{N(N+2)}
    \left(
        \eta+D^{(1)}_{p,p}+D^{(1)}_{q,q}
    \right)
    -
    \frac{N+1}{N}D^{(2)}_{(p,q),(p,q)}
    -
    |D^{(1)}_{p,q}|^2 .
    \label{eq:explicit_formula_1RDM}
\end{align}
Here,
$
D^{(2)}_{(p,q),(p,q)}
=
\operatorname{tr}_{\mac H_\eta}
\left[
\rho\,a_p^\dagger a_q^\dagger a_q a_p
\right]
$
is the two-mode occupation probability associated with modes $p$ and $q$.
Thus, the theoretical prediction is completely determined by the exact
reduced-density-matrix data up to order two, namely \(D^{(1)}\) and
\(D^{(2)}\). In addition, by optimizing the expression in
Eq.~\eqref{eq:explicit_formula_1RDM} over admissible 1- and 2-RDM data, we
obtain theoretical best-(worst)-case and upper bounds on the variance of the
orbital-rotation shadow estimator: For fixed distinct modes $p\neq q$, define the worst-case and best-case
variances by
\begin{align}
    V^{\rm wc}_{p,q}
    :=
    \max_{\rho\in\mathsf D(\mac H_4)}
    V^{\rm th}_{p,q}(\rho), 
    \quad 
    V^{\rm bc}_{p,q}
    :=
    \min_{\rho\in\mathsf D(\mac H_4)}
    V^{\rm th}_{p,q}(\rho),
\end{align}
where $\mathsf D(\mac H_4)$ denotes the set of density operators on
$\mac H_4$.  Then
\begin{align}
    V^{\rm wc}_{p,q}
    &=
    \frac{5(N+1)(N-3)}{N(N+2)}, \\
    V^{\rm bc}_{p,q}
    &=
    \min\!\left\{
        \frac{4(N+1)(N-3)}{N(N+2)},
        \frac{5(N+1)(N-4)}{N(N+2)}
    \right\} \\
    &=
    \begin{cases}
        \displaystyle
        \frac{5(N+1)(N-4)}{N(N+2)},
        & 5\le N\le 8, \\[1.0em]
        \displaystyle
        \frac{4(N+1)(N-3)}{N(N+2)},
        & N\ge 9 .
    \end{cases}
\end{align}
We defer the proof to Lemma.~\ref{lem:eta4_offdiag_variance_bounds}. 
From this result, even for fixed particle number \(\eta=4\), the off-diagonal variance remains
bounded independently of the number of modes \(N\).

Remark that the comparison between \(V^{\rm emp}_{p,q}(n_s)\) and
\(V^{\rm th}_{p,q}(\rho)\) is justified by the law of large numbers. Indeed,
for fixed \(\rho\) and distinct modes \(p\neq q\), the trials
\((u_m,\vec z_m)\) are independent samples from the same orbital-shadow
measurement distribution. Consequently, for $1 \leq m \leq n_s$, the random variables $
    \left|
    \widehat D^{(1)}_{p,q}(R_m)-D^{(1)}_{p,q}
    \right|^2
$
are independent and identically distributed, with common expectation equal to
the single-shot variance \(V^{\rm th}_{p,q}(\rho)\). Therefore,
\begin{equation}
    V^{\rm emp}_{p,q}(n_s)
    \xrightarrow[n_s\to\infty]{\rm a.s.}
    V^{\rm th}_{p,q}(\rho).
\end{equation}
Thus, agreement between \(V^{\rm emp}_{p,q}(n_s)\) and
\(V^{\rm th}_{p,q}(\rho)\) for large \(n_s\) provides a direct numerical
verification of the variance formula. Moreover, since the best-case and
worst-case variances give state-independent bounds on the theoretical variance,
\begin{equation}
    V^{\rm bc}_{p,q}
    \le
    V^{\rm th}_{p,q}(\rho)
    \le
    V^{\rm wc}_{p,q}
    \qquad
    \text{for all } \rho\in\mathsf D(\mac H_4),
\end{equation}
the empirical estimate should also lie within this envelope up to
finite-sample fluctuations when \(n_s\) is sufficiently large. In the numerical
experiments below, we therefore focus on the matrix element
\(D^{(1)}_{1,2}\) and compare \(V^{\rm emp}_{1,2}(n_s)\) with the
state-dependent theoretical variance \(V^{\rm th}_{1,2}(\rho)\), together with
the state-independent best-case and worst-case envelopes
\(V^{\rm bc}_{1,2}\) and \(V^{\rm wc}_{1,2}\).

\begin{figure*}[t]
\hspace{-0.75cm}
\includegraphics[width=0.60\linewidth]{Figures/Fig2-v3.pdf}
\caption{\label{Comaprision}
\justifying{
Numerical verification of the single-shot variance \(\operatorname{Var}[\widehat D^{(1)}_{12}]\).
Filled circles denote the empirical variance of $\widehat{D}_{12}^{(1)}$, namely $V_{1,2}^{\mtr{emp}} (n_s)$, estimated from $n_s = 10^5$ trials, where the target state $\rho$ is fixed to a single Haar-random state with $\eta=4$ during the trials.
The triangles denote the same estimated variance from $n_s =5\times 10^5$ trials, where the target $\rho$ is fixed to a single sparse-support random vector with support size \(50\) and $\eta=4$.
The orange solid and dashed curves denote the
worst-case and best-case envelopes
\(V^{\rm bc}_{1,2}\) and \(V^{\rm wc}_{1,2}\) derived from our exact variance formula, respectively, while the black solid curve denotes the FGU-shadow upper bound of Ref.~\cite{Zhao:2020vxp}.
}
}
\label{fig:emperical_estimate}
\end{figure*}

We finally compare the state-independent theoretical envelopes with single-shot variances
estimated from sampled orbital-shadow data with numerical simulation. Figure~\ref{fig:emperical_estimate}
shows the empirical estimate of
\(\operatorname{Var}[\widehat D^{(1)}_{1,2}]\), namely
\(V^{\rm emp}_{1,2}(n_s)\), obtained from \(n_s\) independent trials at fixed
particle number \(\eta=4\). We consider two representative classes of target
states: Haar-random states and sparse-support random pure states. For each
target state, the empirical estimate is compared with the corresponding
state-dependent theoretical value \(V^{\rm th}_{1,2}(\rho)\), as well as with
the state-independent best-case and worst-case envelopes
\(V^{\rm bc}_{1,2}\) and \(V^{\rm wc}_{1,2}\).

In both classes of target states, the empirical estimates lie within the
region enclosed by the best-case and worst-case envelopes, up to the expected
finite-sampling fluctuations. This behavior is consistent with the law of
large numbers, since \(V^{\rm emp}_{1,2}(n_s)\) must converge to
\(V^{\rm th}_{1,2}(\rho)\) as \(n_s\to\infty\), while
\(V^{\rm th}_{1,2}(\rho)\) is bounded between
\(V^{\rm bc}_{1,2}\) and \(V^{\rm wc}_{1,2}\) for every
\(\rho\in\mathsf D(\mac H_4)\). Since these envelopes are computed directly
from the closed-form variance formula, rather than fitted to the sampled data,
this agreement provides a numerical consistency check of the second-order
Grassmannian moment calculation underlying
Theorem~\ref{thm:explicit_variance_k=1}.

The comparison also illustrates the scaling with the number of modes. The
envelopes obtained from the exact formula remain bounded as \(N\) increases.
For fixed particle number \(\eta=4\), the lower and upper envelopes approach
the \(N\)-independent constants \(4\) and \(5\), respectively. This behavior is
reflected in the sampled data and is markedly smaller than the
\(\mac O(N)\)-type FGU-shadow upper bound shown in the same figure. Thus, the
numerical results support both the exact variance formula and the predicted
\(N\)-independent behavior at fixed particle number.

\subsection{Analysis of Learning cost}
\label{sec:Sample_cost}

\begin{figure*}[t]
\hspace{-0.75cm}
\includegraphics[width=0.60\linewidth]{Figures/Fig3-v3.pdf}

\caption{\label{Comaprision}
Cost for simultaneously learning all entries of the 1-RDM at fixed particle
number \(\eta=20\) and target accuracy \(\varepsilon=10^{-2}\), in terms of the number of calls to the state preparation unitary $U_{\psi}|0\rangle = |\psi\rangle$ for a target state $|\psi\rangle$.  We compare the
orbital-rotation shadow with the FGU shadow and Heisenberg-limited
estimation strategies like quantum amplitude estimation (QAE) and Quantum gradient estimation (QGE), whose 
cost are evaluated following
Refs.~\cite{Koizumi2026Faster, Koizumi2026Heiseberg}.
}
\label{fig:resource_estimate}
\end{figure*}

We next evaluate the sufficient query count required to learn all entries of the 1-RDM at
fixed particle number.  As a representative test case, we set \(\eta=20\) and
target entrywise MSE \(\varepsilon^2\) with \(\varepsilon=10^{-2}\).  

Following Corollary~\ref{cor:entrywise_mse_sample_complexity}, we can derive the upper bound of sufficient copies of the target state to learn all entries.
For the orbital-rotation protocol, each shot produces the full matrix
estimator \(\widehat D^{(1)}(R_m)\), so the same single-shot samples can be used to estimate all
\(N^2\) entries.  From the explicit variance formula expressed in Eq.~\eqref{eq:explicit_variance_1rdm_app}, we obtain a
state-independent bound by discarding the negative terms and using
\(D^{(1)}_{p,p},D^{(1)}_{q,q}\le 1\):
\begin{align}
    \operatorname{Var}
    \!\left(
        \widehat D^{(1)}_{p,q}
    \right)
    \le
    \frac{(N+1)(N+1-\eta)}{N(N+2)}(\eta+2).
\end{align}
Note that this off-diagonal bound also dominates the diagonal case.

Figure~\ref{fig:resource_estimate} compares the number of queries to the state preparation unitary $U_\psi$ at \(\varepsilon=10^{-2}\) with the state-of-the-art estimation methods. 
Remark that our comparison also includes idealized
Heisenberg-limited strategies as benchmarks.  These methods assume coherent
access to both a state-preparation circuit and its inverse, so their cost
should be interpreted as coherent queries to the state-preparation oracles,
rather than as single-copy randomized measurement shots~\cite{huggins2022nearly,
Koizumi2026Heiseberg,Koizumi2026Faster}.
The orbital-rotation protocol exhibits essentially $N$-independent scaling, with the dominant dependence appearing only through $\eta$.  By contrast, the FGU-shadow protocol and Heisenberg-limited strategies have a stronger
mode-number dependence, leading to a substantially larger query estimate in the regime shown. 
As a result, our numerical analysis shows that, for estimating one-body fermionic correlations in a system of 100 sites and 20 particles, the proposed method reduces the number of queries to the target state-preparation oracle by a factor of approximately 10 compared with the best existing methods.
Therefore, the orbital-rotation protocol requires fewer costs in the displayed
regime, highlighting the practical advantage of exploiting particle-number
symmetry at the measurement level.





\clearpage

\clearpage
\section*{\Large Supplementary Technical Appendices}
\addcontentsline{toc}{section}{Supplementary Technical Appendices}

\setcounter{section}{0}
\renewcommand{\thesection}{S\arabic{section}}

\numberwithin{equation}{section}
\renewcommand{\theequation}{\thesection.\arabic{equation}}

\section{Geometric Construction of Grassmannian picture}
\label{sec:Geometric_costruction}

\subsection{The equivalence between three manifolds}

In this section, we will verify the equivalence among the following three manifolds,
\begin{gather}
    \mtr{U}(N) / \qty( \mtr{U}(\eta) \times  \mtr{U}(N-\eta ) ), \\
    \mtr{Gr}_\eta (\mathbb{C}^N ) = \{ W \subset \mathbb{C}^N \mid \dim_{\mathbb{C} } W = \eta \}, \\
    \mac{P}_\eta = \{ R \in \mtr{End}(\mathbb{C}^N ) \mid R^2 = R, R^\dagger = R , \mtr{rank} R = \eta \}.
\end{gather}
Here, $\mtr{U}(N) \coloneq \{ A \in \mtr{End}(\mathbb{C}^N   ) \mid A^\dagger A = \1_{\mathbb{C}^N } \}.  $ More formally, our goal is to make precise the standard identifications
\begin{align}
    \mtr{Gr}_\eta(\mathbb C^N)
    \simeq
    \mac{P}_\eta
    \simeq
    \frac{\mtr{U}(N)}
    {\mtr{U}(\eta)\times \mtr{U}(N-\eta)}.
\end{align}
Throughout this section, we first recall why the quotient
\(\mtr{U}(N)/(\mtr{U}(\eta)\times \mtr{U}(N-\eta))\) carries a natural
smooth manifold structure.  We then identify each point of the Grassmannian
with the orthogonal projector onto the corresponding subspace.  Finally, by
considering the conjugation orbit of the standard projector \(P_{[\eta]}\)
under \(\mtr{U}(N)\), we prove that these three descriptions are naturally
diffeomorphic.  This homogeneous-space viewpoint also provides a canonical
\(\mtr{U}(N)\)-invariant probability measure on the Grassmannian via the
pushforward of the normalized Haar measure on \(\mtr{U}(N)\).

First, we will overview $\mtr{U}(N)$ as a Lie group.

\begin{definition}[Lie group]
A Lie group is a group \(G\) equipped with a smooth manifold structure
such that the maps
\begin{gather}
    m : G \times G \to G, \qquad m(g,h)=gh,  \\
    \iota : G \to G, \qquad \iota(g)=g^{-1},
\end{gather}
are smooth.
\end{definition}

From Ref.~\cite{lee2003smooth}, we can guarantee that $\mtr{U}(N) / \qty( \mtr{U}(\eta) \times  \mtr{U}(N-\eta ) )$ is a smooth manifold. To this end, let $H:=\mtr{U}(\eta)\times \mtr{U}(N-\eta)$ be embedded into $\mtr{U}(N)$ as the
block-diagonal subgroup
\begin{align*}
H
=
\left\{
\begin{pmatrix}
h_1 & 0\\
0 & h_2
\end{pmatrix}
\;\middle|\;
h_1\in \mtr{U}(\eta),\ h_2\in \mtr{U}(N-\eta)
\right\}.
\end{align*}
With regards this subgroup, the following proposition can be shown

\begin{lemma}
    Let
    \begin{align}
        H
        \coloneq
        \left\{
        \begin{pmatrix}
            h_1 & 0\\
            0 & h_2
        \end{pmatrix}
        \in \mtr{U}(N)
        \;\middle|\;
        h_1\in \mtr{U}(\eta),\ 
        h_2\in \mtr{U}(N-\eta)
        \right\}.
    \end{align}
    Then $H$ is a closed subgroup of $\mtr{U}(N)$.
\end{lemma}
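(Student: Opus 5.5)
The plan is to write $H$ as an intersection of $\mtr{U}(N)$ with a closed subset of $\mtr{End}(\mathbb{C}^N)$, and then check the subgroup axioms separately. Closedness is then automatic, because $\mtr{U}(N)$ carries the subspace topology from $\mtr{End}(\mathbb{C}^N)\simeq\mathbb{C}^{N^2}$.

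\textbf{Step 1 (topological description).} Let $P_{[\eta]}=\operatorname{diag}(1,\ldots,1,0,\ldots,0)$ with $\eta$ ones. I claim
\begin{align}
    H=\{g\in\mtr{U}(N)\mid gP_{[\eta]}=P_{[\eta]}g\}.
\end{align}
If $g$ is block diagonal, it clearly commutes with $P_{[\eta]}$.

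Conversely, suppose $g$ commutes with $P_{[\eta]}$ and write $g$ in blocks $\begin{pmatrix} A & B\\ C & D\end{pmatrix}$. The relation $gP_{[\eta]}=P_{[\eta]}g$ forces $B=0$ and $C=0$. Then $g^\dagger g=\1$ reads blockwise as $A^\dagger A=\1_\eta$ and $D^\dagger D=\1_{N-\eta}$. So $A\in\mtr{U}(\eta)$ and $D\in\mtr{U}(N-\eta)$, i.e.\ $g\in H$.

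The map $g\mapsto gP_{[\eta]}-P_{[\eta]}g$ is continuous, in fact linear. Hence $H$ is the preimage of $\{0\}$ intersected with $\mtr{U}(N)$, so it is closed in $\mtr{U}(N)$.

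\textbf{Step 2 (subgroup).} This follows directly from the centralizer description. The identity commutes with $P_{[\eta]}$. If $g$ and $h$ commute with $P_{[\eta]}$, so does $gh$. Multiplying $gP_{[\eta]}=P_{[\eta]}g$ by $g^{-1}$ on both sides shows that $g^{-1}$ commutes with $P_{[\eta]}$ as well. Alternatively, one can verify directly that block-diagonal products and inverses, the latter given by $\operatorname{diag}(h_1^\dagger,h_2^\dagger)$, stay in $H$.

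\textbf{Step 3 (conclusion).} $H$ is a closed subgroup of $\mtr{U}(N)$. Moreover, since $\mtr{U}(N)$ is compact, $H$ is compact, and by the closed subgroup theorem it is an embedded Lie subgroup. This is what the subsequent quotient-manifold argument (cf.\ Ref.~\cite{lee2003smooth}) requires.

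The only point needing any care is the converse inclusion in Step 1, namely that commuting with $P_{[\eta]}$ kills the off-diagonal blocks. This is a one-line block computation, so I expect no real obstacle. If the paper's conventions prefer it, I would instead present $H$ as the image of the continuous injective homomorphism $\mtr{U}(\eta)\times\mtr{U}(N-\eta)\to\mtr{U}(N)$. That image is compact as the continuous image of a compact set, hence closed in the Hausdorff space $\mtr{U}(N)$, which gives a second, equally short proof.
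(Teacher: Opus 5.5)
Your proof is correct and is essentially the paper's argument: your commutator map $g\mapsto gP_{[\eta]}-P_{[\eta]}g$ sends $g=\begin{pmatrix}A&B\\ C&D\end{pmatrix}$ to $\begin{pmatrix}0&-B\\ C&0\end{pmatrix}$. It therefore carries exactly the information of the paper's off-diagonal-block map $F(A)=(A_{12},A_{21})$, and both proofs conclude by taking the preimage of $\{0\}$ and checking blockwise unitarity. Your explicit check of the subgroup axioms (left implicit in the paper) and your centralizer phrasing (which matches the paper's later identification of $\mathrm{Stab}(P_{[\eta]})$) are harmless additions.
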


\begin{proof}
    We prove the claim by realizing $H$ as the inverse image of a closed
    set under a continuous map.  For any $A\in \mtr{U}(N)$, write $A$ in
    block form as
    $
        A=
        \begin{pmatrix}
            A_{11} & A_{12}\\
            A_{21} & A_{22}
        \end{pmatrix},
    $ 
    where the block sizes are compatible with the decomposition
    \(\mathbb C^N=\mathbb C^\eta\oplus \mathbb C^{N-\eta}\).  Define
    \begin{align}
        F:\mtr{U}(N)
        &\longrightarrow
        \mathbb C^{\eta\times (N-\eta)}
        \oplus
        \mathbb C^{(N-\eta)\times \eta},\\
        F(A)
        &:=(A_{12},A_{21}).
    \end{align}
    This map is continuous, since it is the restriction to \(\mtr{U}(N)\)
    of the linear projection that extracts the off-diagonal blocks.

    We now observe that $ H=F^{-1}(\{(O,O)\}), $ where \(O\) denotes the zero matrix in each corresponding block space.
    Indeed, if \(A\in H\), then by definition its off-diagonal blocks vanish, i.e. \(F(A)=(O,O)\). Conversely, if \(F(A)=(O,O)\), then $
        A=
        \begin{pmatrix}
            A_{11} & O\\
            O & A_{22}
        \end{pmatrix}.$
    Since \(A\in \mtr{U}(N)\), the condition \(A^\dagger A=\1_{\mathbb{C}^N}\) implies $
        A_{11}^\dagger A_{11}=\1_{\mathbb{C}^\eta },
        A_{22}^\dagger A_{22}=\1_{\mathbb{C}^{N-\eta}}.
    $
    Therefore \(A_{11}\in \mtr{U}(\eta)\) and
    \(A_{22}\in \mtr{U}(N-\eta)\) hold, and it follows \(A\in H\).

    Finally, \(\{(O,O)\}\) is a closed subset of the finite-dimensional
    vector space
    \(\mathbb C^{\eta\times (N-\eta)}
    \oplus
    \mathbb C^{(N-\eta)\times \eta}\). Since the inverse image of a
    closed set under a continuous map is closed, \(H\) is closed in
    \(\mtr{U}(N)\).
\end{proof}

For the Lie group \(\mtr{U}(N)\) and its closed subgroup
\(\mtr{U}(\eta)\times \mtr{U}(N-\eta)\), we use the following standard result without the proof.
\begin{theorem}[Homogeneous space construction theorem, Theorem 21.17 in Ref.~\cite{lee2003smooth}]
    Let \(G\) be a Lie group, and let \(H\) be a closed subgroup of \(G\).
    Then the left coset space \(G/H\) is a topological manifold of dimension
    \(\dim G-\dim H\). Moreover, \(G/H\) admits a unique smooth structure
    such that the quotient map
    \begin{align}
        \pi_H:G\to G/H,
        \qquad
        g \longmapsto gH,
    \end{align}
    is a smooth submersion. With this smooth structure, the natural left
    action of \(G\) on \(G/H\), defined by
    \begin{align}
        g_1\cdot(g_2H)=(g_1g_2)H,
    \end{align}
    makes \(G/H\) into a homogeneous \(G\)-space.
\end{theorem}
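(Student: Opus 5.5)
This is the homogeneous space construction theorem (Lee, Theorem 21.17); the paper cites it without proof. My plan follows the standard route.

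\emph{Step 1: topology of $G/H$.} Equip $G/H$ with the quotient topology. Then $\pi_H$ is continuous, and it is open: for open $U$, the saturation $\pi_H^{-1}(\pi_H(U))=\bigcup_{h\in H}Uh$ is a union of open sets. Hausdorffness comes from closedness of $H$. The set $\{(g_1,g_2): g_1^{-1}g_2\in H\}$ is closed in $G\times G$, because it is the preimage of $H$ under a continuous map. Its image in $G/H\times G/H$ is the diagonal, and since $\pi_H\times\pi_H$ is open, the complement of the diagonal is open. Second countability is inherited from $G$ because $\pi_H$ is open.

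\emph{Step 2: charts.} By Cartan's closed subgroup theorem, $H$ is an embedded Lie subgroup. Its Lie algebra is $\mathfrak h\subset\mathfrak g$. Choose a complementary subspace $\mathfrak m$ with $\mathfrak g=\mathfrak m\oplus\mathfrak h$. The map $\mathfrak m\times H\to G$, $(X,h)\mapsto \exp(X)h$, has bijective differential at $(0,e)$. By the inverse function theorem, it is a diffeomorphism from $B\times V$ onto an open set, for small neighbourhoods $B\ni 0$ and $V\ni e$.

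The key point, and the main obstacle, is to shrink $B$ so that $X\mapsto \exp(X)H$ is injective on $B$. Suppose instead that there are sequences $X_n\neq X_n'$ in $B$, both tending to $0$, with $\exp(-X_n')\exp(X_n)\in H$. This contradicts the local diffeomorphism together with the fact that $H$ is embedded. Concretely, $H\cap W=\exp(\mathfrak h\cap W')$ for small neighbourhoods $W$ and $W'$, which is exactly the slice property from the closed subgroup theorem. With injectivity in hand, $\varphi_{g}(X)=g\exp(X)H$ defines homeomorphisms from $B$ onto open subsets of $G/H$.

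\emph{Step 3: smooth structure.} The transition maps are $\varphi_{g'}^{-1}\circ\varphi_g$. Locally they equal the composition of $X\mapsto g'^{-1}g\exp(X)$ with the smooth projection onto the $\mathfrak m$-factor given by the inverse of the local diffeomorphism. Hence they are smooth, and the dimension is $\dim\mathfrak m=\dim G-\dim H$. In these charts $\pi_H$ reads locally as the projection $(X,h)\mapsto X$, so it is a smooth submersion.

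\emph{Step 4: uniqueness and the action.} Uniqueness follows because smooth submersions admit smooth local sections. Consequently, any map $F$ out of $G/H$ is smooth if and only if $F\circ\pi_H$ is smooth. Applying this to the identity map between two candidate smooth structures shows they coincide. For the action, the map $(g_1,g_2H)\mapsto g_1g_2H$ is smooth, since $\mathrm{id}\times\pi_H$ is a surjective submersion and the composite $(g_1,g_2)\mapsto \pi_H(g_1g_2)$ is smooth. The action is transitive because $\pi_H$ is surjective, so $G/H$ is a homogeneous $G$-space.
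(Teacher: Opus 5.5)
Your proof is correct. The paper gives no proof of its own; it only cites Lee's Theorem 21.17, and Lee argues along a different route.

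\textbf{Lee's route.} Lee lets $H$ act on $G$ by right translation and checks that this action is free and proper. Properness is exactly where closedness of $H$ enters: if $g_i\to g$ and $g_ih_i\to g'$, then $h_i\to g^{-1}g'\in H$. He then invokes the general Quotient Manifold Theorem, which gives in one stroke the topological manifold structure, the dimension $\dim G-\dim H$, and the unique smooth structure making $\pi_H$ a submersion. Smoothness and transitivity of the $G$-action are then deduced exactly as in your Step 4.

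\textbf{Your route.} You instead build the slice charts $X\mapsto g\exp(X)H$ on a complement $\mathfrak m$ of $\mathfrak h$ by hand, which is the classical Chevalley/Warner argument. Closedness of $H$ enters twice: in your Hausdorff argument, and through embeddedness of $H$ in the injectivity step.

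\textbf{Comparison.} Lee's route is shorter once the Quotient Manifold Theorem is available, and it isolates the role of closedness in a single properness check. Yours is self-contained apart from Cartan's theorem. It also produces explicit charts, for example giving the identification $T_{eH}(G/H)\cong\mathfrak g/\mathfrak h\cong\mathfrak m$ directly.

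\textbf{One step to tighten (Step 3).} The element $g'^{-1}g\exp(X)$ lies in $\exp(B)H$, but in general not in $\exp(B)V$. So the inverse you apply cannot be that of the local diffeomorphism on $B\times V$. Either of the following fixes it:
\begin{itemize}
\item Take the inverse of $(X,h)\mapsto\exp(X)h$ on all of $B\times H$. This map is injective by your Step 2. It is a local diffeomorphism at every point by right $H$-equivariance. Hence it is a diffeomorphism onto the open set $\exp(B)H$, and the transition maps are globally $X\mapsto \mathrm{pr}_{\mathfrak m}\Phi^{-1}(g'^{-1}g\exp(X))$ on the overlap.
\item Near each $X_0$, right-multiply by a fixed $h_0^{-1}$ with $g'^{-1}g\exp(X_0)\in\exp(B)h_0$ before applying the inverse on $B\times V$.
\end{itemize}
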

    We remark that this fact is
    important because it allows us to treat the complex Grassmannian as a
    smooth quotient manifold,
    \begin{align}
        \mtr{Gr}_\eta (\mathbb{C}^N)
        \simeq
        \mtr{U}(N)/(\mtr{U}(\eta)\times \mtr{U}(N-\eta)).
    \end{align}
    Since \(\mtr{U}(N)\) is compact and \(\mtr{U}(\eta)\times \mtr{U}(N-\eta)\) is a closed subgroup, the
    normalized Haar measure on \(\mtr{U}(N)\) induces a canonical
    \(\mtr{U}(N)\)-invariant probability measure on the quotient space.
    Under the above identification, this measure is precisely the natural
    invariant measure on the Grassmannian.  Thus integrals over
    \(\mtr{Gr}_\eta (\mathbb{C}^N)\) can be rigorously interpreted as integrals over
    the homogeneous space \(\mtr{U}(N)/H\).

    Next, we will define the natural Haar measure on this manifold $\mtr{U}(N) / \qty( \mtr{U}(\eta) \times \mtr{U}(N-\eta) )$.
    Let \( \dd{u} \) denote the normalized Haar measure on the compact Lie group
\(\mtr{U}(N)\).  The quotient map
    \begin{align}
        \label{eq:quotient-map}
        \pi_H:\mtr{U}(N)
        &\longrightarrow
        \mtr{U}(N)/H,
        \qquad
        u \longmapsto uH
    \end{align}
allows us to pushforward this Haar measure to the quotient space.  Namely,
we define a probability measure \(\mu_{\mtr{U}(N)/H}\) on
\(\mtr{U}(N)/H\) by
    \begin{align}
        \mu_{\mtr{U}(N)/H}
        \coloneq
        (\pi_H)_* \dd{u}.
    \end{align}
    This means that, for every Borel subset \(A\subset \mtr{U}(N)/H\), the following relationship holds,
    \begin{align}
        \mu_{\mtr{U}(N)/H}(A)
        =
        \int_{\mtr{U}(N)}
        \1_A(\pi_H(u)) \dd{u}.
    \end{align}
    This measure is naturally \(\mtr{U}(N)\)-invariant.  Indeed, for any
    \(g\in \mtr{U}(N)\), the left action on the quotient is given by
    \(g\cdot (uH)=(gu)H\).  Therefore,
    \begin{align}
        \mu_{\mtr{U}(N)/H}(gA)
        &=
        \int_{\mtr{U}(N)}
        \1_{gA}(\pi_H(u)) \dd{u} \\
        &=
        \int_{\mtr{U}(N)}
        \1_{A}(\pi_H(g^{-1}u)) \dd{u} \\
        &=
        \int_{\mtr{U}(N)}
        \1_{A}(\pi_H(u)) \dd{u} \\
        &=
        \mu_{\mtr{U}(N)/H}(A),
    \end{align}
where the third equality follows from the left invariance of the Haar
measure \( \dd{u} \).  Hence the pushforward of the Haar measure on
\(\mtr{U}(N)\) defines a canonical \(\mtr{U}(N)\)-invariant probability
measure on the homogeneous space \(\mtr{U}(N)/H\).

Next, we introduce the complex Grassmann manifold as follows \cite{milnor1974characteristic}
\begin{align}
    \mtr{Gr}_\eta (\mathbb{C}^N ) \coloneq  \{ W \subset \mathbb{C}^N \mid \dim_{\mathbb{C} } W = \eta \}.
\end{align}
By definition, this is the set of all complex $\eta$-planes through the origin in the complex vector space $\mathbb{C}^{N}$. For instance, when $\eta=1$, this set corresponds to $\mathbb{C} P^{N-1}$. It is known that this set has a natural structure as a smooth manifold. 

An element in this manifold is one subspace $W$ itself, so we might guess that there exists a one-to-one correspondence between $W$ and its orthogonal projector operator $P_W$ onto the subspace $W$. Strictly speaking, this correspondence is the projector representation of the Grassmannian. In this representation, the Grassmannian can be identified with the space of rank-$\eta$ orthogonal projectors, namely 
there is a diffeomorphism
\begin{align}
    \mathrm{Gr}_\eta (V)\simeq \{P\in \mathrm{End}(V)\mid P^2=P,\; P^\dagger=P,\; \mathrm{rank}(P)=\eta\}.
\end{align}
Indeed, if $W\subset V$ is an $\eta$-dimensional subspace, then the orthogonal projector $P_W$ is uniquely determined by $\mathrm{Im}(P_W)=W$, $P_W^2=P_W$, and $P_W^\dagger=P_W$. Conversely, any Hermitian idempotent operator $P$ of rank $\eta$ determines a unique point of the Grassmannian by taking its image, $W=\mathrm{Im}(P)$. Thus, a point of the Grassmannian is equivalently described either as an $\eta$-dimensional subspace $W$ or as the corresponding rank-$\eta$ orthogonal projector $P_W$.

If $U\in\mathbb{C}^{N\times \eta}$ is a matrix whose columns form a basis of $W$, then the corresponding projector is given by $P_W=U(U^\dagger U)^{-1}U^\dagger$. In particular, if the columns of $U$ are orthonormal, this formula reduces to $P_W=UU^\dagger$. This expression is invariant under the change of basis $U\mapsto UA$ for any invertible matrix $A\in \mtr{End}(\mathbb{C}^\eta ) $, which reflects the fact that the Grassmannian parametrizes subspaces rather than particular choices of bases.

Finally, we show the isomorphism between $\mtr{Gr}_\eta (\mathbb{C}^N)$ and $\mtr{U}(N) / \qty( \mtr{U}(\eta) \times \mtr{U} (N -\eta)  )$. 
\begin{lemma}[Grassmannian as a homogeneous space]
Let
$
    E_{\eta}
    \coloneqq
    \mathrm{span}_{\mathbb C}\{\ket{1},\ldots,\ket{\eta}\}
    \subset \mathbb C^N
$
be the standard $\eta$-dimensional subspace, and let
$
    P_{[\eta]}
    \coloneqq
    \begin{pmatrix}
        I_{\eta} & 0\\
        0 & 0_{N-\eta}
    \end{pmatrix}
$
be the orthogonal projector onto $E_{\eta}$. Then there is a natural diffeomorphism
\begin{align}
    \mtr{Gr}_{\eta}(\mathbb C^N)
    \simeq
    \mathcal P_{\eta}
    \simeq
    \frac{\mtr{U}(N)}
    {\mtr{U}(\eta)\times \mtr{U}(N-\eta)},
\end{align}
where
\begin{align}
    \mathcal P_{\eta}
    \coloneqq
    \left\{
    P\in \mathrm{End}(\mathbb C^N)
    \;\middle|\;
    P^2=P,\;
    P^\dagger=P,\;
    \mathrm{rank}(P)=\eta
    \right\}.
\end{align}
\end{lemma}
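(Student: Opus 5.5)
The plan is to build two explicit maps and check that each is a bijection compatible with the smooth structures. For the first identification $\mtr{Gr}_\eta(\mathbb C^N)\simeq\mac P_\eta$, send a subspace $W$ to its orthogonal projector $P_W$, and send a projector $P$ back to its image $\mathrm{Im}(P)$. The discussion preceding the lemma already shows these two maps are mutually inverse. I will take the smooth structure on $\mtr{Gr}_\eta(\mathbb C^N)$ to be the standard one, given by graph charts $W=\mathrm{graph}(A)$ with $A\in\mathbb C^{(N-\eta)\times\eta}$ relative to a fixed splitting. In such a chart $P_W=U(U^\dagger U)^{-1}U^\dagger$ with $U=\binom{I}{A}$. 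This expression is a rational (hence smooth) function of $A$, so $W\mapsto P_W$ is smooth. Its inverse is the restriction of a smooth chart map, because locally $A$ is recovered from $P$ by reading off the blocks: $A=P_{21}P_{11}^{-1}$ wherever $P_{11}$ is invertible.

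For the second identification I will use the conjugation action $u\cdot P=uPu^\dagger$ of $\mtr U(N)$ on $\mac P_\eta$. The first task is transitivity: given $P\in\mac P_\eta$, choose an orthonormal basis of $\mathrm{Im}P$ and extend it by one of $\ker P$. The unitary $u$ whose columns are this basis then satisfies $uP_{[\eta]}u^\dagger=P$. The second task is to compute the stabilizer. If $uP_{[\eta]}u^\dagger=P_{[\eta]}$, then $u$ preserves both $E_\eta$ and $E_\eta^\perp$, so $u$ is block-diagonal. By the earlier lemma the stabilizer is therefore exactly the closed subgroup $H=\mtr U(\eta)\times\mtr U(N-\eta)$.

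Consequently the orbit map $\Phi:\mtr U(N)/H\to\mac P_\eta$, $uH\mapsto uP_{[\eta]}u^\dagger$, is a well-defined $\mtr U(N)$-equivariant bijection. It is smooth, because its composition with the submersion $\pi_H$ is the polynomial map $u\mapsto uP_{[\eta]}u^\dagger$, and the homogeneous space theorem quoted above lets smoothness descend through $\pi_H$.

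The main obstacle is upgrading the smooth bijection $\Phi$ to a diffeomorphism. I would invoke the equivariant rank theorem: an equivariant smooth map between homogeneous $G$-spaces has constant rank, and a constant-rank bijection is a diffeomorphism. For this to apply, $\mac P_\eta$ must first be known to be a smooth manifold, which I would obtain from the graph charts in the first step. As an alternative, I would verify injectivity of $d\Phi$ at the base point directly. The tangent space is $\mathfrak u(N)/\mathfrak h$, consisting of the off-diagonal blocks of $X$, and $d\Phi(X)=[X,P_{[\eta]}]$ vanishes only when $X$ is block-diagonal, i.e. $X\in\mathfrak h$. Equivariance then propagates injectivity to every point. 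The dimensions agree at $2\eta(N-\eta)$, so $d\Phi$ is an isomorphism everywhere. Combined with bijectivity and compactness of $\mtr U(N)/H$, this shows $\Phi$ is a diffeomorphism, and composing with the first identification finishes the proof.
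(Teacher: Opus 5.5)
Your proposal is correct and follows the same route as the paper: identify subspaces with their orthogonal projectors, show that conjugation is transitive by extending an orthonormal basis of $\mathrm{Im}P$, compute the stabilizer of $P_{[\eta]}$ as the block-diagonal subgroup $\mtr{U}(\eta)\times\mtr{U}(N-\eta)$, and descend the orbit map to a bijection $\mtr{U}(N)/H\to\mac P_\eta$. Your graph-chart smooth structure and the differential $d\Phi([X])=[X,P_{[\eta]}]$ (or, alternatively, the equivariant rank theorem) supply the justification that the paper merely asserts when it calls the bijection a local diffeomorphism; the final appeal to compactness is unnecessary, since a bijective local diffeomorphism is already a diffeomorphism.
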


\begin{proof}
By the preceding discussion, the Grassmannian can be identified with the
set of rank-$\eta$ orthogonal projectors: the diffeomorphism is given by
    \begin{align}
        \mtr{Gr}_{\eta} (\mathbb{C}^N) &\longrightarrow \mathcal{P}_\eta,
        \qquad
        W \longmapsto P_W,
    \end{align}
    whose inverse is given by
    \begin{align}
        \mathcal{P}_\eta &\longrightarrow \mtr{Gr}_{\eta} (\mathbb{C}^N),
        \qquad
        P \longmapsto \Im (P).
    \end{align}
    We denote the orthogonal projector onto $W$ by $P_W$.

We next show that $\mathcal P_\eta$ is the $\mtr{U}(N)$-orbit of $P_{[\eta]}$.
To this end, we introduce the following mapping $\Phi$,
\begin{align}
    \label{eq:adjoint-P0}
    \Phi:\mtr{U}(N)\longrightarrow \mathcal P_\eta,
    \qquad
    U \longmapsto UP_{[\eta]}U^\dagger,
\end{align}
where $P_{[\eta]} \coloneqq     
    \begin{pmatrix} 
    \1_{\mathbb{C}^\eta} & O \\ O &  O_{_{\mathbb{C}^{N-\eta}} }
    \end{pmatrix}$.
    
We first prove that $\Phi$ is surjective. Let $P\in\mathcal P_\eta$ be
arbitrary, and set $
    W\coloneqq \mathrm{Im}(P).
$
Since $P$ is a rank-$\eta$ orthogonal projector, $W$ is an $\eta$-dimensional
subspace of $\mathbb C^N$, that is,
$
    W\in \mtr{Gr}_\eta(\mathbb C^N).
$
Choose an orthonormal basis
$
    \ket{w_1},\ldots,\ket{w_\eta}
$
of $W$, and extend it to an orthonormal basis
$
    \ket{w_1},\ldots,\ket{w_N}
$
of $\mathbb C^N$. Let $U\in \mtr{U}(N)$ be the unitary matrix defined by
\begin{align}
    \forall i \in \{1,\ldots N\}, \quad U\ket{i}=\ket{w_i}.
\end{align}
This $U$ satisfies $U E_\eta
    =
    \mathrm{span}_{\mathbb C}\{\ket{w_1},\ldots,\ket{w_\eta}\}
    =
    W.
$
Since $P_{[\eta]}$ is the orthogonal projector onto $E_\eta$, the operator
$UP_{[\eta]}U^\dagger$ is the orthogonal projector onto $UE_\eta=W$. This fact follows from
$UP_{[\eta]}U^\dagger=P_W$. On the other hand, $P$ is also the orthogonal
projector onto $W=\mathrm{Im}(P)$. By the uniqueness of the orthogonal
projector onto a fixed subspace, we have $P=UP_{[\eta]}U^\dagger$.
Since $P\in\mathcal P_\eta$ was arbitrary, $\Phi$ is surjective. Equivalently,
\begin{align}
    \mathcal P_\eta
    =
    \left\{
    UP_{[\eta]}U^\dagger
    \;\middle|\;
    U\in \mtr{U}(N)
    \right\}.
\end{align}

It remains to identify the redundancy in this parametrization. The stabilizer
of $P_{[\eta]}$ under the conjugation action of $\mtr{U}(N)$ is
\begin{align}
    \mathrm{Stab}(P_{[\eta]})
    \coloneqq
    \left\{
    U\in \mtr{U}(N)
    \;\middle|\;
    UP_{[\eta]}U^\dagger=P_{[\eta]}
    \right\}.
\end{align}
The condition $UP_{[\eta]}U^\dagger=P_{[\eta]}$ is equivalent to saying that $U$ preserves
the orthogonal decomposition
$
    \mathbb C^N=E_\eta\oplus E_\eta^\perp.
$
Hence $U$ is block diagonal with respect to this decomposition, namely
$
    U=
    \begin{pmatrix}
        A & 0\\
        0 & B
    \end{pmatrix}
$
with $A\in \mtr{U}(\eta)$ and $B\in \mtr{U}(N-\eta)$. Conversely, every
unitary of this form fixes $P_{[\eta]}$. Therefore
$
    \mathrm{Stab}(P_{[\eta]})
    \simeq
    \mtr{U}(\eta)\times \mtr{U}(N-\eta).
$

The map $\Phi$ therefore descends to the quotient space as
\begin{align}
    \label{eq:orbit-map}
    \Phi_{\mtr{Stab}(P_{[\eta]}) }:
    \frac{\mtr{U}(N)}{\mathrm{Stab}(P_{[\eta]})}
    \longrightarrow
    \mathcal P_\eta,
    \qquad
    [U]\longmapsto UP_{[\eta]}U^\dagger ,
\end{align}
where $[U]=U\mathrm{Stab}(P_{[\eta]})$. This map is well-defined because replacing
$U$ by $Uh$ with an arbitrary $h\in\mathrm{Stab}(P_{[\eta]})$ does not change $UP_{[\eta]}U^\dagger$.

Moreover, $\Phi_{\mtr{Stab}(P_{[\eta]}) }$ is injective. Indeed, if
$UP_{[\eta]}U^\dagger=VP_{[\eta]}V^\dagger$, then
$
    (V^\dagger U)P_{[\eta]}(V^\dagger U)^\dagger=P_{[\eta]},
$
and hence $V^\dagger U\in\mathrm{Stab}(P_{[\eta]})$. Therefore $U$ and $V$
represent the same coset in
$
    \mtr{U}(N)/\mathrm{Stab}(P_{[\eta]})
$, which implies that $\Phi_{\mtr{Stab}(P_{[\eta]}) }$ is injective. Since $\Phi$ is surjective,
$\Phi_{\mtr{Stab}(P_{[\eta]}) }$ is also surjective, and hence
\begin{align}
    \mathcal P_\eta
    \simeq
    \frac{\mtr{U}(N)}{\mathrm{Stab}(P_{[\eta]})}
    \simeq
    \frac{\mtr{U}(N)}
    {\mtr{U}(\eta)\times \mtr{U}(N-\eta)}.
\end{align}
$\Phi_{\mtr{Stab}(P_{[\eta]}) }$ is actually a bijective local diffeomorphism, and hence a diffeomorphism.
Combining this with
$
    \mtr{Gr}_{\eta}(\mathbb C^N)\simeq \mathcal P_{\eta},
$
we obtain
\begin{align}
    \mtr{Gr}_{\eta}(\mathbb C^N)
    \simeq
    \frac{\mtr{U}(N)}
    {\mtr{U}(\eta)\times \mtr{U}(N-\eta)}
    .
\end{align}

\end{proof}

\subsection{Shadow-Induced Measures on the Grassmannian}

In this section, we formulate Low's number-conserving fermionic
shadow measurement as a probability measure on the Grassmannian of
rank-$\eta$ projectors. We first recall the identification of
$\mathcal P_\eta$ with the homogeneous space $\mtr{U}(N)/H$, and use
this identification to define the canonical invariant probability
measure $\dd R$ on $\mathcal P_\eta$ as the pushforward of the
normalized Haar measure on $\mtr{U}(N)$. We then introduce an equivalent
projector-valued orbit map, $u\mapsto u^\dagger P_{[\eta]}u$, which is
the form naturally appearing in Low's shadow protocol. Finally, we show
that the random measurement outcome $(u,\vec z)$ induces a
state-dependent probability measure $\nu_\rho$ on $\mathcal P_\eta$,
and we compute its density with respect to the Haar-pushforward
reference measure $\dd R$.

Let $H\subset \mtr{U}(N)$ be the stabilizer of $P_{[\eta]}$. Recall the
quotient map
\begin{align}
    \pi_H:\mtr{U}(N)
    \longrightarrow
    \mtr{U}(N)/H,
    \qquad
    \pi_H(u)=uH.
\end{align}
The orbit map
\begin{align}
    \Phi_H:\mtr{U}(N)/H
    \longrightarrow
    \mathcal P_\eta,
    \qquad
    uH \longmapsto uP_{[\eta]}u^\dagger
\end{align}
is well-defined and identifies $\mtr{U}(N)/H$ with $\mathcal P_\eta$.
Thus, if $\dd u$ denotes the normalized Haar measure on $\mtr{U}(N)$,
the canonical invariant probability measure on $\mathcal P_\eta$ is
defined as the pushforward measure
\begin{align}
    \label{eq:pushforward-measure-Peta}
    \dd R
    \equiv
    \mu_{\mathcal P_\eta}
    \coloneqq
    (\Phi_H)_* (\pi_H)_* \dd u
    =
    (\Phi_H \circ \pi_H)_* \dd u.
\end{align}
Equivalently, for every measurable function $f$ on $\mathcal P_\eta$,
\begin{align}
    \int_{\mathcal P_\eta} f(R)\,\dd R
    =
    \int_{\mtr{U}(N)}
    f(uP_{[\eta]}u^\dagger)\,\dd u.
\end{align}

In Low's shadow protocol, it is more convenient to use the equivalent map
\begin{align}
    \widetilde \pi_\eta:\mtr{U}(N)
    \longrightarrow
    \mathcal P_\eta,
    \qquad
    u\longmapsto u^\dagger P_{[\eta]}u .
\end{align}
Since the inversion map $u\mapsto u^\dagger$ preserves the normalized Haar
measure, we have
\begin{align}
    (\widetilde \pi_\eta)_*\dd u
    =
    (\Phi_H \circ \pi_H)_*\dd u
    =
    \dd R .
\end{align}
Hence the notation $\int_{\mathcal P_\eta}(\cdots)\dd R$ is rigorously
understood as integration with respect to the pushforward of the Haar
measure $\dd{u}$ on $\mtr{U}(N)$ through the projector-valued orbit map $\widetilde{\pi}_\eta$.

Under Low's orbital-rotation shadow protocol, a measurement
outcome $(u,\, \vec{z}) \in \mtr{U}(N) \times \mathcal{S}_{N,\eta}$ 
determines the rank-$\eta$ projector
\begin{align}
    R(u,\vec z)
    \coloneqq
    u^\dagger P_{\vec z}u ,
\end{align}
where $P_{\vec{z}} \coloneqq \sum_{i=1}^{\eta} |z_i\rangle\langle z_i|$ is the coordinate projector onto the $\eta$-occupied modes
specified by $\vec z\in\mathcal S_{N,\eta}$. The probability of observing
$\vec z$ conditioned on $u$ is
\begin{align}
    p_\rho(\vec z\mid u)
    =
    \tr\!\left[
        \rho\,
        U_\eta^\dagger(u)
        \bigl( P^{\wedge \eta}_{\vec z}\bigr)
        U_\eta(u)
    \right]
    =
    \tr\!\left[
        \rho\, R(u,\vec z)^{\wedge \eta}
    \right].
\end{align}
Therefore, the shadow measurement induces a probability measure on
$\mathcal P_\eta$ as the pushforward
along the continuous map $R:\mtr{U}(N)\times \mathcal{S}_{N,\eta} \to \mathcal{P}_\eta$
\begin{align}
    \nu_\rho
    \coloneqq
        \sum_{\vec z\in\mathcal S_{N,\eta}}
    R_*
    \left(
        p_\rho(\vec z\mid u)\,\dd u
    \right),
\end{align}
where $R(u,\vec z)=u^\dagger P_{\vec z}u$ and the discrete variable
$\vec z$ is summed over $\mathcal S_{N,\eta}$.

We now compute this pushforward measure explicitly. For each
$\vec z\in\mathcal S_{N,\eta}$, we can choose a permutation unitary
$w_{\vec z}\in\mtr{U}(N)$ such that
$    P_{\vec z}
    =
    w_{\vec z}P_{[\eta]}w_{\vec z}^\dagger
$~\footnote{Such a unitary $w_{\vec{z}}$ always exists because of the surjectivity of the map \eqref{eq:adjoint-P0}.
}.
Then
\begin{align}
    R(u,\vec z)
    =
    u^\dagger P_{\vec z}u
    =
    (w_{\vec z}^\dagger u)^\dagger
    P_{[\eta]}
    (w_{\vec z}^\dagger u)
    =
    \widetilde \pi_\eta(w_{\vec z}^\dagger u).
\end{align}
By the left invariance of the Haar measure, the change of variables
$u'=w_{\vec z}^\dagger u$ leaves $\dd u$ invariant. Hence each term in
the sum over $\vec z$ has the same pushforward to $\mathcal P_\eta$.
Consequently, for every integrable test function $F$ on $\mathcal P_\eta$,
\begin{align}
    \mathbb E_{\rho}[F(R)]
    &=
    \int_{\mtr{U}(N)}\dd u
    \sum_{\vec z\in\mathcal S_{N,\eta}}
    p_\rho(\vec z\mid u)\,
    F(R(u,\vec z))  \\
    &=
    \binom{N}{\eta}
    \int_{\mtr{U}(N)}\dd u\,
    \tr\!\left[
        \rho\,
        \widetilde \pi^{\wedge\eta}_\eta(u)
    \right]
    F(\widetilde \pi_\eta(u))  \\
    &=
    \binom{N}{\eta}
    \int_{\mathcal P_\eta}\dd R\,
    \tr\!\left[
        \rho\ R^{\wedge\eta}
    \right]
    F(R).
\end{align}
Thus the measurement-induced measure on the Grassmannian, identified with
$\mathcal P_\eta$, is
\begin{align}
    \nu_\rho(\dd R)
    =
    \binom{N}{\eta}
    \tr\!\left[
        \rho\, R^{\wedge \eta}
    \right]\dd R.
\end{align}
In particular, by setting $F(R)=1$, we obtain the normalization condition
\begin{align}
    1
    =
    \binom{N}{\eta}
    \int_{\mathcal P_\eta}\dd R\,
    \tr\!\left[
        \rho\, R^{\wedge \eta}
    \right].
\end{align}
Therefore, Low's number-conserving fermionic shadow measurement can be
viewed as a state-dependent probability measure on the Grassmannian, whose
reference measure is the Haar-pushforward measure
$
    \dd R=(\widetilde \pi_\eta)_*\dd u
$.

\section{Irreducible decomposition of $\mtr{End}(\mac{H}_\eta )$}

In this section, we prove that every \(\mtr U(N)\)-equivariant linear map from \(\operatorname{End}(\mac H_\eta)\) to \(\operatorname{End}(\mac H_t)\), like $\mac{A}_t$, can be expressed as a linear combination of the contraction--extension maps
\(\mac E_{s,t}\circ \mac C_{\eta,s}\) with representation theory.

For the purposes of this section alone, we use standard notation from
finite-dimensional representation theory, which we briefly recall below.
If \(G=\mtr U(N)\) and \(V,W\) are \(G\)-modules, then
\(\operatorname{Hom}_{G}(V,W)\) denotes the vector space of \(G\)-equivariant
linear maps from \(V\) to \(W\).  Here a \(G\)-module simply means a
finite-dimensional complex vector space equipped with a linear action of
\(G\).  In particular, \(\operatorname{End}(\mac H_r)\) is regarded as a
\(\mtr U(N)\)-module through the conjugation action.
We also use the standard notation \(L_\lambda\) for the irreducible
highest-weight representation of \(\mtr U(N)\) with highest weight
\(\lambda\), and \(S_\lambda(V)\) for the Schur module associated with a
partition \(\lambda\).  These notions, as well as the Littlewood--Richardson
rule used below, are standard; see, for example,
Refs.~\cite{fulton1997young,Knapp2023liegroups}.

We first recall the equivariance condition used in the proof.
The spaces \(\operatorname{End}(\mac H_\eta)\) and
\(\operatorname{End}(\mac H_t)\) are equipped with the natural
\(\mtr U(N)\)-actions by conjugation:
\begin{align}
    X
    &\longmapsto
    U_\eta(u)XU_\eta(u)^\dagger, \\
    Y
    &\longmapsto
    U_t(u)YU_t(u)^\dagger .
\end{align}
A linear map
\begin{align}
    \Phi:
    \operatorname{End}(\mac H_\eta)
    \longrightarrow
    \operatorname{End}(\mac H_t)
\end{align}
is said to be \(\mtr U(N)\)-equivariant if it satisfies
\begin{align}
    \Phi\!\left(
        U_\eta(u)XU_\eta(u)^\dagger
    \right)
    =
    U_t(u)\Phi(X)U_t(u)^\dagger
\end{align}
for every \(u\in\mtr U(N)\) and
\(X\in\operatorname{End}(\mac H_\eta)\).

As a preparatory step toward the main goal of this section, we first establish the following lemma.

\begin{lemma}[Irreducible decomposition of $\mtr{End}(\mac{H}_r)$]
    \label{lem:irreducible_dec}
    Let $r$ and $N$ be integers satisfying
    \(0 \leq r\leq N\).  
    The \(\mtr{U}(N)\)-module
    \(\operatorname{End}(\mac H_r)\) decomposes multiplicity-freely into
    \begin{align}
        \operatorname{End}(\mac H_r)
        \simeq
        \bigoplus_{q=0}^{\min(r,N-r)}
        L_{(1^q,\,0^{N-2q},\,(-1)^q)}.
    \end{align}
    Here $L_\lambda$ denotes the irreducible highest-weight representation of
    $\mtr U(N)$ with highest weight $\lambda = (\lambda_1 \ge \lambda_2 \ge \dots \ge \lambda_N)$, and the notation $a^m$ means that
    the entry $a$ is repeated $m$ times.
\end{lemma}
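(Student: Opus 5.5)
The plan is to write $\operatorname{End}(\mac H_r)\simeq \mac H_r\otimes\mac H_r^{*}$ as $\mtr U(N)$-modules, since conjugation $X\mapsto U_r(u)XU_r(u)^\dagger$ is exactly the tensor product of the representation $\mac H_r=\bigwedge^r\mathbb C^N$ with its dual. First I identify highest weights: $\bigwedge^r\mathbb C^N\simeq L_{(1^r,0^{N-r})}$, and its dual is $L_{(0^{N-r},(-1)^r)}$. Twisting by the determinant character shifts every weight by $(1^N)$, and $\det\otimes\bigwedge^r(\mathbb C^N)^*\simeq\bigwedge^{N-r}\mathbb C^N=L_{(1^{N-r},0^r)}$. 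Therefore
\begin{align}
    \det\otimes\operatorname{End}(\mac H_r)
    \simeq
    \bigwedge^r\mathbb C^N\otimes\bigwedge^{N-r}\mathbb C^N
\end{align}
as polynomial representations. This reduces the problem to decomposing a tensor product of two column Schur modules, $S_{(1^r)}\otimes S_{(1^{N-r})}$.

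Next I apply the Littlewood--Richardson rule, or more simply the dual Pieri rule, since $(1^{N-r})$ is a single column. Multiplying $S_{(1^r)}$ by $S_{(1^{N-r})}$ adds $N-r$ boxes to the column $(1^r)$, with no two added boxes in the same row. The resulting shapes have at most two columns, and each occurs with multiplicity one. Write such a shape as $(2^q,1^{N-2q})$: $q$ rows of length two come from adding a box beside existing rows, and the rest are single boxes. The constraint that the shape has at most $N$ rows (otherwise the Schur module vanishes for $\mtr U(N)$) and that $0\le q\le r$ forces $q\le\min(r,N-r)$. Each such $q$ is realized exactly once, because a vertical strip added to a single column is determined by how many boxes go in column two. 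This gives
\begin{align}
    \bigwedge^r\mathbb C^N\otimes\bigwedge^{N-r}\mathbb C^N
    \simeq
    \bigoplus_{q=0}^{\min(r,N-r)} L_{(2^q,1^{N-2q},0^q)}.
\end{align}

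Finally I untwist by $\det^{-1}$, subtracting $(1^N)$ from each highest weight, which yields $L_{(1^q,0^{N-2q},(-1)^q)}$ and hence the claimed multiplicity-free decomposition. As a sanity check I would compare dimensions, $\sum_q \dim L_{(1^q,0^{N-2q},(-1)^q)}=\binom{N}{r}^2$. For $r=1$ this is the familiar $\mathbb C\oplus\mathfrak{sl}_N$: the $q=0$ trivial summand is spanned by $\1_{\mac H_r}$ and the $q=1$ summand is the adjoint.

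The main point needing care is the bookkeeping in the Pieri step, specifically that the row-length bound $\ell(\lambda)\le N$ and the requirement $q\le r$ give exactly $q\le\min(r,N-r)$ with multiplicity one. I would verify this directly from the vertical-strip description rather than quoting the general Littlewood--Richardson rule.
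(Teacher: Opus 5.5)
Your proposal is correct and follows the paper's proof essentially step for step. The paper likewise identifies $\operatorname{End}(\mac H_r)\simeq V^{\wedge r}\otimes V^{\wedge(N-r)}\otimes\det(V)^*$, though it obtains the dual via the wedge pairing $V^{\wedge r}\otimes V^{\wedge(N-r)}\to\det(V)$ rather than your highest-weight computation. It then decomposes $S_{(1^r)}(V)\otimes S_{(1^{N-r})}(V)$ into the two-column shapes $(2^q,1^{N-2q})$ by the Littlewood--Richardson rule, of which your dual Pieri rule is exactly the special case needed, and finally subtracts $(1^N)$.

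One correction to the bookkeeping step you flagged. The row bound $\ell(\mu)\le N$ is never active here, because $\ell(\mu)=N-q$. Together with $q\le r$ alone, it would wrongly admit, e.g., $q=4,5$ for $N=10$, $r=7$. The constraint $q\le N-r$ comes instead from the vertical strip containing only $N-r$ boxes; equivalently, $\mu\supseteq(1^r)$ forces $N-q\ge r$.
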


\begin{proof}
    We give a sketch of proof.
    We prove the decomposition at the level of rational
    $\mtr{GL}_N(\mathbb{C})$-modules and then restrict to $\mtr U(N)$.
    Let $V \coloneqq \mathbb{C}^{N}$.
    The conjugation action of $\mtr U(N)$ on $\operatorname{End}(\mac H_r)$ is
    identified with the natural action on
    $
        \mac H_r\otimes \mac H_r^* 
    $, where $\mac H_r^*$ denotes the dual space of $\mac{H}_r$.
    Thus, as $\mtr{GL}_N(\mathbb{C})$-modules,
    \begin{align}
        \operatorname{End}(\mac H_r)
        &\simeq V^{\wedge r} \otimes (V^{\wedge r})^* \\
        &\simeq V^{\wedge r} \otimes V^{\wedge (N-r)} \otimes \det(V)^* .
    \end{align}
    Here
    $
        \det(V):= V^{\wedge N}
    $
    is the one-dimensional determinant representation, on which
    $g\in \mtr{GL}_N(\mathbb{C})$ acts by
    \begin{align}
        g\cdot (v_1\wedge\cdots\wedge v_N)
        =
        \det(g)\, v_1\wedge\cdots\wedge v_N ,
    \end{align}
    and $\det(V)^*$ denotes its dual representation.  The second isomorphism
    comes from the canonical perfect pairing
    \begin{align}
        V^{\wedge r} \otimes V^{\wedge (N-r)}
        &\longrightarrow \det(V), \\
        x\otimes y
        &\longmapsto x\wedge y,
    \end{align}
    which gives
    $
        (V^{\wedge r})^*
        \simeq
        V^{\wedge (N-r)}\otimes \det(V)^* .
    $

    We now use the standard notation $S_\lambda(V)$ for the Schur module
    associated with a partition $\lambda$
    .  Since
    $
        V^{\wedge r} \simeq S_{(1^r)}(V),
    $
    the Littlewood-Richardson rule~\cite[Sec.~8.3., Corollary 2]{fulton1997young} gives
    \begin{align}
        V^{\wedge r} \otimes V^{\wedge (N-r)}
        &\simeq
        S_{(1^r)}(V)\otimes S_{(1^{N-r})}(V) \\
        &\simeq
        \bigoplus_{q=0}^{\min(r,N-r)}
        S_{(2^q,1^{N-2q})}(V).
    \end{align}
    This decomposition is multiplicity-free.

    Finally, tensoring by $\det(V)^*$ subtracts $(1^N)$ from the highest weight,
    because $\det(V)$ has highest weight $(1^N)$.  The Schur module
    $S_{(2^q,1^{N-2q})}(V)$ has highest weight
    $(2^q,1^{N-2q},0^q)$,
    where trailing zeros are written so that the weight has length $N$.
    Therefore
    \[
        (2^q,1^{N-2q},0^q)-(1^N)
        =
        (1^q,0^{N-2q},(-1)^q).
    \]
    Hence, after restricting to $\mtr U(N)$, we obtain
    \[
        \operatorname{End}(\mac H_r)
        \simeq
        \bigoplus_{q=0}^{\min(r,N-r)}
        L_{(1^q,\,0^{N-2q},\,(-1)^q)} .
    \]
    Here $L_\lambda$ denotes the irreducible highest-weight representation of
    $\mtr U(N)$ with highest weight $\lambda$. For more details, see~\cite[Ch.~IX]{Knapp2023liegroups}.
\end{proof}

We will show that any $\mtr{U}(N)$-equivariant linear map can be spanned by $\mac{E}_{s,t}\circ \mac{C}_{\eta,s}$ as follows,

\begin{lemma}
\label{lem:equivariant-map-basis}
Let \(0\leq t\leq \eta \leq N\), and we assume
$
    m:=\min(\eta,N-\eta,t,N-t).
$
Then
\begin{align}
\operatorname{Hom}_{\mtr U(N)}
\bigl(
    \operatorname{End}(\mac H_\eta),
    \operatorname{End}(\mac H_t)
\bigr)
=
\operatorname{span}_{\mathbb C}
\left\{
    \mac E_{s,t}\circ \mac C_{\eta,s}
    :
    0\leq s\leq m
\right\}.    
\end{align}
Moreover, the maps
$
    \{\mac E_{s,t} \circ \mac C_{\eta,s} \}_{s=0}^m,
$ are linearly independent. In particular, the larger family
$
    \{\mac E_{s,t} \circ \mac C_{\eta,s} \}_{s=0}^t,
$
spans the same equivariant Hom-space.
\end{lemma}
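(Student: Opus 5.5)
The plan is to compute the dimension of the Hom-space with Lemma~\ref{lem:irreducible_dec} and Schur's lemma, show the maps $\mac E_{s,t}\circ\mac C_{\eta,s}$ for $0\le s\le m$ are equivariant and linearly independent, and conclude by counting dimensions.

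\textbf{Dimension count.} By Lemma~\ref{lem:irreducible_dec}, $\operatorname{End}(\mac H_\eta)$ decomposes multiplicity-freely into the irreducibles $L_{(1^q,0^{N-2q},(-1)^q)}$ with $0\le q\le\min(\eta,N-\eta)$. Likewise $\operatorname{End}(\mac H_t)$ contains these irreducibles for $0\le q\le\min(t,N-t)$. By Schur's lemma, $\operatorname{Hom}_{\mtr U(N)}$ between the two modules is the direct sum over common highest weights of one-dimensional spaces. Hence
\begin{align}
\dim\operatorname{Hom}_{\mtr U(N)}\bigl(\operatorname{End}(\mac H_\eta),\operatorname{End}(\mac H_t)\bigr)=m+1 .
\end{align}

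\textbf{Equivariance.} Each $\mac E_{s,t}\circ\mac C_{\eta,s}$ is equivariant. The map $\mac E_{s,t}(X)=X\wedge\1_{\mac H_{t-s}}$ commutes with $U(u)^{\wedge}$-conjugation, because $u^{\wedge(t-s)}\1\,(u^{\wedge(t-s)})^\dagger=\1$ and the Grassmann wedge is natural. The map $\mac C_{\eta,s}$ is the trace-dual of $\mac E_{s,\eta}$ via Eq.~\eqref{eq:duality_contraction_extension}, so it inherits equivariance. One has to check that the transpose in that duality does not spoil this; it is cleaner to use the adjoint $\operatorname{tr}[Y\mac E(X)^\dagger]$ form. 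So all these maps lie in the Hom-space.

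\textbf{Linear independence (main obstacle).} It then suffices to prove that $\{\mac E_{s,t}\circ\mac C_{\eta,s}\}_{s=0}^m$ is linearly independent; the span then equals the whole $(m+1)$-dimensional space. I would test on the diagonal operators $P_{[\ell]}^{\wedge\eta}$ or, more flexibly, on $X=\ketbra{\vec b}{\vec b}$ with $\vec b\in\mac S_{N,\eta}$. Then $\mac C_{\eta,s}(X)=\sum_{\vec c\subset\vec b,|\vec c|=s}\ketbra{\vec c}$. Applying $\mac E_{s,t}$ and evaluating at $\ket{\vec a}$ with $\vec a\in\mac S_{N,t}$ and $j:=|\vec a\cap\vec b|$ gives the eigenvalue $\binom{j}{s}$. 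The overlap $j$ ranges over $\max(0,t+\eta-N)\le j\le\min(t,\eta)$, which is an interval of length at least $m$; this is checked case by case using $m=\min(\eta,N-\eta,t,N-t)$. A relation $\sum_s c_s\binom{j}{s}=0$ holding for $m+1$ consecutive values $j_0,\dots,j_0+m$ forces $c=0$, since the polynomials $\binom{x}{s}$ with $s\le m$ have degree $s$ and a degree-$\le m$ polynomial with $m+1$ roots vanishes.

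\textbf{Larger family.} For $s>m$, each map $\mac E_{s,t}\circ\mac C_{\eta,s}$ is still equivariant, so it lies in the same Hom-space. Since that space is already spanned by the $s\le m$ maps, the family $\{\mac E_{s,t}\circ\mac C_{\eta,s}\}_{s=0}^t$ spans the same space, which gives the final claim. The delicate point is the range-counting for $j$ when $m$ is attained by $N-\eta$ or $N-t$.
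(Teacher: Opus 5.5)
Your proof is correct. It shares the dimension count with the paper (Lemma~\ref{lem:irreducible_dec} plus Schur's lemma, giving $m+1$) and also the final step for the larger family, but it proves linear independence by a different route.

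\textbf{The paper's independence argument.} It uses off-diagonal test operators $X_q=|B_q\cup C_q\rangle\langle A_q\cup C_q|$ with exactly $q$ mismatched modes. These satisfy $\mac C_{\eta,s}(X_q)=0$ for $s<q$, while a matrix-element functional $\ell_q$ gives $\ell_q(\Phi_q(X_q))\neq 0$. Independence then follows by descending induction on this triangular structure.

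\textbf{Your independence argument.} You use a single diagonal input $|\vec b\rangle\langle\vec b|$. On it, every $\Phi_s$ is diagonal with eigenvalue $\binom{|\vec a\cap\vec b|}{s}$ on $|\vec a\rangle$, so independence reduces to polynomial interpolation in the binomial basis. The range count needs no case analysis. Since $t\le\eta$, you have $m=\min(t,N-\eta)$, and $j=|\vec a\cap\vec b|$ takes exactly the $m+1$ consecutive values from $t-\min(t,N-\eta)$ up to $t$.

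\textbf{What each buys.} Your route is more elementary and involves no fermionic sign bookkeeping; the paper needs the signs $\varepsilon_q$ and a separate check that $\ell_q(\Phi_q(X_q))$ is nonzero. It also makes transparent why exactly $m+1$ of the maps can be independent. The paper's argument is more structural: the mismatch number $q$ of its test operators tracks which irreducible component $L_q$ each $\Phi_s$ first reaches.

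\textbf{On equivariance.} Your caution about the transpose is justified. Take the literal definition $\langle\vec p|\mac C_{\eta,s}(Y)|\vec q\rangle=\operatorname{tr}[Y\,a^\dagger_{p_1}\cdots a^\dagger_{p_s}a_{q_s}\cdots a_{q_1}]$. Then $\mac C_{1,1}(Y)=Y^{\mathsf T}$, which is not $\mtr U(N)$-equivariant for $N\ge 2$. So $\mac C_{\eta,s}$ must be read as the Hilbert--Schmidt adjoint of $\mac E_{s,\eta}$, which amounts to swapping $\vec p$ and $\vec q$ in that trace. This is also the reading under which the paper's own identities $\mac C_{\eta,\eta}=\mathrm{id}$ and $\mac C_{\eta,t}(R^{\wedge\eta})=R^{\wedge t}$ hold. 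The paper's proof asserts naturality without addressing this. Your independence argument is unaffected either way, since diagonal operators are invariant under transpose.
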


\begin{proof}
First, the contraction and extension maps are natural with respect to the
\(\mtr U(N)\)-action. Hence each map
$
    \Phi_s:=\mac E_{s,t}\circ \mac C_{\eta,s}
$
is \(\mtr U(N)\)-equivariant.

We compute the dimension of the equivariant Hom-space, namely $\operatorname{Hom}_{\mtr U(N)}
\bigl(
    \operatorname{End}(\mac H_\eta),
    \operatorname{End}(\mac H_t)
\bigr)$. In what follows, we denote $
    L_q:=L_{(1^q,0^{N-2q},(-1)^q)}
$, which is irreducible representation of $\mtr{U}(N)$.
From Lemma.~\ref{lem:irreducible_dec}, 
\begin{align}
    \operatorname{End}(\mac H_\eta)
    \simeq
    \bigoplus_{q=0}^{\min(\eta,N-\eta)} L_q,
    \qquad
    \operatorname{End}(\mac H_t)
    \simeq
    \bigoplus_{q=0}^{\min(t,N-t)} L_q.    
\end{align}
Therefore, by Schur's lemma,
\begin{align}
    \dim
    \operatorname{Hom}_{\mtr U(N)}
    \bigl(
        \operatorname{End}(\mac H_\eta),
        \operatorname{End}(\mac H_t)
    \bigr) 
    &=
    \dim
    \operatorname{Hom}_{\mtr U(N)}
    \biggl(
        \bigoplus_{q=0}^{\min(\eta,N-\eta)} L_q,
        \bigoplus_{q'=0}^{\min(t,N-t)} L_{q'}
    \biggr) \\
    &=
    \sum_{q=0}^{\min(\eta,N-\eta)}
    \sum_{q'=0}^{\min(t,N-t)}
    \dim
    \operatorname{Hom}_{\mtr U(N)}
    \bigl(
        L_q,L_{q'}
    \bigr) \\
    &=
    \sum_{q=0}^{\min(\eta,N-\eta)}
    \sum_{q'=0}^{\min(t,N-t)}
    \delta_{q,q'} \\
    &=
    \#\Bigl(
        \{q\in\mathbb Z \mid 0\le q\le \min(\eta,N-\eta)\}
        \cap
        \{q\in\mathbb Z \mid 0\le q\le \min(t,N-t)\}
    \Bigr)  \\
    &=
    \#\{q\in\mathbb Z \mid 0\le q\le \min(\eta,N-\eta,t,N-t)\} \\
    &=
    \#\{q\in\mathbb Z \mid 0\le q\le m\} \\
    &=
    m+1.
\end{align}
By the dimension count above, the space of $\mtr U(N)$-equivariant maps has dimension $m+1$.  Since $\{\Phi_s\}_{s=0}^m$ is a family of $m+1$ such maps, it suffices to prove their linear independence.  Thus it remains to prove that the maps $\Phi_0,\ldots,\Phi_m$ are linearly independent.

For \(0\leq q\leq m\), choose ordered sets
\begin{align}
    A_q=(1,\ldots,q),
    \qquad
    B_q=(N-q+1,\ldots,N),    
\end{align}
with the convention that they are empty for \(q=0\).  Also set
\begin{align}
        C_q=(q+1,\ldots,\eta),
    \qquad
    F_q=(q+1,\ldots,t).
\end{align}
Since \(q\leq N-\eta\) and \(q\leq N-t\), the relevant sets are disjoint.
Define
$
    X_q
    :=
    \ketbra{B_q\cup C_q}{A_q\cup C_q}
    \in \operatorname{End}(\mac H_\eta),
$
and define the linear functional
$
    \ell_q(Z)
    :=
    \bra{A_q\cup F_q}Z\ket{B_q\cup F_q},
$ for $Z\in\operatorname{End}(\mac H_t).$

From the creation--annihilation definition of the contraction map,
\begin{align}
    \mac C_{\eta,s}(X_q)=0,
    \qquad \text{for }  s<q.    
\end{align}
Indeed, an \(s\)-particle contraction cannot keep all \(q\) mismatched
bra and ket modes.  On the other hand, for \(s=q\),
\begin{align}
    \mac C_{\eta,q}(X_q)
    =
    \varepsilon_q\ketbra{A_q}{B_q},    
\end{align}
where $ 
    \varepsilon_q\in\{\pm1\}.$
Applying the extension map gives
\begin{align}
    \mac E_{q,t}\bigl(\ketbra{A_q}{B_q}\bigr)
    =
    a^\dagger_{1} \ldots a_q^\dagger a_{N} \ldots a_{N-q+1}\big|_{\mac H_t},    
\end{align}
Since \(F_q\) is disjoint
from \(A_q\cup B_q\), this operator maps
$
    \ket{B_q\cup F_q}
$
to
$
    \pm \ket{A_q\cup F_q}.
$
Hence
\begin{align}
    \ell_q(\Phi_q(X_q))\neq 0.
\end{align}
Based on these observations, we have
\begin{align}
        \ell_q(\Phi_s(X_q))=0 \quad (s<q),
    \qquad
    \ell_q(\Phi_q(X_q))\neq 0.
    \label{eq:tri}
\end{align}

Now suppose that $
    \sum_{s=0}^{m}\alpha_s\Phi_s=0.
$
Apply \(\ell_m\) to the value of this identity on \(X_m\).  Since all
terms with \(s<m\) vanish, we get
\begin{align}
    \alpha_m\,\ell_m(\Phi_m(X_m)) = 0.
\end{align}
Because \(\ell_m(\Phi_m(X_m))\neq0\), it follows that \(\alpha_m=0\).

Proceeding downward, assume that
$
    \alpha_{q+1}=\cdots=\alpha_m=0.
$
Apply \(\ell_q\) to the identity evaluated at \(X_q\).  The terms with
\(s<q\) vanish by Eq.~\eqref{eq:tri}, and the terms with \(s>q\)
vanish by the induction hypothesis.  Thus
\begin{align}
    \alpha_q\,\ell_q(\Phi_q(X_q)) =0.
\end{align}
Again \(\ell_q(\Phi_q(X_q))\neq0\), so \(\alpha_q=0\).  By descending
induction, all \(\alpha_s\) vanish.  Therefore
$
    \{ \Phi_s \}_{s=0}^m
$
are linearly independent.

Since the equivariant Hom-space has dimension \(m+1\), these \(m+1\)
maps form a basis.  Therefore
\begin{align}
\operatorname{Hom}_{\mtr U(N)}
\bigl(
    \operatorname{End}(\mac H_\eta),
    \operatorname{End}(\mac H_t)
\bigr)
=
\operatorname{span}_{\mathbb C}
\left\{
    \mac E_{s,t}\circ \mac C_{\eta,s}
    :
    0\leq s\leq m
\right\}.    
\end{align}
Since this family is contained in the larger family indexed by
\(0\leq s\leq t\), the larger family also spans.  This proves the claim.
\end{proof}

\section{Technical lemma}
\label{app:technical_lemma}

To prove lemma \ref{lem:reference-overlap-moment}, we first state the following lemma,

\begin{lemma}
\label{lem:technical-binomial-sum}
Let \(\vec a\in\mac S_{N,t}\), and set \(q=|\vec a\cap[\eta]|\).
Then
\begin{align}
    \frac{1}{\binom{N+1}{\eta}}
    \sum_{\substack{\vec b\in\mac S_{N,\eta}\\ \vec a\subseteq \vec b}}
    \frac{\eta+1}{\eta+1-|[\eta]\cap \vec b|}
    =
    \frac{1}{\binom{N+1}{t}}
    \sum_{s=0}^{q}
    \frac{\binom{\eta-s}{t-s}}{\binom{t}{s}}
    \binom{q}{s}.
    \label{eq:combination_identity}
\end{align}
\end{lemma}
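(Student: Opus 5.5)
The identity is purely combinatorial, so the plan is to reduce both sides to the same closed form, namely
\begin{align}
    \frac{\eta+1}{\eta+1-q}\,\frac{\binom{\eta}{t}}{\binom{N+1}{t}}.
\end{align}
I do this separately for the left- and right-hand sides of Eq.~\eqref{eq:combination_identity}.

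\emph{Left-hand side.} I parametrize the supersets $\vec b\supseteq\vec a$ of size $\eta$. The tuple $\vec a$ already contains $q$ modes of $[\eta]$ and $t-q$ modes outside $[\eta]$. So $\vec b$ is determined by two choices: $i$ further modes from the $\eta-q$ unused modes of $[\eta]$, and $\eta-t-i$ further modes from the $N-\eta-(t-q)$ unused modes outside $[\eta]$. Then $|[\eta]\cap\vec b|=q+i$, and the sum becomes
\begin{align}
    \sum_{i}\binom{\eta-q}{i}\binom{M}{\eta-t-i}\frac{1}{\eta+1-q-i},\qquad M:=N-\eta-t+q .
\end{align}
The key simplification is the absorption identity $\binom{n}{i}\frac{1}{n+1-i}=\frac{1}{n+1}\binom{n+1}{i}$, applied with $n=\eta-q$. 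This turns the summand into $\frac{1}{\eta+1-q}\binom{\eta+1-q}{i}\binom{M}{\eta-t-i}$.

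The newly admissible term $i=\eta+1-q$ does not contribute, because then $\eta-t-i=q-t-1<0$. Vandermonde's convolution therefore gives $\binom{\eta+1-q+M}{\eta-t}=\binom{N+1-t}{\eta-t}$. Finally, the subset-of-a-subset identity $\binom{N+1}{\eta}\binom{\eta}{t}=\binom{N+1}{t}\binom{N+1-t}{\eta-t}$ puts the left-hand side into the closed form displayed above.

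\emph{Right-hand side.} Here I rewrite the summand as
\begin{align}
    \frac{\binom{\eta-s}{t-s}}{\binom{t}{s}}\binom{q}{s}=\binom{\eta}{t}\frac{\binom{q}{s}}{\binom{\eta}{s}},
\end{align}
which follows from $\binom{\eta}{s}\binom{\eta-s}{t-s}=\binom{\eta}{t}\binom{t}{s}$. It then remains to prove
\begin{align}
    \sum_{s=0}^{q}\frac{\binom{q}{s}}{\binom{\eta}{s}}=\frac{\eta+1}{\eta+1-q}\qquad(0\le q\le\eta).
\end{align}
This is the only step that is not a direct rearrangement, so it is where I expect the main difficulty.

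I would first try a Beta-integral proof. Write $1/\binom{\eta}{s}=(\eta+1)\int_0^1 x^s(1-x)^{\eta-s}\,dx$, sum over $s$ with the weights $\binom{q}{s}$, and evaluate $(\eta+1)\int_0^1 (1-x)^{\eta-q}\,dx=\frac{\eta+1}{\eta+1-q}$. If that step gets messy, induction on $q$ using Pascal's rule $\binom{q}{s}=\binom{q-1}{s}+\binom{q-1}{s-1}$ would serve as a backup.

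With both sides equal to the same closed form, Eq.~\eqref{eq:combination_identity} follows.
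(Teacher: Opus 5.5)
Your proposal is correct and matches the paper's proof essentially step by step. It uses the same parametrization of the supersets $\vec b\supseteq\vec a$, the same absorption identity and Vandermonde convolution on the left, and the same rewriting $\binom{\eta-s}{t-s}\binom{q}{s}/\binom{t}{s}=\binom{\eta}{t}\binom{q}{s}/\binom{\eta}{s}$ on the right; your intermediate display drops the overall factor $\eta+1$, but you restore it in the closed form.

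The only difference is how you evaluate $\sum_{s=0}^{q}\binom{q}{s}/\binom{\eta}{s}=\frac{\eta+1}{\eta+1-q}$. The paper writes $\binom{q}{s}/\binom{\eta}{s}=\binom{\eta-s}{q-s}/\binom{\eta}{q}$ and applies the hockey-stick identity $\sum_{s=0}^{q}\binom{\eta-s}{q-s}=\binom{\eta+1}{q}$. Your Beta-integral route works equally well, because the binomial theorem collapses $\sum_{s}\binom{q}{s}x^s(1-x)^{\eta-s}$ to $(1-x)^{\eta-q}$.
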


\begin{proof}
Put \(x=|(\vec b\setminus \vec a)\cap[\eta]|\). Since
\(|\vec a\cap[\eta]|=q\), any \(\vec b\supseteq \vec a\) with this value of \(x\) satisfies $\abs{\vec b\cap[\eta]}= \abs{\vec a \cap [\eta] } +|(\vec b\setminus \vec a)\cap[\eta]|=q+x$.

We now count such configurations \(\vec b\). Since \(\vec b\) has
\(\eta\) elements and already contains the \(t\) elements of \(\vec a\),
we must add \(\eta-t\) elements from \([N]\setminus \vec a\). These
available elements split into the following two disjoint parts:
\begin{align}
    [\eta]\setminus \vec a,
    \qquad
    [N]\setminus([\eta]\cup \vec a).
\end{align}
Their cardinalities are
\begin{align}
    \abs{[\eta]\setminus \vec a}=\eta-q,
    \qquad
    \abs{[N]\setminus([\eta]\cup \vec a)}
    =
    N-\eta-(t-q).    
\end{align}
If \(x\) of the added elements are chosen from \([\eta]\setminus \vec a\),
then the remaining \(\eta-t-x\) elements must be chosen from
\([N]\setminus([\eta]\cup \vec a)\). Hence
\begin{align}
    \sum_{\substack{\vec b\in\mac S_{N,\eta}\\ \vec a\subseteq \vec b}}
    \frac{\eta+1}{\eta+1-\abs{[\eta]\cap \vec b}}
    &=
    \sum_x
    \binom{\eta-q}{x}
    \binom{N-\eta-t+q}{\eta-t-x}
    \frac{\eta+1}{\eta+1-q-x},
\end{align}
where binomial coefficients outside their natural range are understood to
be zero.
We can rewrite this formula as follows, 
\begin{align}
    \sum_{\substack{\vec b\in\mac S_{N,\eta}\\ \vec a\subseteq \vec b}}
    \frac{\eta+1}{\eta+1-|[\eta]\cap \vec b|}
    &=
    \frac{\eta+1}{\eta+1-q}
    \sum_x
    \binom{\eta+1-q}{x}
    \binom{N-\eta-t+q}{\eta-t-x}  \\
    &=
    \frac{\eta+1}{\eta+1-q}
    \binom{N+1-t}{\eta-t}.
\end{align}
The first identity arises from 
\begin{align}
    \binom{\eta-q}{x}
    \frac{\eta+1}{\eta+1-q-x}
    =
    \frac{\eta+1}{\eta+1-q}
    \binom{\eta+1-q}{x},
\end{align}
and the second equality comes from Vandermonde's identity. 
Therefore,
\begin{align}
    \frac{1}{\binom{N+1}{\eta}}
    \sum_{\substack{\vec b\in\mac S_{N,\eta}\\ \vec a\subseteq \vec b}}
    \frac{\eta+1}{\eta+1-|[\eta]\cap \vec b|} 
    &=\frac{\eta+1}{\eta+1-q}
    \binom{N+1-t}{\eta-t} \binom{N+1}{\eta}^{-1}
    \\
    &=
    \frac{1}{\binom{N+1}{t}}
    \binom{\eta}{t}
    \frac{\eta+1}{\eta+1-q}.
    \label{eq:relationship_formula_2}
\end{align}
Here, we use $
    \binom{N+1-t}{\eta-t} \binom{N+1}{\eta}^{-1}
    =
    \binom{\eta}{t}{\binom{N+1}{t}}^{-1}
$.

On the other hand, the RHS of Eq.\eqref{eq:combination_identity} can be expressed as follows.  Since
$
    {\binom{\eta-s}{t-s}}{\binom{t}{s}}^{-1}
    =
    {\binom{\eta}{t}}{\binom{\eta}{s}}^{-1},
$
we have
\begin{align}
    \sum_{s=0}^{q}
    \frac{\binom{\eta-s}{t-s}}{\binom{t}{s}}
    \binom{q}{s}
    &=
    \binom{\eta}{t}
    \sum_{s=0}^{q}
    \frac{\binom{q}{s}}{\binom{\eta}{s}}.
\end{align}
Similarly, $
    {\binom{q}{s}}{\binom{\eta}{s}}^{-1}
    =
    {\binom{\eta-s}{q-s}}{\binom{\eta}{q}}^{-1}
$ holds, 
so we obtain the following identity by the hockey-stick identity,
\begin{align}
    \sum_{s=0}^{q}
    \frac{\binom{q}{s}}{\binom{\eta}{s}}
    &=
    \frac{1}{\binom{\eta}{q}}
    \sum_{s=0}^{q}
    \binom{\eta-s}{q-s}  \\
    &=
    \frac{\binom{\eta+1}{q}}{\binom{\eta}{q}}
    =
    \frac{\eta+1}{\eta+1-q}.
\end{align}
Consequently,
\begin{align}
    \sum_{s=0}^{q}
    \frac{\binom{\eta-s}{t-s}}{\binom{t}{s}}
    \binom{q}{s}
    =
    \binom{\eta}{t}
    \frac{\eta+1}{\eta+1-q}.
    \label{eq:the_relationship_formula}
\end{align}
Substituting Eq.~\eqref{eq:the_relationship_formula} into Eq.~\eqref{eq:relationship_formula_2} gives the claim.
\end{proof}

Based on that, we will provide the proof of lemma.~\ref{lem:reference-overlap-moment},

\noindent
\textit{Proof of Lemma.}~\ref{lem:reference-overlap-moment} \ 
For \(R\in\mac P_\eta\), the principal minors satisfy the marginal identity
\begin{align}
    \det R_{\vec a,\vec a}
    =
    \sum_{\substack{\vec b\in\mac S_{N,\eta}\\ \vec a\subseteq \vec b}}
    \det R_{\vec b,\vec b}.
    \label{eq:minor-marginal-identity}
\end{align}
This follows directly from the contraction identity. Indeed, by the
definition of the contraction map,
\begin{align}
    \bra{\vec a}
    \mac C_{\eta,t}(R^{\wedge\eta})
    \ket{\vec a}
    &= \sum_{\substack{\vec b\in\mac S_{N,\eta} \\ \vec a\subseteq \vec b}}
    \bra{\vec b}R^{\wedge\eta}\ket{\vec b}.
\end{align}
Using \(\mac C_{\eta,t}(R^{\wedge\eta})=R^{\wedge t}\), $\bra{\vec a} R^{\wedge t} \ket{\vec a} = \det R_{\vec a, \vec a}$ and $\bra{\vec b} R^{\wedge \eta} \ket{\vec b} = \det R_{\vec b, \vec b}$, this becomes Eq.~\eqref{eq:minor-marginal-identity}.

Using Eq.~\eqref{eq:minor-marginal-identity}, we obtain
\begin{align}
    \binom{N}{\eta}
    \int_{\mtr{Gr}_{\eta} (\mathbb{C}^N ) }
    \det R_{[\eta],[\eta]}\,
    \det R_{\vec a,\vec a}\,\dd R 
    &=
    \binom{N}{\eta}
    \sum_{\substack{\vec b\in\mac S_{N,\eta}\\ \vec a\subseteq \vec b}}
    \int_{\mtr{Gr}_{\eta} (\mathbb{C}^N ) }
    \det R_{[\eta],[\eta]}\,
    \det R_{\vec b,\vec b}\,\dd R  .
\end{align}
As discussed in Sec.~\ref{sec:Grassmannian_moment}, we can write $R^{\wedge \eta} $ as follows,
\begin{align}
    R^{\wedge \eta}
    =
    U_\eta(u)^\dagger
    \ketbra{[\eta]}{[\eta]}
    U_\eta(u),
    \qquad
    u\sim \mtr U(N),
\end{align}
Therefore,
\begin{align}
    \det R_{[\eta],[\eta]}
    &=
    \bra{[\eta]}
    R^{\wedge \eta}
    \ket{[\eta]} = \bra{[\eta]}
    U_\eta(u)^\dagger
    \ketbra{[\eta]}{[\eta]}
    U_\eta(u)
    \ket{[\eta]},
    \\
    \det R_{\vec b,\vec b}
    &=
    \bra{\vec b}
    R^{\wedge \eta}
    \ket{\vec b} = \bra{\vec b}
    U_\eta(u)^\dagger
    \ketbra{[\eta]}{[\eta]}
    U_\eta(u) \ket{\vec b}.
\end{align}
Hence the Grassmannian integral can be written as the Haar average
\begin{align}
    \int_{\mtr{Gr}_{\eta} (\mathbb{C}^N ) }
    \det R_{[\eta],[\eta]}\,
    \det R_{\vec b,\vec b}\,\dd R 
    &=
    \int_{\mtr U(N)}
    \bra{[\eta]}
    U_\eta(u)^\dagger
    \ketbra{[\eta]}{[\eta]}
    U_\eta(u)
    \ket{[\eta]}
     \\
    &\qquad\qquad \times
    \bra{\vec b}
    U_\eta(u)^\dagger
    \ketbra{[\eta]}{[\eta]}
    U_\eta(u)
    \ket{\vec b}\, \dd{u} .
\end{align}
Equivalently, this is the diagonal matrix element of the two-fold twirling operator $\mac T_{2,\wedge^\eta \mtr U(N)}$ defined by~\cite{low2022classical}, 
\begin{align}
    \mac T_{2,\wedge^\eta \mtr U(N)}
    :=
    \int_{\mtr U(N)}
    \left(
        U_\eta(u)^\dagger
        \ketbra{[\eta]}{[\eta]}
        U_\eta(u)
    \right)^{\otimes 2}
    \dd u .
\end{align}
Indeed,
\begin{align}
    \int_{\mtr{Gr}_{\eta} (\mathbb{C}^N ) }
    \det R_{[\eta],[\eta]}\,
    \det R_{\vec b,\vec b}\,\dd R
    &=
    \bra{[\eta]\otimes \vec b}
    \mac T_{2,\wedge^\eta \mtr U(N)}
    \ket{[\eta]\otimes \vec b}.
    \label{eq:two-fold-twirl-diagonal-entry}
\end{align}
From Theorem 8 in Ref.~\cite{low2022classical},
the evaluation of the diagonal part of this two-fold twirling operator gives
\begin{align}
    \bra{\vec p_1\otimes \vec p_2}
    \mac T_{2,\wedge^\eta \mtr U(N)}
    \ket{\vec p_1\otimes \vec p_2}
    =
    f\!\left(|\vec p_1\cap \vec p_2|\right),
\end{align}
where
\begin{align}
    f(m)
    =
    \frac{1}{
        \binom{N+1}{\eta}
        \binom{N}{\eta}
    }
    \frac{\eta+1}{\eta+1-m}.
\end{align}
Applying this formula with
\(\vec p_1=[\eta]\) and \(\vec p_2=\vec b\), we obtain
\begin{align}
    \int_{\mtr{Gr}_{\eta} (\mathbb{C}^N ) }
    \det R_{[\eta],[\eta]}\,
    \det R_{\vec b,\vec b}\,\dd R 
    &=
    f\!\left(|[\eta]\cap \vec b|\right)
    \\
    &=
    \frac{1}{
        \binom{N+1}{\eta}
        \binom{N}{\eta}
    }
    \frac{\eta+1}{\eta+1-|[\eta]\cap \vec b|}.
\end{align}
Substituting this into the previous expression gives
\begin{align}
    \binom{N}{\eta}
    \int_{\mtr{Gr}_{\eta} (\mathbb{C}^N ) }
    \det R_{[\eta],[\eta]}\,
    \det R_{\vec a,\vec a}\,\dd R 
    &=
    \frac{1}{\binom{N+1}{\eta}}
    \sum_{\substack{\vec b\in\mac S_{N,\eta}\\ \vec a\subseteq \vec b}}
    \frac{\eta+1}{\eta+1-|[\eta]\cap\vec b|}.
\end{align}
Applying Lemma~\ref{lem:technical-binomial-sum}, with
\(q=|\vec a\cap[\eta]|\), we obtain
\begin{align}
    \binom{N}{\eta}
    \int_{\mtr{Gr}_{\eta} (\mathbb{C}^N ) }
    \det R_{[\eta],[\eta]}\,
    \det R_{\vec a,\vec a}\,\dd R 
    =
    \frac{1}{\binom{N+1}{t}}
    \sum_{s=0}^{q}
    \frac{\binom{\eta-s}{t-s}}{\binom{t}{s}}
    \binom{q}{s}.
\end{align}
This proves the claim.
\qed
\\

\begin{lemma}
    \label{lem:trace_occupied_operator}
    Let $\eta , N$ be integers satisfying $\eta \leq N$. We assume that \(R\in \mtr{Gr}_\eta (\mathbb{C}^N )\) and that 
    \(T=\{t_1<\cdots<t_\ell\}\subseteq [N]\) with \(\ell\leq \eta\).
    Then,
    \begin{align}
        \Tr_{\mac H_\eta}
        \left[
            \left(\prod_{j\in T} n_j\right) R^{\wedge \eta}
        \right]
        =
        \det R_{T,T},
    \end{align}
    where $n_j \coloneq a_j^\dagger a_j \eval_{\mac{H}_\eta} $ denotes the number operator on mode $j$ and 
    \(R_{T,T}\) denotes the submatrix of \(R\) whose rows and
    columns are indexed by \(T\).
\end{lemma}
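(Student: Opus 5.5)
The plan is to recognize $\prod_{j\in T} n_j$ as an extension of a diagonal projector and then use the trace duality between extension and contraction, together with Lemma~\ref{lem:contraction-exterior-projection}. Concretely, since the $n_j$ commute and $n_j^2=n_j$, normal ordering gives
\begin{align}
    \prod_{j\in T} n_j
    =
    a_{t_1}^\dagger\cdots a_{t_\ell}^\dagger a_{t_\ell}\cdots a_{t_1}\big|_{\mac H_\eta}
    =
    \mac E_{\ell,\eta}\!\left(\ketbra{\vec t}{\vec t}\right),
\end{align}
with $\vec t=(t_1,\ldots,t_\ell)\in\mac S_{N,\ell}$. This follows from the definition of the Grassmann wedge extension on matrix units. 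To check it, act on an occupation basis vector $\ket{\vec s}$. Both sides vanish unless $T\subseteq\vec s$, and when $T\subseteq\vec s$ the annihilation string removes the modes of $T$ and the creation string restores them with the same sign, so both sides give $\ket{\vec s}$. This is the same argument as in the proof of Lemma~\ref{lem:extension-basic}(1).

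Next I would apply the duality Eq.~\eqref{eq:duality_contraction_extension} with $Y=R^{\wedge\eta}$ and $X=\ketbra{\vec t}{\vec t}$. Because $X$ is symmetric, the transpose is harmless, and we get
\begin{align}
    \Tr_{\mac H_\eta}\!\left[R^{\wedge\eta}\,\mac E_{\ell,\eta}(\ketbra{\vec t}{\vec t})\right]
    =
    \bra{\vec t}\,\mac C_{\eta,\ell}(R^{\wedge\eta})\,\ket{\vec t}.
\end{align}
By cyclicity of the trace, the left side equals the trace in the statement. Since $R\in\mac P_\eta$ has rank $\eta$, Eq.~\eqref{eq:contraction_rank_r_projection} gives $\mac C_{\eta,\ell}(R^{\wedge\eta})=R^{\wedge\ell}$.

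Finally, I would evaluate $\bra{\vec t}R^{\wedge\ell}\ket{\vec t}$. By multilinearity of the wedge product, $\bra{\vec t}R^{\wedge\ell}\ket{\vec t}=\det\big(\bra{t_i}R\ket{t_j}\big)_{i,j}=\det R_{T,T}$, using the standard Gram/Leibniz expansion of the inner product of wedge products.

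The only delicate point is the sign and ordering convention in the first step, namely that $n_{t_1}\cdots n_{t_\ell}$ coincides exactly with the reverse-ordered monomial $a^\dagger_{t_1}\cdots a^\dagger_{t_\ell}a_{t_\ell}\cdots a_{t_1}$. I would verify this directly on basis vectors as above rather than through anticommutation bookkeeping. The edge case $\ell=0$ is trivial, since both sides equal $\Tr[R^{\wedge\eta}]=1$.
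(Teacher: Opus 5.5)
Your proposal is correct and follows the paper's proof essentially step for step. You write $\prod_{j\in T} n_j=\mathcal E_{\ell,\eta}(\ketbra{\vec t}{\vec t})$, apply the trace duality of Eq.~\eqref{eq:duality_contraction_extension} (the transpose is harmless for a diagonal matrix unit), invoke $\mathcal C_{\eta,\ell}(R^{\wedge\eta})=R^{\wedge\ell}$ from Lemma~\ref{lem:contraction-exterior-projection}, and read off $\bra{\vec t}R^{\wedge\ell}\ket{\vec t}=\det R_{T,T}$. Your explicit basis-vector check that the reverse-ordered monomial equals the product of number operators, and your treatment of the case $\ell=0$, are details the paper leaves implicit.
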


\begin{proof}
    For simplicity, we denote $
        \ket{T}
        \coloneqq
        \ket{t_1}\wedge\cdots\wedge\ket{t_\ell}
        \in \mac H_\ell .$
    By the definition of the extension map \(\mac E_{\ell,\eta}\), we have
    \begin{align}
        \mac E_{\ell,\eta}\!\left(\ketbra{T}{T}\right)
        =
        a_{t_1}^\dagger\cdots a_{t_\ell}^\dagger
        a_{t_\ell}\cdots a_{t_1}
        \big|_{\mac H_\eta}.
    \end{align}
    Since the modes in \(T\) are distinct, this operator is exactly the
    projector onto the subspace in which all modes in \(T\) are occupied.
    Therefore, we can express
    $
        \mac E_{\ell,\eta}\!\left(\ketbra{T}{T}\right)
        =
        \prod_{a\in T}n_a .
    $
    From this fact, by using the trace-duality between \(\mac E_{\ell,\eta}\) and
    \(\mac C_{\eta,\ell}\), we obtain
    \begin{align}
        \Tr_{\mac H_\eta}
        \left[
            \left(\prod_{a\in T}n_a\right)R^{\wedge\eta}
        \right]
        &=
        \Tr_{\mac H_\eta}
        \left[
            R^{\wedge\eta}
            \mac E_{\ell,\eta}\!\left(\ketbra{T}{T}\right)
        \right] \\
        &=
        \Tr_{\mac H_\ell}
        \left[
            \ketbra{T}{T}^{\mathsf T}
            \mac C_{\eta,\ell}\!\left(R^{\wedge\eta}\right)
        \right].
    \end{align}
    Since \(\ketbra{T}{T}^{\mathsf T}=\ketbra{T}{T}\), this becomes
    \begin{align}
        \Tr_{\mac H_\eta}
        \left[
            \left(\prod_{a\in T}n_a\right)R^{\wedge\eta}
        \right]
        =
        \bra{T}
            \mac C_{\eta,\ell}\!\left(R^{\wedge\eta}\right)
        \ket{T}.
    \end{align}

It remains to identify
\(\mac C_{\eta,\ell}(R^{\wedge\eta})\).  By
Lemma~\ref{lem:contraction-exterior-projection}, we have
$
    \mac C_{\eta,\ell}(R^{\wedge\eta})
    =
    R^{\wedge\ell}.
$
We now evaluate the matrix elements of \(R^{\wedge\ell}\). For \(\vec b=(b_1<\cdots<b_\ell)\), we have
\begin{align}
    R^{\wedge \ell}\ket{\vec b}
    &=
    R\ket{b_1}\wedge\cdots\wedge R\ket{b_\ell} \\
    &=
    \left(\sum_{i_1=1}^N R_{i_1 b_1}\ket{i_1}\right)
    \wedge\cdots\wedge
    \left(\sum_{i_\ell=1}^N R_{i_\ell b_\ell}\ket{i_\ell}\right) \\
    &=
    \sum_{i_1,\ldots,i_\ell=1}^N
    \left(\prod_{\beta=1}^{\ell} R_{i_\beta b_\beta}\right)
    \ket{i_1}\wedge\cdots\wedge \ket{i_\ell}.
\end{align}
By antisymmetry, the coefficient of
\(\ket{\vec a}=\ket{a_1}\wedge\cdots\wedge\ket{a_\ell}\) is obtained by
summing over all permutations of \((a_1,\ldots,a_\ell)\). Hence
\begin{align}
    R^{\wedge \ell}\ket{\vec b}
    &=
    \sum_{\vec a\in \mac S_{N,\ell}}
    \left[
        \sum_{\sigma\in \mathfrak S_\ell}
        \operatorname{sgn}(\sigma)
        \prod_{\beta=1}^{\ell}
        R_{a_{\sigma(\beta)} b_\beta}
    \right]
    \ket{\vec a} \\
    &=
    \sum_{\vec a\in \mac S_{N,\ell}}
    \det R_{\vec a,\vec b}\,
    \ket{\vec a}.
\end{align}
Therefore, as an operator on \(\mac H_\ell\),
\begin{align}
    R^{\wedge \ell}
    &=
    \sum_{\vec a,\vec b\in \mac S_{N,\ell}}
    \det R_{\vec a,\vec b}\,
    \ketbra{\vec a}{\vec b}.
\end{align}
In particular, taking \(\vec a=\vec b=T\), we obtain
\begin{align}
    \bra{T}R^{\wedge\ell}\ket{T}
    =
    \det R_{T,T}.
\end{align}
This proves the claim.

\end{proof}


\begin{lemma}
\label{lem:entry-locality}
Let $\vec p, \vec q \in \mac{S}_{N,k}$ and denote
$
\vec s:=\vec p\cup \vec q 
$.
For $0\le t\le k$ and $R \in \mtr{End}(\mathbb{C}^N  ) $, the following identity holds,
\begin{align}
    \bra{\vec p}
\mac{E}_{t, k}(R^{\wedge t})
\ket{\vec q}
=
\sum_{\substack{
\vec  r \subset \vec p\cap \vec q\\
|\vec r|=k-t
}}
\varepsilon_{\vec r}(\vec p,\vec q)
\det R_{\vec p\setminus \vec r,\ \vec q\setminus \vec r},
\end{align}
where $\varepsilon_{\vec r}(\vec p,\vec q)\in\{\pm1\}$ is the fermionic
sign determined by the ordering convention. In particular,
$
\langle \vec p|
\mac{E}_{t,k}(R^{\wedge t})
\ket{\vec q}
$
depends only on the matrix elements $R_{ij}$ with
$i,j\in \vec{s}$.
\end{lemma}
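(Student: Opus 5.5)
The plan is to compute the matrix element directly from the definition of the extension map on matrix units, together with the explicit form of $R^{\wedge t}$. Lemma~\ref{lem:trace_occupied_operator} already establishes
\begin{align}
    R^{\wedge t}=\sum_{\vec a,\vec b\in\mac S_{N,t}}\det R_{\vec a,\vec b}\,\ketbra{\vec a}{\vec b}
\end{align}
for any $R\in\mtr{End}(\mathbb C^N)$; that derivation never uses the projection property. By linearity of $\mac E_{t,k}$,
\begin{align}
    \bra{\vec p}\mac E_{t,k}(R^{\wedge t})\ket{\vec q}=\sum_{\vec a,\vec b\in\mac S_{N,t}}\det R_{\vec a,\vec b}\,\bra{\vec p}a^\dagger_{a_1}\cdots a^\dagger_{a_t}a_{b_t}\cdots a_{b_1}\ket{\vec q}.
\end{align}
This reduces the lemma to evaluating one fermionic monomial between occupation basis states of $\mac H_k$.

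Next I will characterize when each term is nonzero. The annihilation string acting on $\ket{\vec q}$ is nonzero only if $\vec b\subseteq\vec q$. In that case it produces $\pm\ket{\vec q\setminus\vec b}$. The creation string then requires $\vec a\cap(\vec q\setminus\vec b)=\emptyset$, and it yields $\pm\ket{(\vec q\setminus\vec b)\cup\vec a}$. For this to overlap with $\bra{\vec p}$ we need $(\vec q\setminus\vec b)\cup\vec a=\vec p$.

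Set $\vec r:=\vec q\setminus\vec b$, so $|\vec r|=k-t$. The conditions become $\vec r\subseteq\vec p\cap\vec q$, $\vec b=\vec q\setminus\vec r$, and $\vec a=\vec p\setminus\vec r$. Conversely, each $\vec r\subset\vec p\cap\vec q$ with $|\vec r|=k-t$ determines exactly one such pair $(\vec a,\vec b)$. The surviving amplitude is a product of Jordan--Wigner reordering signs, which I collect into $\varepsilon_{\vec r}(\vec p,\vec q)\in\{\pm1\}$. Its value depends only on the relative order of the index sets, not on $R$. This gives the stated sum over $\vec r$.

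The locality statement follows at once. Each minor $\det R_{\vec p\setminus\vec r,\vec q\setminus\vec r}$ involves only entries $R_{ij}$ with $i\in\vec p$ and $j\in\vec q$, hence with $i,j\in\vec s$.

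The only delicate point is the sign bookkeeping. Since the lemma only asserts that $\varepsilon_{\vec r}\in\{\pm1\}$, I will avoid computing it. It suffices to note that moving each operator past the occupied modes produces a factor $(-1)^{\#}$, and that the final state is a single basis vector with unit coefficient.
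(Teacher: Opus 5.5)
Your proposal is correct and follows essentially the same route as the paper: expand $R^{\wedge t}$ into minors $\det R_{\vec a,\vec b}$, push each matrix unit through $\mac E_{t,k}$ to a fermionic monomial, and observe that only the terms with $\vec a=\vec p\setminus\vec r$, $\vec b=\vec q\setminus\vec r$ for some $\vec r\subset\vec p\cap\vec q$ with $|\vec r|=k-t$ survive, with the ordering signs absorbed into $\varepsilon_{\vec r}(\vec p,\vec q)$. Your explicit remarks that the minor expansion never uses $R^2=R$ and that each admissible $\vec r$ determines exactly one pair $(\vec a,\vec b)$ make precise two points the paper leaves implicit.
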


\begin{proof}
Following the proof of lemma.~\ref{lem:trace_occupied_operator}, we will expand $R^{\wedge t}$ in the occupation basis of $\mac{H}_t$,
\begin{align}
R^{\wedge t}
=
\sum_{\vec a,\vec b\in \mac{S}_{N,t}}
\det R_{\vec a,\vec b}
\ketbra{\vec a}{\vec b}.
\end{align}

By applying $\mac{E}_{t, k}$, the term
$|\vec a\rangle\langle \vec b|$ becomes the $t$-body monomial
$a^\dagger_{\vec a}a_{\vec b}$ acting on $\mathcal H_k$.
For the matrix element
$\langle \vec p|a^\dagger_{\vec a}a_{\vec b}|\vec q\rangle$
to be nonzero, the annihilation part must remove $t$ occupied
modes from $\vec q$, and the creation part must produce the
occupied modes of $\vec p$. Hence, the modes that are not acted
on must be the same on the bra $\vec p$ and ket $\vec q$ sides. Therefore, when we denote
$
\vec r\subset \vec p\cap\vec q, |\vec r|=k-t
$,
the allowed modes are expressed as 
\begin{align}
    \vec a=\vec p\setminus \vec r,
\qquad
\vec b=\vec q\setminus \vec r.
\end{align}
Thus
\begin{align}
\left\langle \vec p\middle|
\mac{E}_{t, k}(R^{\wedge t})
\middle|\vec q\right\rangle
=
\sum_{\substack{
\vec r\subset \vec p\cap \vec q\\
|\vec r|=k-t
}}
\varepsilon(\vec p,\vec q;\vec r)\,
\det R_{\vec p\setminus \vec r,\vec q\setminus \vec r},
\end{align}
where $\varepsilon(\vec p,\vec q;\vec r)$ is the fermionic
ordering sign.
\end{proof}

\begin{lemma}[Bound for the sector weight]
    \label{lem:bound_for_w_tR}
    Let \(N,\eta\) be integers satisfying \(1\leq \eta<N\), and let \(T\subset L\subset [N]\).  Set
\(\ell:=|L|\), \(t:=|T|\), and \(h:=\ell-t\), and assume
\(0\leq \eta-t\leq N-\ell\). Under this condition, we will define
\begin{align}
    w_T(R)
    :=
    \binom{N}{\eta}
    \binom{N-\ell}{\eta-t}^{-1}
    \sum_{\substack{A\subset L^c\\ |A|=\eta-t}}
    \det R_{T\cup A,T\cup A},
\end{align}
where \(L^c=[N]\setminus L\).
    Then
    \begin{align}
        \int_{\mtr{Gr}_\eta(\mathbb C^N)}
        (w_T(R))^2 \dd R
        =
        \frac{\eta+1}{\eta-t+1}
        \frac{N-\eta+1}{N-\eta-h+1}
        \frac{N-\ell+1}{N+1}.
    \end{align}
    In particular,
    \begin{align}
        \qty(\int_{\mtr{Gr}_\eta(\mathbb C^N)}
        (w_T(R))^2 \dd R)^{1/2}
        \le
        \ell+1 .
    \end{align}
\end{lemma}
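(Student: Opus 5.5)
The plan is to expand the square of $w_T$ into a double sum of products of principal $\eta\times\eta$ minors. Each product is a diagonal entry of the two-fold twirl, so the diagonal formula from Theorem~8 of Ref.~\cite{low2022classical} (already used in the proof of Lemma~\ref{lem:reference-overlap-moment}) applies. What remains is a purely combinatorial Vandermonde-type summation, handled exactly as in Lemma~\ref{lem:technical-binomial-sum}.

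Concretely, for $A\subset L^c$ with $|A|=\eta-t$, write $\vec b_A:=T\cup A\in\mac S_{N,\eta}$. Then $\det R_{\vec b_A,\vec b_A}=\bra{\vec b_A}R^{\wedge\eta}\ket{\vec b_A}$, and squaring gives
\begin{align}
    \int (w_T)^2\dd R
    =
    \binom{N}{\eta}^{2}\binom{N-\ell}{\eta-t}^{-2}
    \sum_{A,A'}
    \bra{\vec b_A\otimes\vec b_{A'}}\mac T_{2,\wedge^\eta\mtr U(N)}\ket{\vec b_A\otimes\vec b_{A'}}.
\end{align}
Each summand equals $f(|\vec b_A\cap\vec b_{A'}|)$ with $f(m)=\frac{\eta+1}{\eta+1-m}\big/\bigl[\binom{N+1}{\eta}\binom{N}{\eta}\bigr]$. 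Since $A,A'\subset L^c$ are disjoint from $T$, we have $|\vec b_A\cap\vec b_{A'}|=t+j$ with $j:=|A\cap A'|$.

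Next I count the pairs. For fixed $A$ and fixed $j$ there are $\binom{\eta-t}{j}\binom{N-\ell-\eta+t}{\eta-t-j}$ choices of $A'$. Summing over $A$ therefore gives a factor $\binom{N-\ell}{\eta-t}$, which cancels one of the two inverse factors. Using
\begin{align}
    \binom{\eta-t}{j}\frac{\eta+1}{\eta+1-t-j}=\frac{\eta+1}{\eta+1-t}\binom{\eta+1-t}{j}
\end{align}
and then Vandermonde's identity, the sum over $j$ collapses to $\frac{\eta+1}{\eta+1-t}\binom{N-\ell+1}{\eta-t}$. The integral thus becomes
\begin{align}
    \frac{\binom{N}{\eta}}{\binom{N+1}{\eta}}\cdot\frac{\eta+1}{\eta+1-t}\cdot\frac{\binom{N-\ell+1}{\eta-t}}{\binom{N-\ell}{\eta-t}}.
\end{align}
The two binomial ratios simplify to $\frac{N+1-\eta}{N+1}$ and $\frac{N-\ell+1}{N-\eta-h+1}$ respectively, because $N-\ell+1-(\eta-t)=N-\eta-h+1$. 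This yields the stated closed form. The hypothesis $\eta-t\le N-\ell$ guarantees that all denominators are positive.

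For the bound, I estimate the three factors separately: $\frac{\eta+1}{\eta-t+1}\le t+1$, $\frac{N-\eta+1}{N-\eta-h+1}\le h+1$, and $\frac{N-\ell+1}{N+1}\le1$. The first two follow from $\frac{x+1}{x+1-s}\le s+1$, valid for $x\ge s$. By AM--GM, $(t+1)(h+1)\le(\ell/2+1)^2\le(\ell+1)^2$, and taking square roots gives the claim. The only delicate point I expect is the correct importation of Low's diagonal twirl formula. In particular, I need to check that it applies to arbitrary pairs of occupation patterns $\vec p_1,\vec p_2$, depending only on $|\vec p_1\cap\vec p_2|$; the remainder is routine binomial algebra.
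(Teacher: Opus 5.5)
Your proposal is correct and follows the paper's proof essentially step for step: expand $(w_T)^2$, evaluate each $\int \det R_{T\cup A,T\cup A}\det R_{T\cup A',T\cup A'}\,\dd R$ by the diagonal two-fold twirl value $f(t+|A\cap A'|)$ (which holds for arbitrary pairs of occupation patterns and depends only on their overlap), group pairs by $j=|A\cap A'|$, collapse the sum via $\binom{\eta-t}{j}\frac{\eta+1}{\eta+1-t-j}=\frac{\eta+1}{\eta+1-t}\binom{\eta+1-t}{j}$ and Vandermonde, and then bound the closed form factor by factor. Your AM--GM step gives the slightly sharper bound $\ell/2+1$, which is the form the paper actually invokes later when it uses this lemma to bound the sector coefficients $A_T$.
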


\begin{proof}
    By the definition of \(w_T\), we have
    \begin{align}
        \int_{\mtr{Gr}_\eta(\mathbb C^N)}
        (w_T(R))^2 \dd R
        &=
        \binom{N}{\eta}^2
        \binom{N-\ell}{\eta-t}^{-2}
        \sum_{\substack{A,A'\subset L^c\\ |A|=|A'|=\eta-t}}
        \int_{\mtr{Gr}_\eta(\mathbb C^N)}
        \det R_{T\cup A,T\cup A}
        \det R_{T\cup A',T\cup A'}
        \dd R .
    \end{align}
    We will consider applying Lemma~\ref{lem:reference-overlap-moment} to the LHS of this identity. Lemma~\ref{lem:reference-overlap-moment} provides the following identity,
    \begin{align}
        \sum_{\substack{A,A'\subset L^c\\ |A|=|A'|=\eta-t}}
        \int_{\mtr{Gr}_\eta(\mathbb C^N)}
        \det R_{T\cup A,T\cup A}
        \det R_{T\cup A',T\cup A'}
        \dd R 
        =
        \frac{1}{\binom{N+1}{\eta}\binom{N}{\eta}}
        \frac{\eta+1}{\eta+1-|(T \cup A)\cap (T \cup A' )|}.
    \end{align}
    Since $\abs{(T \cup A)\cap (T \cup A' )} = 
    \abs{T \cup (A \cap A') } = t + \abs{A \cap A'} $, we obtain
    \begin{align}
        \int_{\mtr{Gr}_\eta(\mathbb C^N)}
        (w_T(R))^2 \dd R
        &=
        \binom{N}{\eta}^2
        \binom{N-\ell}{\eta-t}^{-2}
        \frac{\eta+1}{\binom{N+1}{\eta}\binom{N}{\eta}}       \\
        &\quad\times
        \sum_{\substack{A,A'\subset L^c\\ |A|=|A'|=\eta-t}}
        \frac{1}{\eta-t+1-|A\cap A'|}.
    \end{align}
    
    We now evaluate the remaining double sum by grouping the pairs
    \((A,A')\) according to the overlap size
    $
        s=\abs{A\cap A'}
    $, where $s $ runs $0 \leq s \leq \eta -t$.
    Fix \(A\subseteq L^c\) with \(\abs{A}=\eta-t\). To construct a set
    \(A'\subseteq L^c\) with \(\abs{A'}=\eta-t\) and
    \(\abs{A\cap A'}=s\), we first choose the \(s\) common elements from
    \(A\), and then choose the remaining \(\eta-t-s\) elements of \(A'\)
    from \(L^c\setminus A\). Since
    $
        \abs{L^c\setminus A}
        =
        (N-\ell)-(\eta-t)
        =
        N-\ell-\eta+t \geq 0,
    $
    the number of such choices of \(A'\) is
    \begin{align}
        \binom{\eta-t}{s}
        \binom{N-\ell-\eta+t}{\eta-t-s}.
    \end{align}
    Since there are \(\binom{N-\ell}{\eta-t}\) possible choices of \(A\),
    we obtain
    \begin{align}
        \sum_{\substack{A,A'\subset L^c\\ |A|=|A'|=\eta-t}}
        \frac{1}{\eta-t+1-|A\cap A'|}
        &=
        \binom{N-\ell}{\eta-t}
        \sum_{s=0}^{\eta-t}
        \binom{\eta-t}{s}
        \binom{N-\ell-\eta+t}{\eta-t-s}
        \frac{1}{\eta-t+1-s}                                  \\
        &=
        \binom{N-\ell}{\eta-t}^{2}
        \frac{N-\ell+1}
        {(\eta-t+1)(N-\ell-\eta+t+1)} .
    \end{align}
    Therefore,
    \begin{align}
        \int_{\mtr{Gr}_\eta(\mathbb C^N)}
        (w_T(R))^2 \dd R
        &=
        \frac{\binom{N}{\eta}}{\binom{N+1}{\eta}}
        (\eta+1)
        \frac{N-\ell+1}
        {(\eta-t+1)(N-\ell-\eta+t+1)}                         \\
        &=
        \frac{\eta+1}{\eta-t+1}
        \frac{N-\eta+1}{N-\eta-h+1}
        \frac{N-\ell+1}{N+1}.
    \end{align}
    Since
$
        \frac{N-\ell+1}{N+1}\le 1,
$
    it follows that
    \begin{align}
        \int_{\mtr{Gr}_\eta(\mathbb C^N)}
        (w_T(R))^2 \dd R
        &\le
        \frac{\eta+1}{\eta-t+1}
        \frac{N-\eta+1}{N-\eta-h+1}                              \\
        &\le
        (t+1)(h+1)
        \le
        (\ell+1)^2 .
    \end{align}
    The last inequality arises from $T \subset L$ and $h = \ell - t \leq \ell $.
    Taking the square root gives
    \begin{align}
        \qty(\int_{\mtr{Gr}_\eta(\mathbb C^N)}
        (w_T(R))^2 \dd R)^{1/2}
        \le
        \ell+1 .
    \end{align}
\end{proof}

\begin{lemma}
    Let $\eta,N$ satisfying $\eta < N$. We assume $X$ is a random variable satisfying \(X\sim \mtr{Beta}(\eta,N-\eta)\) and set \(\theta=\eta/N\). Then, for any positive integer $r$, the following inequality holds,
\begin{align}
    \mathbb{E}[ (X - \theta)^{2r} ] \leq (2r)! \frac{\xi^{r}}{N^{2r}}, 
\end{align}
where $\xi = \min(\eta, N-\eta)$. 
\end{lemma}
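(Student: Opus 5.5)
The plan is to realize the Beta variable through independent Gamma variables, remove the random denominator using the Lukacs independence property, and bound the resulting centered moment with cumulants.

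\textbf{Step 1 (Gamma representation).} Let \(G\sim\operatorname{Gamma}(\eta,1)\) and \(H\sim\operatorname{Gamma}(N-\eta,1)\) be independent, and set \(S:=G+H\sim\operatorname{Gamma}(N,1)\). The standard Gamma--Beta (Lukacs) property gives two facts:
\begin{align}
    X\overset{d}{=}G/S,
    \qquad
    X \text{ is independent of } S .
\end{align}
Since \(\theta=\eta/N\), we have \(X-\theta=Z/S\) with
\begin{align}
    Z:=G-\theta S=(1-\theta)(G-\eta)-\theta\bigl(H-(N-\eta)\bigr).
\end{align}
This is exactly centered, because \((1-\theta)\eta=\theta(N-\eta)\). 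Independence of \(X\) and \(S\) yields
\begin{align}
    \mathbb E[Z^{2r}]=\mathbb E[S^{2r}]\,\mathbb E[(X-\theta)^{2r}] .
\end{align}
We also have \(\mathbb E[S^{2r}]=N^{\overline{2r}}\ge N^{2r}\). Hence
\begin{align}
    \mathbb E[(X-\theta)^{2r}]\le \frac{\mathbb E[Z^{2r}]}{N^{2r}} ,
\end{align}
and it remains to show \(\mathbb E[Z^{2r}]\le (2r)!\,\xi^r\).

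\textbf{Step 2 (cumulants of \(Z\)).} A \(\operatorname{Gamma}(\alpha,1)\) variable has \(j\)-th cumulant \((j-1)!\,\alpha\). Cumulants are additive over independent summands and homogeneous of degree \(j\) under scaling, so \(\kappa_1(Z)=0\) and, for \(j\ge2\),
\begin{align}
    \kappa_j(Z)=(j-1)!\bigl[(1-\theta)^j\eta+(-\theta)^j(N-\eta)\bigr].
\end{align}
Since \(0\le\theta\le1\), the powers satisfy \(|(1-\theta)^j|\le(1-\theta)^2\) and \(|\theta^j|\le\theta^2\) for \(j\ge2\). Therefore
\begin{align}
    |\kappa_j(Z)|\le (j-1)!\,\sigma^2,
    \qquad
    \sigma^2:=(1-\theta)^2\eta+\theta^2(N-\eta)=\frac{\eta(N-\eta)}{N}\le\xi .
\end{align}

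\textbf{Step 3 (moment--cumulant formula).} Write
\begin{align}
    \mathbb E[Z^{2r}]=\sum_{\pi}\prod_{B\in\pi}\kappa_{|B|}(Z),
\end{align}
where \(\pi\) runs over set partitions of \([2r]\). Partitions with a singleton block vanish because \(\kappa_1=0\). Every surviving partition therefore has \(b\le r\) blocks, and its term is at most \(\prod_B(|B|-1)!\cdot\sigma^{2b}\) in absolute value. Because \(1\le\eta<N\), we have \(\xi\ge1\), so \(\sigma^{2b}\le\xi^b\le\xi^r\). Finally, \(\sum_\pi\prod_B(|B|-1)!\) over all partitions of \([2r]\) counts permutations of \([2r]\) by their cycle decomposition, and so equals \((2r)!\). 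This gives \(\mathbb E[Z^{2r}]\le(2r)!\,\xi^r\), which together with Step 1 proves the claim.

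\textbf{Expected obstacle.} The delicate points are the exact independence of \(X\) and \(S\), which is what lets the random denominator be removed without loss, and the case \(\sigma^2<1\). In the latter case the number of blocks matters, and the argument relies on \(\xi\ge1\) to absorb \(\sigma^{2b}\) into \(\xi^r\). If the independence route were unavailable, the fallback would be to expand \(\mathbb E[(X-\theta)^{2r}]\) binomially and use the rising-factorial moment formula for the Beta distribution, but the cancellations in that route are much harder to control.
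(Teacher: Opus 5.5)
Your argument is correct, and it takes a genuinely different route from the paper.

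The paper never leaves the Beta variable. Integrating by parts against the Beta density gives the Stein identity $\mathbb{E}[(X-\theta)^{n+1}]=\frac{n}{N}\mathbb{E}[X(1-X)(X-\theta)^{n-1}]$. Expanding $X(1-X)=\theta(1-\theta)+(1-2\theta)(X-\theta)-(X-\theta)^2$ turns this into the three-term recurrence $m_{n+1}=\frac{n}{N+n}\bigl[\theta(1-\theta)m_{n-1}+(1-2\theta)m_n\bigr]$ for the centered moments. An induction then yields $|m_n|\le s_n\,\xi^{\lfloor n/2\rfloor}/N^n$, where $s_0=1$, $s_1=0$, $s_{n+1}=n(s_n+s_{n-1})$, and $s_n\le n!$.

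You instead pass to independent Gamma variables and use the Lukacs independence of $G/S$ and $S$. This gives the exact factorization $\mathbb{E}[(X-\theta)^{2r}]=\mathbb{E}[Z^{2r}]/N^{\overline{2r}}$. You then bound the moment of the centered sum $Z$ through its explicit cumulants $(j-1)!\,[(1-\theta)^j\eta+(-\theta)^j(N-\eta)]$ and the moment--cumulant formula.

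The two proofs land on the same combinatorial constant. The paper's $s_n$ are exactly the derangement numbers, which is what your sum of $\prod_B(|B|-1)!$ over singleton-free partitions counts. Both proofs also use $\xi\ge 1$ at the analogous step, to absorb lower powers of $\xi$.

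The trade-offs differ:
\begin{itemize}
\item The paper's route is fully self-contained (one integration by parts plus an induction) and treats odd and even moments uniformly.
\item Yours imports two standard external facts (Lukacs independence and the moment--cumulant formula), but it is more structural. It explains where the factorial comes from and makes the variance proxy $\eta(N-\eta)/N\le\xi$ transparent.
\item Yours also shows that only pair partitions survive at top order in $\sigma^2$, giving a $(2r-1)!!\,\sigma^{2r}$ leading term. That would be the natural starting point if one wanted something sharper than $(2r)!$.
\end{itemize}
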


\begin{proof}
We will prove this lemma by the following Stein identity. For every
\(n\geq 1\),
\begin{align}
    \mathbb{E}\qty[(X-\theta)^{n+1}]
    =
    \frac{n}{N}
    \mathbb{E}\qty[
        X(1-X)(X-\theta)^{n-1}
    ] .
    \label{eq:Stein_inequality}
\end{align}
Defining the \(n\)-th centered moment by
\(m_n \coloneq \mathbb{E}\qty[(X-\theta)^n]\),
we can rewrite Eq.~\eqref{eq:Stein_inequality} as
\begin{align}
    m_{n+1}
    &=
    \frac{n}{N}
    \mathbb{E}\qty[
        X(1-X)(X-\theta)^{n-1}
    ] \\
    &=
    \frac{n}{N}
    \mathbb{E}\qty[
        \qty{
        \theta(1-\theta)
        +(1-2\theta)(X-\theta)
        -(X-\theta)^2
        }
        (X-\theta)^{n-1}
    ] \\
    &=
    \frac{n}{N}
    \left\{
        \theta(1-\theta)
        \mathbb{E}\qty[(X-\theta)^{n-1}]
        +(1-2\theta)
        \mathbb{E}\qty[(X-\theta)^n]
        -
        \mathbb{E}\qty[(X-\theta)^{n+1}]
    \right\} \\
    &=
    \frac{n}{N}
    \qty[
        \theta(1-\theta)m_{n-1}
        +(1-2\theta)m_n
        -m_{n+1}
    ] .
\end{align}
Therefore,
\begin{align}
    m_{n+1}
    =
    \frac{n}{N+n}
    \qty[
        \theta(1-\theta)m_{n-1}
        +(1-2\theta)m_n
    ] .
    \label{eq:recurrence_relation_1}
\end{align}
Now set \(\xi=\min(\eta,N-\eta)\). Since
\(\theta=\eta/N\), we obtain
\begin{align}
    \abs{m_{n+1}}
    &\leq
    \frac{n}{N+n}
    \qty[
        \frac{\eta(N-\eta)}{N^2}\abs{m_{n-1}}
        +\abs{m_n}
    ] \\
    &\leq
    n\qty[
        \frac{\xi}{N^2}\abs{m_{n-1}}
        +\frac{1}{N}\abs{m_n}
    ] .
    \label{eq:recurrence_inequality_2}
\end{align}
Here, we used
$
    \frac{\eta(N-\eta)}{N+n}\leq \xi,
    \frac{1}{N+n}\leq \frac{1}{N}.
$

Here, we will introduce the sequence \(s_n\) by
\begin{align}
    s_0=1,\quad s_1=0,\qquad
    s_{n+1}=n(s_n+s_{n-1}) \quad (n\geq 1).
\end{align}
We claim that
\begin{align}
    \abs{m_n}
    \leq
    s_n \frac{\xi^{\lfloor n/2 \rfloor}}{N^n}.
    \label{eq:centered_moment_bound_general}
\end{align}
This inequality is proved by induction. For \(n=0,1,2\), we have
\begin{gather}
    m_0
    =
    \mathbb{E}\qty[(X-\theta)^0]
    =
    1
    =
    s_0, \\
    m_1
    =
    \mathbb{E}\qty[X-\theta]
    =
    0
    =
    s_1 \frac{1}{N}, \\
    m_2
    =
    \mathbb{E}\qty[(X-\theta)^2]
    =
    \frac{\eta(N-\eta)}{N^2(N+1)}
    \leq
    \frac{\xi}{N^2}
    =
    s_2\frac{\xi}{N^2}.
\end{gather}
Here, \(s_2=s_1+s_0=1\). Next, assume that, for a fixed
\(n\geq 1\),
\begin{align}
    \abs{m_n}
    \leq
    s_n \frac{\xi^{\lfloor n/2 \rfloor}}{N^n},
    \qquad
    \abs{m_{n-1}}
    \leq
    s_{n-1}
    \frac{\xi^{\lfloor (n-1)/2 \rfloor}}{N^{n-1}}.
\end{align}
Substituting these assumptions into
Eq.~\eqref{eq:recurrence_inequality_2}, we obtain
\begin{align}
    \abs{m_{n+1}}
    &\leq
    n s_{n-1}
    \frac{
        \xi^{1+\lfloor (n-1)/2 \rfloor}
    }{N^{n+1}}
    +
    n s_n
    \frac{
        \xi^{\lfloor n/2 \rfloor}
    }{N^{n+1}} \\
    &\leq
    n(s_{n-1}+s_n)
    \frac{
        \xi^{\lfloor (n+1)/2 \rfloor}
    }{N^{n+1}} .
\end{align}
Here, we used
\[
    \lfloor n/2 \rfloor
    \leq
    \lfloor (n+1)/2 \rfloor,
    \qquad
    1+\lfloor (n-1)/2 \rfloor
    =
    \lfloor (n+1)/2 \rfloor,
\]
together with \(\xi\geq 1\). Thus, the desired bound follows from the
recurrence relation for \(s_n\).

It remains to bound \(s_n\). From the recurrence relation, one obtains
\(s_n\leq n!\) by induction. Indeed, \(s_0=1\leq 0!\) and
\(s_1=0\leq 1!\). If \(s_n\leq n!\) and
\(s_{n-1}\leq (n-1)!\), then
\begin{align}
    s_{n+1}
    =
    n(s_n+s_{n-1})
    \leq
    n\qty(n!+(n-1)!)
    =
    (n+1)!.
\end{align}
Therefore, Eq.~\eqref{eq:centered_moment_bound_general} gives
\begin{align}
    \abs{m_n}
    \leq
    n! \frac{\xi^{\lfloor n/2 \rfloor}}{N^n}.
\end{align}
Finally, substituting \(n=2r\), we obtain
\begin{align}
    \mathbb{E}\qty[(X-\theta)^{2r}]
    =
    m_{2r}
    \leq
    \abs{m_{2r}}
    \leq
    (2r)!
    \frac{\xi^{r}}{N^{2r}} .
\end{align}
This proves the claim.
\end{proof}

\begin{lemma}
\label{lem:direction-offdiag}
Let $\eta,N$ satisfying $\eta < N$ and 
let \(R\) be a rank-\(\eta\) projector satisfying $R \sim \mtr{Gr}_\eta (\mathbb{C}^N )$, and fix a standard basis vector \(\ket{x} \in \mathbb{C}^N \). We define
\begin{align}
    X:=R_{x,x}=\bra{x}R\ket{x},
    \qquad
    \ket{v}:=(\1 -\ketbra{x}{x})R\ket{x}\in \ket{x}^{\perp}.
\end{align}
Here, we denote $\ket{x}^\perp$ as the orthogonal space of $ \operatorname{span}\{\ket{x}\}$.

Then, $
    \braket{v}=X(1-X).
$ holds.
Moreover, for \(0<X<1\), the conditional distribution of
\[
    \frac{\ket{v}}{\sqrt{\braket{v}}}
\]
given \(X\) is the uniform distribution on the unit sphere in
\(\ket{x}^{\perp}\simeq \mathbb C^{N-1}\).  Equivalently, it has the same
distribution as the first column of a Haar-random unitary in
\(\mtr{U}(N-1)\).
\end{lemma}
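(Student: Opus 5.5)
The plan is to realize $R$ concretely as $R=WW^\dagger$, where $W\in\mathbb C^{N\times\eta}$ consists of the first $\eta$ columns of $u^\dagger$ for $u\sim\mtr U(N)$. This is legitimate because the invariant measure $\dd R$ is the pushforward of Haar measure under $u\mapsto u^\dagger P_{[\eta]}u$. The identity $\braket{v}=X(1-X)$ is purely algebraic. Since $R^2=R=R^\dagger$, we have $\bra{x}R\ket{x}=\norm{R\ket{x}}^2$. Decomposing $R\ket{x}=X\ket{x}+\ket{v}$ orthogonally then gives $X=X^2+\braket{v}$.

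For the distributional statement, I will use invariance under the stabilizer of $\ket{x}$. Let $G_x\simeq\mtr U(N-1)$ be the set of unitaries $g$ with $g\ket{x}=\ket{x}$, acting arbitrarily on $\ket{x}^\perp$. The measure $\dd R$ is invariant under $R\mapsto gRg^\dagger$. Under this map $X$ is unchanged, because $\bra{x}gRg^\dagger\ket{x}=\bra{x}R\ket{x}$. The vector $\ket{v}$ transforms covariantly: $(\1-\ketbra{x}{x})gRg^\dagger\ket{x}=g(\1-\ketbra{x}{x})R\ket{x}=g\ket{v}$, using that $g$ commutes with $\ketbra{x}{x}$. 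Hence the joint law of $(X,\ket{v})$ equals that of $(X,g\ket{v})$ for every $g\in G_x$.

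It follows that, conditional on $X$ (on the event $0<X<1$, where $\ket{v}\neq0$), the law of the unit vector $\ket{v}/\sqrt{\braket{v}}$ on the sphere of $\ket{x}^\perp$ is $\mtr U(N-1)$-invariant. Since $\mtr U(N-1)$ acts transitively on that sphere, the only invariant probability measure is the uniform one.

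The main obstacle is making the conditioning rigorous, since we are conditioning on a continuous variable. I would avoid disintegration subtleties by working with test functions. Fix bounded measurable $\varphi$ on $[0,1]$ and $\psi$ on the unit sphere. Invariance gives
$\mathbb E[\varphi(X)\psi(\hat v)]=\mathbb E[\varphi(X)\psi(g\hat v)]$ for all $g\in G_x$, where $\hat v:=\ket{v}/\sqrt{\braket{v}}$. Averaging over $g$ with Haar measure and using Fubini gives
$\mathbb E[\varphi(X)\psi(\hat v)]=\mathbb E[\varphi(X)]\int\psi\,d\sigma$, where $\sigma$ is the uniform measure on the sphere.

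This factorization shows that $\hat v$ is independent of $X$ and uniformly distributed, which is exactly the conditional statement. It also supplies the independence of $X$ and $Y$ used in Lemma~\ref{lem:upper_bound_of_tildeR}. The event $X\in\{0,1\}$ has probability zero because $X\sim\mathrm{Beta}(\eta,N-\eta)$, so it can be ignored. Finally, the equivalence with the first column of a Haar-random $\mtr U(N-1)$ holds because that column is likewise uniform on the sphere, by left invariance of Haar measure.
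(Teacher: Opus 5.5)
Your proposal is correct and follows essentially the same proof as the paper. The orthogonal decomposition $R\ket{x}=X\ket{x}+\ket{v}$ gives $\braket{v}=X(1-X)$, and invariance under the stabilizer $G_x\simeq\mtr{U}(N-1)$ gives uniformity of the direction. Your test-function and Haar-averaging step is a genuine refinement: it makes the conditioning on the continuous variable $X$ rigorous, and it explicitly yields the independence of $X$ and $\ket{v}/\sqrt{\braket{v}}$ that the paper later uses in Lemma~\ref{lem:upper_bound_of_tildeR} without separate justification.
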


\begin{proof}
Since \(R\) is an orthogonal projection, \(R^2=R\) and \(R=R^\dagger\).
Hence
\[
    X
    =
    \bra{x}R\ket{x}
    =
    \bra{x}R^2\ket{x}
    =
    \|R\ket{x}\|^2 .
\]
Using the decomposition
\(
    \mathbb C^N
    =
    \operatorname{span}\{\ket{x}\}\oplus \ket{x}^{\perp},
\)
we can write
\begin{align}
        R\ket{x}
    =
    \bra{x}R\ket{x}\ket{x}
    +
    (\1-\ketbra{x}{x})R\ket{x}
    =
    X\ket{x}+\ket{v}.
    \label{eq:action_R}
\end{align}
Since \(\ket{v}\in\ket{x}^{\perp}\), the two terms are orthogonal, which leads to
\begin{align}
    X
    =
    \|R\ket{x}\|^2
    =
    X^2+\braket{v}. 
\end{align}
This relationship gives
$
    \braket{v}=X(1-X).
$

It remains to identify the conditional distribution of the direction of
\(\ket{v}\). Let \(G_x\subset \mtr{U}(N)\) be the stabilizer of \(\ket{x}\),
namely the subgroup of unitaries \(w\) satisfying \(w\ket{x}=\ket{x}\).
With respect to the decomposition
$
    \mathbb C^N
    =
    \operatorname{span}\{\ket{x}\}\oplus \ket{x}^{\perp},
$
each \(w\in G_x\) has the form
\begin{align}
    w=1\oplus u',
    \qquad
    u'\in \mtr{U}(N-1).
\end{align}
Since $R \sim \mtr{Gr}_\eta(\mathbb{C}^N)$, \(R\) and \(wRw^\dagger\)
have the same distribution. Moreover,
\(
    \bra{x}wRw^\dagger\ket{x}
    =
    \bra{x}R\ket{x}
    =
    X,
\)
so this transformation leaves \(X\) unchanged. On the other hand, under this transformation, the component of \(R\ket{x}\) orthogonal to \(\ket{x}\) becomes
\begin{align}
    (\1-\ketbra{x}{x})wRw^\dagger\ket{x} 
    &=
    (\1-\ketbra{x}{x})wR\ket{x} 
    && (\because\ w^\dagger \ket{x} = \ket{x}. )
        \\
    &=
    (\1-\ketbra{x}{x})w
    \qty(X\ket{x}+\ket{v})
    && (\because \ \text{Eq.~\eqref{eq:action_R}} . )
    \\
    &=
    (\1-\ketbra{x}{x})
    \qty(X\ket{x}+u'\ket{v}) 
    && (\because \  w = 1 \oplus u' . )
    \\
    &=
    u'\ket{v}.
\end{align}
Thus, conditional on \(X\), the law of \(\ket{v}\) is invariant under every
\(u'\in \mtr{U}(N-1)\).

For \(0<X<1\), the invariance shown above implies that this direction
is invariant under every unitary rotation \(u'\in\mtr U(N-1)\) acting on
\(\ket{x}^{\perp}\). Therefore, the normalized vector
\(
    {\ket{v}} / \sqrt{\braket{v} }
\)
is distributed according to the unique \(\mtr U(N-1)\)-invariant probability
measure on the unit sphere in \(\ket{x}^{\perp}\), namely the uniform
spherical measure. Equivalently, this normalized vector has the
same distribution as the first column of a Haar-random unitary in
\(\mtr U(N-1)\).
\end{proof}

\begin{lemma}[Variance bounds for off-diagonal 1-RDM entries for $\eta=4$]
\label{lem:eta4_offdiag_variance_bounds}
Let $N\ge 5$, and let $p\neq q$.  For an arbitrary 4-particle state
$\rho$ on $\mac H_4$, let $D^{(1)}$ and $D^{(2)}$ denote its first- and
second-order reduced density matrices.  Consider the single-shot
orbital-rotation shadow estimator $\widehat D^{(1)}$, and define the
theoretical entrywise variance by
\begin{align}
    V^{\rm th}_{p,q}
    &=
    \frac{(N+1)(N-3)}{N(N+2)}
    \left(
        4+D^{(1)}_{p,p}+D^{(1)}_{q,q}
    \right)
    -
    \frac{N+1}{N}D^{(2)}_{(p,q),(p,q)}
    -
    |D^{(1)}_{p,q}|^2 .
    \label{eq:eta4_offdiag_variance_pq}
\end{align}
Here
$
D^{(2)}_{(p,q),(p,q)}
=
\operatorname{tr}_{\mac H_\eta}
\left[
\rho\,a_p^\dagger a_q^\dagger a_q a_p
\right].
$

Then, for aribtrary 4-particle state $\rho  $, $V^{\rm th}_{p,q}$ satisfies the following inequality
\begin{align}
    \min\!\left\{
        \frac{4(N+1)(N-3)}{N(N+2)},
        \frac{5(N+1)(N-4)}{N(N+2)}
    \right\}
    \le
    V^{\rm th}_{p,q}
    \le
    \frac{5(N+1)(N-3)}{N(N+2)} .
\end{align}
\end{lemma}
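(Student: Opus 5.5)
The plan is to treat $V^{\rm th}_{p,q}$ as a function of a few occupation statistics of $\rho$ on the modes $p,q$, and to optimize it over the constraints these statistics must satisfy. Write
\[
c:=\frac{(N+1)(N-3)}{N(N+2)},\qquad d:=\frac{N+1}{N},\qquad z:=D^{(2)}_{(p,q),(p,q)}=\operatorname{tr}[\rho\, n_pn_q],
\]
and set $\alpha:=D^{(1)}_{p,p}-z=\operatorname{tr}[\rho\, n_p(1-n_q)]$ and $\beta:=D^{(1)}_{q,q}-z=\operatorname{tr}[\rho\, n_q(1-n_p)]$. Since $n_pn_q$, $n_p(1-n_q)$, $n_q(1-n_p)$ and $(1-n_p)(1-n_q)$ are orthogonal projectors summing to the identity, we have $\alpha,\beta,z\ge 0$ and $\alpha+\beta+z\le 1$. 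In these variables
\[
V^{\rm th}_{p,q}=4c+c(\alpha+\beta)+(2c-d)\,z-\abs{D^{(1)}_{p,q}}^2 ,
\]
and a direct computation gives $2c-d=\frac{(N+1)(N-8)}{N(N+2)}=:e$.

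\textbf{Upper bound.} Drop the term $-\abs{D^{(1)}_{p,q}}^2\le 0$. Since $d\ge c$, we have $(2c-d)z\le cz$, so
\[
V^{\rm th}_{p,q}\le 4c+c(\alpha+\beta+z)\le 5c,
\]
which is exactly $\frac{5(N+1)(N-3)}{N(N+2)}$. This is attained, for example, by a Slater determinant that occupies $p$ but not $q$.

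\textbf{Lower bound: controlling the coherence.} We need to bound $\abs{D^{(1)}_{p,q}}^2$. The operator $a_p^\dagger a_q$ maps the range of $n_q(1-n_p)$ into the range of $n_p(1-n_q)$ and annihilates everything else, so
\[
a_p^\dagger a_q=n_p(1-n_q)\,a_p^\dagger a_q\, n_q(1-n_p),
\]
and it has operator norm at most $1$. Cauchy--Schwarz for the positive functional $\operatorname{tr}[\rho\,\cdot\,]$ then gives $\abs{D^{(1)}_{p,q}}^2\le\alpha\beta$. Hence $V^{\rm th}_{p,q}\ge 4c+f(\alpha,\beta,z)$ with
\[
f:=c(\alpha+\beta)-\alpha\beta+ez .
\]

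\textbf{Lower bound: optimization.} This step is the main point, and it is a concave minimization. For fixed $s=\alpha+\beta$ we have $\alpha\beta\le s^2/4$, so $f\ge g(s,z):=cs-s^2/4+ez$. The function $g$ is concave on the triangle $\{s,z\ge0,\ s+z\le1\}$, so its minimum is attained at a vertex. The vertex values are:
\begin{itemize}
\item $g(0,0)=0$;
\item $g(1,0)=c-\tfrac14>0$, which holds because $c\ge 12/35$ for $N\ge5$ (since $c$ is increasing in $N$);
\item $g(0,1)=e$.
\end{itemize}
Therefore $V^{\rm th}_{p,q}\ge 4c+\min\{0,e\}=\min\{4c,\,4c+e\}$. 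Finally, $4c+e=\frac{(N+1)(5N-20)}{N(N+2)}=\frac{5(N+1)(N-4)}{N(N+2)}$, which gives the stated lower bound.

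The only step needing care is this concavity-and-vertices argument, together with checking the sign of $c-\tfrac14$ for $N\ge5$.
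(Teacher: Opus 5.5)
Your proof is correct and follows the paper's route: rewrite $V^{\rm th}_{p,q}$ in the two-mode occupation probabilities, bound the coherence by $\abs{D^{(1)}_{p,q}}^2\le\pi_{10}\pi_{01}$, replace the product by $s^2/4$, and minimize a concave function at extreme points of the constraint simplex. Your Cauchy--Schwarz derivation of the coherence bound and your single vertex check on the triangle (with the explicit verification $c>1/4$ for $N\ge 5$) are slightly tidier than the paper's block-positivity argument and its case split on the sign of $(N+1)(N-8)$, but the content is the same.
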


\begin{proof}
The proof first rewrites the admissible 1- and 2-RDM data in terms of the
two-mode occupation probabilities $\pi_{ab}$, and uses positivity of the
one-particle block to obtain $|D^{(1)}_{p,q}|^2\le \pi_{10}\pi_{01}$.
Substituting these constraints into the explicit variance formula reduces the problem to a finite-dimensional optimization over $\pi_{10},\pi_{01},\pi_{11}$,
whose boundary values yield the stated upper and lower bounds.

The admissible values of the reduced-density-matrix elements are constrained
by the two-mode occupation probabilities
\begin{align}
    \pi_{11}=D^{(2)}_{(p,q),(p,q)}, \quad&
    \pi_{10}=D^{(1)}_{p,p}-D^{(2)}_{(p,q),(p,q)}, \\
    \pi_{01}=D^{(1)}_{q,q}-D^{(2)}_{(p,q),(p,q)}, \quad&
    \pi_{00}=1-D^{(1)}_{p,p}-D^{(1)}_{q,q}
    +D^{(2)}_{(p,q),(p,q)} .
\end{align}
Their positivity gives
\begin{align}
    0\le D^{(1)}_{p,p},D^{(1)}_{q,q}\le 1,\qquad
    \max\!\left\{0,D^{(1)}_{p,p}+D^{(1)}_{q,q}-1\right\}
    \le
    D^{(2)}_{(p,q),(p,q)}
    \le
    \min\!\left\{D^{(1)}_{p,p},D^{(1)}_{q,q}\right\}.
    \label{eq:eta4_two_mode_constraints_pq}
\end{align}
To see the origin of these constraints, let
$n_p:=a_p^\dagger a_p$ and $n_q:=a_q^\dagger a_q$.  Since $n_p$ and
$n_q$ are commuting projections for $p\neq q$, the four operators
$
    n_p n_q,
    n_p(1-n_q),
    (1-n_p)n_q,
    (1-n_p)(1-n_q)
$
are mutually orthogonal projections.  Hence their expectation values are
non-negative.  More explicitly,
\begin{align}
    \pi_{11}
    &:=
    \langle n_p n_q\rangle
    =
    D^{(2)}_{(p,q),(p,q)}, \\
    \pi_{10}
    &:=
    \langle n_p(1-n_q)\rangle
    =
    D^{(1)}_{p,p}-D^{(2)}_{(p,q),(p,q)}, \\
    \pi_{01}
    &:=
    \langle (1-n_p)n_q\rangle
    =
    D^{(1)}_{q,q}-D^{(2)}_{(p,q),(p,q)}, \\
    \pi_{00}
    &:=
    \langle (1-n_p)(1-n_q)\rangle
    =
    1-D^{(1)}_{p,p}-D^{(1)}_{q,q}
    +D^{(2)}_{(p,q),(p,q)} .
    \label{eq:two_mode_probabilities_eta4_pq}
\end{align}
Here $\pi_{ab}$ denotes the probability that the occupation numbers of modes
$p$ and $q$ are $a$ and $b$, respectively.  Since these probabilities are
non-negative and sum to one, we obtain Eq.~\eqref{eq:eta4_two_mode_constraints_pq}.

Moreover, the off-diagonal element $D^{(1)}_{p,q}$ is constrained by
positivity of the two-mode reduced density matrix.  In the two-mode occupation
basis
$
    \ket{00},\ket{10},\ket{01},\ket{11},
$
the one-particle block on
$
    \operatorname{span}\{\ket{10},\ket{01}\}
$
has the form
\begin{align}
    \rho^{(1)}_{p,q}
    =
    \begin{pmatrix}
        \pi_{10} & D^{(1)}_{p,q} \\
        D^{(1)}_{q,p} & \pi_{01}
    \end{pmatrix},
    \label{eq:two_mode_one_particle_block_pq}
\end{align}
up to the harmless convention-dependent ordering of the basis vectors.  Since
$\rho^{(1)}_{p,q}$ is a principal block of a positive semidefinite density
matrix, it must itself be positive semidefinite.  Therefore
\begin{align}
    |D^{(1)}_{p,q}|^2
    \le
    \pi_{10}\pi_{01}.
    \label{eq:eta4_coherence_bound_pq}
\end{align}

Based on Eqs.~\eqref{eq:eta4_two_mode_constraints_pq} and
\eqref{eq:eta4_coherence_bound_pq}, we first derive the upper bound.  Since
the last two terms in Eq.~\eqref{eq:eta4_offdiag_variance_pq} are
non-positive, we have
\begin{align}
    V^{\rm th}_{p,q}
    &\le
    \frac{(N+1)(N-3)}{N(N+2)}
    \left(
        4+D^{(1)}_{p,p}+D^{(1)}_{q,q}
    \right) 
    -
    \frac{N+1}{N}
    \max\!\left\{0,D^{(1)}_{p,p}+D^{(1)}_{q,q}-1\right\}.
\end{align}
The right-hand side is maximized at
$D^{(1)}_{p,p}+D^{(1)}_{q,q}=1$, because
$
    \frac{(N+1)(N-3)}{N(N+2)}
    <
    \frac{N+1}{N}.
$
Therefore,
\begin{align}
    V^{\rm th}_{p,q}
    \le
    \frac{5(N+1)(N-3)}{N(N+2)} .
    \label{eq:eta4_upper_bound_pq}
\end{align}

For the lower bound, Eq.~\eqref{eq:eta4_coherence_bound_pq} gives
\begin{align}
    V^{\rm th}_{p,q}
    &\ge
    \frac{(N+1)(N-3)}{N(N+2)}
    \left(
        4+\pi_{10}+\pi_{01}+2\pi_{11}
    \right)
    -
    \frac{N+1}{N}\pi_{11}
    -
    \pi_{10}\pi_{01} \\
    &=
    \frac{(N+1)(N-3)}{N(N+2)}
    \left(
        4+\pi_{10}+\pi_{01}
    \right)
    +
    \frac{(N+1)(N-8)}{N(N+2)}\pi_{11}
    -
    \pi_{10}\pi_{01},
    \label{eq:eta4_lower_probability_form_pq}
\end{align}
where $\pi_{10},\pi_{01},\pi_{11}\ge 0$ and
$\pi_{10}+\pi_{01}+\pi_{11}\le 1$.  For fixed
$\pi_{10}+\pi_{01}$, the product $\pi_{10}\pi_{01}$ is maximized at
$\pi_{10}=\pi_{01}$.  The coefficient of $\pi_{11}$ in
Eq.~\eqref{eq:eta4_lower_probability_form_pq} is
$
    \frac{(N+1)(N-8)}{N(N+2)}.
$
Hence, for $5\le N\le 8$, one may take
$\pi_{11}=1-\pi_{10}-\pi_{01}$, whereas for $N\ge 9$ one may take
$\pi_{11}=0$.  The remaining one-dimensional minimization is concave, so its
minimum is attained at an endpoint.  This yields
\begin{align}
    V^{\rm th}_{p,q}
    \ge
    \begin{cases}
        \displaystyle
        \frac{5(N+1)(N-4)}{N(N+2)},
        & 5\le N\le 8, \\[1.0em]
        \displaystyle
        \frac{4(N+1)(N-3)}{N(N+2)},
        & N\ge 9 .
    \end{cases}
    \label{eq:eta4_lower_bound_pq}
\end{align}
Equivalently, for all $N\ge 5$,
\begin{align}
    V^{\rm th}_{p,q}
    \ge
    \min\!\left\{
        \frac{4(N+1)(N-3)}{N(N+2)},
        \frac{5(N+1)(N-4)}{N(N+2)}
    \right\}.
    \label{eq:eta4_lower_bound_compact_pq}
\end{align}

Combining Eqs.~\eqref{eq:eta4_upper_bound_pq} and
\eqref{eq:eta4_lower_bound_compact_pq}, we obtain
\begin{align}
    \min\!\left\{
        \frac{4(N+1)(N-3)}{N(N+2)},
        \frac{5(N+1)(N-4)}{N(N+2)}
    \right\}
    \le
    V^{\rm th}_{p,q}
    \le
    \frac{5(N+1)(N-3)}{N(N+2)} .
\end{align}
\end{proof}

\section{Evaluation of the combinatorial sum $S_{a,K}$}

In this section, we derive a coefficient estimate that will be used to bound the scalar factor $\gamma_a$ appearing in the centered expansion of the estimator.
The main idea is to rewrite the combinatorial sum $S_{a,K}$ as a coefficient of a generating function and then evaluate this coefficient through the logarithmic expansion of that generating function.

\begin{lemma}
    \label{lem:formula_Sak}
    Let integers $a,k,\eta, N$ satisfying \(1\leq \eta\leq N\) and \(0\leq a\leq k\leq \eta\). Define
    \begin{align}
        S_{a,k-a}
        :=
        \sum_{m=0}^{k-a}
        (-1)^{k-a+m}
        \qty(\frac{\eta}{N})^{m}
        \binom{N+1-a}{m}
        \binom{\eta-a-m}{k-a-m}.
    \end{align}
    Then
    \begin{align}
        S_{a,k-a}
        =
        (-1)^{k-a}
        \qty[
            (1+z)^{\eta-N-1}
            \qty(
                1+\qty(1-\frac{\eta}{N})z
            )^{N+1-a}
        ]_{z^{k-a}},
    \end{align}
    where \([f(z)]_{z^{k-a}}\) denotes the coefficient of \(z^{k-a}\) in the
    power-series expansion of \(f(z)\).
\end{lemma}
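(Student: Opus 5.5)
This is a direct generating-function computation, patterned on the evaluation of the finite sum in Eq.~\eqref{eq:finite-sum-generating-function-1}--\eqref{eq:finite-sum-evaluated}. Set $K:=k-a$ and $\theta:=\eta/N$. The plan is to recognize each summand as a coefficient extraction and then sum under the coefficient operator.

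\textbf{Step 1: single-term coefficient identity.} As in the comparison with Low's estimator, we use
\begin{align}
    \binom{\eta-a-m}{K-m}
    =
    \qty[z^{m}(1+z)^{\eta-a-m}]_{z^{K}}
    =
    \qty[(1+z)^{\eta-a}\qty(\frac{z}{1+z})^{m}]_{z^{K}} .
\end{align}
This holds for $0\le m\le K$. Here $\eta-a-m\ge \eta-k\ge 0$, so no negative-exponent subtleties arise for these $m$.

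\textbf{Step 2: extend the sum and resum.} Extending the sum to all $m\ge 0$ is harmless, because terms with $m>K$ contribute no $z^K$ coefficient (they are $O(z^{m})$). We then obtain
\begin{align}
    S_{a,K}
    =
    (-1)^{K}\qty[(1+z)^{\eta-a}\sum_{m\ge0}\binom{N+1-a}{m}\qty(\frac{-\theta z}{1+z})^{m}]_{z^{K}}
    =
    (-1)^{K}\qty[(1+z)^{\eta-a}\qty(1-\frac{\theta z}{1+z})^{N+1-a}]_{z^{K}} .
\end{align}
Here $(-1)^{K+m}=(-1)^K(-1)^m$, and the binomial theorem applies since $N+1-a\ge 0$ is an integer, making the sum finite.

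\textbf{Step 3: simplify.} Write $1-\frac{\theta z}{1+z}=\frac{1+(1-\theta)z}{1+z}$. Combining the powers of $(1+z)$ gives the exponent
\begin{align}
    (\eta-a)-(N+1-a)=\eta-N-1,
\end{align}
so
\begin{align}
    S_{a,K}=(-1)^{K}\qty[(1+z)^{\eta-N-1}\qty(1+(1-\theta)z)^{N+1-a}]_{z^{K}},
\end{align}
which is the claim.

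The only point needing care is justifying the interchange of the finite sum with coefficient extraction. All series involved are formal power series in $z$ with invertible constant term $1+z$, so the manipulations are valid in $\mathbb{C}[[z]]$. The main bookkeeping obstacle is checking the sign convention and the truncation at $m\le K$; beyond that, the argument is a routine generating-function calculation.
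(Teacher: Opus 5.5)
Your proposal is correct and follows the paper's proof essentially step by step: write each $\binom{\eta-a-m}{K-m}$ as a $z^K$-coefficient, extend the $m$-sum (the extra terms are $O(z^m)$ with $m>K$), resum with the binomial theorem in $(-\theta z/(1+z))$, and combine the powers of $(1+z)$ into the exponent $\eta-N-1$. Your remarks that $\eta-a-m\ge\eta-k\ge 0$ for $m\le K$ and that $1+z$ is invertible in $\mathbb{C}[[z]]$ are the right justifications for the only delicate points.
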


\begin{proof}
    We denote $K = k-a$, $\theta:=\eta/N$ and $M:=N+1-a$. Then
    \begin{align}
        S_{a,K}
        =
        \sum_{m=0}^{K}
        (-1)^{K+m}
        \theta^m
        \binom{M}{m}
        \binom{\eta-a-m}{K-m}.
    \end{align}
    By the definition of generalized binomial coefficients,
    \begin{align}
        \binom{\eta-a-m}{K-m}
        =
        \qty[
            (1+z)^{\eta-a-m}
        ]_{z^{K-m}}.
    \end{align}
    Since $[f(z)]_{z^{K-m}}=[z^m f(z)]_{z^K}$, we have
    \begin{align}
        \theta^m
        \binom{\eta-a-m}{K-m}
        =
        \qty[
            (\theta z)^m
            (1+z)^{\eta-a-m}
        ]_{z^K}.
    \end{align}
    Therefore,
    \begin{align}
        S_{a,K}
        &=
        (-1)^K
        \qty[
            \sum_{m=0}^{K}
            \binom{M}{m}
            (-\theta z)^m
            (1+z)^{\eta-a-m}
        ]_{z^K} \\
        &=
        (-1)^K
        \qty[
            \sum_{m=0}^{M}
            \binom{M}{m}
            (-\theta z)^m
            (1+z)^{\eta-a-m}
        ]_{z^K} \\
        &=
        (-1)^K
        \qty[
            (1+z)^{\eta-a}
            \sum_{m=0}^{M}
            \binom{M}{m}
            \qty(
                -\frac{\theta z}{1+z}
            )^m
        ]_{z^K} \\
        &=
        (-1)^K
        \qty[
            (1+z)^{\eta-a}
            \qty(
                1-\frac{\theta z}{1+z}
            )^M
        ]_{z^K} \\
        &=
        (-1)^K
        \qty[
            (1+(1-\theta)z)^M
            (1+z)^{\eta-a-M}
        ]_{z^K}.
    \end{align}
    The second equality is justified because $\binom{M}{m}=0$ for $m>M$,
    and the terms with $m>K$ do not contribute to the coefficient of $z^K$.
    Since $M=N+1-a$, we have
    \begin{align}
        \eta-a-M
        =
        \eta-a-(N+1-a)
        =
        \eta-N-1.
    \end{align}
    Substituting $\theta=\eta/N$ and $M=N+1-a$ gives the desired formula.
\end{proof}

\begin{lemma}
    \label{lem:upper_bound_Sak}
    Let $1\le a\le k\le \eta\le N$, and set $K:=k-a$. Then the coefficient
    $S_{a,K}$ defined above satisfies
    \begin{align}
        \abs{S_{a,K}}
        \le
        \frac{k^{K/2}}{K!}\eta^{K/2}.
    \end{align}
    Equivalently,
    \begin{align}
        \abs{S_{a,k-a}}
        \le
        \frac{k^{(k-a)/2}}{(k-a)!}\eta^{(k-a)/2}.
    \end{align}
\end{lemma}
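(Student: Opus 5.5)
The plan is to start from the generating-function representation in Lemma~\ref{lem:formula_Sak} and bound the coefficient through the logarithm of the generating function. Write $\theta=\eta/N$ and $M=N+1-a$, so that
\begin{align}
    |S_{a,K}|=\left|\left[e^{g(z)}\right]_{z^K}\right|,
    \qquad
    g(z):=(\eta-N-1)\log(1+z)+M\log\bigl(1+(1-\theta)z\bigr).
\end{align}
Expanding both logarithms gives $g(z)=\sum_{j\ge1}g_j z^j$ with
\begin{align}
    g_j=\frac{(-1)^{j+1}}{j}\Bigl[(\eta-N-1)+M(1-\theta)^j\Bigr].
\end{align}
The strategy is a coefficient-majorant argument. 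If $|g_j|\le G_j$ for nonnegative $G_j$, then $|[e^{g}]_{z^K}|\le[e^{G}]_{z^K}$ with $G(z)=\sum_j G_j z^j$, because every coefficient of $e^g$ is a polynomial with nonnegative coefficients in the $g_j$.

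First I will compute $g_1$ exactly. It equals $\eta-a-\eta(N+1-a)/N=-a+\eta(a-1)/N$. Since $0\le \eta(a-1)/N\le a-1$, this gives $|g_1|\le a\le k$. The cancellation of the $N$-sized terms in $g_1$ is exactly what the centering $\widetilde R=R-\theta\1$ was designed to produce.

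Next I will bound $g_j$ for $j\ge2$. Rewrite the bracket as
\begin{align}
    -(N+1-\eta)\bigl(1-(1-\theta)^j\bigr)-a(1-\theta)^j+(1-\theta)^j\cdot 0 .
\end{align}
More precisely, $M(1-\theta)^j=(N+1-\eta)(1-\theta)^j+(\eta-a)(1-\theta)^j$, and the leftover pieces are of size at most $\eta$. Using $1-(1-\theta)^j\le j\theta$ and $(N+1-\eta)\theta\le\eta+1$, I expect $|g_j|\le c\,\eta$ for an absolute constant $c$ (roughly $2$). The aim is a majorant of the form
\begin{align}
    G(z)=k z+\sum_{j\ge2}c\,\eta\, z^j .
\end{align}

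Then I will extract the coefficient. Rescale $z=w/\sqrt{k\eta}$, so that $kz=\sqrt{k/\eta}\,w\le w$ and $\eta z^j=\eta^{1-j/2}k^{-j/2}w^j\le w^j/k$ for $j\ge2$. After the rescaling, every term of $G$ has a coefficient of order at most one in $w$. This suggests
\begin{align}
    [e^{G}]_{z^K}\le (k\eta)^{K/2}\,[e^{w+c\,w^2/(1-w)}]_{w^K}.
\end{align}
To obtain the sharp factor $1/K!$, I would instead keep the exact partition expansion. The $z^K$ coefficient of $e^{G}$ is a sum over multisets $\{j_i\}$ with $\sum j_i=K$, each weighted by $\prod G_{j_i}/\prod(\text{multiplicities})!$. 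The all-ones partition contributes $k^K/K!\le k^{K/2}\eta^{K/2}/K!$. Partitions that use parts $j\ge2$ trade factors $k^{j}$ for a factor $\eta$, which costs a power $\eta^{1-j/2}k^{-j/2}\le 1/k$ relative to the main term.

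The main obstacle is this last combinatorial step: showing that the combinatorial factors of the non-unit partitions sum to at most the slack between $k^K/K!$ and $(k\eta)^{K/2}/K!$. That slack is $(\eta/k)^{K/2}\ge1$, which is only guaranteed to be large when $\eta\gg k$. In the worst case $\eta=k$, the extra partitions must be absorbed some other way. If the naive counting fails, I would first try sharpening the bound on $g_j$ to $|g_j|\le \eta\theta^{j-1}+a/j$, and otherwise accept a slightly weaker constant, for instance by replacing $k^{K/2}$ with $(2k)^{K/2}$. Such a change would only alter $C_k$ in Theorem~\ref{thm:entrywise-variance}.
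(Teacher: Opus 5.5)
Your framework, the logarithm of the generating function from Lemma~\ref{lem:formula_Sak} plus a coefficient majorant, is the same one the paper uses. However, the step you flag as the main obstacle is a genuine gap, and with your majorant it cannot be closed. At $\eta=k$ the slack $(\eta/k)^{K/2}$ equals $1$, yet your majorant $G(z)=kz+c\eta\sum_{j\ge2}z^j$ has strictly positive non-unit contributions. Already for $K=2$ it gives $k^2/2+ck>k^2/2=(k\eta)^{K/2}/K!$.

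Your heuristic that each non-unit part costs a factor at most $1/k$ ignores that such a part also lowers the factorial. The partition $(2,1^{K-2})$ is, relative to the target, of size $cK(K-1)/k$ at $\eta=k$, which is of order $k$ when $K\approx k$. Even the sharper inputs $|g_1|\le a$ and $|g_j|\le\eta$ (i.e.\ $c=1$) do not suffice. For $k=\eta=10$, $a=1$, $K=9$, the $z^9$ coefficient of $e^{10z^2/(1-z)}$ alone is about $3.6\times10^3$, exceeding $10^9/9!\approx2.8\times10^3$.

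The rescaled bound $[e^{w+cw^2/(1-w)}]_{w^K}$ loses the $1/K!$ altogether. The fallback of a weaker constant would still give some $C_k$ in Theorem~\ref{thm:entrywise-variance}, but it does not prove the lemma as stated.

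The missing idea is to keep the exact structure of $g_j$ rather than an $O(\eta)$ bound. One has
$j|g_j|=\eta\sum_{r=1}^{j-1}(1-\theta)^r+(a-1)(1-\theta)^j+1\le (j-1)\eta+a$.
Since $\eta\ge k$, this can be weakened \emph{homogeneously} to $|g_j|\le\frac{(j-1)k+a}{j\,k^{j/2}}\eta^{j/2}$. Every majorant term is then a multiple of $(\sqrt{\eta}\,z)^j$, so $\eta^{K/2}$ factors out exactly. With $x=z/\sqrt{k}$, the remaining series sums to
$\sum_{j\ge1}\frac{(j-1)k+a}{j}x^j=\frac{kx}{1-x}+(k-a)\log(1-x)$.

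Since $k-a=K$, its exponential is $(1-x)^K e^{kx/(1-x)}$, whose $x^K$ coefficient is the alternating sum $\sum_{\ell=0}^{K}(-1)^{K-\ell}k^\ell/\ell!$. Because $K\le k-1$, the terms $k^\ell/\ell!$ are nondecreasing for $\ell\le K$, so this sum is at most $k^K/K!$. That gives exactly $(k\eta)^{K/2}/K!$.

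The source of the cancellation is the deficit $K/j$ of the majorant coefficient $k-K/j$ below $k$: it is what produces the factor $(1-x)^K$. This is hard to see partition by partition but immediate from the closed form, and it is precisely what the bounds $|g_1|\le k$, $|g_j|\le c\eta$ throw away.
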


\begin{proof}
    To prove this lemma, we introduce $H_a(z)$ satisfying the following relationship,
    \begin{align}
        \exp(H_a(z))
        =
        (1+z)^{\eta-N-1}
        \qty(
            1+\qty(1-\frac{\eta}{N})z
        )^{N+1-a}.
        \label{eq:def_h_az}
    \end{align}
    For convenience, we will write
    \begin{align}
        H_a(z)
        =
        \sum_{n=1}^{\infty} h_n z^n .
    \end{align}
    By expanding the logarithms of Eq.~\eqref{eq:def_h_az}, we obtain
    \begin{align}
        H_a(z)
        &=
        (\eta-N-1)\ln(1+z)
        +(N+1-a)
        \ln\qty(
            1+\qty(1-\frac{\eta}{N})z
        ) \\
        &=
        (\eta-N-1)
        \sum_{n=1}^{\infty}
        \frac{(-1)^{n+1}}{n}z^n
        +(N+1-a)
        \sum_{n=1}^{\infty}
        \frac{(-1)^{n+1}}{n}
        \qty(1-\frac{\eta}{N})^n z^n.
    \end{align}
    Hence, the coefficients of $H_a(z) $ is expressed by
    \begin{align}
        h_n
        =
        \frac{(-1)^{n+1}}{n}
        \qty[
            (N+1-a)
            \qty(1-\frac{\eta}{N})^n
            -(N+1-\eta)
        ].
    \end{align}
    In particular,
    \begin{align}
        \abs{h_1}
        =
        \abs{
            a-(a-1)\frac{\eta}{N}
        }
        \le
        a.
    \end{align}
    For $n\ge 2$, by putting $\theta:=\eta/N$, we have
    \begin{align}
        n\abs{h_n}
        &=
        (N+1-\eta)
        -(N+1-a)(1-\theta)^n \\
        &=
        \eta
        \sum_{r=1}^{n-1}
        (1-\theta)^r
        +(a-1)(1-\theta)^n
        +1 \\
        &\le
        (n-1)\eta+a.
    \end{align}
    Since $a\le k\le \eta$, this implies, for every $n\ge 1$,
    \begin{align}
        \abs{h_n}
        \le
        \frac{(n-1)k+a}{n k^{n/2}}\eta^{n/2}.
    \end{align}
    Now we use the standard coefficient formula for the exponential of a
    formal power series:
    \begin{align}
        \qty[\exp(H_a(z))]_{z^K}
        =
        \sum_{\substack{
            m_1,\ldots,m_K\ge 0\\
            m_1+2m_2+\cdots+Km_K=K
        }}
        \prod_{j=1}^{K}
        \frac{h_j^{m_j}}{m_j!}.
    \end{align}
    Therefore,
    \begin{align}
        \abs{
            \qty[\exp(H_a(z))]_{z^K}
        }
        &\le
        \sum_{\substack{
            m_1,\ldots,m_K\ge 0\\
            m_1+2m_2+\cdots+Km_K=K
        }}
        \prod_{j=1}^{K}
        \frac{1}{m_j!}
        \qty(
            \frac{(j-1)k+a}{j k^{j/2}}
            \eta^{j/2}
        )^{m_j} \\
        &=
        \eta^{K/2}
        \sum_{\substack{
            m_1,\ldots,m_K\ge 0\\
            m_1+2m_2+\cdots+Km_K=K
        }}
        \prod_{j=1}^{K}
        \frac{1}{m_j!}
        \qty(
            \frac{(j-1)k+a}{j k^{j/2}}
        )^{m_j}.
    \end{align}
    The remaining coefficient can be written as
    \begin{align}
        &\sum_{\substack{
            m_1,\ldots,m_K\ge 0\\
            m_1+2m_2+\cdots+Km_K=K
        }}
        \prod_{j=1}^{K}
        \frac{1}{m_j!}
        \qty(
            \frac{(j-1)k+a}{j k^{j/2}}
        )^{m_j}  \\
        &\qquad =
        \qty[
            \exp\qty(
                \sum_{n=1}^{\infty}
                \frac{(n-1)k+a}{n k^{n/2}}z^n
            )
        ]_{z^K}.
    \end{align}
    Setting $x:=z/\sqrt{k}$, we compute
    \begin{align}
        \sum_{n=1}^{\infty}
        \frac{(n-1)k+a}{n}
        x^n
        &=
        k\sum_{n=1}^{\infty}
        \frac{n-1}{n}x^n
        +
        a\sum_{n=1}^{\infty}
        \frac{x^n}{n} \\
        &=
        \frac{kx}{1-x}
        +(k-a)\ln(1-x).
    \end{align}
    Since $K=k-a$, this gives
    \begin{align}
        \qty[
            \exp\qty(
                \sum_{n=1}^{\infty}
                \frac{(n-1)k+a}{n k^{n/2}}z^n
            )
        ]_{z^K}
        &=
        \qty[
            \qty(1-\frac{z}{\sqrt{k}})^K
            \exp\qty(
                \frac{\sqrt{k}z}{1-z/\sqrt{k}}
            )
        ]_{z^K} \\
        &=
        k^{-K/2}
        \qty[
            (1-x)^K
            \exp\qty(
                \frac{kx}{1-x}
            )
        ]_{x^K}.
    \end{align}
    Expanding this exponential yields
    \begin{align}
        \qty[
            (1-x)^K
            \exp\qty(
                \frac{kx}{1-x}
            )
        ]_{x^K}
        &=
        \qty[
            (1-x)^K
            \sum_{\ell=0}^{\infty}
            \frac{k^\ell}{\ell!}
            x^\ell(1-x)^{-\ell}
        ]_{x^K} \\
        &=
        \sum_{\ell=0}^{K}
        \frac{k^\ell}{\ell!}
        \qty[
            (1-x)^{K-\ell}
        ]_{x^{K-\ell}} \\
        &=
        \sum_{\ell=0}^{K}
        (-1)^{K-\ell}
        \frac{k^\ell}{\ell!}.
    \end{align}
    Because $0\le K\le k-1$, the sequence $k^\ell/\ell!$ is nondecreasing for
    $0\le \ell\le K$. Hence the absolute value of the alternating sum is bounded
    by its last term:
    \begin{align}
        \abs{
            \sum_{\ell=0}^{K}
            (-1)^{K-\ell}
            \frac{k^\ell}{\ell!}
        }
        \le
        \frac{k^K}{K!}.
    \end{align}
    Consequently,
    \begin{align}
        \abs{
            \qty[\exp(H_a(z))]_{z^K}
        }
        \le
        \eta^{K/2}
        k^{-K/2}
        \frac{k^K}{K!}
        =
        \frac{k^{K/2}}{K!}\eta^{K/2}.
    \end{align}
    Since the preceding lemma gives
    \begin{align}
        S_{a,K}
        =
        (-1)^K
        \qty[
            \exp(H_a(z))
        ]_{z^K},
    \end{align}
    the desired bound follows.
\end{proof}

\section{Weingarten Integral over Grassmannian manifold}
\label{appendix:explicit_weingarten}

\subsection{Partition, Young Diagram, Irreducible character}
\label{sec:prelim}

In this section, we introduce the mathematical tools based on Young diagrams and representation theory. Our purpose here is to briefly introduce the concepts that underpin the Weingarten evaluation carried out in Section~\ref{sec:weingarten}: integer
partitions and their Young diagrams, the content and hook length of a box, the
irreducible characters of the symmetric group, and the principal specialisation of
the Schur functions. We collect here only what is subsequently needed, and refer the
reader to the standard references (e.g. Refs.~\cite{fulton1997young, georgi2000lie, ceccherini2010representation}) for proofs and a fuller account.

A partition $\lambda$ of a non-negative integer $n$, written $\lambda\vdash n$, is defined as a weakly
decreasing sequence $\lambda=(\lambda_1\ge\lambda_2\ge\cdots\ge\lambda_\ell>0)$ of
positive integers with $\sum_i\lambda_i=n$.
We identify $\lambda$ with its Young diagram as the
left-justified array of $n$ boxes whose $i$-th row contains $\lambda_i$ boxes. For instance, $\lambda =(2,1)$ can be represented as a Young diagram given by
\begin{align}
    \lambda = \ydiagram{2,1} \ .
\end{align}
Additionally, we
denote by $(i,j)\in\lambda$ the box in row $i$ and column $j$. For instance, $(1,1) \in \lambda$. 

To each box $(i,j)\in\lambda$ we attach two integers as the content: The content of the box is defined by
\begin{equation}
    c(i,j)\coloneqq j-i,
\end{equation}
which increases by one along each row and decreases by one along each column. For
$\lambda=(3,1)$, labelling each box with its content, we can express it as follows,
\begin{equation}
    \begin{ytableau} 0 & 1 & 2 \\ \text{-}1 \end{ytableau}\,.
\end{equation}

We also recall the parametrization of conjugacy classes in
$\mathfrak S_n$. For a permutation $\sigma\in\mathfrak S_n$, its cycle
type is the partition $\rho\vdash n$ formed by the lengths of the
disjoint cycles of $\sigma$, with each fixed point contributing a part
of size $1$. For example,
$
\sigma=(1,2,4)(3,5)\in\mathfrak S_5
$
has cycle type $\rho=(3,2)\vdash 5$. This cycle type characterizes the
conjugacy class of $\sigma$ in $\mathfrak S_n$: two permutations in
$\mathfrak S_n$ are conjugate if and only if they have the same cycle
type.

Characters are basis-independent invariants of representations. For
finite groups, irreducible characters distinguish irreducible
representations. For each $\lambda\vdash n$, let $V_\lambda$ be the irreducible
representation space labeled by $\lambda$, and let
\(\pi^\lambda(\sigma)\) denote the linear operator representing
\(\sigma\in\mathfrak S_n\) on \(V_\lambda\). Its character is defined by
\begin{align}
    \chi^\lambda(\sigma)
    \coloneqq
    \tr_{V_\lambda}\!\left[\pi^\lambda(\sigma)\right].
\end{align}
In particular,
$\chi^\lambda(\mtr{id})$ is the dimension of the irreducible
representation $V_\lambda$.

Since characters are constant on conjugacy classes, the value
$\chi^\lambda(\sigma)$ depends only on the cycle type of $\sigma$.
Accordingly, for a partition $\rho\vdash n$, we define
\begin{align}
\chi^\lambda(\rho)
\coloneqq
\chi^\lambda(\sigma),
\end{align}
where $\sigma\in\mathfrak S_n$ is any permutation of cycle type $\rho$..
This definition is independent of the choice of \(\sigma\).
Thus, in the notation \(\chi^\lambda(\rho)\), the partition \(\lambda\)
labels the irreducible character, whereas \(\rho\) labels the conjugacy
class on which the character is evaluated.

As a simple illustration, consider \(n=2\). The partitions of \(2\) are
\((2)\) and \((1,1)\). Hence, \(\mathfrak S_2\) has two isomorphism
classes of irreducible complex representations, namely the trivial
representation and the sign representation, labeled by \((2)\) and
\((1,1)\), respectively. Both representations are one-dimensional:
\[
    \chi^{(2)}(\mtr{id})
    =
    \chi^{(1,1)}(\mtr{id})
    =
    1.
\]
The conjugacy classes of \(\mathfrak S_2\) are likewise indexed by the
cycle types \((1,1)\) and \((2)\), corresponding to the identity
permutation and the transposition, respectively. The corresponding
character table is
\begin{equation}
    \renewcommand{\arraystretch}{1.2}
    \begin{array}{c|cc}
        \lambda\backslash\rho
            & (1,1) & (2) \\\hline
        (2)     & 1 & 1 \\
        (1,1)   & 1 & -1
    \end{array}
    \label{eq:table_character}
\end{equation}
The column indexed by \(\rho=(1,1)\) corresponds to the identity
conjugacy class and therefore gives the dimensions
\(\chi^\lambda(\mtr{id})\). The row indexed by \(\lambda=(2)\) is the
trivial character, whereas the row indexed by \(\lambda=(1,1)\) is the
sign character.

More generally, for \(d\geq 1\) and partitions
\(\lambda,\rho\vdash d\), the character value
\(\chi^\lambda(\rho)\) can be computed recursively using the
Murnaghan--Nakayama rule; see, for example,
Ref.~\cite{stanley1999enumerative}. We use this rule below when needed.


\subsection{Core techniques to calculate Weingarten integral}
\label{sec:weingarten}

In this section, we will explain how
Weingarten integral are evaluated.
The variances of orbital-rotation shadow protocol are determined by the Weingarten integral expressed by
\(\int_{\mtr{Gr}_{\eta} (\mathbb{C}^N ) } R_{i_1j_1}\cdots R_{i_dj_d}\dd R\). To this end, two results suffice: a convolution formula
(Theorem~\ref{thm:Grassmannian_convolution}), which reduces an arbitrary monomial
moment to the Weingarten function \(\Wg^{P_\eta}_{N,\eta}(\sigma)\) labelled by a
permutation \(\sigma\in\mathfrak S_d\); and a character formula, which expresses
\(\Wg^{P_\eta}_{N,\eta}(\sigma)\) through the irreducible characters of the symmetric
group and the contents of Young diagrams. We state both results and then illustrate their use on the simplest nontrivial moment.

First, we show important results from Ref.~\cite{coulter2025integration}.
\begin{theorem}[Grassmannian convolution formula, Theorem 2.3 in Ref.~\cite{coulter2025integration}]
Let $d, \eta, N$ be non-negative integers satisfying $\eta \leq N$, and we assume
\(\
    \mac{P}_\eta
    :=
    \left\{
        R\in \operatorname{End}(\mathbb C^N)
        \mid
        R^2=R,\ R^\dagger=R,\ \operatorname{tr}R=\eta
    \right\}
\)
be equipped with the normalized \( \mtr{U}(N)\)-invariant probability measure
\(\dd R\).  For a non-negative integer \(d \) and \(\sigma\in \mathfrak{S}_d\), we define the Weingarten function on $\mac{P}_\eta $ as 
\[
    \operatorname{Wg}^{P_\eta}_{N,\eta}(\sigma)
    :=
    \int_{\mac{P}_\eta}
    R_{1,\sigma(1)}
    R_{2,\sigma(2)}
    \cdots
    R_{d,\sigma(d)}
    \dd R .
\]
Then, for arbitrary indices
$
    i_1,\ldots,i_d,\ j_1,\ldots,j_d\in [N],
$
one has
\begin{align}
    \int_{\mac{P}_\eta}
    R_{i_1,j_1}
    R_{i_2,j_2}
    \cdots
    R_{i_d,j_d}
    \dd R
    =
    \sum_{\sigma\in \mathfrak{S}_d}
    \left(
        \prod_{a=1}^d
        \delta_{i_{\sigma(a)},j_a}
    \right)
    \operatorname{Wg}^{P_\eta}_{N,\eta}(\sigma).    
\end{align}
\label{thm:Grassmannian_convolution}
\end{theorem}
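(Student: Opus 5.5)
The plan is to realize $\dd R$ as the pushforward of Haar measure under $u\mapsto uP_{[\eta]}u^\dagger$ (Sec.~\ref{sec:Geometric_costruction}) and then use invariance of the measure under the diagonal torus and under coordinate permutations of $\mathrm U(N)$.

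\textbf{Step 1: the moment defines an invariant tensor.} Consider
\[
T:=\int_{\mac P_\eta}R^{\otimes d}\,\dd R\in\operatorname{End}((\mathbb C^N)^{\otimes d}).
\]
Its matrix elements are exactly the monomial integrals $\int R_{i_1j_1}\cdots R_{i_dj_d}\,\dd R$. Left invariance of Haar measure gives $u^{\otimes d}T(u^\dagger)^{\otimes d}=T$ for all $u\in\mathrm U(N)$.

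\textbf{Step 2: apply Schur--Weyl duality.} By the duality, $T$ lies in the span of the permutation operators $W_\sigma$, $\sigma\in\mathfrak S_d$. So
\[
T=\sum_\sigma c_\sigma W_\sigma ,
\]
with matrix elements
\[
\langle i|W_\sigma|j\rangle=\prod_a\delta_{i_{\sigma(a)},j_a},
\]
up to the choice of $\sigma$ versus $\sigma^{-1}$ convention. This already yields the stated shape of the formula, with some coefficients $c_\sigma$.

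\textbf{Step 3: identify $c_\sigma$ with $\operatorname{Wg}^{P_\eta}_{N,\eta}(\sigma)$.} Evaluate at the special indices $i_a=a$, $j_a=\sigma(a)$. This requires distinct labels, hence $d\le N$. In the product $\prod_a\delta_{i_{\tau(a)},j_a}$, the only surviving $\tau$ is the one with $\tau(a)=\sigma(a)$ for all $a$. Therefore $c_\sigma$ equals the defining integral of $\operatorname{Wg}^{P_\eta}_{N,\eta}(\sigma)$, after matching conventions.

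\textbf{Main obstacle: the range $d>N$.} Here the $W_\sigma$ are linearly dependent, so the coefficients are not unique, and one cannot evaluate at distinct indices. I would handle this in one of two ways:
\begin{itemize}
\item define $\operatorname{Wg}$ for $d>N$ by the formula, noting that any representative coefficient choice gives the same matrix elements; or
\item argue by polynomiality in $N$. Embed $\mathbb C^N\subset\mathbb C^{N'}$? This route fails, because the Grassmannian measure depends on $N$.
\end{itemize}
So I would restrict attention to $d\le N$, which is all that is needed for the paper's moments. As an alternative to Schur--Weyl, one could use Gaussian/Wick calculus on $R=G(G^\dagger G)^{-1}G^\dagger$.
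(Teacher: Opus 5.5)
The paper gives no proof of this statement; it imports it from Theorem~2.3 of Coulter et al.\ and only uses it, together with the character formula, to evaluate specific moments. Your Schur--Weyl argument is a correct, self-contained proof for $d\le N$. That is exactly the range in which the statement is meaningful, because the defining integral $\int R_{1,\sigma(1)}\cdots R_{d,\sigma(d)}\,\dd R$ requires $1,\dots,d\in[N]$.

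Two small simplifications are available. If you define $W_\sigma$ directly by $\langle i|W_\sigma|j\rangle=\prod_a\delta_{i_{\sigma(a)},j_a}$, then $\{W_\sigma\}_{\sigma\in\mathfrak S_d}$ is still the full set of permutation operators, so no convention-matching is needed. Also, evaluating at $i_a=a$, $j_a=\sigma(a)$ shows that \emph{every} expansion of $T$ in the $W_\sigma$ has $c_\sigma$ equal to the defining integral, so you do not need linear independence separately.

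Your remark that $d\le N$ is all the paper needs is not accurate. The exact $1$-RDM formulas use degree-$4$ moments such as $\int (R_{1,2}R_{2,1})^2\,\dd R$ for every $N>\eta$, including $N=3$.

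A slight strengthening of your route covers this case and proves the character formula at the same time. Since $R^{\otimes d}$ also commutes with every $W_\pi$, $T$ lies in the center of the commutant. Hence $T=\sum_\lambda t_\lambda P_\lambda$, where $P_\lambda=\frac{\chi^\lambda(\mathrm{id})}{d!}\sum_\sigma\chi^\lambda(\sigma)W_\sigma$ are the isotypic projectors. By the hook-content formula,
$t_\lambda=\mathrm{tr}(TP_\lambda)/\mathrm{tr}(P_\lambda)=s_\lambda(1^\eta)/s_\lambda(1^N)=\prod_{(i,j)\in\lambda}\frac{\eta+c(i,j)}{N+c(i,j)}$.
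Substituting gives $T=\sum_\sigma \mathrm{Wg}(\sigma)W_\sigma$ with $\mathrm{Wg}$ given by the character formula. This holds for every $d$, because $P_\lambda=0$ whenever $\lambda$ has more than $N$ rows, so those terms can be set to zero.

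In short, citing the reference buys brevity, and your argument buys an elementary, transparent proof in the natural range. The center-based version additionally yields the explicit Weingarten values and the $d>N$ case that the paper implicitly relies on.
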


\begin{lemma}[Character formula, Corollary 2.15 in Ref.~\cite{coulter2025integration}]
    \label{lem:character_formula}
    Let $d, \eta, N$ be non-negative integers satisfying $\eta \leq N$. For $\sigma \in \mathfrak{S}_d $, the following equality holds,
    \begin{align}
        \Wg_{N,\eta}^{\mac{P}_\eta }(\sigma)  = \frac{1}{d!} \sum_{\lambda \vdash d } \chi^\lambda (\mtr{id} ) \chi^\lambda (\sigma ) \prod_{ (i,j) \in \lambda  } \frac{\eta + c(i,j) }{N + c(i,j)  },
    \end{align}
    where $\chi^\lambda$ denotes the irreducible character of $\mathfrak S_d$ associated with the partition $\lambda\vdash d$, $\mtr{id}\in\mathfrak S_d$ is the
    identity permutation, and $c(i,j)=j-i$ is the content of the box $(i,j)\in\lambda$.
\end{lemma}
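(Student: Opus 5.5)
The plan is to compute the operator-valued moment $\Phi_d:=\int_{\mac P_\eta} R^{\otimes d}\,\dd R\in\mtr{End}((\mathbb C^N)^{\otimes d})$ by Schur--Weyl duality. I then read off $\Wg^{\mac P_\eta}_{N,\eta}(\sigma)$ as a single matrix entry of $\Phi_d$. Throughout I assume $d\le N$, which is the regime where all indices $1,\ldots,d$ in the definition of the Weingarten function are distinct modes.

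\textbf{Step 1: $\Phi_d$ is central.} Invariance of $\dd R$ under $R\mapsto uRu^\dagger$ gives $u^{\otimes d}\Phi_d (u^{\otimes d})^\dagger=\Phi_d$. Every tensor power $R^{\otimes d}$ commutes with the permutation operators $V_\pi$ for $\pi\in\mathfrak S_d$, so $\Phi_d$ does as well. Hence $\Phi_d$ lies in the center of the commutant. By Schur--Weyl duality $(\mathbb C^N)^{\otimes d}\simeq\bigoplus_{\lambda\vdash d,\ \ell(\lambda)\le N} V_\lambda\otimes S_\lambda(\mathbb C^N)$, so we can write $\Phi_d=\sum_\lambda a_\lambda P_\lambda$. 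Here $P_\lambda=\frac{\chi^\lambda(\mtr{id})}{d!}\sum_{\pi}\chi^\lambda(\pi)V_\pi$ is the isotypic projector.

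\textbf{Step 2: the scalars $a_\lambda$.} We have $a_\lambda=\tr[\Phi_d P_\lambda]/\tr P_\lambda$. Since $P_\lambda$ is $\mtr U(N)$-invariant and $R=u^\dagger P_{[\eta]}u$, the numerator equals $\tr[P_{[\eta]}^{\otimes d}P_\lambda]=\chi^\lambda(\mtr{id})\,s_\lambda(1^\eta)$. This is the trace of the restriction of $S_\lambda$ to the $\eta$-dimensional subspace, times the multiplicity. The denominator is $\chi^\lambda(\mtr{id})\,s_\lambda(1^N)$. The hook-content formula $s_\lambda(1^M)=\prod_{(i,j)\in\lambda}(M+c(i,j))/h(i,j)$ then gives
\begin{align}
a_\lambda=\prod_{(i,j)\in\lambda}\frac{\eta+c(i,j)}{N+c(i,j)} .
\end{align}
The hook lengths cancel. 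Since $\ell(\lambda)\le d\le N$, all denominators satisfy $N+c\ge N-d+1>0$. When $\ell(\lambda)>\eta$, the box with content $-\eta$ makes the product vanish, which is consistent with $s_\lambda(1^\eta)=0$.

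\textbf{Step 3: extracting the entry.} By definition, $\Wg^{\mac P_\eta}_{N,\eta}(\sigma)=\bra{1,\ldots,d}\Phi_d\ket{\sigma(1),\ldots,\sigma(d)}$ in the tensor-product basis. Because the indices $1,\ldots,d$ are distinct, $\bra{1,\ldots,d}V_\pi\ket{\sigma(1),\ldots,\sigma(d)}$ equals $1$ when $\pi=\sigma$ or $\pi=\sigma^{-1}$, depending on the convention for $V_\pi$, and $0$ otherwise. Substituting the expansion of $P_\lambda$ gives $\Wg(\sigma)=\sum_\lambda a_\lambda\frac{\chi^\lambda(\mtr{id})}{d!}\chi^\lambda(\sigma^{\pm1})$. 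Since $\chi^\lambda(\sigma^{-1})=\overline{\chi^\lambda(\sigma)}=\chi^\lambda(\sigma)$ for symmetric-group characters, this is the claimed formula.

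\textbf{Main obstacle.} I expect the main care point to be the convention bookkeeping in Step 3, namely the direction of the permutation action and whether $\sigma$ or $\sigma^{-1}$ appears. This is harmless because characters of $\mathfrak S_d$ are real class functions. A secondary obstacle is justifying $\tr[P^{\otimes d}P_\lambda]=\chi^\lambda(\mtr{id})\,s_\lambda(1^\eta)$. I would obtain it from the restriction $S_\lambda(\mathbb C^N)\downarrow \mtr{GL}_\eta$, or equivalently as the limit of the character $s_\lambda(x)$ at $x=(1^\eta,0^{N-\eta})$.
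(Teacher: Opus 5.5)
The paper does not prove this lemma. It imports it, together with the convolution formula (Theorem~\ref{thm:Grassmannian_convolution}), from Ref.~\cite{coulter2025integration}, so your Schur--Weyl argument is an independent route, and it is correct. The three ingredients all hold: $\Phi_d=\int R^{\otimes d}\,\dd R$ is central, its eigenvalue on the $\lambda$-isotypic component is $a_\lambda=\tr[P_{[\eta]}^{\otimes d}P_\lambda]/\tr P_\lambda=s_\lambda(1^\eta)/s_\lambda(1^N)$, and the hook lengths cancel in the hook-content formula.

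Your ``secondary obstacle'' is in fact immediate. $P_{[\eta]}^{\otimes d}$ projects onto the $\mathfrak S_d$-invariant subspace $(\mathbb C^\eta)^{\otimes d}$. There, $P_\lambda$ (same group-algebra formula) restricts to the $\lambda$-isotypic projector for $\mtr U(\eta)$, whose trace is $\chi^\lambda(\mtr{id})\dim S_\lambda(\mathbb C^\eta)$.

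What your route buys over the citation is economy. Write $\Phi_d=\sum_\lambda a_\lambda P_\lambda=\sum_{\pi\in\mathfrak S_d}w(\pi)V_\pi$ with $w(\pi)=\frac{1}{d!}\sum_\lambda a_\lambda\chi^\lambda(\mtr{id})\chi^\lambda(\pi)$. An \emph{arbitrary} matrix element $\bra{i_1,\ldots,i_d}\Phi_d\ket{j_1,\ldots,j_d}$ then produces exactly the factors $\prod_a\delta_{i_{\pi(a)},j_a}$. Either convention for $V_\pi$ works, since $w$ is a class function. So the same computation proves Theorem~\ref{thm:Grassmannian_convolution} and the character formula at once. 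It also explains structurally why the weights are ratios of the dimensions of $S_\lambda(\mathbb C^\eta)$ and $S_\lambda(\mathbb C^N)$.

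Keep your explicit assumption $d\le N$. The paper's statement omits it, but it is needed for the definition of $\Wg^{\mac P_\eta}_{N,\eta}(\sigma)$ via $R_{1,\sigma(1)}\cdots R_{d,\sigma(d)}$. It is also needed for the formula itself: for $d>N$, any $\lambda$ with $\ell(\lambda)\ge N+1$ contributes a $0/0$ term, from the boxes of content $-\eta$ and $-N$. Your argument shows these terms should simply be dropped, because $P_\lambda=0$ whenever $\ell(\lambda)>N$.
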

The background on partitions, Young diagrams, and irreducible
characters of the symmetric group is briefly reviewed in
Section~\ref{sec:prelim}. Together,
Theorem~\ref{thm:Grassmannian_convolution} and
Lemma~\ref{lem:character_formula} reduce arbitrary moments of the matrix
entries of \(R\) to explicit finite sums over permutations
\(\sigma\in\mathfrak S_d\) and partitions \(\lambda\vdash d\).

As the simplest example, we compute \(\int_{\mtr{Gr}_{\eta} (\mathbb{C}^N ) } R_{1,2}R_{2,1}\dd R\) by
applying Theorem~\ref{thm:Grassmannian_convolution} and the character formula with
\(d=2\). Reading off the indices, the two factors \(R_{12}R_{21}\) correspond to
\((i_1,j_1)=(1,2)\) and \((i_2,j_2)=(2,1)\). The symmetric group is
\(\mathfrak S_2=\{\mtr{id},\tau\}\) with \(\tau=(1\,2)\), and the Kronecker factor
\(\prod_{a=1}^{2}\delta_{i_{\sigma(a)},\,j_a}\) selects a single permutation:
\begin{align}
    \sigma=\mtr{id}:&\quad
        \delta_{i_1,j_1}\delta_{i_2,j_2}=\delta_{1,2}\,\delta_{2,1}=0,\\
    \sigma=\tau:&\quad
        \delta_{i_2,j_1}\delta_{i_1,j_2}=\delta_{2,2}\,\delta_{1,1}=1 .
\end{align}
Hence Theorem~\ref{thm:Grassmannian_convolution} gives
\begin{equation}
    \int_{\mtr{Gr}_{\eta} (\mathbb{C}^N ) } R_{1,2}R_{2,1}\dd R
    =\Wg^{P_\eta}_{N,\eta}(\tau).
\end{equation}
To evaluate \(\Wg^{P_\eta}_{N,\eta}(\tau)\) we use the character formula with \(d=2\).
There are two partitions of \(2\). Labelling each box with its content, we have
\begin{equation}
    \lambda=(2):\quad \begin{ytableau} 0 & 1 \end{ytableau}
    \qquad\text{and}\qquad
    \lambda=(1,1):\quad \begin{ytableau} 0 \\ \text{-}1 \end{ytableau}\,,
\end{equation}
so that the product \(\prod_{(i,j)\in\lambda}\frac{\eta+c(i,j)}{N+c(i,j)}\) becomes
\begin{gather}
    \prod_{(i,j)\in(2)}\frac{\eta+c(i,j)}{N+c(i,j)}
        =\frac{\eta}{N}\cdot\frac{\eta+1}{N+1},\\
    \prod_{(i,j)\in(1,1)}\frac{\eta+c(i,j)}{N+c(i,j)}
        =\frac{\eta}{N}\cdot\frac{\eta-1}{N-1}.
\end{gather}
As Eq.~\eqref{eq:table_character} shows, character values are
\(\chi^{(2)}(\mtr{id})=\chi^{(1,1)}(\mtr{id})=1\), \(\chi^{(2)}(\tau)=1\), and
\(\chi^{(1,1)}(\tau)=-1\). From these facts, we obtain
\begin{align}
    \Wg^{P_\eta}_{N,\eta}(\tau)
    &=\frac{1}{2!}\Bigg[
        \chi^{(2)}(\mtr{id})\,\chi^{(2)}(\tau)\,
            \frac{\eta}{N}\cdot\frac{\eta+1}{N+1}
        +\chi^{(1,1)}(\mtr{id})\,\chi^{(1,1)}(\tau)\,
            \frac{\eta}{N}\cdot\frac{\eta-1}{N-1}
      \Bigg]\notag\\
    &=\frac{1}{2}\left[
        \frac{\eta(\eta+1)}{N(N+1)}-\frac{\eta(\eta-1)}{N(N-1)}
      \right] \\
    &=\frac{\eta\left(N-\eta\right)}{N\left(N-1\right)\left(N+1\right)} .
\end{align}
Finally, we obtain
\begin{align}
    \int_{\mtr{Gr}_{\eta} (\mathbb{C}^N ) } R_{1,2}R_{2,1}\dd R
    =\frac{\eta\left(N-\eta\right)}{N\left(N-1\right)\left(N+1\right)}.
\end{align}
Remark that this reproduces the entry for the transposition \(\sigma=(1\,2)\) in the table of the
Weingarten function given in Ref.~\cite{coulter2025integration}.

The calculation process described above determines, in principle, every coordinate
moment. While each moment is explicit, its evaluation becomes laborious as the degree
\(d\) grows: the convolution formula ranges over the symmetric group \(\mathfrak S_d\),
and each Weingarten value \(\Wg^{P_\eta}_{N,\eta}(\sigma)\) is itself a sum over the
partitions \(\lambda\vdash d\) with characters \(\chi^\lambda(\sigma)\) obtained from
the Murnaghan--Nakayama rule. Already the \(k=2\) estimator variances require moments of
degree up to \(2k+\abs{L}=8\), whose evaluation as exact rational functions of \(N\)
and \(\eta\) is too lengthy to carry out reliably by hand. We therefore evaluate the
higher-degree moments symbolically by computer.

\subsection{Weingarten integral evaluation}

\subsubsection{Off-diagonal case}
In this section, we evaluate Weingarten integral over Grassmannian required to determine the coefficients \(A_T\) appearing in the decomposition of
\(B_{1,2}\) in Sec.~\ref{sec:1-rdm_variance}. 
From Theorem \ref{thm:Grassmannian_convolution} and Lemma \ref{lem:character_formula}, we obtain the following identities
\begin{align}
    \int R_{1,2}R_{2,1}\,\mathrm{d}R &= \frac{\eta \left(N - \eta\right)}{N \left(N - 1\right) \left(N + 1\right)}, \\
    \int R_{1,2}R_{2,1}R_{1,1}\,\mathrm{d}R &= \frac{\eta \left(N - \eta\right) \left(\eta + 1\right)}{N \left(N - 1\right) \left(N + 1\right) \left(N + 2\right)}, \\
    \int R_{1,2}R_{2,1}R_{2,2}\,\mathrm{d}R &= \frac{\eta \left(N - \eta\right) \left(\eta + 1\right)}{N \left(N - 1\right) \left(N + 1\right) \left(N + 2\right)}, \\
    \int R_{1,2}R_{2,1}R_{1,1}R_{2,2}\,\mathrm{d}R &= \frac{\eta \left(N - \eta\right) \left(\eta + 1\right) \left(N \eta + N + \eta - 1\right)}{N^{2} \left(N - 1\right) \left(N + 1\right) \left(N + 2\right) \left(N + 3\right)}, \\
    \int (R_{1,2}R_{2,1})^2\,\mathrm{d}R &= \frac{2 \eta \left(N - \eta\right) \left(\eta + 1\right) \left(N - \eta + 1\right)}{N^{2} \left(N - 1\right) \left(N + 1\right) \left(N + 2\right) \left(N + 3\right)}.
\end{align}
From this result, we obtain
\begin{align}
    \int R_{1,2}R_{2,1}R_{1,1}\,\dd R
    =
    \int R_{1,2}R_{2,1}R_{2,2}\,\dd R
    &=
    \frac{\eta(N-\eta)(\eta+1)}
    {N(N-1)(N+1)(N+2)}, \\
    \int R_{1,2}R_{2,1}
    \left(R_{1,1}R_{2,2}-R_{1,2}R_{2,1}\right)\,\dd R
    &=
    \frac{\eta(N-\eta)(\eta-1)(\eta+1)}
    {N^2(N-1)(N+1)(N+2)}.
\end{align}
Therefore,
\begin{align}
    A_{\emptyset}
    &=
    \frac{\binom{N}{\eta}}{\binom{N-2}{\eta}}
    \int
    R_{1,2}R_{2,1}
    \left(
        1-R_{1,1}-R_{2,2}+R_{1,1}R_{2,2}-R_{1,2}R_{2,1}
    \right)
    \dd R  \\
    &=
    \frac{\eta(N-\eta+1)}
    {N(N+1)(N+2)}, \\
    A_{\{1\}}
    &=
    \frac{\binom{N}{\eta}}{\binom{N-2}{\eta-1}}
    \int
    R_{1,2}R_{2,1}
    \left(
        R_{1,1}-R_{1,1}R_{2,2}+R_{1,2}R_{2,1}
    \right)
    \dd R  \\
    &=
    \frac{(\eta+1)(N-\eta+1)}
    {N(N+1)(N+2)}, \\
    A_{\{2\}}
    &=
    \frac{\binom{N}{\eta}}{\binom{N-2}{\eta-1}}
    \int
    R_{1,2}R_{2,1}
    \left(
        R_{2,2}-R_{1,1}R_{2,2}+R_{1,2}R_{2,1}
    \right)
    \dd R  \\
    &=
    \frac{(\eta+1)(N-\eta+1)}
    {N(N+1)(N+2)}, \\
    A_{\{1,2\}}
    &=
    \frac{\binom{N}{\eta}}{\binom{N-2}{\eta-2}}
    \int
    R_{1,2}R_{2,1}
    \left(
        R_{1,1}R_{2,2}-R_{1,2}R_{2,1}
    \right)
    \dd R  \\
    &=
    \frac{(N-\eta)(\eta+1)}
    {N(N+1)(N+2)}.
\end{align}

\subsubsection{Diagonal case}
In this section, we evaluate Weingarten integral over Grassmannian required to determine the coefficients \(A_T\) appearing in the decomposition of
\(B_{1,1}\) in Sec.~\ref{sec:1-rdm_variance}. 
From Theorem \ref{thm:Grassmannian_convolution} and Lemma \ref{lem:character_formula}, we obtain the following identities
\begin{align}
    \int_{\mtr{Gr}_\eta(\mathbb C^N)}
    R_{1,1}^2 \dd R
    &=
    \frac{\eta(\eta+1)}{N(N+1)}, \\
    \int_{\mtr{Gr}_\eta(\mathbb C^N)}
    R_{1,1}^3 \dd R
    &=
    \frac{\eta(\eta+1)(\eta+2)}
    {N(N+1)(N+2)}.
\end{align}
Therefore,
\begin{align}
    \int_{\mtr{Gr}_\eta(\mathbb C^N)}
    R_{1,1}^2(1-R_{1,1}) \dd R
    &=
    \frac{\eta(\eta+1)(N-\eta)}
    {N(N+1)(N+2)}.
\end{align}
Substituting these identities gives
\begin{align}
    A_\emptyset
    &=
    \frac{\eta(\eta+1)}{(N+1)(N+2)}, \\
    A_{\{1\}}
    &=
    \frac{(\eta+1)(\eta+2)}{(N+1)(N+2)}.
\end{align}

\end{document}